\tikzset{
	/semifill/left/.initial=none,
	/semifill/right/.initial=none,
	semifill/.style={path picture={
		\pgfqkeys{/semifill}{#1}
		\fill[\pgfkeysvalueof{/semifill/left}] 
		(path picture bounding box.south) rectangle
		(path picture bounding box.north west);
		\fill[\pgfkeysvalueof{/semifill/right}] 
		(path picture bounding box.south) rectangle
		(path picture bounding box.north east);
	}},
}
\tikzset{
	inter/.style={inner sep=1.5pt, draw=black},
	UP/.style={inter, densely dotted},
	UA/.style={inter, inner sep=1pt, ellipse},
	UM/.style={inter},
	US/.style={inter, inner sep=1pt, densely dotted, ellipse},
	Red/.style={fill=red!40},
	Green/.style={fill=green!40},
	Blue/.style={fill=blue!40},
	Yellow/.style={fill=yellow!40},
	RedYellow/.style={semifill={left=red!40, right=yellow!40}},
	YellowRed/.style={semifill={left=yellow!40, right=red!40}},
	RedGreen/.style={semifill={left=red!40, right=green!40}},
	GreenRed/.style={semifill={left=green!40, right=red!40}},
	BlueRed/.style={semifill={left=blue!40, right=red!40}},
	RedBlue/.style={semifill={left=red!40, right=blue!40}},
} %
\newenvironment{claimproof}{\begin{proof}}{\end{proof}}      
\declaretheorem[
	name=Definition,
	numberwithin=section,
	style=definition,
	qed={\ensuremath{\lrcorner}}
]{definition}
\declaretheorem[
	name=Example, 
	sibling=definition,
	style=definition,
	qed={\ensuremath{\lrcorner}}
]{example}
\let\endExample\qedhere
\declaretheorem[
	name=Remark, 
	sibling=definition,
	style=definition
]{remark}
\declaretheorem[
	name=Observation, 
	sibling=definition,
	style=definition
]{observation}
\declaretheorem[
	name=Corollary, 
	sibling=definition
]{corollary}
\declaretheorem[
	name=Lemma, 
	sibling=definition
]{lemma}
\declaretheorem[
	name=Fact, 
	sibling=definition
]{fact}
\declaretheorem[
	name=Claim, 
	sibling=definition,
	style=definition,
	qed={\ensuremath{\lrcorner}}
]{claim}
\let\endClaim\qedhere
\declaretheorem[
	name=Claim,
	numbered=no,
	style=definition,
	qed={\ensuremath{\lrcorner}}
]{claim*}
\declaretheorem[
	name=Proposition,
	sibling=definition,
	style=definition,
	qed={\ensuremath{\lrcorner}}
]{proposition}
\let\endProposition\qedhere
\declaretheorem[
	name=Theorem,
	sibling=definition,
]{theorem}
\renewcommand{\epsilon}{\varepsilon}
  \newcommand{\exrel}[1]{\textsl{#1}}
 \newcommand{\exdata}[1]{\texttt{#1}}
 \tikzset{
 	defaultstyle/.style={
 		>=stealth, 
 		semithick, 
 		auto,
 		initial text= {},
 		initial distance= {3mm},
 		accepting distance= {3mm}}}
\newcommand{\exsigma}{\sigma}
\newcommand{\exsigmaSimple}{\widehat{\sigma}}
 \newcommand{\exdb}{D}
 \newcommand{\exdbSimple}{\widehat{D}}
\newcommand{\nc}[1]{\newcommand{#1}}
\newcommand{\rnc}[1]{\renewcommand{#1}}
\nc{\myparagraph}[1]{\paragraph{#1}}
\rnc{\leq}{\ensuremath{\leqslant}}
\rnc{\geq}{\ensuremath{\geqslant}}
\rnc{\le}{\leq}
\rnc{\ge}{\geq}
\nc{\isdef}{\coloneqq}
\nc{\deff}{\isdef}
\nc{\set}[1]{\ensuremath{\{#1\}}}
\nc{\smallset}[1]{\ensuremath{\{#1\}}}
\nc{\setsize}[1]{\ensuremath{|#1|}}
\nc{\Setsize}[1]{\ensuremath{\big|#1\big|}}
\nc{\Set}[1]{\ensuremath{\big\{#1\big\}}}
\nc{\setc}[2]{\set{#1 \, : \, #2}}
\nc{\Setc}[2]{\Set{#1 \, : \, #2}}
\nc{\aufgerundet}[1]{\ensuremath{\lceil #1 \rceil}}
\nc{\abgerundet}[1]{\ensuremath{\lfloor #1 \rfloor}}
\nc{\dcup}{\ensuremath{\dot\cup}}
\nc{\ov}[1]{\ensuremath{\overline{#1}}}
\nc{\NN}{\ensuremath{\mathbb{N}}}
\nc{\NNpos}{\ensuremath{\NN_{\geq 1}}}
\nc{\RR}{\ensuremath{\mathbb{R}}}
\nc{\RRpos}{\ensuremath{\RR_{\geq 0}}}
\nc{\QQ}{\ensuremath{\mathbb{Q}}}
\nc{\QQpos}{\ensuremath{\QQ_{\geq 0}}}
\nc{\und}{\ensuremath{\wedge}}
\nc{\Und}{\ensuremath{\bigwedge}}
\nc{\oder}{\ensuremath{\vee}}
\nc{\Oder}{\ensuremath{\bigvee}}
\nc{\nicht}{\ensuremath{\neg}}
\nc{\impl}{\ensuremath{\to}}
\nc{\gdw}{\ensuremath{\leftrightarrow}}
\nc{\Semijoin}{\ensuremath{\ltimes}}
\nc{\free}{\ensuremath{\textrm{\upshape free}}}
\nc{\quant}{\ensuremath{\textrm{\upshape quant}}}
\nc{\ar}{\ensuremath{\operatorname{ar}}}
\nc{\Structure}[1]{\ensuremath{\mathcal{#1}}}
\nc{\A}{\Structure{A}}
\nc{\B}{\Structure{B}}
\nc{\C}{\Structure{C}}
\nc{\isom}{\ensuremath{\cong}}
\nc{\querycont}{\ensuremath{\sqsubseteq}}
\nc{\eval}[2]{\ensuremath{#1(#2)}}
\nc{\semantik}[1]{\ensuremath{\left\llbracket#1\right\rrbracket}}
\nc{\sem}[1]{{\semantik{#1}}}
\nc{\bigsem}[1]{\ensuremath{\big\llbracket #1 \big\rrbracket}}
\nc{\smallsem}[1]{\ensuremath{\llbracket #1 \rrbracket}}
\nc{\CanDB}[1]{\ensuremath{\A_{#1}}} %
\nc{\CanTup}[1]{\ensuremath{t_{#1}}} %
\newcommand{\queryphi}{\varphi}
\newcommand{\relS}{S} %
\newcommand{\relT}{T} %
\newcommand{\relE}{E} %
\nc{\Vars}{\ensuremath{\textrm{\upshape vars}}}
\nc{\vars}{\Vars}
\nc{\Cons}{\ensuremath{\textrm{\upshape cons}}}
\nc{\cons}{\Cons}
\nc{\atoms}{\ensuremath{\textrm{\upshape atoms}}}
\nc{\Atoms}{\atoms}
\nc{\Atom}{\myatom}
\nc{\Adom}{\ensuremath{\textrm{\upshape adom}}}
\nc{\adom}[1]{\ensuremath{\Adom(#1)}} %
\nc{\dom}[1]{\ensuremath{\textrm{\upshape dom}(#1)}} %
\nc{\Witness}{\ensuremath{W}} %
\nc{\myAtoms}{\ensuremath{\textrm{\upshape atm}}}
\newcommand{\poly}{\operatorname{\textit{poly}}}
\newcommand{\DBone}[1]{}
\newcommand{\bigoh}{O}
\newcommand{\bigOh}{\bigoh}
\newcommand{\parent}{\pointerfont{parent}}
\nc{\arrayfont}[1]{\ensuremath{\texttt{#1}}}
\newcommand{\size}[1]{\ensuremath{|\!|#1|\!|}}
\nc{\dbsize}[1]{{\ensuremath{|#1|}}}
\nc{\querysize}[1]{\size{#1}}
\nc{\card}[1]{\ensuremath{|#1|}}
\newcommand{\assign}{\ensuremath{\alpha}}
\nc{\insertp}{\textsc{Insert}}
\nc{\cleanup}{\textsc{cleanUp}}
\nc{\cleanups}{\textsc{cleanUp'}}
\nc{\Dom}{\ensuremath{\textbf{dom}}}
\nc{\Var}{\ensuremath{\textbf{var}}}
\nc{\schema}{\ensuremath{\sigma}}
\nc{\DB}{\ensuremath{D}} %
\nc{\DBstrich}{\ensuremath{D'}} %
\nc{\DBstart}{\ensuremath{{\DB_0}}} %
\nc{\DBempty}{\ensuremath{{\DB_{\emptyset}}}} %
\nc{\DS}{\ensuremath{\mathtt{D}}} %
\rnc{\phi}{\queryphi}
\nc{\UpdateFont}[1]{\ensuremath{\textsf{#1}}}
\nc{\Delete}{\UpdateFont{delete}}
\nc{\Insert}{\UpdateFont{insert}}
\nc{\Update}{\UpdateFont{update}}
\nc{\AlgoFont}[1]{\ensuremath{\textbf{#1}}}
\nc{\PREPROCESS}{\AlgoFont{preprocess}}
\nc{\INIT}{\AlgoFont{init}}
\nc{\UPDATE}{\AlgoFont{update}}
\nc{\ENUMERATE}{\AlgoFont{enumerate}}
\nc{\COUNT}{\AlgoFont{count}}
\nc{\ANSWER}{\AlgoFont{answer}}
\nc{\TEST}{\AlgoFont{test}}
\nc{\EOE}{\texttt{\upshape EOE}\xspace} %
\nc{\preprocessingtime}{\ensuremath{t_{\operatorname{prep}}}}
\nc{\delaytime}{\ensuremath{t_{\operatorname{delay}}}}
\nc{\preprocessingtimefunc}{\ensuremath{f_{\operatorname{prep}}}}
\nc{\delaytimefunc}{\ensuremath{f_{\operatorname{delay}}}}
\nc{\indexingtime}{\ensuremath{t_{\operatorname{index}}}}
\nc{\indexingtimefunc}{\ensuremath{f_{\operatorname{index}}}}
\nc{\booltime}{\ensuremath{t_{\booltask}}}
\nc{\countingtime}{\ensuremath{t_{\counttask}}}
\nc{\booltimefunc}{\ensuremath{f_{\booltask}}}
\nc{\countingtimefunc}{\ensuremath{f_{\counttask}}}
\nc{\preprocessingtimehat}{\ensuremath{\hat{t}_p}}
\nc{\inittimehat}{\ensuremath{\hat{t}_i}}
\nc{\delaytimehat}{\ensuremath{\hat{t}_d}}
\nc{\updatetimehat}{\ensuremath{\hat{t}_u}}
\nc{\answertimehat}{\ensuremath{\hat{t}_a}}
\nc{\countingtimehat}{\ensuremath{\hat{t}_c}}
\nc{\testingtimehat}{\ensuremath{\hat{t}_t}}
\nc{\phiBTypical}{\ensuremath{\phi'_{\relS\text{-}\relE\text{-}\relT}}}
\nc{\phiJTypical}{\ensuremath{\phi_{\relS\text{-}\relE\text{-}\relT}}}
\nc{\phiET}{\ensuremath{\phi_{\relE\text{-}\relT}}}
\nc{\restrict}[2]{\ensuremath{{#1}_{|#2}}}
\nc{\extend}[3]{\ensuremath{{#1}\tfrac{#3}{#2}}}
\nc{\emptyassign}{\ensuremath{\emptyset}}
\nc{\Assign}[2]{\ensuremath{\frac{#2}{#1}}}
\nc{\vroot}{\ensuremath{\varv_{\textsl{root}}}}
\nc{\pointerfont}[1]{\textit{#1}}
\nc{\varitem}[1]{\ensuremath{v^{#1}}}
\nc{\assitem}[1]{\ensuremath{\assign^{#1}}}
\nc{\constitem}[1]{\ensuremath{a^{#1}}}
\nc{\parentitem}[1]{\ensuremath{\parent^{#1}}}
\nc{\childitem}[2]{\ensuremath{\pointerfont{child}^{#1}_{#2}}}
\nc{\llist}[2]{\ensuremath{\mathcal{L}_{#2}^{#1}}}
\nc{\startlist}{\ensuremath{\mathcal{L}_{\text{\upshape start}}}\xspace}
\nc{\nextlistitem}[1]{\ensuremath{\pointerfont{next-listitem}^{#1}}}
\nc{\prevlistitem}[1]{\ensuremath{\pointerfont{prev-listitem}^{#1}}}
\nc{\countitem}[1]{\ensuremath{C_{\textit{below}}^{#1}}}
\nc{\desc}[1]{\ensuremath{\text{desc}}}
\nc{\Null}{\ensuremath{0}}
\nc{\arrayA}{\arrayfont{A}}
\nc{\arrayB}{\arrayfont{B}}
\nc{\arrayC}{\arrayfont{C}}
\nc{\arrayE}{\arrayfont{E}}
\nc{\ITEMS}{\mathcal{I}}
\nc{\NIL}{\textsc{nil}}
\nc{\TupleSet}{\ensuremath{\mathcal{T}}}
\nc{\ResultSet}{\ensuremath{\mathcal{R}}}
\nc{\SkipArrayNext}[1]{\ensuremath{\mathsf{skip}[#1].\mathsf{next}}}
\nc{\SkipArrayPrev}[1]{\ensuremath{\mathsf{skip}[#1].\mathsf{prev}}}
\nc{\AlgoA}{\ensuremath{\mathbb{A}}}
\nc{\nil}{\texttt{nil}\xspace}
\nc{\SkipStart}{\ensuremath{\mathsf{sk{-}start}}}
\nc{\tups}{\ensuremath{\ov{s}}}
\nc{\prozvisit}{\ensuremath{\textsc{Visit}}}
\nc{\prozvisitrev}{\ensuremath{\textsc{Visit}^{-1}}}
\nc{\tut}{\ensuremath{t}}
\nc{\enumprev}{\ensuremath{\vartriangleleft}}
\nc{\SkipLast}{\ensuremath{\mathsf{sk{-}last}}}
\nc{\lllist}{\ensuremath{\mathcal{L}}}
\nc{\pllist}{\ensuremath{\mathcal{L}^+}}
\nc{\milist}{\ensuremath{\mathcal{L}^-}}
\nc{\cilist}{\ensuremath{\mathcal{L}^\circ}}
\nc{\numitmpl}{\ensuremath{+{-}\text{on}{-}\text{path}}}
\nc{\numitmmi}{\ensuremath{-{-}\text{on}{-}\text{path}}}
\nc{\numitmci}{\ensuremath{\circ{-}\text{on}{-}\text{path}}}
\nc{\DBnew}{\ensuremath{\DB_{\text{new}}}}
\nc{\DBold}{\ensuremath{\DB_{\text{old}}}}
\nc{\liitmpl}{\ensuremath{\mathcal{L}^{+{-}\text{on}{-}\text{path}}}}
\nc{\liitmmi}{\ensuremath{\mathcal{L}^{-{-}\text{on}{-}\text{path}}}}
\nc{\liitmci}{\ensuremath{\mathcal{L}^{\circ{-}\text{on}{-}\text{path}}}}
\nc{\ITEMSres}[1]{\ensuremath{\ITEMS|_{#1}}}
\nc{\prVisit}{\textsc{Visit}}
\nc{\prVisitRes}{\textsc{VisitRes}}
\nc{\prEnumWithItem}{\textsc{EnumWithItem}}
\nc{\prFindItems}{\textsc{FindItems}}
\nc{\SUBW}{\ensuremath{\textit{subw}}} %
\nc{\ADW}{\ensuremath{\textit{adw}}} %
\nc{\fcSUBW}{\ensuremath{\textit{fc-subw}}} %
\nc{\fcFHW}{\ensuremath{\textit{fc-fhw}}} %
\nc{\fcGHW}{\ensuremath{\textit{fc-ghw}}} %
\nc{\TD}{\ensuremath{\textit{TD}}} %
\nc{\FDecom}{\ensuremath{\textit{F}}} %
\nc{\emptytuple}{\ensuremath{()}}
\nc{\emptyword}{\ensuremath{\varepsilon}}
\nc{\proj}{\ensuremath{\pi}}
\nc{\projgen}{\ensuremath{\uppi}}
\nc{\select}{\ensuremath{\sigma}}
\nc{\FD}{\ensuremath{\delta_{\textit{fd}}}} %
\nc{\IND}{\ensuremath{\delta_{\textit{ind}}}} %
\nc{\INDtilde}{\ensuremath{\tilde{\delta}_{\textit{ind}}}}%
\nc{\SD}{\ensuremath{\delta_{\textit{sd}}}} %
\nc{\CC}{\ensuremath{\delta_{\textit{cc}}}}%
\nc{\DEP}{\ensuremath{\delta}} %
\nc{\CONSTR}{\ensuremath{\Sigma}} %
\nc{\qSET}{\ensuremath{q_{\textit{S-E-T}}}}
\nc{\pSET}{\ensuremath{p_{\textit{S-E-T}}}}
\nc{\qET}{\ensuremath{q_{\textit{E-T}}}}
\nc{\Ans}{\ensuremath{\textit{Ans}}}
\nc{\query}{\ensuremath{Q}}
\nc{\qatom}{\ensuremath{\myatom}}
\nc{\HG}{\ensuremath{\mathcal{H}}} %
\nc{\Nodes}{\ensuremath{V}} %
\nc{\Edges}{\ensuremath{E}} %
\nc{\HD}{\ensuremath{\textit{H}}} %
\nc{\Tree}{\ensuremath{T}}
\nc{\rootedTree}{\ensuremath{\hat{\Tree}}}
\nc{\treenode}{\ensuremath{t}} %
\nc{\treenodeparent}{\ensuremath{p}} %
\nc{\parentnode}{\treenodeparent} 
\nc{\treeroot}{\ensuremath{r}} %
\nc{\Bag}{\ensuremath{\textit{bag}}}
\nc{\Cover}{\ensuremath{\textit{cover}}}
\nc{\FHD}{\ensuremath{\textit{F}}} %
\nc{\Weight}{\ensuremath{\textit{weight}}} %
\nc{\Width}{\ensuremath{\textit{width}}}
\nc{\GHW}{\ensuremath{\textit{ghw}}} %
\nc{\FHW}{\ensuremath{\textit{fhw}}} %
\nc{\freetreenodes}{\ensuremath{U}} %
\nc{\prunedTree}{\ensuremath{\tilde{\Tree}}}
\nc{\prunedSchema}{\ensuremath{\tilde{\schema}}}
\nc{\prunedDB}{\ensuremath{\tilde{\DB}}}
\nc{\prunedDBold}{\ensuremath{\prunedDB_{\textit{old}}}}
\nc{\prunedDBnew}{\ensuremath{\prunedDB_{\textit{new}}}}
\nc{\prunedQuery}{\ensuremath{\tilde{\query}}}
\nc{\Start}{\ensuremath{\texttt{fetch-first}}}
\nc{\Next}{\ensuremath{\texttt{fetch-next}}}
\nc{\TestTuple}{\ensuremath{\texttt{test}}}
\nc{\PositionCursor}{\ensuremath{\texttt{position-cursor}}}
\nc{\AccessJth}{\ensuremath{\texttt{access}}}
\nc{\RankTuple}{\ensuremath{\texttt{rank}}}
\nc{\Mapping}[1]{\ensuremath{\tilde{#1}}} %
\nc{\MappingR}{\ensuremath{\tilde{R}}}
\nc{\Algo}[1]{\ensuremath{\textsc{#1}}}
\nc{\True}{\ensuremath{\textsl{true}}}
\nc{\False}{\ensuremath{\textsl{false}}}
\nc{\Yes}{\True}
\nc{\No}{\False}
\nc{\QueryClass}{\ensuremath{\mathcal{Q}}}
\nc{\fcACQNoSigma}{\ensuremath{\textsf{\upshape{fc-ACQ}}}}
\nc{\fcACQ}{\ensuremath{\textsf{\upshape{fc-ACQ}}[\sigma]}}
\nc{\fcACQOne}{\ensuremath{\textsf{\upshape{fc-ACQ}}[\sigmaOne]}}
\nc{\fcACQci}{\ensuremath{\textsf{\upshape{fc-ACQ}}[\cisigma]}}
\nc{\fcACQSimple}{\ensuremath{\textsf{\upshape{fc-ACQ}}[\sigmaSimple]}}
\nc{\fcACQBinary}{\ensuremath{\textsf{\upshape{fc-ACQ}}[\sigmaBinary]}}
\nc{\ciDSimple}{\ensuremath{{\widehat{D}_{\operatorname{col}}}}}
\nc{\AllDBs}[1]{\ensuremath{\textsf{DB}[#1]}}
\nc{\taskdescription}[1]{\texttt{#1}}
\nc{\booltask}{\taskdescription{bool}}
\nc{\counttask}{\taskdescription{count}}
\nc{\enumtask}{\taskdescription{enum}}
\nc{\IndexingProblem}[2]{\ensuremath{\textup{\textsc{IndexingEval}}(#1,#2)}}
\nc{\IndexingProblemGeneral}{\IndexingProblem{\sigma}{\QueryClass}}
\nc{\IndexingProblemOurs}{\IndexingProblem{\sigma}{\fcACQ}}
\nc{\valuation}{\ensuremath{\nu}}
\nc{\val}{\ensuremath{\valuation}}
\nc{\sigmaOne}{\ensuremath{\bar{\sigma}}}
\nc{\DOne}{\ensuremath{\bar{D}}}
\nc{\GOne}{\ensuremath{\bar{G}}}
\nc{\VOne}{\ensuremath{\bar{V}}}
\nc{\EOne}{\ensuremath{\bar{E}}}
\nc{\QOne}{\ensuremath{\bar{Q}}}
\nc{\vl}{\ensuremath{\textnormal{\textsf{nl}}}}
\nc{\el}{\ensuremath{\textnormal{\textsf{el}}}}
\nc{\VLabels}{\ensuremath{L_{\VOne}}}
\nc{\ELabels}{\ensuremath{L_{\EOne}}}
\nc{\MyPlus}{\ensuremath{\rightarrow}}
\nc{\MyMinus}{\ensuremath{\leftarrow}}
\nc{\sigmaBinary}{\ensuremath{\sigma'}}
\nc{\dbBinary}{\ensuremath{{D'}}}
\nc{\QBinary}{\ensuremath{Q'}}
\nc{\ArS}[1]{\ensuremath{\textsl{A}_{#1}}}
\nc{\StpS}[1]{\ensuremath{\textsl{S}_{#1}}}
\nc{\tup}[1]{{\ensuremath{\bar{#1}}}}
\nc{\tset}{\operatorname{set}}
\nc{\at}{\tup{a}}
\nc{\bt}{\tup{b}}
\nc{\ct}{\tup{c}}
\nc{\dt}{\tup{d}}
\nc{\tD}{{\ensuremath{\mathbf{D}}}}
\nc{\projection}{\tup{p}}
\nc{\Projections}{{\ensuremath{\mathbf{\Pi}}}}
\nc{\projectionAlt}{\tup{q}}
\nc{\myTuples}{\ensuremath{\mathbf{T}}}
\nc{\myTuplesSym}{\ensuremath{\mathbf{T}'}}
\nc{\sigmaSimple}{\ensuremath{{\widehat{\sigma}}}}
\nc{\dbSimple}{\ensuremath{{\widehat{D}}}}
\nc{\RSimple}{\ensuremath{E}}
\nc{\QSimple}{\ensuremath{\widehat{Q}}}
\nc{\CSimple}{\ensuremath{\widehat{C}}}
\nc{\nuSimple}{\ensuremath{\widehat{\nu}}}
\nc{\img}{\ensuremath{\textrm{img}}}
\nc{\Neighbors}[5]{\ensuremath{{#1}_{#2}^{#3}({#4},{#5})}}
\nc{\NSucc}[3]{\Neighbors{N}{\rightarrow}{#1}{#2}{#3}}
\nc{\NPred}[3]{\Neighbors{N}{\leftarrow}{#1}{#2}{#3}}
\nc{\hatNSucc}[3]{\Neighbors{\widehat{N}}{\rightarrow}{#1}{#2}{#3}}
\nc{\hatNPred}[3]{\Neighbors{\widehat{N}}{\leftarrow}{#1}{#2}{#3}}
\nc{\Numbers}[5]{\ensuremath{{#1}_{#2}^{#3}({#4},{#5})}}
\nc{\numSucc}[3]{\Numbers{\#}{\rightarrow}{#1}{#2}{#3}}
\nc{\numPred}[3]{\Numbers{\#}{\leftarrow}{#1}{#2}{#3}}
\nc{\numNeigh}[3]{\Numbers{\#}{d}{#1}{#2}{#3}}
\nc{\hatnumSucc}[3]{\Numbers{\widehat{\#}}{\rightarrow}{#1}{#2}{#3}}
\nc{\hatnumPred}[3]{\Numbers{\widehat{\#}}{\leftarrow}{#1}{#2}{#3}}
\nc{\Numb}[1]{\ensuremath{n_{#1}}}
\nc{\col}{\ensuremath{\textnormal{\textsf{col}}}}
\nc{\colAlt}{\col'}
\nc{\Dual}[1]{\ensuremath{\widetilde{#1}}}
\nc{\DSD}{\ensuremath{\textsf{\upshape{DS}}_D}}
\nc{\fcr}{\ensuremath{f_{\operatorname{cr}}}}
\nc{\cisigma}{\ensuremath{\sigma_{\operatorname{col}}}}
\nc{\ciD}{\ensuremath{{D_{\operatorname{col}}}}}
\nc{\ciDn}{\ensuremath{{D^n_{\operatorname{col}}}}}
\nc{\ciDStrich}{\ensuremath{{D'_{\operatorname{col}}}}}
\nc{\ciQ}{\ensuremath{{Q_{\operatorname{col}}}}}
\nc{\ciQStrich}{\ensuremath{{Q'_{\operatorname{col}}}}}
\nc{\elmt}[3]{{\ensuremath{e^{#1}_{({#2},{#3})}}}}
\nc{\myEnum}{\ensuremath{\textsc{Enum}}}
\nc{\Parent}{\ensuremath{\textit{p}}}
\nc{\Children}{\ensuremath{\textit{ch}}}
\nc{\DataStructure}[2]{\textsf{\upshape{DS}}_{D,Q}}
\nc{\myatom}{\ensuremath{\alpha}}
\nc{\Tuples}{\ensuremath{\textit{Tup}}}
\nc{\Plus}{\ensuremath{{\texttt{f}}}}
\nc{\Minus}{\ensuremath{{\texttt{b}}}}
\nc{\rootnode}{\ensuremath{\textit{root}}}
\nc{\vtup}[1]{\ensuremath{\bar{x}_{#1}}}
\nc{\singvarv}[1]{\ensuremath{\textsf{v}_{#1}}}
\nc{\singvarw}[1]{\ensuremath{\textsf{w}_{#1}}}
\nc{\Hom}{\ensuremath{\textup{Hom}}}
\nc{\tupind}[2]{\ensuremath{{#1}^{(#2)}}}
\nc{\atind}[1]{\tupind{\at}{#1}}
\nc{\btind}[1]{\tupind{\bt}{#1}}
\nc{\atOne}{\atind{1}}
\nc{\btOne}{\btind{1}}
\nc{\atI}{\atind{i}}
\nc{\btI}{\btind{i}}
\nc{\atIs}{\atind{i+1}}
\nc{\btIs}{\btind{i+1}}
\nc{\atIp}{\atind{i-1}}
\nc{\btIp}{\btind{i-1}}
\nc{\atJ}{\atind{j}}
\nc{\btJ}{\btind{j}}
\nc{\atM}{\atind{m}}
\nc{\btM}{\btind{m}}
\nc{\AOne}{\ensuremath{A^{(1)}}}
\nc{\BOne}{\ensuremath{B^{(1)}}}
\nc{\AJ}{\ensuremath{A^{(j)}}}
\nc{\BJ}{\ensuremath{B^{(j)}}}
\nc{\AI}{\ensuremath{A^{(i)}}}
\nc{\BI}{\ensuremath{B^{(i)}}}
\nc{\AIs}{\ensuremath{A^{(i+1)}}}
\nc{\BIs}{\ensuremath{B^{(i+1)}}}
\nc{\AIp}{\ensuremath{A^{(i-1)}}}
\nc{\BIp}{\ensuremath{B^{(i-1)}}}
\nc{\AM}{\ensuremath{A^{(m)}}}
\nc{\BM}{\ensuremath{B^{(m)}}}
\nc{\Nsl}[1]{\ensuremath{N'_{#1}}}
\renewcommand*{\mid}{\, : \,} %
\newcommand*{\fd}{f_{\downarrow}}
\newcommand{\numNc}[3]{\#_{#1}({#2}, {#3})}
\newcommand*{\numN}[2]{\numNc{}{#1}{#2}}
\newcommand*{\N}[2]{N({#1}, {#2})}
\newcommand{\wat}{w_{\at}}
\newcommand{\wbt}{w_{\bt}}
\newcommandx*{\mset}[3][1={},3={\,}]{\ensuremath{
	#1\{\!\!#1\{#3 {#2} #3#1\}\!\!#1\}
}}
\newcommand{\slices}{\mathcal{S}}
\newcommand*{\slice}{\tup{s}}
\newcommand*{\sliceAlt}{\tup{t}}
\newcommand*{\pslices}{\pi_{\slices}}
\newcommand*{\pproj}[1][\at, \bt]{\pi_{#1}}
\newcommand*{\stp}{\mathsf{stp}}
\newcommand{\hati}{\hat{\imath}}
\renewcommandx*{\set}[3][1={},3={\,}]{\ensuremath{#1\{#3 {#2} #3#1\}}}
\newcommand*{\rcrcols}{C_R}
\tikzset{nudge/.code args={#1}{%
  \pgfkeysalso{transform canvas={xshift=#1}}%
}} %
\title{Structural Indexing of Relational Databases for the Evaluation of Free-Connex Acyclic Conjunctive Queries}
\date{}
\author[1]{Cristian Riveros}
\author[2]{Benjamin Scheidt}
\author[2]{Nicole Schweikardt}
\affil[1]{
	Pontificia Universidad Católica de Chile\\
	Chile\\
	\href{mailto:criveros@ing.puc.cl}{criveros@ing.puc.cl}
}
\affil[2]{
	Humboldt-Universität zu Berlin\\
	Germany\\
	$\{$
		\href{mailto:benjamin.scheidt@hu-berlin.de}{benjamin.scheidt}, 
		\href{mailto:schweikn@hu-berlin.de}{schweikn} 
	$\}$%
	@hu-berlin.de
}
\begin{document}

\maketitle

\begin{abstract}
	We present an index structure to boost the evaluation of free-connex acyclic conjunctive queries (fc-ACQs) over relational databases.
	The main ingredient of the index associated with a given database $D$ is an auxiliary database $\ciD$.
	Our main result states that for any fc-ACQ $Q$ over $D$, we can count the number of answers of $Q$ or enumerate them with constant delay after a preprocessing phase that takes time linear in the size of $\ciD$.

	Unlike previous indexing methods based on values or order (e.g., B+ trees), our index is based on structural symmetries among tuples in a database, and the size of $\ciD$ is related to the number of colors assigned to~$D$ by Scheidt and Schweikardt's \enquote{relational color refinement} (2025).
	In the particular case of graphs, this coincides with the minimal size of an equitable partition of the graph.
	For example, the size of $\ciD$ is logarithmic in the case of binary trees and constant for regular graphs.
	Even in the worst-case that $D$ has no structural symmetries among tuples at all, the size of $\ciD$ is still linear in the size of $D$.

	Given that the size of $\ciD$ is bounded by the size of $D$ and can be much smaller (even constant for some families of databases), our index is the first foundational result on indexing internal structural symmetries of a database to evaluate all fc-ACQs with performance potentially strictly smaller than the database size.
\end{abstract}

\paragraph*{Related version} This paper supersedes the preprint \href{https://arxiv.org/abs/2405.12358}{arXiv:2405.12358}~\cite{Riveros2024} by the same authors that only considered the special case of \emph{binary} schemas.

\section{Introduction}%
\label{sec:introduction}
An important part of database systems are \emph{index structures} that provide efficient access to the stored data and help to accelerate query evaluation.
Such index structures typically rely on hash tables or balanced trees such as B-trees or B$^+$-trees, which boost the performance of value queries~\cite{ramakrishnan2003database}.
Another recent example is indices for worst-case optimal join algorithms~\cite{ngo2018worst}.
For example, \emph{Leapfrog Triejoin}~\cite{DBLP:conf/icdt/Veldhuizen14}, a simple worst-case optimal algorithm for evaluating multiway-joins on relational databases, is based on so-called trie iterators for boosting the access under different join orders.
These trie indices have recently improved in the use of time and space~\cite{ArroyueloHNRRS21}.
Typically, index structures are not constructed for supporting the evaluation of a single query, but for supporting the evaluation of an entire class of queries.
This paper presents a novel kind of index structure to boost the evaluation of free-connex acyclic conjunctive queries (fc-ACQs).
Unlike previous indexing methods based on values or order, our index
is based on structural symmetries among tuples in a database.

Acyclic conjunctive queries (ACQs) were introduced in~\cite{DBLP:journals/jacm/BeeriFMY83,DBLP:journals/siamcomp/BernsteinG81} and have since then received a lot of attention in the database literature.
From Yannakakis' seminal paper~\cite{Yannakakis1981} it is known that the result of every fixed ACQ $Q$ over a database $D$ can be computed in time linear in the product of the database size $\dbsize{D}$ and the output size $|\sem{Q}(D)|$.
For the particular subclass fc-ACQ of \emph{free-connex} ACQs, it is even known to be linear in the sum $\dbsize{D}+|\sem{Q}(D)|$.
The notion of \emph{free-connex} ACQs was introduced in the seminal paper by Bagan, Durand, and Grandjean~\cite{Bagan.2007}, who improved Yannakakis' result as follows.
For any database $D$, during a preprocessing phase that takes time linear in $\dbsize{D}$, a data structure can be computed that allows to enumerate, without repetition, all the tuples of the query result $\sem{Q}(D)$, with only a \emph{constant} delay between outputting any two tuples.
Note that the above running time statement suppresses factors that depend on $Q$, and the data structure computed in the preprocessing phase is designed for the particular query $Q$.
To evaluate a new query $Q'$, one has to start a \emph{new} preprocessing phase that, again, takes time linear in $\dbsize{D}$.

Our main contribution is a new index structure that is based on the \emph{structural symmetries} among tuples in the database.
Upon input of a database $D$ of an arbitrary schema $\schema$, the index can be built in time $O(\dbsize{D}{\cdot}\log\dbsize{D})$ in the worst-case, and for many $D$ the time is only $O(\dbsize{D})$.
The main ingredient of our index is an auxiliary database $\ciD$.
Our main result states that for any fc-ACQ $Q$ over $D$, we can count the number of answers of $Q$ or enumerate them with constant delay after a preprocessing phase that takes time $O(\dbsize{\ciD})$.
Compared to the above-mentioned result by Bagan, Durand, and Grandjean~\cite{Bagan.2007}, this accelerates the preprocessing time from $O(\dbsize{D})$ to $O(\dbsize{\ciD})$.

The size of $\ciD$ is related to the number of colors assigned to~$D$ by Scheidt and Schweikardt's \emph{relational color refinement} \cite{ScheidtSchweikardt_MFCS25}.
In the particular case of graphs, this coincides with the minimal size of an \emph{equitable partition} of the graph.
For example, the size of $\ciD$ is logarithmic in the case of binary trees and constant for regular graphs.
Even in the worst-case that $D$ has no structural symmetries among tuples at all, the size of $\ciD$ is still linear in the size of $D$.
Given that $\dbsize{\ciD}$ is bounded by $\dbsize{D}$ and can be
much smaller (even constant for some families of databases), our index is the first foundational result on indexing internal structural symmetries of a database to evaluate all fc-ACQs with performance potentially strictly smaller than the database size. 

Proving our main result relies on two main ingredients. The first is to reduce the evaluation of fc-ACQs on databases $D$ over an arbitrary, fixed schema $\schema$ to the evaluation of fc-ACQs on node-labeled graphs.
We achieve this by showing that (1) $D$ can be translated into a node-labeled graph
$\dbSimple$ in linear time, (2) any fc-ACQ $Q$ over $D$ can be translated in linear time into a query $\QSimple$ over $\dbSimple$, and (3) there is a linear time computable bijection between the answer tuples of $\QSimple$ on $\dbSimple$ and the answer
tuples of $Q$ on $D$.
All this has to be carried out in such a way that $\QSimple$ is also \emph{free-connex acyclic} and, moreover, without introducing additional structural symmetries into
$\dbSimple$ that had not been present in the original database $D$ --- ensuring both is a major technical challenge.

The second main ingredient is to apply the well-known \emph{color refinement} algorithm (CR, for short) to the node-labeled graph $\dbSimple$.
CR is a simple and widely used subroutine for graph isomorphism testing algorithms (see e.g.~\cite{BBG-ColorRefinement,grohe_color_2021} for an overview and~\cite{CFI-paper,Arvind2017,Kiefer2020,Kiefer2021} for details on its expressibility).
Its result is a particular coloring of the vertex set of $\dbSimple$.
The construction of our index structure and, in particular, the auxiliary database $\ciD$ are based on this coloring.
Our result relies on a close connection between the colors computed by CR and the homomorphisms from ACQs to the database.
In recent years, this connection between colors and homomorphisms from tree-like structures has been successfully applied in different contexts~\cite{Dvorak2010,grohe_dimension_2014,Grohe2020a,Kayali2022,Bollen2023,Deeds2024,Goebel2024,ScheidtSchweikardt_MFCS25}.
Notions of index structures that are based on concepts of bisimulations
(which produce results similar to CR) and geared towards conjunctive query evaluation have been proposed and empirically evaluated, e.g., in~\cite{Picalausa2014,Picalausa2012}.
But to the best of our knowledge, the present paper is the first to use CR to produce an index structure that guarantees efficient \emph{constant-delay enumeration} and \emph{counting} and considers databases and fc-ACQs of \emph{arbitrary relational schemas}.

The rest of this paper is organized as follows.
Section~\ref{sec:preliminaries} fixes the basic notation concerning databases and queries, and it formalizes the general setting of \emph{indexing for query evaluation}.
Section~\ref{sec:ACQs} provides the necessary background on fc-ACQs and formally states our main theorem.
Section~\ref{sec:Reductions} reduces the problem from arbitrary schemas to node-labeled graphs.
Section~\ref{sec:OneBinaryRelation} describes our index structure and shows how it can be utilized to enumerate and count the results of fc-ACQs.
Section~\ref{sec:final} proves our main theorem, provides details on the size of $\ciD$, and points out directions for future research.
Many proof details have been deferred to an appendix. 

\section{Preliminaries and Formalization of Indexing for Query Evaluation}%
\label{sec:preliminaries}

We write $\NN$ for the set of non-negative integers, and we let $\NNpos \deff \NN \setminus \set{0}[]$.
For $m, n \in \NN$ we let $[m,n] \deff \setc{i \in \NN}{m \leq i\leq n}$ and $[n] \deff [1,n]$.

Whenever $G$ denotes a graph (directed or undirected),
we write $V(G)$ and $E(G)$ for the set of nodes and the set of edges of $G$.
Given a set $U \subseteq V(G)$, the subgraph of $G$ \emph{induced} by $U$
(for short: $G[U]$) is the graph $G'$ such that $V(G') = U$ and $E(G') = \set{ (u,v) \in E(G) \mid u,v \in U }$.
A \emph{connected component} of an undirected graph is a maximal connected subgraph of $G$.
A \emph{forest} is an undirected acyclic graph; and a \emph{tree} is a connected~forest.

We usually write $\at = (a_1, \ldots, a_r)$ to denote an $r$-tuple for some arity $r \in \NN$ and write $a_i$ to denote its $i$-th component (for $i \in [r]$).
Note that there is only one tuple of arity 0, namely, the \emph{empty tuple} denoted as $\emptytuple$.
Given a function $f\colon X \to Y$ and an $r$-tuple $\tup{x}$ of elements in $X$,
we write $f(\tup{x})$ for the $r$-tuple $(f({x}_1), \ldots, f({x}_r))$.
For $S \subseteq X^r$ we let $f(S) \isdef \setc{f(\tup{x})}{\tup{x} \in S}$.

A \emph{schema} $\sigma$ is a finite, non-empty set of relation symbols,
where each $R \in \sigma$ is equipped with a fixed arity $\ar(R) \in \NNpos$.
A schema $\sigma$ is called \emph{binary} if every $R \in \sigma$ has arity $\ar(R) \leq 2$.
A schema is called a \emph{schema for node-labeled graphs} 
if it consists of one binary relation symbol $E$ and, in addition to
that, a finite number of unary relation symbols.

We fix a countably infinite set $\Dom$ for the \emph{domain} of potential database entries, which we also call \emph{constants}.
A \emph{database} $D$ of schema $\sigma$ ($\sigma$-db, for short)
is a tuple of the form $D = (R^D)_{R \in \sigma}$, where $R^D$ is a finite subset of $\Dom^{\ar(R)}$.
The \emph{active domain} $\adom{D}$ of $D$ is the smallest subset $S$ of $\Dom$
such that $R^D \subseteq S^{\ar(R)}$ for all $R \in \sigma$.
The \emph{size} $\dbsize{D}$ of $D$ is defined as the total number of tuples in $D$, i.e.,
$\dbsize{D} = \sum_{R\in\sigma} |R^D|$.
A \emph{node-labeled graph} is a $\sigma$-db $D$, where $\sigma$ is a
schema for node-labeled graphs, and $E^D$ is \emph{symmetric},
i.e., for all tuples $(a,b) \in E^D$, also $(b,a) \in E^D$.

A $k$-ary \emph{query} (for $k \in \NN$) of schema $\sigma$ ($\sigma$-query, for short)
is a syntactic object $Q$ which is associated with a function $\sem{Q}$ that maps every $\sigma$-db $D$ to a finite subset of $\Dom^k$.
\emph{Boolean} (\emph{non-Boolean}) queries are $k$-ary queries for $k=0$ ($k \geq 1$).
Note that there are only two relations of arity 0, namely $\emptyset$ and $\set{\emptytuple}[]$.
For a Boolean query $Q$, we write $\sem{Q}(D) = \Yes$ to indicate that $\emptytuple \in \sem{Q}(D)$,
and we write $\sem{Q}(D)=\No$ to indicate that~$\sem{Q}(D)=\emptyset$.
\smallskip

\noindent
In this paper we will focus on the following evaluation tasks for a given $\sigma$-db $D$:
\begin{description}
	\item[Boolean query evaluation:]
	Upon input of a Boolean $\sigma$-query $Q$, decide if $\sem{Q}(D) = \Yes$;
	\item[Non-Boolean query evaluation:]
	Upon input of a $\sigma$-query $Q$, compute the relation $\sem{Q}(D)$;
	\item[Counting query evaluation:]
	Upon input of a $\sigma$-query $Q$, compute the number $|\sem{Q}(D)|$.
\end{description}
Concerning the second task, we are mainly interested in finding an \emph{enumeration algorithm} for computing the tuples in $\sem{Q}(D)$.
Such an algorithm consists of two phases: the \emph{preprocessing phase} and the \emph{enumeration phase}.
In the preprocessing phase, the algorithm is allowed to do arbitrary preprocessing to build a suitable data structure.
In the enumeration phase, the algorithm can use this data structure to enumerate all tuples in $\sem{Q}(D)$ followed by an end-of-enumeration message $\EOE$.
We require here that each tuple is enumerated exactly once (i.e., without repetitions).
The \emph{delay} is the maximum time that passes between the start of the enumeration phase and the first output,
between the output of two consecutive tuples, and between the last tuple and $\EOE$.

For our algorithms we use the RAM-model with a uniform cost measure.
In particular, storing and accessing elements of $\Dom$ requires $O(1)$ space and time.
This assumption implies that, for any $r$-ary relation $R^D$,
we can construct in time $O(r \cdot |R^D|)$ an index that allows to enumerate $R^D$ with $O(1)$ delay and to test for a given $r$-tuple $\tup{c}$ whether $\tup{c} \in R^D$ in time $O(r)$.
Furthermore, this implies that given any finite partial function $f\colon A \to B$,
we can build a \emph{lookup table} in time $O(|\dom{f}|)$,
where $\dom{f} \isdef \set{ x \in A \mid f(x) \text{ is defined} }$,
and have access to $f(a)$ in constant time.
\subsection*{Indexing for Query Evaluation}%
\label{sec:indexing}
We close Section~\ref{sec:preliminaries} by formalizing the setting of \emph{indexing for query evaluation}
for the tasks of Boolean ($\booltask$), non-Boolean ($\enumtask$), and counting ($\counttask$) query evaluation for a given class $\QueryClass$ of queries over a fixed schema $\sigma$.
We present here the general setting; later, we will instantiate it for a specific class of queries.
The scenario is as follows:
As input we receive a $\sigma$-db $D$.
We perform an \emph{indexing phase} in order to build a suitable data structure $\DSD$.
This data structure shall be helpful to efficiently evaluate \emph{any} query $Q \in \QueryClass$.
In the \emph{evaluation phase} we have access to $\DSD$.
As input, we receive arbitrary queries $Q \in \QueryClass$ and one of the three task descriptions $\booltask$, $\enumtask$, or $\counttask$, where $\booltask$ is only allowed in case that $Q$ is a Boolean query.
The goal is to solve this query evaluation task for $Q$ on $D$.

This scenario resembles what happens in real-world database systems,
where indexes are built to ensure efficient access to the information stored in the database,
and subsequently these indexes are used for evaluating various input queries.
We formalize the problem as:
\begin{center}
	\framebox{
	\begin{tabular}{rl}
		\textbf{Problem:}
			\!\!\!\!\!\!&
			$\IndexingProblemGeneral$
			\\\hline\vspace{-3mm}\\
		\textbf{Indexing:}
			\!\!\!\!\!\!&
			$\left\{\text{
				\begin{tabular}{rl}
					\textbf{input:}
						&\!\!\!
						a $\sigma$-db $D$\\
					\textbf{result:}
						&\!\!\!
						a data structure $\DSD$
				\end{tabular}
			}\right.$%
			\\[3.5mm]\hline\vspace{-3mm}\\
		\!\!
		\textbf{Evaluation:}
			\!\!\!\!\!\!&
			$\left\{\!\!\!
				\text{
				\begin{tabular}{rl}
					\textbf{input:}
						&\!\!\!
						\parbox[t]{.615\linewidth}{
							a $\sigma$-query $Q \in \QueryClass$, and a task
							description in $\set{ \booltask, \enumtask, \counttask }$
						}\\
					\textbf{output:}
						&\!\!\!
						\parbox[t]{.615\linewidth}{
							the correct answer solving the given task for $\sem{Q}(D)$
						}
				\end{tabular}
				}
			\right.$
			\!\!\!\!
	\end{tabular}
	}
\end{center}
\smallskip

The \emph{indexing time} is the time used for building the data structure $\DSD$;
it only depends on the input database $D$.
The time it takes to answer a Boolean query $Q$ on $D$ or for counting the number of result tuples of a query $Q$ on $D$ depends on~$Q$ and the particular properties of the data structure $\DSD$.
We measure these times by functions that provide an upper bound on the time taken to solve the task by utilizing the data structure $\DSD$.
Concerning the task $\enumtask$, we measure the preprocessing time and the delay by two functions that provide upper bounds on the time taken for preprocessing and the delay, respectively, when using the data structure $\DSD$ to enumerate $\sem{Q}(D)$.

Note that for measuring running times we use $\DSD$ (and not $\dbsize{\DSD}$ or $\dbsize{D}$) because we want to allow the running time analysis to be more fine-grained
than just depending on the size of $\DSD$ or $D$.
Our main result is a solution for $\IndexingProblemGeneral$ where $\QueryClass$ is the class $\fcACQ$ of all \emph{free-connex acyclic conjunctive queries} of an arbitrary schema $\sigma$.

\section{Free-Connex Acyclic CQs and  Formulation of this Paper's Main Theorem}%
\label{sec:ACQs}
Before presenting a formal statement of this paper's main theorem, we provide the necessary background concerning \emph{free-connex acyclic conjunctive queries} ($\fcACQNoSigma$).

We fix a countably infinite set $\Var$ of \emph{variables} such that $\Var \cap \Dom = \emptyset$.
An \emph{atom} $\myatom$ of schema $\sigma$ is of the form $R(x_1, \ldots, x_r)$ where $R \in \sigma$, $r=\ar(R)$, and $x_1, \ldots, x_r \in \Var$.
We write $\vars(\myatom)$ for the set of variables occurring in $\myatom$,
and we let $\ar(\myatom) \deff \ar(R)$ be the \emph{arity} of $\myatom$.
Let $\ell\in\NN$.
An $\ell$-ary \emph{conjunctive query} (CQ) of schema $\sigma$ is of the form 
\ $\Ans(z_1,\ldots,z_\ell) \leftarrow \myatom_1, \ldots, \myatom_d$, \
where $d \in \NNpos$, $\myatom_j$ is an atom of schema $\sigma$ for every $j \in [d]$,
and $z_1, \ldots, z_\ell$ are $\ell$ pairwise distinct variables in $\bigcup_{j \in [d]}\vars(\myatom_j)$.
The expression to the left (right) of $\leftarrow$ is called the \emph{head} (\emph{body}) of the query.
We let
$\atoms(Q) \isdef \smallset{\myatom_1,\ldots,\myatom_d}$,
$\vars(Q) \isdef \bigcup_{j \in [d]} \vars(\myatom_j)$,
and $\free(Q) \isdef \smallset{z_1,\ldots,z_\ell}$.
The \emph{(existentially) quantified} variables are the elements in
$\quant(Q) \deff \vars(Q) \setminus \free(Q)$.
A CQ $Q$ is called \emph{Boolean} if $\free(Q)=\emptyset$, and it is called \emph{full}
(or, \emph{quantifier-free}) if $\quant(Q) = \emptyset$.
The \emph{size} $|Q|$ of the query is defined as $|\atoms(Q)|$,
while $\querysize{Q}$ is defined to be the length of a reasonable representation of $Q$;
to be concrete we let $\querysize{Q}$ be the sum of the query's arity $\ell = |\free(Q)|$ and the sum of the arities of all the atoms of $Q$.
The semantics are defined as usual (cf.\ \cite{AHV-Book}):
A \emph{valuation} $\val$ for $Q$ is a mapping $\val\colon \vars(Q) \to \Dom$.
A \emph{homomorphism} from $Q$ to a $\sigma$-db $D$ is a valuation $\val$ for $Q$ such that for every atom $R(x_1, \ldots, x_r) \in \atoms(Q)$ we have $(\val(x_1), \ldots, \val(x_r)) \in R^D$.
We let $\Hom(Q,D)$ be the set of all homomorphisms from $Q$ to $D$.
The \emph{query result}  of a CQ $Q$ with head $\Ans(z_1,\ldots,z_\ell)$ on the $\sigma$-DB $D$ is defined as the set of tuples
$\sem{Q}(D) \isdef \set{(\val(z_1),\ldots,\val(z_\ell)) \mid \val \in\Hom(Q,D)}$.

The \emph{hypergraph} $H(Q)$ of a CQ $Q$ is defined as follows.
Its vertex set is $\vars(Q)$, and it contains a hyperedge $\vars(\myatom)$ for every $\myatom\in\atoms(Q)$.
The \emph{Gaifman graph} $G(Q)$ of $Q$ is the undirected simple graph with vertex set $\vars(Q)$, and it contains the edge $\smallset{x,y}$ whenever $x,y$ are distinct variables such that $x, y \in \vars(\myatom)$ for some $\myatom \in \atoms(Q)$.

\emph{Acyclic} CQs and \emph{free-connex acyclic} CQs are standard notions studied in
the database theory literature
(cf.~\cite{DBLP:journals/jacm/BeeriFMY83,
	DBLP:journals/siamcomp/BernsteinG81,
	DBLP:journals/jcss/GottlobLS02,
	Bagan.2007,
	AHV-Book};
see~\cite{BGS-tutorial} for an overview).
A CQ $Q$ is called \emph{acyclic} if its hypergraph $H(Q)$ is \emph{$\alpha$-acyclic}, i.e.,
there exists an undirected tree $T = (V(T), E(T))$ (called a \emph{join-tree} of $H(Q)$ and of $Q$)
whose set of nodes $V(T)$ is precisely the set of hyperedges of $H(Q)$,
and where for each variable $x \in \vars(Q)$ the set $\set{ t \in V(T) \mid x \in t }$ induces a connected subtree of $T$.
A CQ $Q$ is \emph{free-connex acyclic} if it is acyclic \emph{and}
the hypergraph obtained from $H(Q)$ by adding the hyperedge $\free(Q)$ is $\alpha$-acyclic.
Note that any CQ $Q$ that is either Boolean or full is free-connex acyclic iff it is acyclic.
However, $\Ans(x,z) \leftarrow R(x,y), R(y,z)$ is an example of a query that is acyclic,
but not free-connex acyclic.
For the special case of \emph{binary} schemas, there is a particularly simple characterization of (free-connex) acyclic CQs (see Appendix~\ref{app:proof:prop:binaryfcacqs} for a proof):
\begin{restatable}[Folklore]{proposition}{propBinaryFcAcqs}\label{prop:binaryfcacqs}
	A CQ $Q$ of a \emph{binary} schema $\sigma$ is \emph{acyclic} iff its Gaifman graph $G(Q)$ is acyclic.
	The CQ $Q$ is \emph{free-connex acyclic} if, and only if, $G(Q)$ is acyclic and the following statement is true:\
	for every connected component $C$ of $G(Q)$,
	either $\free(Q) \cap V(C)=\emptyset$ or the subgraph of $C$ induced by the set $\free(Q) \cap V(C)$ is connected.
\end{restatable}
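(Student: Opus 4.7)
The plan for the first biconditional uses the join-tree characterization of $\alpha$-acyclicity together with the fact that, in a binary schema, every hyperedge has size at most~$2$. For $(\Leftarrow)$, given that $G(Q)$ is a forest with components $C_1, \ldots, C_m$, I would root each $C_i$ at some $r_i$, denote by $p(v)$ the parent of $v$ in $C_i$, and arrange the hyperedges of $H(Q)$ into a tree: each binary hyperedge $E_v \defis \set{p(v), v}$ (for $v$ non-root) is attached under $E_{p(v)}$ when $p(v)$ is itself non-root, and otherwise under a designated root-edge $E_{c_i}$ for a fixed child $c_i$ of $r_i$; each unary hyperedge $\set{v}$ is attached beneath any binary hyperedge containing $v$ (with obvious care for singleton components). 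The join-tree property holds because, for a fixed $v$, the binary hyperedges containing $v$ are $E_v$ (if $v \neq r_i$) and $E_w$ for every child $w$ of $v$, and these become the descendants of $E_v$ in the construction. For $(\Rightarrow)$, if $G(Q)$ had a cycle $v_1, \ldots, v_m, v_1$ with $m \geq 3$, every $v_i$ would lie in the two distinct binary hyperedges $\set{v_{i-1}, v_i}$ and $\set{v_i, v_{i+1}}$, and no binary hyperedge is strictly contained in a hyperedge of size at most~$2$; the GYO elimination procedure would therefore get stuck on the cycle, showing that $H(Q)$ is not $\alpha$-acyclic.

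\textbf{Free-connex, $(\Leftarrow)$.} Write $Z \defis \free(Q)$ and $Z_i \defis Z \cap V(C_i)$, and assume $G(Q)$ is a forest and each $Z_i$ is either empty or induces a connected subgraph of $C_i$. Since $C_i$ is a tree, connectedness of $Z_i$ is equivalent to the statement that, after rooting $C_i$ at some $r_i \in Z_i$, one has ``$v \in Z_i \Rightarrow p(v) \in Z_i$''. I plan to construct a join tree of $H(Q) \cup \set{Z}$ rooted at the node $n_Z$ corresponding to $Z$: for each $i$ with $Z_i \neq \emptyset$, root $C_i$ at some $r_i \in Z_i$; attach $E_v$ as a direct child of $n_Z$ whenever $v \in Z_i$ or $p(v) \in Z_i$, and attach it under $E_{p(v)}$ otherwise; attach every unary hyperedge $\set{v}$ with $v \in Z_i$ directly under $n_Z$, and every other unary hyperedge under its $E_v$. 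For components $C_i$ with $Z_i = \emptyset$, attach the first-part join tree of $H(Q)$ restricted to $V(C_i)$ under $n_Z$ arbitrarily. A case analysis then verifies the join-tree property: for $v \notin Z$, the $v$-hyperedges form a subtree hanging below $E_v$ exactly as in the first part, while for $v \in Z$ every hyperedge containing $v$ (namely $E_v$, each $E_w$ for $w$ a child of $v$, and $\set{v}$ if present) becomes a direct child of $n_Z$, so together with $n_Z$ they form a star.

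\textbf{Free-connex, $(\Rightarrow)$, and the main obstacle.} Conversely, let $T$ be a join tree of $H(Q) \cup \set{Z}$ and suppose for contradiction that some $Z_i$ is disconnected. I would choose $z_1, z_2 \in Z_i$ lying in different components of $G(Q)[Z_i]$ and minimising the length $k$ of their unique $C_i$-path $z_1 = v_0, v_1, \ldots, v_k = z_2$; minimality forces $v_1, \ldots, v_{k-1} \notin Z$ and $k \geq 2$. Set $E_j \defis \set{v_{j-1}, v_j}$. The join-tree property applied to each $v_j$ with $1 \leq j \leq k-1$ implies that the $T$-path from $E_j$ to $E_{j+1}$ consists of hyperedges containing $v_j$ and hence avoids $n_Z$; concatenating these walks yields a path in $T \setminus \set{n_Z}$ from $E_1$ to $E_k$, so $n_Z$ does not lie on the unique $T$-path from $E_1$ to $E_k$. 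Now let $M$ be the unique vertex of $T$ that lies on all three pairwise $T$-paths among $E_1, E_k, n_Z$ (this median exists because $T$ is a tree). The join-tree property for $z_1$ and $z_2$ tells us that $M$ contains both $z_1$ and $z_2$, while the fact that $n_Z$ is not on the $E_1$-$E_k$ path forces $M \neq n_Z$; therefore $M$ is a hyperedge of $H(Q)$ containing $\set{z_1, z_2}$, and binarity of $\sigma$ forces $M = \set{z_1, z_2}$, making $z_1$ and $z_2$ adjacent in $G(Q)[Z_i]$ and contradicting their separation. The main obstacle is the $(\Leftarrow)$ construction: one must arrange the hyperedges so that, for every $v \in Z$, the occurrence set of $v$ becomes a star centred at $n_Z$ rather than an upward chain that would have to cross non-free hyperedges. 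The connectedness assumption on each $Z_i$ is precisely what makes this possible, since it allows rooting $C_i$ inside $Z_i$ and then guarantees that the parent of every $Z_i$-vertex is again in $Z_i$.
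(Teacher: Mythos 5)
Your proposal is correct, but it takes a genuinely different route from the paper's proof. The paper first reduces to the case where $Q$ has no unary atoms and no self-loops (so that $H(Q)=G(Q)$ exactly), then invokes the textbook fact that $\alpha$-acyclicity coincides with ordinary acyclicity on graphs, and proves the free-connex equivalence by two separate inductions with leaf-removal case analyses: for \enquote{$\Rightarrow$} an induction on the number of nodes of a join-tree of $H(Q)+\free(Q)$, and for \enquote{$\Leftarrow$} an induction on $|\vars(Q)|$. You instead keep the unary hyperedges in place and build the witnessing join-trees explicitly from the rooted forest (your parent-closure observation \enquote{$v\in Z_i\Rightarrow p(v)\in Z_i$} is exactly the right invariant, and it makes each free variable's occurrence set a star centred at $n_Z$); and your only-if direction is a short, non-inductive argument: the interior of a shortest separating path avoids $\free(Q)$, so the $T$-path from $E_1$ to $E_k$ misses $n_Z$, and the median of $E_1$, $E_k$, $n_Z$ must then be a genuine hyperedge containing both $z_1$ and $z_2$, which binarity forces to be the edge $\set{z_1,z_2}$ --- contradiction. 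What each approach buys: the paper's reduction lets it lean on standard facts and keeps the inductions uniform, while your version is more self-contained, isolates the single place where binarity is used in the hard direction, and produces concrete join-trees (which is in the spirit of the rooted-tree structure the paper exploits later in Section~\ref{sec:OneBinaryRelation}). The only loose ends in your write-up are bookkeeping ones --- linking the per-component trees into one tree, the degenerate cases $\free(Q)=\emptyset$ and $\free(Q)$ coinciding with an existing hyperedge, and a word on why a chordless cycle survives GYO --- none of which affects correctness.
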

We write $\fcACQ$ to denote the set of all free-connex acyclic CQs of
schema $\sigma$.
Note that Proposition~\ref{prop:binaryfcacqs} does not generalize to arbitrary, non-binary schemas $\sigma$
(see Appendix~\ref{appendix:BinaryToArbitraryIsNontrivial} for an example of a query $Q\in\fcACQ$ whose Gaifman graph is not acyclic).

In the following, we discuss an important result that will be crucial for the main result of this paper.
Yannakakis' seminal result \cite{Yannakakis1981} states that Boolean ACQs $Q$ can be evaluated in time $O(\dbsize{D})$.
Bagan, Durand, and Grandjean's seminal paper~\cite{Bagan.2007} showed that for any (non-Boolean) fc-ACQ $Q$,
the set $\sem{Q}(D)$ can be enumerated with constant delay after $O(\dbsize{D})$ preprocessing time.
The above statements refer to data complexity, i.e., running time components that depend on the query are hidden in the O-notation.
Several proofs of (and different algorithms for) Bagan, Durand and Grandjean's theorem
are available in the literature~\cite{Bagan.2007,
	Bagan_PhD,
	BraultBaron_PhD,
	DBLP:journals/tods/OlteanuZ15,
	DynamicYannakakis2017,
	DBLP:journals/pvldb/IdrisUVVL18,
	DBLP:journals/sigmod/IdrisUVVL19};
all of them focus on data complexity.
For this paper we need a more refined statement that takes into account the combined complexity of the problem, which is implicit in~\cite{BGS-tutorial}~(see Appendix~\ref{app:proof:thm:BGS-enum} for details).
\begin{restatable}{theorem}{thmBGSenum}\label{thm:BGS-enum}
	For every schema $\sigma$ there is an enumeration algorithm that receives as input
	a $\sigma$-db $D$ and a query $Q\in\fcACQ$ and that computes within preprocessing time
	$O(|Q|{\cdot}\dbsize{D})$ a data structure for enumerating $\sem{Q}(D)$ with delay $O(|\free(Q)|)$.
\end{restatable}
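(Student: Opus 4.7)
The plan is to follow the classical proof of Bagan, Durand, and Grandjean's theorem, carefully tracking how each cost depends on $|Q|$ in addition to $\dbsize{D}$. First I would compute, via the GYO algorithm, a join tree $T$ of the hypergraph $H(Q)$, and exploit the free-connex condition to choose $T$ so that the set $T_f$ of nodes whose associated hyperedge has nonempty intersection with $\free(Q)$ forms a connected subtree of $T$. This structural preparation depends only on $Q$ and runs in time polynomial in $|Q|$, which is absorbed into the $O(|Q|\cdot\dbsize{D})$ budget.

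Next, for every atom $\myatom$ of $Q$ I would materialize the relation $R^D_\myatom$ consisting of those tuples of $R^D$ that satisfy the equality constraints imposed by repeated variables of $\myatom$; this takes $O(\ar(\myatom)\cdot|R^D|)$ per atom, summing to $O(|Q|\cdot\dbsize{D})$ overall. Then I would root $T$ at a node of $T_f$ and execute Yannakakis' semi-join reduction, first bottom-up from the leaves to the root and then top-down from the root back to the leaves. Using hash indexes on the common variables of adjacent atoms (available in $O(1)$ per access under our RAM model), each semi-join runs in time linear in the sum of the sizes of the two participating relations, so the total reduction cost is again $O(|Q|\cdot\dbsize{D})$. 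After this phase, every remaining tuple in every $R^D_\myatom$ extends to at least one full homomorphism from $Q$ to $D$.

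For the enumeration phase, I would project each $R^D_\myatom$ with $\myatom\in T_f$ onto the free variables of $\myatom$; these projections together with $T_f$ form a \emph{full} acyclic CQ $Q_f$ with $\free(Q_f)=\free(Q)$ and $\sem{Q_f}(D)=\sem{Q}(D)$. Building a hash index on the common free variables for every edge of $T_f$ takes time linear in the relation sizes. Enumeration then proceeds by a depth-first traversal of $T_f$ that extends the current partial assignment one atom at a time; because the semi-join reduction guarantees that every partial extension leads to a complete answer, a standard amortization shows that the time between two consecutive outputs is dominated by writing a tuple of length $|\free(Q)|$.

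The main obstacle I expect is precisely this last delay bound: a naive traversal of $T_f$ yields only delay $O(|Q|)$, and achieving $O(|\free(Q)|)$ requires contracting $T_f$ so that atoms whose free variables are already determined by previously assigned variables are skipped, restricting per-step work to one new tuple per additional free variable. This contraction step is routine in the free-connex literature but is the technical heart of the argument, and it is what forces the careful combined-complexity bookkeeping instead of the purely data-complexity analysis found in most textbook treatments.
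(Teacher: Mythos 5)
Your outline is the classical Bagan--Durand--Grandjean construction tracked at the level of combined complexity, which is exactly the route the paper takes; the paper merely packages it as a citation, invoking the refined enumeration result restated as Theorem~\ref{thm:BGS-refined} (preprocessing $O(|V(T)|\cdot\dbsize{D})$ and delay $O(|\freetreenodes|)$ given a complete free-connex width-1 GHD with witness $\freetreenodes$) together with the fact that such a decomposition with $|V(T)|\in O(|Q|)$ and $|\freetreenodes|\leq|\free(Q)|$ can be computed in time $O(\size{Q})$. Your identification of the two load-bearing points --- the full-reducer guarantee that every partial assignment extends to an answer, and the contraction of the free part so that its size is $O(|\free(Q)|)$ rather than $O(|Q|)$ --- matches what that cited construction actually does.

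One slip in the bookkeeping, in precisely the dimension this refined statement is about: you dismiss the structural preparation (GYO plus the choice of a connected free part) as ``polynomial in $|Q|$, absorbed into the $O(|Q|\cdot\dbsize{D})$ budget.'' That absorption is not valid in combined complexity --- if $\dbsize{D}$ is small, say constant, then $O(|Q|^2)$ is not $O(|Q|\cdot\dbsize{D})$. You need the decomposition (including the recognition of free-connexness and the witness subtree) to be computable in time $O(\size{Q})$, i.e.\ genuinely linear in the query; this is true and is the content of the result from Bagan's thesis that the paper relies on, but it must be asserted as linear rather than waved through as polynomial.
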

Our main result provides a solution for the problem $\IndexingProblem{\sigma}{\fcACQ}$ for any schema~$\sigma$.
The data structure $\DSD$ that we build for a given database $D$ during the indexing phase will provide a new, auxiliary database $\ciD$, which is potentially much smaller than $D$.
It will allow us to improve the preprocessing time provided by Theorem~\ref{thm:BGS-enum} to $O(|Q|{\cdot}\dbsize{\ciD})$.
Specifically, the following is the main theorem of the paper.
\begin{theorem}\label{thm:maintheorem}
	For every schema $\sigma$, the problem $\IndexingProblemOurs$ can be solved with indexing time $O(|D|\cdot\log|D|)$ constructing a new database $\ciD$,
	such that afterwards, every Boolean acyclic query $Q$ posed against $D$ can be answered in time $O(|Q|{\cdot}|\ciD|)$.
	Furthermore, for every query $Q \in \fcACQ$  we can enumerate the tuples in $\sem{Q}(D)$ with delay $O(|\free(Q)|)$ after preprocessing time $O(|Q|{\cdot}|\ciD|)$,
	and we can compute the number $|\sem{Q}(D)|$ of result tuples in time $O(|Q|{\cdot}|\ciD|)$.
\end{theorem}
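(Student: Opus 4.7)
The plan is to implement the two-stage programme signalled in the introduction: first reduce fc-ACQ evaluation over an arbitrary $\sigma$-db $D$ to fc-ACQ evaluation on a node-labeled graph, and then apply relational color refinement to compress that graph into the auxiliary database $\ciD$ on which we can run a fine-grained version of Bagan--Durand--Grandjean. The indexing-phase output $\DSD$ will consist of $\ciD$ together with enough bookkeeping data to translate answers back and forth.

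First I would invoke the reduction from Section~\ref{sec:Reductions}. Given $D$, compute in time $O(\dbsize{D})$ a node-labeled graph $\dbSimple$; then, given an input query $Q\in\fcACQ$, produce in time $O(|Q|)$ a query $\QSimple\in\fcACQNoSigma$ over the schema of $\dbSimple$ that is again free-connex acyclic, together with a bijection between $\sem{\QSimple}(\dbSimple)$ and $\sem{Q}(D)$ which is computable tuple-by-tuple in $O(|\free(Q)|)$ time. This reduces the Boolean, counting, and enumeration tasks on $D$ to the same tasks on $\dbSimple$, and the per-tuple translation preserves the delay asymptotics in the end.

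Second, during the indexing phase I would run color refinement on $\dbSimple$ in the classical $O(\dbsize{D}\cdot\log\dbsize{D})$ time, and from the resulting stable coloring assemble the auxiliary database $\ciD$ whose elements are the CR color classes and whose tuples record the (colored) multiplicities that the Yannakakis-style semi-join and count-propagation passes need to observe. By construction $\dbsize{\ciD}\leq\dbsize{D}$, and this is the structure returned by the indexing phase.

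Third, in the evaluation phase I would show that every bottom-up pass along a join-tree of $\QSimple$ that Yannakakis' algorithm (and the variant needed for Theorem~\ref{thm:BGS-enum}) performs on $\dbSimple$ depends on each vertex and each edge of $\dbSimple$ only through its CR color, so the whole pass can be executed on $\ciD$ in time $O(|Q|\cdot\dbsize{\ciD})$ rather than $O(|Q|\cdot\dbsize{\dbSimple})$. Plugging this accelerated pass into the construction behind Theorem~\ref{thm:BGS-enum} yields Boolean answers and counts in $O(|Q|\cdot\dbsize{\ciD})$ and, in the free-connex case, enumeration with delay $O(|\free(Q)|)$ after preprocessing $O(|Q|\cdot\dbsize{\ciD})$; composing with the Step-1 bijection on the fly preserves the delay.

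The main obstacle is the invariance argument in the last paragraph: one must prove that two tuples of $\dbSimple$ receiving the same CR color are truly indistinguishable to the semi-join / count-propagation phase of an fc-ACQ, so that collapsing color-equivalent tuples into $\ciD$ changes neither the intermediate counts nor the set of enumerable answers (once composed with the bijection). This is precisely where the characterization of color refinement via tree-like homomorphisms, in the spirit of \cite{Dvorak2010,Grohe2020a,ScheidtSchweikardt_MFCS25}, is essential, and where Section~\ref{sec:OneBinaryRelation} does the real work. A closely related technical hurdle, which the reduction in Section~\ref{sec:Reductions} must handle carefully, is that the encoding $D\mapsto\dbSimple$ has to simultaneously preserve free-connex acyclicity under $Q\mapsto\QSimple$ and introduce no spurious structural symmetries beyond those already present in $D$; a naive element-and-relation encoding would break both properties and either push $\QSimple$ out of reach of Theorem~\ref{thm:BGS-enum} or make $\ciD$ artificially small and incorrect.
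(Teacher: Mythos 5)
Your overall architecture matches the paper's: reduce $D$ to a node-labeled graph $\dbSimple$ (the paper does this in two stages, arbitrary $\to$ binary $\to$ node-labeled, but that is a detail), run color refinement, build the color database $\ciD$, and evaluate over colors. The reduction step and the indexing step are described correctly. The genuine gap is in your evaluation phase, and it is not merely a deferred technicality.

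You describe the passage from $\ciD$-level results back to $D$-level results as ``composing with the Step-1 bijection,'' and you claim that collapsing color-equivalent tuples ``changes neither the intermediate counts nor the set of enumerable answers (once composed with the bijection).'' But there is no bijection between $\sem{\QSimple}(\ciD)$ and $\sem{\QSimple}(\dbSimple)$: a single answer tuple over $\ciD$ is a tuple of \emph{colors}, and it typically corresponds to many answer tuples over $\dbSimple$ (for a regular graph, $\ciD$ has one color and at most one answer, while $\dbSimple$ may have arbitrarily many). Running the Theorem~\ref{thm:BGS-enum} machinery on $\ciD$ therefore enumerates color tuples, not answers, and counts $|\sem{\QOne}(\ciD)|$, not $|\sem{\QOne}(\DOne)|$. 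What the paper actually does, and what your proposal is missing, is (i) for enumeration, a per-color-tuple \emph{expansion} procedure that walks the query tree top-down and replaces each color $c_{i+1}$ by the actual neighbors $\N{v_j}{c_{i+1}}$ of the already-chosen parent vertex, together with Lemma~\ref{lemma:mainLemma} guaranteeing that every such partial choice extends to a full answer (no dead ends), which is exactly what makes the delay $O(|\free(Q)|)$; and (ii) for counting, a separate weighted dynamic program over colors using the class sizes $n_c$ and the multiplicities $\numN{c}{c'}$ (Lemma~\ref{lemma:counting}), plus a second, capped DP to count distinct \emph{projected} answers when $\free(Q)\neq\vars(Q)$ rather than homomorphisms. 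Your remark that $\ciD$ should ``record the colored multiplicities'' gestures at (ii), but then $\ciD$ is no longer a plain relational database and Theorem~\ref{thm:BGS-enum} no longer applies to it off the shelf; the paper keeps $\ciD$ relational and puts the multiplicities in auxiliary lookup tables consumed by a bespoke counting pass. Without the expansion procedure and the no-dead-end lemma, the constant-delay claim does not follow from invariance of the Yannakakis pass alone.
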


The rest of this paper is devoted to proving Theorem~\ref{thm:maintheorem} and to also providing insights in the size of $\ciD$.
In Section~\ref{sec:Reductions}, we reduce the problem from arbitrary schemas $\sigma$ to schemas $\sigmaSimple$ for node-labeled graphs.
In Section~\ref{sec:OneBinaryRelation} we describe our index structure and show how to utilize it to enumerate and count the results of fc-ACQs posed against node-labeled graphs.
In Section~\ref{sec:final} we combine the results of the previous two sections into the proof of Theorem~\ref{thm:maintheorem},
and we also provide details on the size of $\ciD$, and how it relates to the internal structural symmetries of the original $\sigma$-database.

\section{Reducing the Problem from Arbitrary Schemas to Node-Labeled Graphs}%
\label{sec:Reductions}
This section reduces the evaluation of fc-ACQs on databases over an arbitrary, fixed schema $\sigma$ to the evaluation of fc-ACQs on node-labeled graphs,
while ensuring that the translation is conducted in linear time, does not introduce new symmetries into the database, and preserves free-connex acyclicity of the queries.
We proceed it two steps: first, from arbitrary schemas to binary schemas (Section~\ref{sec:ArbitrarySchemas}),
and afterwards from binary schemas to node-labeled graphs (Section~\ref{sec:ReductionToOneBinaryRelation}).
\subsection{From Arbitrary Schemas to Binary Schemas}%
\label{sec:ArbitrarySchemas}

This subsection is devoted to proving the following theorem.
\begin{theorem}\label{thm:ReductionArbSchemaToBinarySchema}
	For any arbitrary schema $\sigma$ with $k \deff \ar(\sigma)$, there exists a binary schema $\sigmaBinary$ of size $|\sigma| + 2k^2 + k + 1$, such that the following is true:
	\begin{enumerate}[(1)]
		\item\label{item:one:ReductionArbToBinary}
		upon input of a $\sigma$-db $D$, we can compute in time $2^{\bigOh(k\log k)}{\cdot} \dbsize{D}$ a $\sigmaBinary$-db $\dbBinary$,
		\item\label{item:two:ReductionArbToBinary}
		upon input of any query $\query\in\fcACQ$, we can compute in time $\bigOh(\size{Q})$ a query $\QBinary \in \fcACQBinary$ with $|\free(\QBinary)| < 2 {\cdot} |\free(\query)|$, such that
		\item\label{item:three:ReductionArbToBinary}
		there is a bijection $f\colon \smallsem{\QBinary}(\dbBinary) \to \smallsem{\query}(D)$.
		Furthermore, when given a tuple $\at \in \smallsem{\QBinary}(\dbBinary)$, the tuple $f(\at) \in \smallsem{\query}(D)$ can be computed in time $\bigOh(|\free(\query)| \cdot k)$.
	\end{enumerate}
\end{theorem}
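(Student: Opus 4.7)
The plan is to represent each $r$-ary tuple in $D$ by an auxiliary \emph{tuple node} together with a gadget of binary atoms encoding its components, and then mirror this encoding when translating queries. I would define $\sigmaBinary$ to contain the relations of $\sigma$ of arity at most $2$ (kept as-is), one unary marker $U_R$ for each relation $R\in\sigma$ of higher arity (to tag tuple nodes of type $R$), and $2k^2+k+1$ further auxiliary binary and unary relations that, for each pair of positions $(i,j)\in[k]^2$ and each direction of access, relate components of tuple nodes to domain elements and to each other. This yields the claimed schema size $|\sigma|+2k^2+k+1$.

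For item~(1), the translation of $D$ into $\dbBinary$ introduces one fresh tuple node for each tuple of arity at least $3$ in $D$ and populates the auxiliary binary relations accordingly, while tuples of arity at most $2$ are copied verbatim. The $2^{O(k\log k)}$ factor in the stated running time accommodates a per-tuple canonical labeling (over the up to $k!$ orderings of positions) ensuring that two tuple nodes become structurally indistinguishable in $\dbBinary$ only if their underlying tuples were already indistinguishable in~$D$; this is crucial so the encoding itself does not forge new structural symmetries that the downstream color-refinement-based index would then exploit spuriously. For item~(2), each atom $\alpha=R(x_1,\ldots,x_r)$ with $r\geq 3$ is replaced in $\QBinary$ by $U_R(y_\alpha)$ together with binary atoms linking the fresh tuple variable $y_\alpha$ to each $x_i$ via the position relations; each $y_\alpha$ is existentially quantified, so $\free(\QBinary)$ consists of $\free(Q)$ together with at most one additional tuple-node variable per head atom of $Q$, yielding the bound $|\free(\QBinary)|<2|\free(Q)|$. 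For item~(3), the bijection $f$ sends an answer tuple of $\QBinary$ to its projection onto the coordinates corresponding to $\free(Q)$, which is evaluable in $O(|\free(Q)|\cdot k)$ time.

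The main obstacle is the preservation of free-connex acyclicity under the query translation. A naive translation in which each atom becomes a ``star'' around its tuple variable $y_\alpha$ introduces cycles into the Gaifman graph of $\QBinary$ whenever two atoms of $Q$ share at least two variables, so by Proposition~\ref{prop:binaryfcacqs} the resulting $\QBinary$ would be cyclic. To avoid this, the translation should be guided by a join tree $T$ of $H(Q)$: for each edge $(\alpha,\alpha')\in E(T)$ with separator $\vars(\alpha)\cap\vars(\alpha')$, the shared variables are encoded so that equality of their occurrences in $\alpha$ and $\alpha'$ is enforced through the single binary link between $y_\alpha$ and $y_{\alpha'}$, rather than by directly identifying multiple variable-nodes across the two stars. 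This yields a $\QBinary$ whose Gaifman graph is a forest, so Proposition~\ref{prop:binaryfcacqs} certifies acyclicity; a careful placement of $\free(Q)$ along $T$ then gives free-connexity. Soundness of the bijection $f$ follows because $\dbBinary$ encodes each tuple of $D$ via a uniquely identified tuple node, so homomorphisms of $\QBinary$ into $\dbBinary$ correspond bijectively to homomorphisms of $Q$ into~$D$.
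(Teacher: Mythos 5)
You correctly identify the central obstacle (the naive ``star'' encoding of atoms destroys acyclicity, so consistency of shared variables must be routed along a join tree), and your overall architecture is close in spirit to the paper's. But there is a genuine gap in the mechanism you propose for that routing. You say that for a tree edge $(\alpha,\alpha')$ the agreement of the separator variables is ``enforced through the single binary link between $y_\alpha$ and $y_{\alpha'}$'', i.e.\ through binary relations directly between \emph{tuple nodes}. To express ``tuple $\at$ and tuple $\bt$ agree on position $i$ versus position $j$'' in the database, such a relation would have to contain all such pairs of tuples, and this relation can have size $\Theta(\dbsize{D}^2)$ (e.g.\ when all tuples share a value in one position), so the linear-time construction of $\dbBinary$ in item~(1) fails. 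The paper avoids this by introducing a node $v_{\projection}$ for \emph{every projection} $\projection\in\Projections(\tD)$ of every tuple --- this, not a per-tuple canonical labeling, is the actual source of the $2^{O(k\log k)}$ factor --- and by restricting the consistency relations $F_{i,j}$ to pairs of projections whose underlying sets are comparable by inclusion, which keeps them linear in $\dbsize{D}$. Correspondingly, the query-side tree is not a join tree of $H(Q)$ but a complete fc-1-GHD massaged (by subdividing edges with intersection bags) so that adjacent bags are inclusion-comparable; the separator of a tree edge then corresponds to an actual projection node in $\dbBinary$, which is what makes the consistency atoms satisfiable. Your proposal has no counterpart to the projection nodes, and without them neither the size bound nor the existence of targets for the separator variables goes through.

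A second, related problem is your treatment of the head. You keep $\free(Q)$ among the free variables of $\QBinary$ and let $f$ be a coordinate projection, yet you simultaneously propose \emph{not} identifying shared variable occurrences across atoms --- so it is unclear which copy of a free variable sits in the head, and whether the projection is injective on $\smallsem{\QBinary}(\dbBinary)$. In the paper the variables of $\QBinary$ are exclusively the tree-node variables $\singvarv{t}$ (valued in projection nodes) and $\singvarw{t}$ (valued in tuple nodes); the head lists the $\singvarv{t}$ for $t$ in the witness set $\Witness$ of the fc-1-GHD, the bound $|\free(\QBinary)|=|\Witness|<2\,|\free(Q)|$ comes from the subdivision construction, and $f$ recovers each original free variable by extracting a fixed component of the projection tuple carried by the corresponding witness node --- it is not a coordinate projection. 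You would need to commit to one of these two designs and carry it through; as written, the two halves of your construction are incompatible.
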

\noindent
When adopting the same view as in the formulation of Theorem~\ref{thm:BGS-enum}, the schema $\sigma$ is fixed and the value $k = \ar(\sigma)$ is subsumed in the $\bigOh$-notation.
Thus, in Theorem~\ref{thm:ReductionArbSchemaToBinarySchema}, the running times simplify to $\bigOh(\dbsize{D})$ in statement~\ref{item:one:ReductionArbToBinary} and $\bigOh(|\free(\query)|)$ in statement~\ref{item:three:ReductionArbToBinary}.

The rest of this subsection is dedicated to presenting the main ingredients of the proof of Theorem~\ref{thm:ReductionArbSchemaToBinarySchema}
(see Appendix~\ref{appendix:FromArbitraryToBinary} for the missing details).
We first pick the binary schema $\sigmaBinary$ and show how to build the database $D'$ from $D$
(statement \ref{item:one:ReductionArbToBinary}).
Afterwards, we show how to construct the fc-ACQ $Q'$ from a given fc-ACQ $Q$
(statement \ref{item:two:ReductionArbToBinary}).
Finally, we present the bijection $f$ from the outputs in $\smallsem{\QBinary}(\dbBinary)$ to the outputs in $\smallsem{\query}(D)$ and sketch its correctness
(statement \ref{item:three:ReductionArbToBinary}).

We use the following notation.
For any $r \in \NN$ and any $r$-tuple $\at = (a_1,\ldots,a_r)$ we let $\tset(\at) \deff \smallset{a_1, \ldots, a_r}$, and for $i \in [r]$ we let $\proj_i(\bar{a}) \deff a_i$.
For $r \in \NNpos$, every $r$-tuple $\at = (a_1, \ldots, a_r)$, every $m \in [0,r]$ and every tuple $(j_1, \ldots, j_m)$ of pairwise distinct elements $j_1, \ldots, j_m \in [r]$, we let $\projgen_{(j_1,\ldots,j_m)}(\at) \deff (a_{j_1},\ldots,a_{j_m})$, and we call this tuple \emph{a projection of $\at$}.
In particular, for $m = 0$ and the empty tuple $()$ we have $\projgen_{()}(\at) = ()$;
and for $i \in [r]$ we have $\projgen_{(i)}(\at) = (a_i) = (\proj_i(\at))$.
We let $\Projections(\at)$ be the set of all projections of $\at$, i.e., 
\begin{equation*}
	\Projections(\at) \;\;=\;\; \set[\big]{ 
		\projgen_{(j_1,\ldots,j_m)}(\at) \mid
			m \in [0,r],\; j_1, \ldots, j_m\ \text{are pairwise distinct elements in} [r] }.
\end{equation*}
For a $\sigma$-db $D$, we let $\tD \deff \bigcup_{R \in \sigma} R^{D}$ be the set of tuples that occur in some relation of $D$.
We let $\Projections(\tD) \deff \bigcup_{\dt \in \tD} \Projections(\dt)$ be the set of projections of tuples present in $D$.

\paragraph{\ref{item:one:ReductionArbToBinary} Choosing $\sigmaBinary$ and Constructing the $\sigmaBinary$-db $\dbBinary$.}
Let $\sigma$ be an arbitrary schema, and let $k \deff \ar(\sigma) = \max\setc{\ar(R)}{R \in \sigma}$.
Let $\sigmaBinary$ consist of unary  symbols $U_R$ for all $R \in \sigma$, unary symbols $\ArS{i}$ for all $i \in [0,k]$, and binary symbols $E_{i,j}$ and $F_{i,j}$ for all $i, j \in [k]$.
Clearly, $|\sigmaBinary| = |\sigma| + 2k^2 + k + 1$.

Now, let $D$ be an arbitrary $\sigma$-db.
Our $\sigmaBinary$-db $\dbBinary$ that represents $D$ is defined as follows.
For each $\dt \in \tD$ we introduce a new node $w_{\dt}$, and for each $\projection \in \Projections(\tD)$ we introduce a new node $v_{\projection}$.
We let
\begin{align*}
	(U_R)^{\dbBinary} \;\;&\deff\;\;
		\setc{w_{\at}}{\at\in R^D},\quad
		\text{for every $R \in \sigma$},\\
	(\ArS{i})^{\dbBinary} \;\;&\deff\;\;
		\setc{v_{\projection}}{\projection \in \Projections(\tD),\ \ar(\projection) = i},\quad
		\text{for every $i \in [0,k]$},
\end{align*}
and for all $i, j \in [k]$ we let
\begin{align*}
	(E_{i,j})^{\dbBinary} \;\;&\deff\;\;
		\set[\big]{
			(w_{\dt}, v_{\projection})
			\mid
			\dt \in \tD,\
			\projection \in \Projections(\dt),\
			i \leq \ar(\dt),\
			j \leq \ar(\projection),\
			\proj_i(\dt) = \proj_j(\projection)},\\
	(F_{i,j})^{\dbBinary} \;\;&\deff\;\;
		\bigl\{\,
			\makebox[\widthof{$(w_{\dt}, v_{\projection})$}][l]{$(v_{\projection},v_{\projectionAlt})$}
			\mid
			\begin{aligned}[t]
				&\projection, \projectionAlt \in \Projections(\tD),\
				i \leq \ar(\projection),\
				j \leq \ar(\projectionAlt),\
				\proj_i(\projection) = \proj_j(\tup{q}), \text{ and } \\
				&\bigl(
					\tset(\projection) \subseteq \tset(\projectionAlt)
					\;\ \text{or}\;\
					\tset(\projection) \supseteq \tset(\projectionAlt)
				\bigr)
				\,\bigr\}.
			\end{aligned}
\end{align*}
Intuitively, in the new database $\dbBinary$, we represent each $R$-tuple $\at$ by a node $w_{\at}$ in $(U_R)^{\dbBinary}$ and any projection $\projection$ of arity $i$ by a node $v_{\projection}$ in $(A_i)^{\dbBinary}$.
To relate $w_{\at}$ with $v_{\projection}$ we use the binary relation $(E_{i,j})^{\dbBinary}$ that models that the $i$-component of $\at$ is equal to the $j$-component of $\projection$.
Finally, $(F_{i,j})^{\dbBinary}$ relates equal values between projections of tuples in $D$ whenever their domains are contained.

\begin{restatable}{claim}{constructDbBinary}\label{claim:Construct-dbBinary}
Upon input of a $\sigma$-db $D$, the $\sigmaBinary$-db $\dbBinary$ can be constructed in time $2^{O(k\cdot \log k)} \cdot \dbsize{D}$.
\end{restatable}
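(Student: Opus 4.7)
The plan is to build $\dbBinary$ in several passes over $D$, with all of the work sitting inside a single auxiliary collection: the set $\Projections(\tD)$. Abbreviating $N \deff \dbsize{D}$, observe that each tuple $\dt \in \tD$ of arity $r \le k$ has $\sum_{m=0}^{r} \tfrac{r!}{(r-m)!} \le 2^{\bigOh(k \log k)}$ projections; hence computing $\Projections(\tD)$ by enumerating all projections of all tuples in $\tD$ takes $2^{\bigOh(k \log k)} \cdot N$ time and produces a set of that same size. I would store $\Projections(\tD)$ in a hash table keyed by the tuple itself, which, under the uniform cost model invoked in Section~\ref{sec:preliminaries}, supports $\bigOh(1)$ membership tests and $\bigOh(1)$ retrieval of the node $v_{\projection}$ from $\projection$.

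With this auxiliary structure in hand, the unary relations are essentially immediate: a single scan of the relations of $D$ yields $(U_R)^{\dbBinary}$ for all $R \in \sigma$ in $\bigOh(N)$ time, and a single pass over the hash table for $\Projections(\tD)$ yields $(\ArS{i})^{\dbBinary}$ for every $i \in [0,k]$ in $2^{\bigOh(k \log k)} \cdot N$ time. For the binary relations $(E_{i,j})^{\dbBinary}$, the plan is to iterate over each $\dt \in \tD$, then over each of its (at most $2^{\bigOh(k \log k)}$) projections $\projection$, and for every pair $(i,j) \in [k] \times [k]$ with $i \le \ar(\dt)$ and $j \le \ar(\projection)$, add $(w_{\dt}, v_{\projection})$ to the relation whenever $\proj_i(\dt) = \proj_j(\projection)$. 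This contributes $2^{\bigOh(k \log k)} \cdot N$ in total across all $(i,j)$.

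The main obstacle will be the construction of $(F_{i,j})^{\dbBinary}$: a naive enumeration over all pairs $(\projection,\projectionAlt) \in \Projections(\tD) \times \Projections(\tD)$ would cost $2^{\bigOh(k \log k)} \cdot N^2$, which is too much. Here I would exploit the containment condition in the definition to \emph{localize} the enumeration. Writing $(F_{i,j})^{\dbBinary} = F_{i,j}^{\subseteq} \cup F_{i,j}^{\supseteq}$ according to which of $\tset(\projection), \tset(\projectionAlt)$ contains the other, it suffices to describe $F_{i,j}^{\subseteq}$; the other half is handled by the same recipe with the roles of $\projection$ and $\projectionAlt$ swapped (in fact $(v_{\projection},v_{\projectionAlt}) \in F_{i,j}^{\supseteq}$ iff $(v_{\projectionAlt},v_{\projection}) \in F_{j,i}^{\subseteq}$). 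For $F_{i,j}^{\subseteq}$, the plan is to iterate over each $\projectionAlt \in \Projections(\tD)$ and enumerate, one by one, all tuples $\projection$ of arity at most $k$ whose entries are drawn from the set $\tset(\projectionAlt)$, of which there are at most $\sum_{m=0}^{k} k^m = 2^{\bigOh(k \log k)}$. For each such candidate $\projection$ I would perform an $\bigOh(1)$ hash-table lookup to decide whether $\projection \in \Projections(\tD)$ and, if so, test the position-compatibility condition $\proj_i(\projection) = \proj_j(\projectionAlt)$ for each $(i,j)$. Summed over all $\projectionAlt$, all $(i,j)$ pairs, and both containment directions, this contributes another $2^{\bigOh(k \log k)} \cdot N$, which together with the earlier steps yields the claimed bound.
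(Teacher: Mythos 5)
Your proposal is correct and follows essentially the same route as the paper's proof: bound $|\Projections(\tD)|$ by $2^{\bigOh(k\log k)}\cdot\dbsize{D}$, build the unary relations and the $(E_{i,j})^{\dbBinary}$ by direct passes, and construct $(F_{i,j})^{\dbBinary}$ by enumerating, for each projection, only the $2^{\bigOh(k\log k)}$ candidate tuples over its value set and testing membership --- which is exactly how the paper avoids the quadratic blow-up. The only cosmetic differences (hash-table lookups in $\bigOh(1)$ versus the paper's $\bigOh(k)$ membership tests, and your explicit split of $F_{i,j}$ into the two containment directions versus the paper's simultaneous insertion into $F_{i,j}$ and $F_{j,i}$) do not affect the bound.
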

\begin{proof}[Proof sketch]
	Note that for every $\at\in\tD$ with $r\deff \ar(\at)$ we have
	\(
		|\Projections(\at)|
		\ \leq \
		\sum_{m = 0}^{r} {r \choose m } \cdot m!
		\ \leq \ 
		r \cdot r!
		\ \leq \
		k \cdot k!\,.
	\)
	Thus, $|\Projections(\tD)| \leq k \cdot k! \cdot |\tD| \leq k \cdot k! \cdot \dbsize{D} = 2^{O(k\cdot \log k)} \cdot \dbsize{D}$.
	The condition in the second line of the definition of $(F_{i,j})^{\dbBinary}$ is crucial in order to guarantee that we can indeed construct this relation in time $2^{O(k\cdot \log k)}{\cdot} \dbsize{D}$
	(when omitting this condition, we would end up with a factor $\dbsize{D}^2$ instead of~$\dbsize{D}$).
	Details on the (brute-force) construction of $\dbBinary$ can be found in Appendix~\ref{appendix:ReductionFromArbitraryToBinary}.
\end{proof}

\paragraph{\ref{item:two:ReductionArbToBinary} Constructing the fc-ACQ $\QBinary$.}
Our next aim is to translate queries $\query \in \fcACQ$ into suitable $\sigmaBinary$-queries $\QBinary$.
We want $\QBinary$ to be in $\fcACQBinary$, and we want to ensure that there is an easy to compute bijection $f$ that maps the tuples in $\bigsem{\QBinary}(\dbBinary)$ onto the tuples in $\sem{\query}(D)$.
The main challenge here is to define $\QBinary$ in such a way that it indeed is \emph{free-connex acyclic}
(cf.\ Appendix~\ref{appendix:BinaryToArbitraryIsNontrivial}).

Our translation is based on the well-known characterization of the free-connex acyclic queries via the following notion.
A \emph{free-connex generalized hypertree decomposition of width~1} (fc-1-GHD, for short) of a CQ $\query$ is a tuple $H = (T, \Bag, \Cover, \Witness)$ such that
\begin{enumerate}[(1)]
	\item
	$T = (V(T), E(T))$ is a finite undirected tree,
	\item
	$\Bag$ is a mapping that associates with every $t \in V(T)$ a set $\Bag(t) \subseteq \Vars(\query)$ such that
	\begin{enumerate}[(a), topsep=0pt]
		\item
		for each atom $\Atom \in \Atoms(\query)$ there exists a $t \in V(T)$ such that $\Vars(\Atom) \subseteq \Bag(t)$, and
		\item
		for each variable $y \in \Vars(\query)$ the set $\Bag^{-1}(y) \deff \setc{t \in V(T)}{y \in \Bag(t)}$ induces a connected subtree of $T$
		(this condition is called \emph{path condition}),
	\end{enumerate}
	\item
	$\Cover$ is a mapping that associates with every $t \in V(T)$ an atom $\Cover(t) \in \Atoms(\query)$ such that $\Bag(t) \subseteq \Vars(\Cover(t))$,
	\item
	$\Witness \subseteq V(T)$ such that $\Witness$ induces a connected subtree of $T$, and $\free(\query) = \bigcup_{t \in \Witness} \Bag(t)$.
	The set $\Witness$ is called a \emph{witness} for the free-connexness of $H$.
\end{enumerate}
By $\size{H}$ we denote the size of a reasonable representation of $H$.
An fc-1-GHD $H$ is called \emph{complete} if for every $\Atom \in \Atoms(\query)$ there is a $t \in V(T)$ such that $\Vars(\Atom) = \Bag(t)$ and $\Atom = \Cover(t)$.

\begin{restatable}{proposition}{propComputeGHD}\label{prop:ComputeGHD}
	For every $\query \in \fcACQ$, in time $\bigOh(\size{\query})$ one can compute a complete fc-1-GHD $H = (T, \Bag, \Cover, \Witness)$ of $\query$ such that $|\Witness| < 2 {\cdot} |\free(\query)|$ and for all edges $\smallset{t,p} \in E(T)$ we have $\Bag(t) \subseteq \Bag(p)$ or $\Bag(t) \supseteq \Bag(p)$.
\end{restatable}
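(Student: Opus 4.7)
The plan is to exploit the assumption $\query \in \fcACQ$: the extended hypergraph $H' \deff H(\query) \cup \{\free(\query)\}$ is $\alpha$-acyclic, so the GYO algorithm produces in time $O(\size{\query})$ a join tree $T_0$ of $H'$ whose nodes are in bijection with the hyperedges of $H'$. Let $t^*$ be the node corresponding to the added hyperedge $\free(\query)$, let $t_1, \ldots, t_m$ be its $T_0$-neighbors, and set $A_i \deff \Bag_{T_0}(t_i) \cap \free(\query)$. Since every free variable must also appear in some actual atom, the path condition in $T_0$ forces $A_1 \cup \cdots \cup A_m = \free(\query)$.

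Building $T$ then proceeds in three passes. First, to achieve completeness, I introduce for every atom $\myatom \in \atoms(\query)$ a node $u_\myatom$ with $\Bag(u_\myatom) = \Vars(\myatom)$ and $\Cover(u_\myatom) = \myatom$, and I connect $u_\myatom$ and $u_{\myatom'}$ whenever the corresponding $T_0$-nodes (both distinct from $t^*$) are adjacent. Second, on any such edge whose endpoints have incomparable bags, I insert an intermediate separator $s$ with $\Bag(s) = \Vars(\myatom) \cap \Vars(\myatom')$ and $\Cover(s) = \myatom$; this enforces monotonicity on both new edges, is valid since $\Bag(s) \subseteq \Vars(\Cover(s))$, and preserves the path condition because every variable shared by the original endpoints also lies in $s$. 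Third, I replace the removed $t^*$ by a witness backbone: for each neighbor $t_i$ I add a node $w_i$ with $\Bag(w_i) = A_i$ and $\Cover(w_i)$ the atom at $t_i$, attach $w_i$ to $u_{\myatom_i}$, and link a carefully chosen subset of the $w_i$'s (together with helper nodes) into the connected subtree $\Witness$.

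The main obstacle is this third pass, because $\Witness$ must simultaneously cover $\free(\query)$, be connected, satisfy the global path condition, use only monotone edges, and contain fewer than $2{\cdot}|\free(\query)|$ nodes. I would handle the size bound by greedy covering: pick $w_{i_1}, \ldots, w_{i_l}$ with $l \leq |\free(\query)|$ whose bags cover $\free(\query)$, each chosen $A_{i_j}$ contributing at least one previously uncovered free variable; arrange the chosen ones along a tree with at most one helper per edge (bag = intersection of the neighbouring selected $A_{i_j}$'s, cover = one of their atoms), yielding $|\Witness| \leq 2l - 1 < 2{\cdot}|\free(\query)|$ and monotonicity on every backbone edge. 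Each non-chosen $w_i$ is re-attached to the backbone through additional monotone separators that lie outside $\Witness$. The delicate point is choosing a backbone ordering under which, for every $y \in \free(\query)$, the set of nodes of $T$ containing $y$ remains a connected subtree; this reduces to producing a join tree of the auxiliary hypergraph whose hyperedges encode which of the selected $A_{i_j}$'s contain a given free variable, and the combined $\alpha$-acyclicity of $H'$ and $H(\query)$ --- both implied by $\query \in \fcACQ$ --- supplies enough structure for a second linear-time GYO pass to yield such a tree. Since every step is local and linear in $\size{\query}$, the total running time is $O(\size{\query})$, and completeness, width~$1$, the path condition, and edge-monotonicity of the final $(T, \Bag, \Cover, \Witness)$ then follow by inspecting the three passes in turn.
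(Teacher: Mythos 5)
Your overall architecture differs from the paper's: the paper does not rebuild the decomposition from a join tree of $H(\query)+\free(\query)$, but simply invokes Bagan's known linear-time algorithm (plus the standard completion steps from the literature) to obtain a complete fc-1-GHD with $|\Witness|\leq|\free(\query)|$, and then subdivides each edge $\smallset{t,p}$ with incomparable bags by a node with bag $\Bag(t)\cap\Bag(p)$; since the connected set $\Witness$ has at most $|\Witness|-1$ internal edges, this adds fewer than $|\free(\query)|$ nodes to $\Witness$, giving the bound. Your second pass (inserting intersection separators) is exactly this subdivision step and is fine.

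The genuine gap is in your third pass, which you yourself flag as the main obstacle and then resolve by assertion. You claim that a join tree of the auxiliary hypergraph on the sets $A_i=\Bag_{T_0}(t_i)\cap\free(\query)$ (or of the selected $A_{i_j}$'s) can be produced by ``a second linear-time GYO pass'' because $\query\in\fcACQ$. But $\alpha$-acyclicity of $H(\query)+\free(\query)$ does \emph{not} transfer to the hypergraph $\set{A_1,\ldots,A_m}$ obtained after deleting the hub $t^*$: $\alpha$-acyclicity is not preserved under removal of a hyperedge (the standard example being $\set{x,y},\set{y,z},\set{z,x},\set{x,y,z}$, which is acyclic but becomes a triangle once $\set{x,y,z}$ is dropped). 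You gesture at the additional acyclicity of $H(\query)$ as the missing ingredient, but you never show how it rules out such configurations among the traces $A_i$, nor how the re-attachment of the non-chosen components ``through additional monotone separators'' preserves the path condition for free variables shared between a non-chosen component and the backbone. This is precisely the non-trivial content of the equivalence between free-connex acyclicity and the existence of a width-$1$ free-connex GHD, which the paper deliberately outsources to Bagan's thesis rather than reproving. As written, your argument establishes the monotonicity and size-counting parts but leaves the existence of a valid connected witness --- the heart of the proposition --- unproven.
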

\begin{proof}[Proof sketch]
	Using a result of Bagan \cite{Bagan_PhD} and then performing standard modifications, we can construct in time $\bigOh(\size{\query})$ a complete fc-1-GHD $H = (T, \Bag, \Cover, \Witness)$ of $\query$ with $|\Witness| \leq |\free(\query)|$.
	We subdivide every edge $\smallset{t,p}$ of $T$ that violates the condition
	\enquote{$\Bag(t) \subseteq \Bag(p)$ or $\Bag(t) \supseteq \Bag(p)$}
	by introducing a new node $n_{\smallset{t,p}}$, letting $\Bag(n_{\smallset{t,p}}) \deff \Bag(t) \cap \Bag(p)$ and
	$\Cover(n_{\smallset{t,p}}) \deff \Cover(t)$ and, in case that $\smallset{t,p} \subseteq \Witness$, inserting $n_{\smallset{t,p}}$ into $\Witness$.
	See Appendix~\ref{appendix:ReductionFromArbitraryToBinary} for details.
\end{proof}

Given a query $\query \in \fcACQ$, we use Proposition~\ref{prop:ComputeGHD} to compute a complete fc-1-GHD $H = (V, \Bag, \Cover, \Witness)$ of $\query$ such that $|\Witness| < 2 {\cdot} |\free(\query)|$ and for all edges $\smallset{t,p} \in E(T)$ we have $\Bag(t) \subseteq \Bag(p)$ or $\Bag(t) \supseteq \Bag(p)$.
We fix an arbitrary order $<$ on $\Vars(\query)$.
For every $t \in V(T)$ we let $\vtup{t}$ be the $<$-ordered tuple formed from the elements of $\Bag(t)$
(i.e., $\ar(\vtup{t}) = |\Bag(t)|$ and $\tset(\vtup{t}) = \Bag(t)$).
For every atom $\myatom \in \Atoms(\query)$, we fix a node $t_\myatom \in V(T)$ such that $\Vars(\myatom) = \Bag(t_\myatom)$ and  $\myatom = \Cover(t_\myatom)$
(such a node $t_\myatom$ exists because $H$ is complete),
and we let $\myAtoms(T) \deff \setc{t_\myatom}{\myatom \in \Atoms(\query)}$.
Finally, we fix an arbitrary list $t_1, \ldots, t_{|\Witness|}$ of all nodes in $\Witness$
(this list is empty if $W = \emptyset$).
If $\Witness \neq \emptyset$, choose $\rootnode$ to be an arbitrary node in $\Witness$;
otherwise $\Witness = \emptyset$ and $\free(\query) = \emptyset$, and we let $\rootnode$ be an arbitrary node of $T$.
Let $\rootedTree$ be the oriented version of $T$ where $\rootnode$ is the root of $T$.

For every $t \in V(T)$, the query $\QBinary$ uses a new variable $\singvarv{t}$, and if $t \in \myAtoms(T)$, it additionally uses a new variable $\singvarw{t}$.
The $\sigmaBinary$-query $\QBinary$ is defined as $\Ans(\singvarv{t_1}, \ldots, \singvarv{t_{|\Witness|}}) \leftarrow \Und_{\Atom \in \Atoms(\QBinary)} \Atom$, where $\Atoms(\QBinary)$ is constructed as follows.
We initialize $\Atoms(\QBinary)$ to be the empty set $\emptyset$, and for every node $t \in V(T)$ we insert into $\Atoms(\QBinary)$ the atom
\begin{itemize}
	\item $\ArS{|\Bag(t)|}(\singvarv{t})$.
\end{itemize}
In case that $t \in \myAtoms(T)$, let $\myatom$ be the particular atom of $Q$ such that $t = t_\myatom$, let $R(\tup{z}) \deff \myatom$, and insert into $\Atoms(\QBinary)$ the additional atoms
\begin{itemize}
	\item $U_R(\singvarw{t})$, and
	\item
	$E_{i,j}(\singvarw{t},\singvarv{t})$ for all those $i, j \in [k]$ such that $i \leq \ar(\tup{z})$, $j \leq \ar(\vtup{t})$, $\proj_i(\tup{z}) = \proj_j(\vtup{t})$.
\end{itemize}
Furthermore, for arbitrary $t \in V(T)$, in case that $t$ is \emph{not} the root node of $\rootedTree$, let $p$ be the parent of $t$ in $\rootedTree$ and insert into $\Atoms(\QBinary)$ also the atoms
\begin{itemize}
	\item
	$F_{i,j}(\singvarv{t},\singvarv{p})$ for all those $i, j \in [k]$ with $i \leq \ar(\vtup{t})$, $j \leq \ar(\vtup{p})$ where $\proj_i(\vtup{t}) = \proj_j(\vtup{p})$.
\end{itemize}
This completes the construction of $\QBinary$.
\begin{restatable}{claim}{claimConstructQBinary}\label{claim:ConstructQBinary}
	$\QBinary \in \fcACQBinary$, and given $H$, the query $\QBinary$ can be constructed in time $\bigOh(\size{H})$.
\end{restatable}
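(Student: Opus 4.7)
The plan is to verify the two assertions of the claim separately. For the syntactic membership $\QBinary \in \fcACQBinary$, the schema $\sigmaBinary$ is binary, so by Proposition~\ref{prop:binaryfcacqs} it suffices to check (a) acyclicity of the Gaifman graph $G(\QBinary)$ and (b) connectedness of the subgraph induced by $\free(\QBinary)$ within every connected component that meets the free variables.

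First I would describe $G(\QBinary)$ explicitly. Its vertex set is $\{\singvarv{t} : t\in V(T)\} \cup \{\singvarw{t} : t\in\myAtoms(T)\}$; the unary atoms $\ArS{|\Bag(t)|}(\singvarv{t})$ and $U_R(\singvarw{t})$ contribute no edges; each atom $E_{i,j}(\singvarw{t},\singvarv{t})$ contributes the edge $\{\singvarw{t},\singvarv{t}\}$ for $t\in\myAtoms(T)$; and each atom $F_{i,j}(\singvarv{t},\singvarv{p})$ contributes the edge $\{\singvarv{t},\singvarv{p}\}$ where $p$ is the parent of the non-root node $t$ in $\rootedTree$. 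Consequently $G(\QBinary)$ is a subgraph of the tree obtained from $T$ (realized on the vertices $\singvarv{t}$ via the parent-child relation of $\rootedTree$) by attaching $\singvarw{t}$ as a pendant leaf at $\singvarv{t}$ for every $t\in\myAtoms(T)$; in particular, $G(\QBinary)$ is acyclic.

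For free-connexness I would verify that the subgraph induced by $\free(\QBinary)=\{\singvarv{t_1},\dots,\singvarv{t_{|\Witness|}}\}$ is connected when $\Witness\neq\emptyset$. Because $\Witness$ induces a connected subtree of $T$ by definition of the fc-1-GHD, the statement reduces to showing that every edge $\{t,p\}$ of $T$ with $t,p\in\Witness$ actually yields an $F_{i,j}$-atom between $\singvarv{t}$ and $\singvarv{p}$ in $\QBinary$, i.e., that $\Bag(t)\cap\Bag(p)\neq\emptyset$. By Proposition~\ref{prop:ComputeGHD} we have $\Bag(t)\subseteq\Bag(p)$ or $\Bag(t)\supseteq\Bag(p)$, so this further reduces to ensuring that no node in $\Witness$ has an empty bag; I would handle this by pruning empty-bag nodes from $\Witness$ ahead of time, arguing via the path condition that any empty-bag node of $\Witness$ can only be a leaf of the induced subtree and that removing it preserves both the connectedness of $\Witness$ and the identity $\free(\query)=\bigcup_{t\in\Witness}\Bag(t)$. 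With this in place, Proposition~\ref{prop:binaryfcacqs} yields $\QBinary \in \fcACQBinary$.

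For the construction time, I would count that the number of atoms added is at most $|V(T)| + |\myAtoms(T)|(1+k^2) + (|V(T)|-1)\, k^2 = O(|V(T)|\cdot k^2)$, with $k=\ar(\sigma)$ absorbed into the constant under the fixed-schema convention, and each atom can be written down in constant time from the information stored in $H$. Since $|V(T)| \leq \size{H}$, the total time is $O(\size{H})$. The main obstacle I foresee is the empty-bag step in the free-connex verification: one must carefully justify why empty bags on $\Witness$-nodes can be avoided without breaking any fc-1-GHD property; apart from this, the argument is a direct unpacking of the definitions.
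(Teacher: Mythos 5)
Your description of $G(\QBinary)$ as a subgraph of the tree $\widetilde{T}$ obtained from $T$ by renaming $t$ to $\singvarv{t}$ and attaching $\singvarw{t}$ as a pendant leaf for each $t\in\myAtoms(T)$, and hence your acyclicity argument and the running-time count, match the paper's proof. The gap is in the free-connexness step. You set out to show that the subgraph induced by \emph{all} of $\free(\QBinary)$ is connected, but Proposition~\ref{prop:binaryfcacqs} only requires connectedness \emph{per connected component} of $G(\QBinary)$, and the stronger statement is false in general: the edge $\smallset{\singvarv{t},\singvarv{p}}$ is genuinely absent whenever $\Bag(t)\cap\Bag(p)=\emptyset$, and this does occur inside $\Witness$. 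Concretely, Proposition~\ref{prop:ComputeGHD} introduces subdivision nodes $n_{\smallset{t,p}}$ with $\Bag(n_{\smallset{t,p}})=\Bag(t)\cap\Bag(p)$, which is empty when the original bags are disjoint; if $\smallset{t,p}\subseteq\Witness$, then $n_{\smallset{t,p}}$ is inserted into $\Witness$ as an \emph{internal} node of the witness path between $t$ and $p$. This refutes your sub-lemma that empty-bag witness nodes can only be leaves of the induced subtree, and pruning such a node would disconnect $\Witness$ --- besides which, pruning alters $\free(\QBinary)$ and hence the very query whose properties the claim asserts.

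The repair is to abandon the pruning and argue per component, which is what the paper does: since $G(\QBinary)$ arises from the tree $\widetilde{T}$ by deleting edges, every connected component $C$ is a subtree of $\widetilde{T}$, and every $\widetilde{T}$-edge with both endpoints in $V(C)$ must survive in $C$ (otherwise its endpoints would lie in different components). The set $\{\singvarv{t} \mid t\in\Witness\}$ induces a connected subtree $S$ of $\widetilde{T}$, so $V(S)\cap V(C)$ induces a connected subtree of $\widetilde{T}$ all of whose edges are present in $C$; hence $\free(\QBinary)\cap V(C)$ induces a connected (or empty) subgraph of $C$, with no hypothesis on the bags needed. With this substitution your argument goes through.
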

\begin{proof}[Proof sketch]
	Achieving the claimed running time is obvious.
	To show that $\QBinary \in \fcACQBinary$, we modify the tree $T$ of $H$ by renaming every node $t \in V(T)$ into $\singvarv{t}$, and for every $t \in \myAtoms(T)$ by adding to $\singvarv{t}$ a new leaf node called $\singvarw{t}$.
	It is easy to see that the resulting tree yields exactly the Gaifman graph $G(\QBinary)$ of $\QBinary$, when deleting those edges $\smallset{\singvarv{t},\singvarv{t'}}$ where $\Bag(t) \cap \Bag(t') = \emptyset$.
	From Proposition~\ref{prop:binaryfcacqs} we then obtain that $\QBinary$ is an acyclic query;
	and since $\free(\QBinary) = \setc{\singvarv{t}}{t \in \Witness}$, it can easily be verified that $\QBinary \in \fcACQBinary$.
	See Appendix~\ref{appendix:ReductionFromArbitraryToBinary} for details.
\end{proof}

\paragraph{\ref{item:three:ReductionArbToBinary} The Bijection $f$ Between Outputs.}
Our definition of the $\sigmaBinary$-db $\dbBinary$ is similar to the definition of the binary structure $\mathcal{H}_{D}$ for the $\sigma$-db $D$ provided in \cite[Definition~3.4]{ScheidtSchweikardt_MFCS25} ---
however, with subtle differences that are crucial for our proof of Theorem~\ref{thm:ReductionArbSchemaToBinarySchema}:
We use all the projections $\projection \in \Projections(D)$, while \cite{ScheidtSchweikardt_MFCS25} only uses \enquote{slices}
(i.e., projections $\projection$ where $\setsize{\tset(\projection)} = \ar(\projection) \neq 0$),
and we use additional unary relations $\ArS{i}$, for $i \in [0,k]$, to label the nodes $v_{\projection}$ associated with projections $\projection$ of arity $i$.
This enables us to assign a variable $\singvarv{t}$ of $\QBinary$ with a node $v_{\projection}$ of $\dbBinary$, ensure that $\projection$ has the same arity as the
variable tuple $\vtup{t}$, and assign the $\ell$-th variable in the tuple $\vtup{t}$ with the $\ell$-th entry of the tuple $\projection$ (for all $\ell$).
Furthermore, we use additional binary relations $F_{i,j}$ (for $i, j \in [k]$) which enable us to ensure consistency between these assignments when considering different nodes $t,t'$ of $T$.
Below, we show that this induces a bijection $\beta$ between $\Hom(\QBinary,\dbBinary)$ and $\Hom(\query,\DB)$ and also a bijection $f$ between $\smallsem{\QBinary}(\dbBinary)$ and $\smallsem{\query}(D)$.

We fix the following notation.
For every  $y \in \Vars(\query)$ choose an arbitrary node $t_y \in V(T)$ with $y \in \Bag(t_y)$ and, moreover, if $y \in \free(\query)$ then $t_y \in \Witness$
(this is possible because $H$ is an fc-1-GHD of $Q$).
Let $j_y \in [|\Bag(t_y)|]$ such that $y = \proj_{j_y}(\vtup{t_y})$
(i.e., $y$ occurs as the $j_y$-th entry of the tuple $\vtup{t_y}$).
For the remainder of this proof, these items $t_y$ and $j_y$ will remain fixed for each $y \in \Vars(\query)$.

Now consider an arbitrary homomorphism $h \in \Hom(\QBinary,\dbBinary)$, and let $\beta(h)\colon \Vars(\query) \to \Adom(D)$ be defined as follows.
Consider an arbitrary $y \in \Vars(\query)$ and note that $\ArS{|\Bag(t_y)|}(\singvarv{t_y}) \in \Atoms(\QBinary)$.
From $h \in \Hom(\QBinary,\dbBinary)$, we thus obtain that $h(\singvarv{t_y}) \in (\ArS{|\Bag(t_y)|})^{\dbBinary}$.
According to the definition of $\dbBinary$, there is a (unique) tuple $\projection_{h,y} \in \Projections(\tD)$ such that $h(\singvarv{t_y}) = v_{\projection_{h,y}}$ and, furthermore, $\ar(\projection_{h,y}) = |\Bag(t_y)|$.
We define $\beta(h)(y) \deff \proj_{j_y}(\projection_{h,y})$, i.e., $\beta(h)$ is defined to map the variable $y$ to the $j_y$-th entry of the tuple $\projection_{h,y}$.
Clearly, this defines a mapping $\beta(h)\colon \Vars(\query) \to \Adom(D)$.
\begin{restatable}{claim}{claimPropertiesOfBeta}\label{claim:PropertiesOfBeta}~
	\begin{enumerate}[(a)]
		\item\label{item:BetahIsConsistent}
		Let $h \in \Hom(\QBinary,\dbBinary)$ and $h' \deff \beta(h)$.
		For all $t \in V(T)$, we have $h(\singvarv{t}) = v_{h'(\vtup{t})}$ and, for every $t \in \myAtoms(T)$ with $R(\tup{z})\deff \Cover(t)$, we have $h(\singvarw{t}) = w_{h'(\tup{z})}$ and $h'(\tup{z}) \in R^D$.
		\item\label{item:BetahIsAHomomorphism}
		For all $h \in \Hom(\QBinary, \dbBinary)$ we have:\ $\beta(h) \in \Hom(\query,D)$.
		\item\label{item:betaIsSufficientlyInjective}
		For all $h_1, h_2 \in \Hom(\QBinary,\dbBinary)$ and all $t \in V(T)$, the following is true:

		If $h_1(\singvarv{t}) \neq h_2(\singvarv{t})$, then there is a variable $y \in \Bag(t)$ such that $\beta(h_1)(y) \neq \beta(h_2)(y)$.

		If $t \in \myAtoms(T)$ and $h_1(\singvarw{t}) \neq h_2(\singvarw{t})$, then there is a $y \in \Bag(t)$ such that $\beta(h_1)(y) \neq \beta(h_2)(y)$.
		\item\label{item:betaIsSurjective}
		The mapping $\beta\colon \Hom(\QBinary,\dbBinary) \to \Hom(\query,D)$ is bijective.\endClaim
	\end{enumerate}
\end{restatable}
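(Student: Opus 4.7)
The plan is to establish parts (a)--(d) in sequence, with (a) being the conceptual heart. For part (a), I would unpack the definition: for each $y \in \Vars(\query)$, $h(\singvarv{t_y}) = v_{\projection_{h,y}}$ with $\projection_{h,y} \in \Projections(\tD)$ of arity $|\Bag(t_y)|$, and $\beta(h)(y) = \proj_{j_y}(\projection_{h,y})$. At any node $t \in V(T)$, the atom $\ArS{|\Bag(t)|}(\singvarv{t}) \in \Atoms(\QBinary)$ forces $h(\singvarv{t}) = v_{\projectionAlt}$ for some $\projectionAlt \in \Projections(\tD)$ with $\ar(\projectionAlt) = |\Bag(t)|$; the goal is $\projectionAlt = \beta(h)(\vtup{t})$, i.e., $\proj_j(\projectionAlt) = \beta(h)(\proj_j(\vtup{t}))$ for every $j \in [|\Bag(t)|]$. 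Setting $y \deff \proj_j(\vtup{t})$, this reduces to a consistency claim between the two nodes $t$ and $t_y$, both of which contain $y$. By the path condition of $H$, the nodes containing $y$ induce a connected subtree of $T$, so there is a path in $T$ from $t$ to $t_y$ every node of which contains $y$. Walking along this path in $\rootedTree$, each edge $\smallset{s,p}$ (with $p$ the parent of $s$) contributes an atom $F_{i',j'}(\singvarv{s},\singvarv{p})$, where $i'$ and $j'$ are the positions of $y$ in $\vtup{s}$ and $\vtup{p}$; the definition of $F_{i',j'}^{\dbBinary}$ then equates the corresponding components of the associated projections. Transitivity along the path yields the desired equality. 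The $\singvarw{t}$-part for $t \in \myAtoms(T)$ then follows: the $U_R$ atom places $h(\singvarw{t})$ in $\{w_{\at} : \at \in R^D\}$, and the $E_{i',j'}$ atoms present in $\Atoms(\QBinary)$ force the chosen tuple $\at$ to match $\beta(h)(\tup{z})$ coordinate by coordinate.

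Parts (b) and (c) reduce cleanly to (a). For (b), given any atom $R(\tup{z}) \in \Atoms(\query)$, completeness of $H$ supplies a node $t_\myatom \in \myAtoms(T)$ with $\Cover(t_\myatom) = R(\tup{z})$, and the second half of (a) yields $\beta(h)(\tup{z}) \in R^D$. For (c), if $h_1(\singvarv{t}) \neq h_2(\singvarv{t})$, the two associated projections differ at some coordinate $j$, so $\beta(h_1)$ and $\beta(h_2)$ disagree at the variable $\proj_j(\vtup{t}) \in \Bag(t)$; the case of $\singvarw{t}$ is analogous using the $U_R$ and $E_{i,j}$ atoms.

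For (d), injectivity follows from (c): distinct $h_1 \neq h_2$ disagree on some $\singvarv{t}$ or $\singvarw{t}$, which lifts to disagreement of $\beta(h_1)$ and $\beta(h_2)$ at some variable of $\query$. For surjectivity, given $h' \in \Hom(\query,D)$ I would define $h$ by $h(\singvarv{t}) \deff v_{h'(\vtup{t})}$ for every $t \in V(T)$ and $h(\singvarw{t}) \deff w_{h'(\tup{z})}$ whenever $t \in \myAtoms(T)$ with $\Cover(t) = R(\tup{z})$. These nodes exist in $\dbBinary$ because $h'(\tup{z}) \in R^D \subseteq \tD$ and, since $\Bag(t) \subseteq \Vars(\Cover(t))$ for every $t$, the tuple $h'(\vtup{t})$ is a projection of a tuple in $\tD$. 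An atom-by-atom check then confirms $h \in \Hom(\QBinary,\dbBinary)$; the only subtle case is that of the $F_{i,j}$ atoms, whose set-containment side condition in the definition of $F_{i,j}^{\dbBinary}$ is precisely guaranteed by the edge property $\Bag(t) \subseteq \Bag(p)$ or $\Bag(t) \supseteq \Bag(p)$ on edges of $T$ supplied by Proposition~\ref{prop:ComputeGHD}. By construction, $\beta(h) = h'$.

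The main obstacle I anticipate is the bookkeeping in part (a): correctly tracking the positions of the variable $y$ in the different variable tuples $\vtup{s}$ as one walks along the path in $T$ and invokes the corresponding $F_{i',j'}$ atoms. Once (a) is in hand, parts (b)--(d) follow almost mechanically, with the only genuine extra content being the verification in (d) that the explicitly defined preimage $h$ actually lies in $\Hom(\QBinary,\dbBinary)$.
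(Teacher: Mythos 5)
Your proposal is correct and follows essentially the same route as the paper: part (a) via the $\ArS{}$ atoms plus the path condition and the $F_{i,j}$ atoms along the tree (the paper reduces to adjacent nodes rather than walking the whole path, but this is the same argument), parts (b) and (c) as direct consequences of (a), and surjectivity in (d) by the identical explicit construction of the preimage, using the $\Bag(t)\subseteq\Bag(p)$ or $\Bag(t)\supseteq\Bag(p)$ property to validate the $F_{i,j}$ atoms. The only step you gloss over slightly is the final verification $\beta(h)=h'$ in (d), which the paper derives by applying part (a) to the constructed $h$; this is routine.
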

\begin{proof}[Proof sketch]
	\ref{item:BetahIsConsistent} is shown by closely inspecting $\dbBinary$, $\QBinary$ and the particular fc-1-GHD $H$.

	\smallskip\noindent
	\ref{item:BetahIsAHomomorphism} and \ref{item:betaIsSufficientlyInjective} easily follow from \ref{item:BetahIsConsistent} (and \ref{item:BetahIsAHomomorphism} relies on the fc-1-GHD $H$ being \emph{complete}).

	\smallskip\noindent
	For \ref{item:betaIsSurjective}, the injectivity of $\beta$ immediately follows from \ref{item:betaIsSufficientlyInjective}.
	For proving that $\beta$ is surjective, consider an arbitrary $h'' \in \Hom(\query,D)$.
	We have to find an $h \in \Hom(\QBinary,\dbBinary)$ with $h'' = \beta(h)$.
	Based on $h''$, we define a mapping $h\colon \Vars(\QBinary) \to \Adom(\dbBinary)$ as follows.
	For every $t \in V(T)$, let $R_t(\tup{z}_t) \deff \Cover(t)$.
	Since $h'' \in \Hom(\query, D)$ and $R_t(\tup{z}_t) \in \Atoms(\query)$, we have $\at_t \deff h''(\tup{z}_t) \in (R_t)^D \subseteq \tD$.
	From $\tset(\tup{z}_t) \supseteq \Bag(t)$ we obtain that $h''(\vtup{t}) \in \Projections(\at_t) \subseteq \Projections(\tD)$.
	We let $h(\singvarv{t}) \deff v_{h''(\vtup{t})}$, for every $t \in V(T)$, and $h(\singvarw{t}) \deff w_{h''(\tup{z}_t)}$, for every $t \in \myAtoms(T)$.
	This defines a mapping $h\colon \Vars(\QBinary) \to \Adom(\dbBinary)$.
	By a close inspection of the atoms of $\QBinary$ we can show that indeed $h \in \Hom(\QBinary,\dbBinary)$
	(for this, we rely on the fact that the fc-1-GHD $H$ satisfies ``$\Bag(t) \subseteq \Bag(p)$ or $\Bag(t) \supseteq \Bag(p)$'' for all its edges $\smallset{t,p}$).
	Afterwards, by \ref{item:BetahIsConsistent} it is easy to see that $h'' = \beta(h)$.
	This proves that $\beta$ is surjective and completes the proof of Claim~\ref{claim:PropertiesOfBeta}\ref{item:betaIsSurjective}.
	See Appendix~\ref{appendix:ReductionFromArbitraryToBinary} for details.
\end{proof}

Parts \ref{item:betaIsSurjective} and \ref{item:betaIsSufficientlyInjective} of Claim~\ref{claim:PropertiesOfBeta} imply that there exists a bijection $f\colon \smallsem{\QBinary}(\dbBinary) \to \smallsem{\query}(D)$.
A closer inspection shows that, when given a tuple $\at \in \smallsem{\QBinary}(\dbBinary)$, the tuple $f(\at) \in \smallsem{\query}(D)$ can be computed in time $O(|\free(Q)|{\cdot}k)$
(see Appendix~\ref{appendix:ReductionFromArbitraryToBinary}).
This, finally, completes the proof of Theorem~\ref{thm:ReductionArbSchemaToBinarySchema}.
\subsection{From Binary Schemas to Node-Labeled Graphs}%
\label{sec:ReductionToOneBinaryRelation}
This subsection is devoted to proving the following theorem.
\begin{theorem}\label{thm:ReductionBinarySchemaToBinarySchemaWithOnlyOneBinaryRelation}
	For any binary schema $\sigma$ there exists a schema $\sigmaSimple$ for node-labeled graphs, such that $|\sigmaSimple| = |\sigma| + 3$, and
	\begin{enumerate}[(1)]
		\item\label{item:one:ReductionBinaryToOnlyOneBinary}
		upon input of a $\sigma$-db $D$, we can compute in time $\bigOh(\dbsize{D})$ a node-labeled graph $\dbSimple$ of schema $\sigmaSimple$, and
		\item\label{item:two:ReductionBinaryToOnlyOneBinary}
		upon input of any query $\query\in\fcACQ$, we can compute in time $\bigOh(\size{Q})$ a query $\QSimple \in \fcACQSimple$ with $|\free(\QSimple)| < 3 \cdot |\free(\query)|$, such that
	 	\item\label{item:three:ReductionBinaryToOnlyOneBinary}
		 there is a bijection $g\colon \smallsem{\QSimple}(\dbSimple) \to \smallsem{\query}(D)$.
		 Furthermore, when given a tuple $\at \in \smallsem{\QSimple}(\dbSimple)$, the tuple $g(\at) \in \smallsem{\query}(D)$ can be computed in time $\bigOh(|\free(\query)|)$.
	\end{enumerate}
\end{theorem}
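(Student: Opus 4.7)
The plan is to let $\sigmaSimple$ consist of the binary relation $E$, one unary label $U_R$ for every $R\in\sigma$, and two additional unary ``position labels'' $P_1,P_2$; this yields $|\sigmaSimple|=|\sigma|+3$. In $\dbSimple$ every binary fact of $D$ is represented by a small rigid gadget whose unary labels encode the orientation of the underlying pair, while each unary relation $R$ of $\sigma$ is copied verbatim into $U_R^{\dbSimple}$.

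More concretely, for every ordered pair $(a,b)$ with $a\neq b$ that appears in some binary relation of $D$, I would introduce two fresh helper nodes $p_1,p_2$, place $p_1\in P_1^{\dbSimple}$ and $p_2\in P_2^{\dbSimple}$, put both of them into $U_R^{\dbSimple}$ for every binary $R$ with $(a,b)\in R^{D}$, and add the $E$-edges realising the chain $a-p_1-p_2-b$. For every $a$ such that $(a,a)\in R^{D}$ for some binary $R$, I would additionally add a dedicated ``loop helper'' $p\in P_1^{\dbSimple}\cap P_2^{\dbSimple}$ that lies in $U_R^{\dbSimple}$ for every such $R$, with the single edge $\{p,a\}$. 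Only helper nodes ever receive labels in $P_1\cup P_2$, so domain elements cannot impersonate helpers. Each fact of $D$ contributes $O(1)$ helpers and edges, so $\dbSimple$ is computable in time $O(\dbsize{D})$.

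For $\QSimple$ I would group the binary atoms of $\query$ by their variable pair. For each group $R_1(x,y),\ldots,R_m(x,y)$ with $x\neq y$ I would introduce two fresh helpers $p_1,p_2$ and the atoms $P_1(p_1),P_2(p_2),E(x,p_1),E(p_1,p_2),E(p_2,y)$ together with $U_{R_i}(p_1)$ and $U_{R_i}(p_2)$ for every $i\in[m]$; every self-loop atom $R(x,x)$ is replaced by the loop gadget $U_R(p),P_1(p),P_2(p),E(p,x)$ using a single helper $p$; and unary atoms $R(x)$ stay as $U_R(x)$. The head of $\QSimple$ consists of the original free variables of $\query$ together with the helpers $p_1,p_2$ of every pair group whose endpoints both lie in $\free(\query)$. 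The Gaifman graph $G(\QSimple)$ then arises from $G(\query)$ by subdividing each edge (at most one per pair after grouping) into a path of length three via the helpers and by attaching a pendant vertex for every self-loop atom, so $G(\QSimple)$ stays acyclic; and promoting the helpers of every edge inside the ``free forest'' of $G(\query)$ to the head restores connectivity of the subgraph induced by $\free(\QSimple)$. By Proposition~\ref{prop:binaryfcacqs} we then have $\QSimple\in\fcACQSimple$, and since the free forest has at most $|\free(\query)|-1$ edges we get $|\free(\QSimple)|\leq|\free(\query)|+2(|\free(\query)|-1)<3|\free(\query)|$ whenever $\free(\query)\neq\emptyset$ (the Boolean case being trivial).

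The bijection $g$ is then obtained by restricting a homomorphism $\widehat{h}\in\Hom(\QSimple,\dbSimple)$ to $\vars(\query)$: the labels $P_1,P_2,U_R$ force every helper variable of $\QSimple$ onto a helper node of matching type in $\dbSimple$, and the connectivity of each gadget pins that node down uniquely once $(\widehat{h}(x),\widehat{h}(y))$ is fixed; conversely every $h\in\Hom(\query,D)$ extends uniquely by picking the dedicated helpers of $\dbSimple$ associated with the pair $(h(x),h(y))$ in each group, and computing $g(\at)$ is just projecting $\at$ onto its $\free(\query)$-coordinates in time $O(|\free(\query)|)$. The hard part throughout is keeping $G(\QSimple)$ acyclic: any naive two-helper encoding closes into a cycle the instant a binary atom has the form $R(x,x)$, and again whenever two distinct binary atoms share their variable pair. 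I would address the first obstacle by the dedicated loop helper whose label combination $P_1\cap P_2$ cleanly separates it from ordinary helpers, so that the loop gadget in $\QSimple$ matches precisely the loop helpers in $\dbSimple$, and the second by grouping atoms into a single gadget per variable pair whose two shared helpers carry all of the required $U_R$-labels. Making these two fixes without introducing any new structural symmetries into $\dbSimple$ beyond those already present in $D$ is the most delicate part of the construction, since those symmetries will subsequently be exploited by the color-refinement-based index of Section~\ref{sec:OneBinaryRelation}.
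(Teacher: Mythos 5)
Your high-level plan (subdivide each Gaifman edge by two labeled helper nodes, attach the promoted helpers of the free forest to the head, and recover answers by projection) is the same as the paper's, but your concrete labeling scheme has a genuine gap in how it encodes the \emph{orientation} of binary atoms. You group binary atoms by their variable pair and put every $U_{R_i}$ on \emph{both} helpers of the group's gadget. If the grouping is by ordered pair, then a query containing both $R(x,y)$ and $S(y,x)$ (e.g.\ $\Ans(x,y)\leftarrow R(x,y),S(y,x)$, which is free-connex acyclic) produces two gadgets between $x$ and $y$ and hence a $6$-cycle $x,p_1,p_2,y,p_1',p_2',x$ in $G(\QSimple)$, destroying acyclicity. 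If instead the grouping is by unordered pair, a single gadget $x\text{--}p_1\text{--}p_2\text{--}y$ carrying $U_R$ and $U_S$ on both helpers checks $(h(x),h(y))\in S^D$ rather than $(h(y),h(x))\in S^D$, so the semantics are wrong. The paper resolves exactly this by always creating \emph{both} helper nodes $w_{ab}$ and $w_{ba}$ in $\dbSimple$ (with $w_{cd}\in (U_F)^{\dbSimple}$ iff $(c,d)\in F^D$) and, on the query side, using one pair of helper variables $z_{xy},z_{yx}$ per Gaifman edge while attaching $U_F$ only to $z_{uv}$ for the atom $F(u,v)$ --- the helper adjacent to the atom's \emph{first} argument. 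Direction is thus encoded by adjacency, not by two distinct position labels.

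A second, independent problem: nothing in your $\QSimple$ forces an original variable $x\in\vars(Q)$ to be mapped to an element of $\adom{D}$ rather than to a helper node. Take $Q=\Ans(x,y)\leftarrow R(x,y)$ and $R^D=\{(a,b)\}$, so $\dbSimple$ is the chain $a\text{--}q_1\text{--}q_2\text{--}b$ with $q_1\in P_1\cap U_R$, $q_2\in P_2\cap U_R$. Then $x\mapsto q_2,\ p_1\mapsto q_1,\ p_2\mapsto q_2,\ y\mapsto b$ satisfies every atom of $\QSimple$, giving a spurious answer tuple and breaking the bijection $g$. The paper prevents this with the extra unary symbol $V$ (with $V^{\dbSimple}=\adom{D}$) and an atom $V(x)$ for every $x\in\vars(Q)$; your budget of $|\sigma|+3$ symbols is already exhausted by $E$, the $U_R$'s, $P_1$ and $P_2$, so you cannot simply add such a guard without abandoning the two-position-label idea. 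Both fixes together essentially force you back to the paper's $V/W$ construction.
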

\noindent
Similar to Theorem~\ref{thm:ReductionArbSchemaToBinarySchema}, we present the proof details ordered by statements \ref{item:one:ReductionBinaryToOnlyOneBinary} to \ref{item:three:ReductionBinaryToOnlyOneBinary}.

\paragraph{\ref{item:one:ReductionBinaryToOnlyOneBinary} Choosing $\sigmaSimple$ and Constructing the $\sigmaSimple$-db $\dbSimple$.}
Let $\sigma$ be an arbitrary binary schema and let $\sigma_{|2}$ be the set of all binary relation symbols in $\sigma$.
Let $\sigmaSimple$ be the schema that contains only a single binary relation symbol $\RSimple$, all the unary relation symbols of $\sigma$, a unary relation symbol $U_F$ for every $F\in\sigma_{|2}$, and two further new unary relation symbols $V$ and $W$.
Clearly, $|\sigmaSimple|=|\sigma|+3$.

We represent any $\sigma$-db $D$ by a $\sigmaSimple$-db $\dbSimple$ as follows.
We let $V^{\dbSimple}\deff \adom{D}$.
For every unary relation symbol $X\in\sigma$ let $X^{\dbSimple}\deff X^D$.
We initialize $\RSimple^{\dbSimple}$, $W^{\dbSimple}$, and $(U_F)^{\dbSimple}$ for all $F\in\sigma_{|2}$ as the empty set $\emptyset$.
Let $\myTuples \deff \bigcup_{F\in\sigma_{|2}} F^D$ and let $\myTuplesSym$ be the symmetric closure of $\myTuples$, i.e., $\myTuplesSym = \myTuples \cup \setc{ (b,a) }{(a,b) \in \myTuples}$.
For every tuple $(a,b) \in \myTuplesSym$, we choose a new element $w_{ab}$ in $\Dom$ and
insert it into $W^{\dbSimple}$.
Furthermore, we insert into $\RSimple^{\dbSimple}$ symmetric edges between $a$ and $w_{ab}$ and between $w_{ab}$ and $w_{ba}$;
see Figure~\ref{fig:BinToGraph_gadgets}.

\begin{figure}[b]
	\centering
	\captionsetup[subfigure]{justification=centering}
	\begin{subfigure}[t]{0.39\textwidth}
		\centering
		\begin{tikzpicture}[
			yscale=.8,
			every node/.style={},
			V/.style={draw=blue, circle, minimum size=19pt},
			W/.style={draw=red, rectangle},
			every loop/.style={looseness=5}
		]
			\node[V] (a) at (0,0) {$a$};
			\node[W, right=of a] (waa) {$w_{aa}$};
			\draw (a) -- (waa);
			\draw (waa) edge[in=-25,out=25,loop] ();
		\end{tikzpicture}
		\caption{Edges inserted for $(a,a)\in \myTuplesSym$.}%
		\label{fig:BinToGraph_gadget_loop}
	\end{subfigure}
	\hfill
	\begin{subfigure}[t]{0.59\textwidth}
		\centering
		\begin{tikzpicture}[
			yscale=.8,
			every node/.style={},
			V/.style={draw=blue, circle, minimum size=19pt},
			W/.style={draw=red, rectangle}
		]
			\node[V] (a) at (0,0) {$a$};
			\node[W, right=of a] (wab) {$w_{ab}$};
			\node[W, right=of wab] (wba) {$w_{ba}$};
			\node[V, right=of wba] (b) {$b$};
			\draw (a) -- (wab) -- (wba) -- (b);
		\end{tikzpicture}
		\caption{Edges inserted for $a \neq b$ and $(a,b), (b,a) \in \myTuplesSym$.}%
		\label{fig:BinToGraph_gadget_edge}
	\end{subfigure}
	\hfill
	\caption{%
		Circled nodes are in $V^{\dbSimple}$, boxed nodes in $W^{\dbSimple}$.
		A self-loop represents the edge $(w_{aa},w_{aa})$ in $\RSimple^{\dbSimple}$;
		an undirected edge between two nodes $x$ and $y$ represents edges in both directions, i.e., $(x,y)$ and $(y,x)$ in $\RSimple^{\dbSimple}$.
	}%
	\label{fig:BinToGraph_gadgets}
\end{figure}
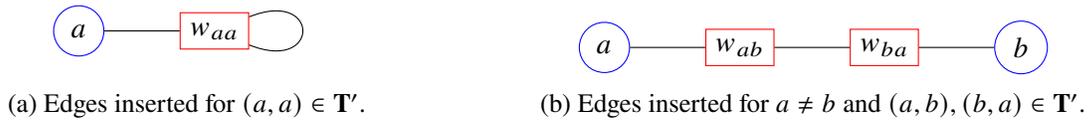

Note that every node in $W^{\dbSimple}$ has exactly one neighbor in $V^{\dbSimple}$ and one neighbor in $W^{\dbSimple}$.
We complete the construction of $\dbSimple$ by looping through all $F\in\sigma_{|2}$ and all $(c,d) \in F^D$ and inserting $w_{cd}$ into $(U_F)^{\dbSimple}$.
Note that $\dbSimple$ is a node-labeled graph  that can be constructed in time $O(|D|)$.
\begin{figure}[h!tbp]
	\centering
\begin{tikzpicture}[
	defaultstyle,
	level distance=2.25cm,
	sibling distance=2.5cm,
	mnode/.style={draw, rounded corners, minimum width=6mm},
	edge from parent/.style={draw=none},
	enode/.style={inner sep=1mm, sloped, at start, font={\tiny }},
	medge/.style={->}
]
	\node[mnode] (PS) at (0,0) {\exdata{PS}}
	child {
		node[mnode] (LM) {\exdata{LM}} 
		child { node[mnode] (18m) {\exdata{18m}} }
		child { node[mnode] (DrS) {\exdata{Dr.S}}}	
	}
	child {
		node[mnode] (MM) {\exdata{MM}}
		child { (DrS) }
		child { node[mnode] (34m) {\exdata{34m}} }
	};

	\node[UP] (PSLM) at ($(PS)!.33!(LM)$) {$w_{\exdata{\tiny PL}}$};
	\node[UA] (LMPS) at ($(PS)!.66!(LM)$) {$w_{\exdata{\tiny LP}}$};
	\draw (PS) edge (PSLM);
	\draw (PSLM) edge (LMPS);
	\draw (LMPS) edge (LM);

	\node[UP] (PSMM) at ($(PS)!.33!(MM)$) {$w_{\exdata{\tiny PM}}$};
	\node[UA] (MMPS) at ($(MM)!.33!(PS)$) {$w_{\exdata{\tiny MP}}$};
	\draw (PS) edge (PSMM);
	\draw (PSMM) edge (MMPS);
	\draw (MMPS) edge (MM);

	\node[US] (LM18m) at ($(LM)!.33!(18m)$) {$w_{\exdata{\tiny L1}}$};
	\node[] (18mLM) at ($(18m)!.33!(LM)$) {$w_{\exdata{\tiny 1L}}$};
	\draw (LM) edge (LM18m);
	\draw (LM18m) edge (18mLM);
	\draw (18mLM) edge (18m);

	\node[UM] (LMDrS) at ($(LM)!.33!(DrS)$) {$w_{\exdata{\tiny LD}}$};
	\node[] (DrSLM) at ($(DrS)!.33!(LM)$) {$w_{\exdata{\tiny DL}}$};
	\draw (LM) edge (LMDrS);
	\draw (LMDrS) edge (DrSLM);
	\draw (DrSLM) edge (DrS);

	\node[UM] (MMDrS) at ($(MM)!.33!(DrS)$) {$w_{\exdata{\tiny MD}}$};
	\node[] (DrSMM) at ($(DrS)!.33!(MM)$) {$w_{\exdata{\tiny DM}}$};
	\draw (MM) edge (MMDrS);
	\draw (MMDrS) edge (DrSMM);
	\draw (DrSMM) edge (DrS);

	\node[US] (MM34m) at ($(MM)!.33!(34m)$) {$w_{\exdata{\tiny M3}}$};
	\node[] (34mMM) at ($(34m)!.33!(MM)$) {$w_{\exdata{\tiny 3M}}$};
	\draw (MM) edge (MM34m);
	\draw (MM34m) edge (34mMM);
	\draw (34mMM) edge (34m);
\end{tikzpicture} %
	\caption{Representation of $\exdbSimple$ from Example~\ref{example:running1}.}\label{fig:appendix_example:rep:graph}
\end{figure}
\begin{example}\label{example:running1}
	Consider the following example about movies and actors
	taken from~\cite[{F}ig. 2]{AnglesABHRV17}.
	The schema $\exsigma$ has binary relation symbols
	\exrel{Plays}, \exrel{ActedBy}, \exrel{Movie}, and \exrel{Screentime}
	(denoted by \exrel{P}, \exrel{A}, \exrel{M}, and \exrel{S}),
	and the database $\exdb$ has the following relations and tuples:
	\smallskip%
	\begin{center}%
	\begin{tabular}{cccc}
		\begin{tabular}{r|cc}
			\exrel{P} & &  \\ \hline
			& \exdata{PS} & \exdata{LM} \\
			& \exdata{PS} & \exdata{MM}
		\end{tabular} \qquad
		&
		\begin{tabular}{r|cc}
			\exrel{A} & &  \\ \hline
			& \exdata{LM} & \exdata{PS} \\
			& \exdata{MM} & \exdata{PS}
		\end{tabular} \qquad
		&
		\begin{tabular}{r|cc}
			\exrel{M} & &  \\ \hline
			& \exdata{LM} & \exdata{Dr.S} \\
			& \exdata{MM} & \exdata{Dr.S} 
		\end{tabular} \qquad
		&
		\begin{tabular}{r|cc}
			\exrel{S} & &  \\ \hline
			& \exdata{LM} & \exdata{18m} \\
			& \exdata{MM} & \exdata{34m}
		\end{tabular}
	\end{tabular}
	\end{center}
	\smallskip
	where Peter Sellers (\exdata{PS}) is an actor who plays as Lionel Mandrake (\exdata{LM}) 
	and Merkin Muffley (\exdata{MM}) in the same movie ``Dr.\ Strangelove'' (\exdata{Dr.S}).
	Each character appears 18 minutes (\exdata{18m}) and 34 minutes (\exdata{34m}) 
	in the movie, respectively.

	The corresponding schema $\exsigmaSimple$ consists of a single
	binary relation symbol $\RSimple$ and the unary relation symbols $V$, $W$,
	$U_{\exrel{P}}$,
	$U_{\exrel{A}}$,
	$U_{\exrel{M}}$,
	$U_{\exrel{S}}$.
	The according $\exsigmaSimple$-db $\exdbSimple$ is depicted in Figure~\ref{fig:appendix_example:rep:graph}.
	Elements of the form $w_{ab}$ are abbreviated using the first character of $a$ and $b$, e.g., instead of $w_{\exdata{\tiny PSLM}}$ we write $w_{\exdata{\tiny PL}}$. Since these elements are exactly those in $W^{\exdbSimple}$, and all other elements belong to $V^{\exdbSimple}$, we do not indicate them further.
	Because $\RSimple^{\exdbSimple}$ is symmetric, we draw it using undirected edges.
	The unary relations $U_{\exrel{P}}^{\exdbSimple}$, $U_{\exrel{A}}^{\exdbSimple}$, $U_{\exrel{M}}^{\exdbSimple}$, $U_{\exrel{S}}^{\exdbSimple}$, are depicted as
	\pdftooltip{\tikz{\node[UP] at (0,0) {\phantom{$w_{\exdata{\tiny PL}}$}};}}{a dotted rectangle},
	\pdftooltip{\tikz{\node[UA, inner sep=0.5pt] at (0,0) {\phantom{$w_{\exdata{\tiny PL}}$}};}}{a solid ellipse},
	\pdftooltip{\tikz{\node[UM] at (0,0) {\phantom{$w_{\exdata{\tiny PL}}$}};}}{a solid rectangle},
	\pdftooltip{\tikz{\node[US, inner sep=0.5pt] at (0,0) {\phantom{$w_{\exdata{\tiny PL}}$}};}}{a dotted ellipse}, respectively.
\end{example} 
\paragraph{\ref{item:two:ReductionBinaryToOnlyOneBinary}  Constructing the fc-ACQ $\QSimple$.}
Our next aim is to translate  queries $Q\in\fcACQ$ into suitable $\sigmaSimple$-queries $\QSimple$.
We want $\QSimple$ to be in $\fcACQSimple$, and we want to ensure that there is an easy to compute bijection $g$ that maps the tuples in $\bigsem{\QSimple}(\dbSimple)$ onto the tuples in $\sem{Q}(D)$.

Consider an arbitrary query $Q\in\fcACQ$.
Consider the Gaifman graph $G(Q)$ of $Q$ and recall from Proposition~\ref{prop:binaryfcacqs} that $G(Q)$ is a forest, and for every connected component $C$ of $G(Q)$, the subgraph of $C$ induced by the set $\free(Q) \cap V(C)$ is connected or empty.
For each connected component $C$ of $G(Q)$, we orient the edges of $C$ as follows.
If $\free(Q) \cap V(C) \neq \emptyset$, then choose an arbitrary vertex $r_C \in \free(Q)\cap V(C)$ as the \emph{root} of $C$;
otherwise choose an arbitrary vertex $r_C \in V(C)$ as the root of $C$.
Orient the edges of $C$ to be directed \emph{away} from the root $r_C$.
Let $\vec{G}(Q)$ be the resulting oriented version of $G(Q)$.
Furthermore, let $S$ be the set of all variables $x$ of $Q$ such that $F(x,x) \in \Atoms(Q)$ for some $F \in \sigma_{|2}$.
We construct the $\sigmaSimple$-query $\QSimple$ as follows.

We initialize $\atoms(\QSimple)$ to consist of all the \emph{unary} atoms of $Q$.
For every $x \in \vars(Q)$, we add to $\atoms(\QSimple)$ the unary atom $V(x)$.
For every $x \in S$, we use a new variable $z_{xx}$ and add to $\Atoms(\QSimple)$ the atoms $W(z_{xx})$, $\RSimple(x, z_{xx})$, and $\RSimple(z_{xx}, z_{xx})$.
For every directed edge $(x,y)$ of $\vec{G}(Q)$, we use two new variables $z_{xy}$ and $z_{yx}$ and insert into $\atoms(\QSimple)$ the atoms $W(z_{xy})$, $W(z_{yx})$, $\RSimple(x, z_{xy})$, $\RSimple(z_{xy}, z_{yx})$ and $\RSimple(z_{yx}, y)$.
Finally, for every atom of $Q$ of the form $F(u,v)$
(with $F\in\sigma_{|2}$ and $u, v \in \Vars(Q)$),
we add the atom $U_F(z_{uv})$ to $\atoms(\QSimple)$.
The head of $\QSimple$ is obtained from the head of $Q$ by appending to it the variables $z_{xy}$ and $z_{yx}$ for all edges $(x,y)$ of $\vec{G}(Q)$ where $x$ and $y$ both belong to $\free(Q)$.
It is not difficult to verify that $|\free(\QSimple)| < 3 \cdot |\free(Q)|$ and that $\QSimple$ indeed is an fc-ACQ
(see Appendix~\ref{appendix:ReductionToOneBinaryRelation}).
\begin{example}\label{example:running:continued}
	Consider the schemas $\exsigma$ and $\exsigmaSimple$ from Example~\ref{example:running1}, and let $Q$ be the query
	\[
		\Ans(x,y_1) \; \leftarrow \; \exrel{A}(x,y_1),\; \exrel{A}(x,y_2),\;\exrel{P}(y_2,x).
	\]
	Consider the oriented version $\vec{G}(Q)$ of the Gaifman graph of $Q$ obtained by choosing $x$ as the root.
	Then, $\QSimple$ is the following query:
	\begin{align*}
		\Ans(x,y_1,z_{xy_1},z_{y_1x})
		\ \leftarrow \
		&V(x), V(y_1), V(y_2), U_{\exrel{A}}(z_{xy_1}), U_{\exrel{A}}(z_{xy_2}), U_{\exrel{P}}(z_{y_2x}),\\
		&\RSimple(x,z_{xy_1}), \RSimple(z_{xy_1},z_{y_1x}), \RSimple(z_{y_1x},y_1), W(z_{xy_1}), W(z_{y_1x}),\\
		&\RSimple(x,z_{xy_2}), \RSimple(z_{xy_2},z_{y_2x}), \RSimple(z_{y_2x},y_2), W(z_{xy_2}), W(z_{y_2x}).\endExample
	\end{align*}
\end{example}

\paragraph{\ref{item:three:ReductionBinaryToOnlyOneBinary} The Bijection $g$ Between Outputs.}
Along the definition of $\dbSimple$ and $\QSimple$ one can verify the
following (for a proof, see Appendix~\ref{appendix:ReductionToOneBinaryRelation}).
\begin{restatable}{claim}{claimReductionBinaryToGraph}%
	\label{claim:ReductionBinaryToGraph}~
	\begin{enumerate}[(a)]
	\item\label{claim:reduction:three}
	For every  $\nu \in \Hom(Q,D)$, the following mapping $\nuSimple$ is a homomorphism from $\QSimple$ to $\dbSimple$: \
	for all $x \in \vars(Q)$ let $\nuSimple(x) \deff \nu(x)$;
	for all $x \in S$ let $\nuSimple(z_{xx}) \deff w_{aa}$ for $a \deff \nu(x)$;
	and for all edges $(x,y)$ of $\vec{G}(Q)$ let $\nuSimple(z_{xy}) \deff w_{ab}$ and $\nuSimple(z_{yx}) \deff w_{ba}$ for $a \deff \nu(x)$ and $b \deff \nu(y)$.
 \item\label{claim:reduction:QSimpleToQ}
	For every homomorphism $\nuSimple$ from $\QSimple$ to $\dbSimple$, the following holds:
	\begin{enumerate}[(i), topsep=0pt]
		\item\label{claim:reduction:one}
		For every edge $(x,y)$ of $\vec{G}(Q)$, for $a \deff \nuSimple(x)$ and $b \deff \nuSimple(y)$ we have $a, b \in \adom{D}$ and $\nuSimple(z_{xy}) = w_{ab}$ and $\nuSimple(z_{yx}) = w_{ba}$.
		For every $x \in S$ we have $a \deff \nuSimple(x) \in \adom{D}$ and $\nuSimple(z_{xx}) = w_{aa}$.
		\item\label{claim:reduction:two}
		The mapping $\nu$ with $\nu(x) \deff \nuSimple(x)$, for all $x \in \vars(Q)$, is a homomorphism from $Q$ to $D$.\endClaim
	\end{enumerate}
\end{enumerate}
\end{restatable}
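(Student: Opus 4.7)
The plan is to exploit the rigid gadget structure of $\dbSimple$: the relations $V^{\dbSimple}$ and $W^{\dbSimple}$ partition the new active domain, and each element $w_{ab}\in W^{\dbSimple}$ has exactly one $\RSimple$-neighbor in $V^{\dbSimple}$ (namely $a$) and exactly one $\RSimple$-neighbor in $W^{\dbSimple}$ (namely $w_{ba}$, which coincides with $w_{ab}$ iff $a=b$). The gadget atoms of $\QSimple$ introduced for each directed edge of $\vec{G}(Q)$ mirror precisely this structure, so the uniqueness of gadget neighbors will force any homomorphism $\nuSimple$ into the intended element-naming pattern.

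For part~\ref{claim:reduction:three}, the approach is to step through each atom type of $\QSimple$ and verify that $\nuSimple$ satisfies it on $\dbSimple$. The unary $X$-atoms copied from $Q$ hold because $X^{\dbSimple}=X^D$; the $V(x)$-atoms hold because $\nu(x)\in\adom{D}=V^{\dbSimple}$; for each directed edge $(x,y)$ of $\vec{G}(Q)$, there is some atom $F(x,y)$ or $F(y,x)$ in $Q$, ensuring $(\nu(x),\nu(y))\in\myTuplesSym$, so the gadget nodes $w_{ab}$ and $w_{ba}$ exist together with the required three $\RSimple$-edges (the self-loop case $x\in S$ is analogous and uses $\RSimple(z_{xx},z_{xx})$); and every $U_F(z_{uv})$-atom holds by the very definition of $(U_F)^{\dbSimple}$.

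The main obstacle is part~\ref{claim:reduction:QSimpleToQ}\ref{claim:reduction:one}: one must show that any $\nuSimple$ is \emph{forced} to map $z_{xy}$ and $z_{yx}$ to the concrete elements $w_{ab}$ and $w_{ba}$ where $a=\nuSimple(x)$ and $b=\nuSimple(y)$. The argument will proceed in four steps. First, the atoms $V(x), V(y), W(z_{xy}), W(z_{yx})$ place $a,b\in V^{\dbSimple}=\adom{D}$ and $\nuSimple(z_{xy}),\nuSimple(z_{yx})\in W^{\dbSimple}$, where $V^{\dbSimple}\cap W^{\dbSimple}=\emptyset$ by construction. Second, $\RSimple(x,z_{xy})$ together with the uniqueness of the $V$-neighbor of any $W$-node determines $\nuSimple(z_{xy})=w_{ab'}$ for some $b'\in\adom{D}$. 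Third, $\RSimple(z_{xy},z_{yx})$ together with the uniqueness of the $W$-$W$ neighbor of $w_{ab'}$ forces $\nuSimple(z_{yx})=w_{b'a}$. Fourth, $\RSimple(z_{yx},y)$ and the $V$-neighbor-uniqueness of $w_{b'a}$ force $b=b'$. The loop case $x\in S$ is handled similarly: the extra atom $\RSimple(z_{xx},z_{xx})$ and the fact that only $w_{aa}$-type nodes in $W^{\dbSimple}$ carry a self-loop pin down $\nuSimple(z_{xx})=w_{aa}$.

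Part~\ref{claim:reduction:QSimpleToQ}\ref{claim:reduction:two} will then follow by routine bookkeeping: unary atoms of $Q$ are copied verbatim into $\QSimple$ and so are preserved; for each binary atom $F(u,v)\in\Atoms(Q)$, the atom $U_F(z_{uv})\in\Atoms(\QSimple)$ yields $\nuSimple(z_{uv})\in (U_F)^{\dbSimple}$, which via~\ref{claim:reduction:one} and the definition of $(U_F)^{\dbSimple}$ rewrites exactly as $(\nu(u),\nu(v))\in F^D$.
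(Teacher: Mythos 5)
Your proposal is correct and follows essentially the same route as the paper's proof: part (a) by direct verification of every atom type of $\QSimple$, part (b)(i) by exploiting that each node of $W^{\dbSimple}$ has a unique neighbor in $V^{\dbSimple}$ and a unique neighbor in $W^{\dbSimple}$ (with self-loops only at $w_{aa}$-type nodes) to force the intended naming pattern, and part (b)(ii) by combining (i) with the $U_F$-atoms. Your four-step propagation along the gadget path is just a more explicitly staged version of the paper's "necessarily implies $c=w_{ab}$ and $d=w_{ba}$" argument, and the only thing you compress relative to the paper is the orientation case split for atoms $F(u,v)$ in (b)(ii), which your appeal to (i) covers in either case.
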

From Claim~\ref{claim:ReductionBinaryToGraph} we obtain that the mapping $\beta$ that maps every homomorphism $\nuSimple \in \Hom(\QSimple,\dbSimple)$ to the restriction of $\nuSimple$ to the set $\vars(Q)$, is a bijection between $\Hom(\QSimple, \dbSimple)$ and $\Hom(Q,D)$.
This, in particular, implies that $\sem{Q}(D) = g(\bigsem{\QSimple}(\dbSimple))$, where $g$ projects every tuple in $\bigsem{\QSimple}(\dbSimple)$ to the first $|\free(Q)|$ components of this tuple.
Furthermore, as an immediate consequence of item~\ref{claim:reduction:one} of part~\ref{claim:reduction:QSimpleToQ} of Claim~\ref{claim:ReductionBinaryToGraph} we obtain that for all tuples $t, t' \in \bigsem{\QSimple}(\dbSimple)$ with $t \neq t'$ we have: $g(t) \neq g(t')$.
This yields statement~\ref{item:three:ReductionBinaryToOnlyOneBinary} and completes the proof of Theorem~\ref{thm:ReductionBinarySchemaToBinarySchemaWithOnlyOneBinaryRelation}.

\section{Solving the Problem for Node-Labeled Graphs}%
\label{sec:OneBinaryRelation}
In this section we prove Theorem~\ref{thm:maintheorem} for the special case where the schema $\sigma$ consists of one binary symbol $E$ and a finite number of unary symbols, and where the relation $E^D$ of the given $\sigma$-db $D$ is symmetric.
We can think of $D$ as an undirected, node-labeled graph that may contain self-loops.
\subsection{The Indexing Phase}%
\label{sec:mainresult}
This subsection describes the indexing phase of our solution.
As input we receive a node-labeled graph $D$ of schema $\sigma$ (i.e., $E^D$ is symmetric).
We build a data structure $\DSD$ which we call the \emph{color-index};
it is an index structure for $D$ that supports efficient evaluation of all queries in $\fcACQ$.
\paragraph*{Encoding loops.}
We start by labeling self-loops in $E^D$ by a new unary relation symbol $L \not\in \sigma$.
Let~$\sigmaOne \isdef \sigma \cup \set{ L }[]$.
We turn $D$ into a $\sigmaOne$-db $\DOne$ by letting $L^{\DOne} \isdef \set{(v) \mid (v,v) \in E^D}$ and $R^{\DOne} \isdef R^D$ for every $R \in \sigma$.
Note that self-loops are still also present in the relation $E^{\DOne}$, and $L^{\DOne}$ is a redundant representation of these self-loops.
Clearly, the size of $\DOne$ is linear in the size of $D$.

Recall that $D$, and thus also $\DOne$, can be thought of as node-labeled undirected graphs that may have self-loops (because $E^D = E^{\DOne}$ is symmetric).
In the following it will be easier to think of $\DOne$ as such and use the usual notation associated with graphs.
I.e., we represent $\DOne$ as the node-labeled undirected graph (with self-loops) $\GOne \isdef (\VOne, \EOne, \vl)$ defined as follows:
$\VOne = \Adom(D)$, $\EOne$ consists of all undirected edges $\set{ v, w }[]$ such that $(v,w) \in E^{\DOne}$---note that this implies $\EOne$ may contain singleton sets representing loops---and for every node $v \in \VOne$ we let $\vl(v) \isdef \set{ U \in \sigmaOne \mid \ar(U)=1,\ (v) \in U^{\DOne} }$.
\paragraph*{Color refinement.}
We apply to $\GOne$ a suitable variant of the well-known \emph{color refinement} algorithm.
A high-level description of this algorithm, basically taken from~\cite{BBG-ColorRefinement}, is as follows.
The algorithm classifies the nodes of $\GOne$ by iteratively refining a coloring of the nodes.
Initially, each node $v$ has color $\vl(v)$, and note that $L \in \vl(v)$ iff $v$ has a self-loop.
Then, in each step of the iteration, two nodes $v, w$ that currently have the same color get different refined colors iff for some current color $c$ we have $\numN{v}{c} \neq \numN{w}{c}$.
Here, for any node $u$ of $\GOne$, we let $\numN{u}{c} \isdef |\N{u}{c}|$ where $\N{u}{c}$ denotes the set of all neighbors of $u$ that have color $c$.
Note that if $u$ has a self-loop, then it is a neighbor of $u$ and, in particular, for the current color $d$ of $u$, we have $u \in \N{u}{d}$.
The process stops if no further refinement is achieved, resulting in a so-called \emph{stable coloring} of the nodes.

Formally, we say that color refinement computes a \enquote{coarsest stable coloring that refines $\vl$}, which is defined using the following notation.
Let $S_f$ and $S_g$ be sets and let $f\colon \VOne \to S_f$ and $g\colon \VOne \to S_g$ be functions.
We say $f$ \emph{refines} $g$ iff for all $v,w \in \VOne$ with $f(v) = f(w)$ we have $g(v) = g(w)$.
Further, we say that $f$ \emph{is stable} iff for all $v, w \in \VOne$ with $f(v) = f(w)$, and every color $c \in S_f$ we have: $\numN{v}{c} = \numN{w}{c}$.
Finally, $f$ is a \emph{coarsest stable coloring that refines $g$} iff $f$ is stable, refines $g$, and for every coloring $h\colon \VOne \to S_h$ (for some set $S_h$) that is stable and refines $g$ we have: \,$h$ refines~$f$.
It is well-known (see for example~\cite{Cardon1982,BBG-ColorRefinement,CFI-paper,Immerman1990}) that color refinement can be implemented to run in time $O((|\VOne|+|\EOne|) {\cdot} \log{|\VOne|})$, which yields the following theorem in our setting:
\begin{theorem}[\cite{Cardon1982,BBG-ColorRefinement,CFI-paper,Immerman1990}]\label{thm:ColorRefinement}
	Within time $O((|\VOne|+|\EOne|) \log|\VOne|)$ one can compute a coarsest stable coloring $\col\colon \VOne \to C$ that refines $\vl$ (for a suitably chosen set $C$ with $\img(\col) = C$).
\end{theorem}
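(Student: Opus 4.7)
The plan is to implement the construction via partition refinement in the style of Paige--Tarjan or Hopcroft's DFA minimization, using splitters chosen from the current color classes. To establish correctness, define the refinement operator $R$ that sends a coloring $c \colon \VOne \to C$ to a new coloring $R(c)$ where $R(c)(v) = R(c)(w)$ iff $c(v) = c(w)$ and, for every $d \in C$, the number of $d$-colored neighbors of $v$ and of $w$ agree. Starting from $c_0 \isdef \vl$ and iterating $c_{i+1} \isdef R(c_i)$, the associated partitions are monotonically finer, so the process stabilizes after at most $|\VOne| - 1$ rounds. The limit is stable by construction and refines $\vl$; moreover, by induction on $i$, any stable coloring $h$ that refines $\vl$ refines every $c_i$ (stability of $h$ together with $h$-refinement of $c_i$ forces $h$-refinement of $R(c_i)$), so the limit is the coarsest stable coloring that refines $\vl$. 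A direct implementation that recomputes all neighbor counts per round costs $O(|\VOne| + |\EOne|)$ per round, giving $O((|\VOne| + |\EOne|) \cdot |\VOne|)$ overall, which is too slow.

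To obtain the claimed $O((|\VOne| + |\EOne|) \log |\VOne|)$ bound, I would maintain the current partition as a collection of doubly-linked lists of nodes together with $O(1)$-time pointers from each node to its current class, plus a worklist of classes that still need to serve as splitters. Initially all classes of the partition induced by $\vl$ are placed on the worklist. In each iteration, pop a splitter $S$, scan the edges incident to nodes in $S$, and for each touched class $C$ bucket its members by the count $|N(v) \cap S|$; split $C$ accordingly and append every resulting subclass except the largest to the worklist. Self-loops of $\GOne$, which are also recorded redundantly in $L^{\DOne}$, are handled by storing each loop once in the incidence data so that $v \in N(v)$ contributes exactly once whenever $v \in S$. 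The correctness of pushing only the smaller subclasses is Hopcroft's classical observation: once all but one subclass of a previous splitter have been used, the effect of using the whole class is already realized, since counts with respect to the omitted largest subclass are determined by the counts with respect to the others and by the original class size.

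The time bound then follows by the standard amortization argument: each node $v$ can lie in a splitter at most $O(\log |\VOne|)$ times, because the class containing $v$ must at least halve between successive such appearances; charging the $O(\sum_{u \in S} \deg(u))$ work done while processing a splitter to the participations of its elements yields total work $O((|\VOne| + |\EOne|) \log |\VOne|)$. The main obstacle is the bookkeeping: maintaining the dynamic partition with $O(1)$-time splits, identifying the largest child of a split in amortized $O(1)$, and enumerating only those classes actually touched by the current splitter so that untouched classes incur no cost. These are handled by the standard Paige--Tarjan data structures and thus do not affect the asymptotic running time; the result is that $\col$, together with its image set $C = \img(\col)$, can be output within the claimed bound.
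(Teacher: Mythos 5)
This theorem is a result quoted from the literature---the paper itself gives no proof, only the citations---and your proposal reconstructs precisely the standard Cardon--Crochemore\,/\,Paige--Tarjan-style partition-refinement algorithm with Hopcroft's \enquote{all but the largest} trick and the halving-based amortization that those references use, so it is correct and takes essentially the same route. One small wording slip worth fixing: a vertex's neighbor count in the omitted largest subclass is determined by its count in the whole parent class (which is constant on each current class by the maintained invariant) minus its counts in the other subclasses, not by the parent class's \emph{size}.
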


In the indexing phase, we apply the algorithm provided by Theorem~\ref{thm:ColorRefinement} to compute $\col$.
Let $\fcr(\DOne)$ denote the time taken for this.
Note that $\fcr(\DOne) \in O(|\DOne| \cdot \log{|\Adom(\DOne)|})$, which is the \emph{worst-case} complexity;
in case that $\DOne$ has a particularly simple structure, the algorithm may terminate already in time $O(|\DOne|)$.
Also note that the number $|C|$ of colors used by $\col$ is the smallest number possible in order to obtain a stable coloring that refines $\vl$, and, moreover, the coarsest stable coloring that refines $\vl$ is \emph{unique} up to a renaming of colors.
Furthermore, the following is true for all $v, w \,{\in}\, \VOne$:\;
if $\col(v) = \col(w)$, then $\vl(v) = \vl(w)$ and $\numN{v}{c} = \numN{w}{c}$ for all $c \in C$.
This lets us define the following notation:
for all $c, c' \in C$ we let $\numN{c}{c'} \isdef \numN{v}{c'}$ for some (and hence for every) $v \in \VOne$ with $\col(v) = c$.
We now proceed to the final step of the indexing phase, in which we use $\col$ to build our \emph{color-index} data structure.
\paragraph{The color-index.}
The data structure $\DSD$ that we build in the indexing phase consists of
\begin{enumerate}[(1)]
	\item\label{item:ciD:one}
	the schema $\sigmaOne$ and the $\sigmaOne$-db $\DOne$;
	\item\label{item:ciD:two}
	a lookup table to access the color $\col(v)$ given a vertex $v \in \VOne$, and
	an (inverse) lookup table to access the set $\setc{v\in\VOne}{\col(v) = c}$ given a $c \in C$, plus the number $\Numb{c}$ of elements in this set;
	\item\label{item:ciD:three}
	a lookup table to access the set $\N{v}{c}$, given a vertex $v \in \VOne$ and a color $c \in C$;
	\item\label{item:ciD:four}
	a lookup table to access the number $\numN{c}{c'}$, given colors $c,c' \in C$;
	\item\label{item:ciD:five}
	the \emph{color database} $\ciD$ of schema $\sigmaOne$ with $\adom{\ciD} = C$ defined as follows:
	\begin{itemize}
		\item
		For each unary relation symbol $U \in \sigmaOne$ we let $U^{\ciD}$ be the set of unary tuples $(c)$ for all $c \in C$ such that there is a $(v) \in U^{\DOne}$ with $\col(v) = c$.
		\item
		We let $E^{\ciD}$ be the set of all tuples $(c,c') \in C \times C$ such that $\numN{c}{c'} > 0$.\\
		Note that $E^{\ciD}$ is symmetric and may contain tuples of the form $(c,c)$.
	\end{itemize}
\end{enumerate}

Note that after constructing $\DOne$ in time $O(|D|)$ and computing $\col$ in time $\fcr(D)$, the remaining components~\ref{item:ciD:two}--\ref{item:ciD:five} of $\DSD$ can be built in total time $O(|D|)$.
Thus, in summary, the indexing phase takes time $\fcr(D)+O(|D|)$.
The color database $\ciD$ has size $O(|D|)$ in the \emph{worst case};
but $|\ciD|$ might be substantially smaller than $|D|$.
We will further discuss this in Section~\ref{sec:SizeOfIndex}.
\subsection{Using the Color-Index to Evaluate fc-ACQs }\label{sec:eval}
This subsection shows how to use the color-index $\DSD$ to evaluate any fc-ACQ $Q$ on $D$.
I.e.\ it describes the \emph{evaluation phase} of our solution of $\IndexingProblemOurs$.
We assume that the color-index $\DSD$ (including the color database $\ciD$) of the given node-labeled graph $D$ of schema $\sigma$ has already been built during the indexing phase.
Let $Q \in \fcACQ$ be an arbitrary query with a head of the form $\Ans(x_1,\ldots,x_k)$ (for $k \geq 0$) that we receive as input during the evaluation phase.
We first explain how the evaluation tasks can be simplified by focusing on \emph{connected} queries.
\paragraph{Connected queries.}
We say that $Q$ is \emph{connected} iff its Gaifman graph is connected.
If $Q$ is not connected,  we can write $Q$ as
$
	\Ans(\tup{z}_1,\ldots,\tup{z}_\ell) \ \leftarrow \
	\tup{\alpha}_1(\tup{z}_1,\tup{y}_1), \ldots, \tup{\alpha}_\ell(\tup{z}_\ell,\tup{y}_\ell)
$
such that each $\tup{\alpha}_i(\tup{z}_i,\tup{y}_i)$ is a sequence of atoms,
$\vars(\tup{\alpha}_i(\tup{z}_i,\tup{y}_i))$ and $\vars(\tup{\alpha}_j(\tup{z}_j,\tup{y}_j))$ are disjoint for $i \neq j$, and for each $i \in [\ell]$ the CQ $Q_i \isdef \Ans(\tup{z}_i) \leftarrow \tup{\alpha}_i(\tup{z}_i,\tup{y}_i)$ is connected.
To simplify notation we assume w.l.o.g.\ that in the head of $Q$ the variables are ordered in the same way as in the list $\tup{z}_1,\ldots,\tup{z}_\ell$.
One can easily check that this decomposition satisfies $\sem{Q}(D) = \sem{Q_1}(D) \times \cdots \times \sem{Q_\ell}(D)$  and $|\sem{Q}(D)| = \prod_{i=1}^{\ell} |\sem{Q_i}(D)|$.
Hence, we can compute $|\sem{Q}(D)|$ by first computing each number $|\sem{Q_i}(D)|$, and then multiplying all values in $O(|Q|)$-time.
Similarly, for enumerating $\sem{Q}(D)$ one can convert constant-delay enumeration algorithms for $Q_1, \ldots, Q_\ell$ into one for $Q$ by iterating in nested loops over the outputs of $\sem{Q_1}(D), \ldots,\allowbreak \sem{Q_\ell}(D)$.
Henceforth, we assume without loss of generality that $Q$ is connected.
\paragraph{Handling loops in $Q$.}
When receiving $Q \in \fcACQ$, the first step is to remove the self-loops of $Q$, i.e., we translate $Q$ into the $\sigmaOne$-query $\QOne$ by replacing every atom of the form $E(x,x)$ in $Q$ with the atom $L(x)$.
Obviously, the Gaifman graph of the CQ remains unchanged (i.e., $G(\QOne) = G(Q)$), $\QOne$~is free-connex acyclic, connected, and \emph{self-loop-free}, i.e., $x \neq y$ for every atom in $\QOne$ of the form $E(x,y)$.
Clearly, $\sem{Q}(D) = \sem{\QOne}(\DOne)$.
Hence, instead of evaluating $Q$ on $D$ we can evaluate $\QOne$ on $\DOne$.
\paragraph{Ordering the variables in $\QOne$.}
To establish an order on the variables in $\QOne$, we start by picking an arbitrary variable $r$ in $\free(\QOne)$, and if $\QOne$ is a Boolean query (i.e., $\free(\QOne) = \emptyset$) we pick $r$ as an arbitrary variable in $\vars(\QOne)$.
This $r$ will be fixed from now on.
Since $Q$ is connected, picking $r$ as the \emph{root node} turns the Gaifman graph $G(\QOne)$ into a rooted tree $T$.
Furthermore, since $\QOne \in \fcACQOne$, it follows from Proposition~\ref{prop:binaryfcacqs} that there exists a strict linear order $<$ on $\vars(\QOne)$ satisfying $x < y$ for all $x \in \free(\QOne)$ and all $y \in \quant(\QOne)$, such that $<$ is compatible with the descendant relation of the rooted tree $T$, i.e., for all $x \in \vars(\QOne)$ and all $y \in \vars(\QOne)$ that are descendants of $x$ in $T$ we have $x < y$.
Such a $<$ can be obtained based on a variant of breadth-first search of $G(\QOne)$ that starts with node $r$ and that prioritizes free over quantified variables using separate queues for free and for quantified variables.
We choose an arbitrary such order $<$ and fix it from now on.
We will henceforth assume that $r = x_1$ and that $x_1 < x_2 < \cdots < x_k$
(by reordering the variables $x_1,\ldots,x_k$ in the head of~$\QOne$ and $Q$, this can be achieved without loss of generality).

For every node $x$ of $T$ we let $\lambda_x$ be the set of unary relation symbols $U \in \sigmaOne$ such that $U(x) \in \Atoms(\QOne)$; we let $\Children(x)$ be the set of its children; and if $x \neq x_1$, we write $\Parent(x)$ to denote the parent of $x$ in $T$.
Upon input of $Q$ we can compute in time $O(|Q|)$ the query $\QOne$, the rooted tree $T$, and a lookup table to obtain $O(1)$-time access to $\lambda_x$ for each $x \in \vars(\QOne)$.
The following lemma summarizes the correspondence between homomorphisms from $\QOne$ to $\DOne$, the labels $\lambda_x$ associated to the nodes $x$ of $G(\QOne)$, and the node-labeled graph $\GOne = (\VOne, \EOne, \vl)$; see
Appendix~\ref{app:proof:homomorphisms} for a proof.

\begin{restatable}{lemma}{lemmaHomomorphisms}\label{lemma:homomorphisms}
	A mapping $\val\colon \vars(\QOne) \to \Adom(\DOne)$ is a homomorphism from $\QOne$ to $\DOne$ if, and only if, $\vl(\val(x)) \supseteq \lambda_x$, for every $x \in \vars(\QOne)$, and $\set{ \val(x),\, \val(y) }[] \in \EOne$ for every edge $\set{ x, y }[]$ of $G(\QOne)$.
\end{restatable}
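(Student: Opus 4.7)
The argument will be a direct unwinding of the definitions, exploiting two preparatory facts established earlier in the section: first, that $\QOne$ is \emph{self-loop-free}, so every binary atom of $\QOne$ has the form $E(x,y)$ with $x \neq y$; and second, that $\EOne$ is the undirected version of the symmetric relation $E^{\DOne}$, so $\{v,w\} \in \EOne$ if and only if $(v,w) \in E^{\DOne}$ (equivalently $(w,v) \in E^{\DOne}$). Since $\QOne$ arises from removing self-loops in $Q$, its atoms fall into exactly two syntactic categories: unary atoms $U(x)$ with $U \in \sigmaOne$ (including the new label $L$ introduced for former self-loops), and loop-free binary atoms $E(x,y)$. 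The proof will simply check the homomorphism condition on each category.

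For the forward direction, I will assume $\val \in \Hom(\QOne, \DOne)$ and consider any $x \in \vars(\QOne)$. For every $U \in \lambda_x$ the atom $U(x)$ lies in $\Atoms(\QOne)$, so $(\val(x)) \in U^{\DOne}$, which by the definition of $\vl$ gives $U \in \vl(\val(x))$; hence $\lambda_x \subseteq \vl(\val(x))$. Next, for any edge $\{x,y\}$ of $G(\QOne)$ we have $x \neq y$ and some binary atom of $\QOne$ whose variable set equals $\{x,y\}$, i.e.\ $E(x,y)$ or $E(y,x)$ is in $\Atoms(\QOne)$; in either case $\val$ maps it into $E^{\DOne}$, and by the definition of $\EOne$ we conclude $\{\val(x),\val(y)\} \in \EOne$.

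For the backward direction, I will assume the two stated conditions and check every atom of $\QOne$. If $U(x) \in \Atoms(\QOne)$, then $U \in \lambda_x \subseteq \vl(\val(x))$, and the definition of $\vl$ yields $(\val(x)) \in U^{\DOne}$. If $E(x,y) \in \Atoms(\QOne)$, then self-loop-freeness gives $x \neq y$, so $\{x,y\}$ is an edge of $G(\QOne)$, and the assumption yields $\{\val(x), \val(y)\} \in \EOne$; symmetry of $E^{\DOne}$ together with the definition of $\EOne$ then gives $(\val(x), \val(y)) \in E^{\DOne}$, as required. Thus $\val$ is a homomorphism from $\QOne$ to $\DOne$.

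There is no real obstacle here; the only potential pitfall is the self-loop handling. Because the Gaifman graph $G(\QOne)$ is defined to carry edges only between \emph{distinct} variables, a naive reading of the equivalence could fail for queries that contain atoms $E(x,x)$. The lemma avoids this issue precisely because we first rewrite such atoms in $Q$ as $L(x)$ when passing to $\QOne$, so former self-loops are now captured by the membership $L \in \lambda_x$ on the unary side, and the binary-atom case needs to treat only the case $x \neq y$. This is the only place where the preceding setup is actually used, and once it is invoked the equivalence is immediate.
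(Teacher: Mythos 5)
Your proof is correct and follows essentially the same route as the paper's: both directions are a direct unwinding of the definitions of $\GOne$, $\vl$, $\lambda_x$, and $\EOne$, checking unary and binary atoms separately and using the self-loop-freeness of $\QOne$ together with the symmetry of $E^{\DOne}$ in the backward direction. Your closing remark about why the prior rewriting of $E(x,x)$ into $L(x)$ is exactly what makes the binary case safe is the same observation the paper relies on.
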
 %
\paragraph{Evaluation phase for Boolean queries (task ``\booltask'').}
As a warm-up, we start with the evaluation of \emph{Boolean} queries.
For this task, we assume that $Q$ is a Boolean query, and as described before, we assume w.l.o.g.\ that it is connected.
The following lemma shows that evaluating $Q$ on $D$ can be reduced to evaluating the query $\QOne$ on the color database $\ciD$.
\begin{restatable}{lemma}{lemmaBool}\label{lemma:bool}
	If $Q$ is a Boolean query, then $\sem{Q}(D) = \sem{\QOne}(\DOne) = \sem{\QOne}(\ciD)$.
\end{restatable}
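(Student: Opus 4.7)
The first equality $\sem{Q}(D)=\sem{\QOne}(\DOne)$ is essentially definitional: every atom $E(x,x)$ in $Q$ is replaced by $L(x)$ in $\QOne$, while $L^{\DOne}$ is defined to be the set of loop vertices and $E^{\DOne}=E^D$ is unchanged. Hence a valuation satisfies $E(v,v)$ in $D$ iff it satisfies $L(v)$ in $\DOne$, and all remaining atoms are interpreted identically, so the Boolean answers coincide. I would state this as a one-line check via \cref{lemma:homomorphisms}.

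For the interesting equality $\sem{\QOne}(\DOne)=\sem{\QOne}(\ciD)$ (which is the Boolean existence of a homomorphism on each side), I would split into two directions. For \enquote{$\Rightarrow$}, given $h\in\Hom(\QOne,\DOne)$ I claim $\col\circ h\in\Hom(\QOne,\ciD)$. Unary atoms: for $U(x)\in\Atoms(\QOne)$, $h(x)\in U^{\DOne}$ directly gives $(\col(h(x)))\in U^{\ciD}$ by item~(5) in the definition of $\ciD$. Edge atoms $E(x,y)$: $\{h(x),h(y)\}\in\EOne$, so $h(y)\in\N{h(x)}{\col(h(y))}$, hence $\numN{h(x)}{\col(h(y))}\ge 1$. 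By the coarsest-stable-coloring property recalled just before the construction of $\DSD$, this translates into $\numN{\col(h(x))}{\col(h(y))}\ge 1$, so $(\col(h(x)),\col(h(y)))\in E^{\ciD}$, as required.

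The direction \enquote{$\Leftarrow$} is where stability of $\col$ is really used, and I expect it to be the main point of the argument. Given $g\in\Hom(\QOne,\ciD)$, I build $h\in\Hom(\QOne,\DOne)$ by induction along the rooted tree $T$ with root $r=x_1$, maintaining the invariant $\col(h(x))=g(x)$ for every assigned $x$. For the root, pick any $h(r)\in\VOne$ with $\col(h(r))=g(r)$, which exists because $g(r)\in C=\img(\col)$. For the induction step, suppose $h(x)=v$ has been assigned and $y$ is a child of $x$ in $T$. Since $\{x,y\}$ is an edge of $G(\QOne)$, there is a corresponding atom $E(x,y)$ (or $E(y,x)$) in $\QOne$, and $g$ being a homomorphism to $\ciD$ together with the symmetry of $E^{\ciD}$ yields $(g(x),g(y))\in E^{\ciD}$, i.e.\ $\numN{g(x)}{g(y)}>0$. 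Using stability and $\col(v)=g(x)$, we get $\numN{v}{g(y)}=\numN{g(x)}{g(y)}>0$, so the set $\N{v}{g(y)}$ is nonempty and we may choose $h(y)\in \N{v}{g(y)}$. This gives $\{v,h(y)\}\in\EOne$ and $\col(h(y))=g(y)$, preserving the invariant.

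It remains to verify that the constructed $h$ satisfies all atoms of $\QOne$. Edge atoms $E(x,y)$ are exactly the edges of $T$ and were handled in the induction. For unary atoms $U(x)\in\Atoms(\QOne)$, the fact that $g(x)\in U^{\ciD}$ means there exists \emph{some} $v_0\in\VOne$ with $\col(v_0)=g(x)$ and $U\in\vl(v_0)$; since $\col$ refines $\vl$ (it is the coarsest stable coloring that refines $\vl$), every vertex of color $g(x)$ lies in $U^{\DOne}$, and in particular so does $h(x)$. This completes the construction of $h$ and proves the equivalence, giving $\sem{\QOne}(\DOne)=\sem{\QOne}(\ciD)$ in the Boolean sense. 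The main obstacle, as noted, is the backward direction: without stability one could not extend the abstract homomorphism $g$ on colors to a concrete homomorphism on vertices while simultaneously preserving both the edge constraints (via $\numN{\cdot}{\cdot}$) and the unary labels (via refinement of $\vl$).
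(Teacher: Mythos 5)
Your proposal is correct and follows essentially the same route as the paper's proof: the forward direction composes a homomorphism with $\col$, and the backward direction does a top-down induction along the rooted tree $T$, using stability of $\col$ (via the well-definedness of $\numN{c}{c'}$) to keep the neighbor sets $\N{v}{g(y)}$ nonempty and the refinement of $\vl$ to handle the unary atoms. The paper merely packages these same steps into auxiliary statements (Observation~\ref{obs:appendix:main-lemma:unary_predicates}, Corollary~\ref{cor:homomorphisms}, and Lemmas~\ref{claim:appendix:main-lemma:neighbors} and~\ref{prop:appendix:main-lemma:hom_correspondence}) that are reused for the non-Boolean case, whereas you inline them.
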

The proof follows from Lemma~\ref{lemma:homomorphisms}, our particular choice of $\ciD$, and the fact that $\col$ is a stable coloring of $\GOne$ that refines $\vl$ (see appendices~\ref{app:eval:observations} and~\ref{app:proof:bool} for details).
For Boolean queries, the algorithm from Theorem~\ref{thm:BGS-enum} enumerates the empty tuple $()$ or nothing at all.
Thus, the combination of Lemma~\ref{lemma:bool} and Theorem~\ref{thm:BGS-enum} yields that we can solve the task by checking whether the empty tuple $()$ is enumerated upon input of $\ciD$ and $\QOne$, which takes time $O(|\QOne| {\cdot} |\ciD|) + O(|\free(\QOne)|)$.
Since $|\free(\QOne)| = \emptyset$ and $|\QOne| \in O(|Q|)$, this solves the task ``$\booltask$'' in time $O(|Q| {\cdot} |\ciD|)$. %
\paragraph{Evaluation phase for non-Boolean queries (task ``\enumtask'').}
We now assume that $\QOne$ is a connected $k$-ary query for some $k \geq 1$.
Recall that the head of $\QOne$ is $\Ans(x_1,\ldots,x_k)$ with $x_1 < \cdots < x_k$, where $<$ is the order we associated with the query $\QOne$.
Further, recall that $T$ is the rooted tree obtained from the Gaifman graph $G(\QOne)$ by selecting $r=x_1$ as its root.
The following technical lemma highlights the connection between $\sem{\QOne}(\DOne)$ and $\sem{\QOne}(\ciD)$ that will allow us to use the color-index for enumerating $\sem{\QOne}(\DOne)$.
To state the lemma, we need the following notation.
For any $i \in [k]$, we say that $(v_1, \ldots, v_i)$ is a \emph{partial output} of $\QOne$ over $\DOne$ of color $(c_1, \ldots, c_k)$ iff there exists an extension $(v_{i+1}, \ldots, v_k)$ such that $(v_1, \ldots, v_i,\allowbreak v_{i+1}, \ldots, v_k) \in \sem{\QOne}(\DOne)$ and $(\col(v_1), \ldots, \col(v_k)) = (c_1, \ldots, c_k)$.

\begin{restatable}{lemma}{lemmaMainLemma}\label{lemma:mainLemma}~
  \begin{enumerate}[(a)]
    \item\label{item:a:lemma:mainLemma}
    For every $(v_1,\ldots,v_k) \in \sem{\QOne}(\DOne)$ we have
    $(\col(v_1), \ldots, \col(v_k)) \in \sem{\QOne}(\ciD)$.
    \item\label{item:b:lemma:mainLemma}
    For all $\tup{c} = (c_1,\ldots,c_k) \in \sem{\QOne}(\ciD)$, and for every $v_1 \in \Adom(\DOne)$ with $\col(v_1) = c_1$,
    $(v_1)$ is a partial output of $\QOne$ over $\DOne$ of color $\tup{c}$.
    Moreover, if $(v_1,\ldots,v_i)$ is a partial output of $\QOne$ over $\DOne$ of color $\tup{c}$ and $x_j = \Parent(x_{i+1})$,
    then $\N{v_j}{c_{i+1}} \neq \emptyset$ and $(v_1, \ldots, v_i, v_{i+1})$ is a partial output of $\QOne$ over $\DOne$ of color $\tup{c}$ for every $v_{i+1} \in \N{v_j}{c_{i+1}}$.
  \end{enumerate}
\end{restatable}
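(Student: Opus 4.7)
The plan is to prove part~\ref{item:a:lemma:mainLemma} by a direct translation of homomorphisms through $\col$, and to prove part~\ref{item:b:lemma:mainLemma} by establishing a stronger inductive claim that extends any ``consistent'' partial assignment to a full homomorphism of $\QOne$ into $\DOne$ with the prescribed colors.

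For part~\ref{item:a:lemma:mainLemma}, given $(v_1,\ldots,v_k) \in \sem{\QOne}(\DOne)$, fix a witnessing homomorphism $\val \in \Hom(\QOne, \DOne)$ with $\val(x_\ell) = v_\ell$ for $\ell \in [k]$, and set $\val' \isdef \col \circ \val$. I apply Lemma~\ref{lemma:homomorphisms} to $\ciD$ and $\val'$: for every unary $U(x) \in \Atoms(\QOne)$, the element $\val(x) \in U^{\DOne}$ itself witnesses $(\col(\val(x))) \in U^{\ciD}$; for every edge $\{x,y\}$ of $G(\QOne)$, the fact $\{\val(x),\val(y)\} \in \EOne$ gives $\numN{\val(x)}{\col(\val(y))} \geq 1$, and then stability of $\col$ yields $\numN{\col(\val(x))}{\col(\val(y))} \geq 1$, so $(\col(\val(x)), \col(\val(y))) \in E^{\ciD}$.

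For part~\ref{item:b:lemma:mainLemma}, the strategy is to prove by downward induction on $i \in [k]$ the following claim: every tuple $(v_1,\ldots,v_i)$ satisfying (i) $\col(v_\ell) = c_\ell$ and $\vl(v_\ell) \supseteq \lambda_{x_\ell}$ for all $\ell \leq i$, and (ii) $\{v_a, v_b\} \in \EOne$ for every edge $\{x_a, x_b\}$ of $G(\QOne)$ with $a,b \leq i$, extends to a full tuple $(v_1,\ldots,v_k)$ with $\col(v_\ell) = c_\ell$ for all $\ell \in [k]$ that defines a homomorphism from $\QOne$ to $\DOne$. The base case $i = k$ is immediate from Lemma~\ref{lemma:homomorphisms}. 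For the inductive step, let $x_j = \Parent(x_{i+1})$; since $\tup{c} \in \sem{\QOne}(\ciD)$, a homomorphism $h \in \Hom(\QOne,\ciD)$ with $h(x_\ell) = c_\ell$ exists, so the edge $\{x_j, x_{i+1}\}$ forces $(c_j, c_{i+1}) \in E^{\ciD}$, equivalently $\numN{c_j}{c_{i+1}} > 0$. Since $\col(v_j) = c_j$ and $\col$ is stable, $\N{v_j}{c_{i+1}}$ has cardinality $\numN{c_j}{c_{i+1}} > 0$, hence is non-empty. For any $v_{i+1} \in \N{v_j}{c_{i+1}}$, the property $\vl(v_{i+1}) \supseteq \lambda_{x_{i+1}}$ follows by refinement of $\vl$ by $\col$ applied to the witnesses of $(c_{i+1}) \in U^{\ciD}$ for $U \in \lambda_{x_{i+1}}$; and the only edge in $G(\QOne)$ between $x_{i+1}$ and $\{x_1,\ldots,x_i\}$ is $\{x_j, x_{i+1}\}$ itself, because $G(\QOne)$ is a tree. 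Thus $(v_1,\ldots,v_{i+1})$ satisfies the inductive hypothesis at level $i+1$ and extends to a full homomorphism by induction.

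The two statements of part~\ref{item:b:lemma:mainLemma} then follow. Any $v_1$ with $\col(v_1) = c_1$ satisfies the conditions of the inductive claim at $i=1$: the unary-label condition holds by refinement of $\vl$ by $\col$, and there are no edges inside $\{x_1\}$ to verify since $\QOne$ is self-loop-free. Hence $(v_1)$ extends to a full homomorphism of the prescribed color and is a partial output. For the moreover-part, given a partial output $(v_1,\ldots,v_i)$, the inductive step itself produces $\N{v_j}{c_{i+1}} \neq \emptyset$ and, for every $v_{i+1}$ in this set, verifies that $(v_1,\ldots,v_i,v_{i+1})$ meets the conditions of the inductive claim at level $i+1$, so it extends to a full homomorphism and is a partial output. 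I expect the one delicate point — and the only place where acyclicity of $\QOne$ is genuinely used — to be the tree property of $G(\QOne)$ in the inductive step: it guarantees that $x_{i+1}$ has exactly one neighbour among $\{x_1,\ldots,x_i\}$, namely its parent $x_j$, so that the choice of $v_{i+1}$ based solely on $v_j$ cannot conflict with any earlier commitment. Without this, simultaneous constraints from several previously chosen $v_\ell$ could make some neighbours of $v_j$ of color $c_{i+1}$ inadmissible, and the clean ``any $v_{i+1} \in \N{v_j}{c_{i+1}}$ works'' conclusion would fail.
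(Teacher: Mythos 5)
Your part~(a) argument and the overall skeleton of part~(b) (induct on a notion of ``colour-consistent'' partial tuple, extend greedily along the tree using stability of $\col$) match the paper's proof. But there is a genuine gap in part~(b): your induction ranges only over $i \in [k]$, i.e.\ over the \emph{free} variables $x_1,\ldots,x_k$, and you declare the base case $i=k$ ``immediate from Lemma~\ref{lemma:homomorphisms}.'' That is only true when $\free(\QOne)=\vars(\QOne)$. In general $\QOne$ has quantified variables $x_{k+1},\ldots,x_m$ (this is the interesting case for free-connex queries), and a $k$-tuple of values for the free variables satisfying your conditions (i) and (ii) does not yet define a homomorphism from $\QOne$ to $\DOne$ --- being a partial output requires a homomorphism on \emph{all} of $\vars(\QOne)$. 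So precisely where the paper's proof does its main work, your proof stops.

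The fix needs one more idea, which you already have in hand but do not deploy: fix the homomorphism $\mu\in\Hom(\QOne,\ciD)$ witnessing $\tup{c}\in\sem{\QOne}(\ciD)$ (you invoke it only to get $(c_j,c_{i+1})\in E^{\ciD}$), and continue the top-down extension over the quantified variables as well, using $\mu(z)$ as the target colour for each quantified $z$. At each such step, the edge $\{\Parent(z),z\}$ of $G(\QOne)$ together with $\mu$ being a homomorphism gives $\numN{\mu(\Parent(z))}{\mu(z)}>0$, and stability of $\col$ then yields a nonempty neighbour set from which to pick the image of $z$; the unary labels and the uniqueness of the already-assigned neighbour follow exactly as in your argument for free variables. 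This is what the paper does (its extension in the proof of the consistency lemma runs over all of $\vars(\QOne)$, with $\col(\nu(z))=\mu(z)$ for every $z$). A second, minor omission: in the ``moreover'' part you apply your inductive claim to a given partial output $(v_1,\ldots,v_i)$, which requires first checking that every partial output satisfies your conditions (i) and (ii); this is easy (read off the witnessing homomorphism via Lemma~\ref{lemma:homomorphisms}) but should be stated, as it is the converse direction of the paper's consistency lemma.
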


The proof uses Lemma~\ref{lemma:homomorphisms}, our particular choice of $\ciD$,
and the fact that $\col$ is a stable coloring of $\GOne$ that refines $\vl$
(see appendices~\ref{app:eval:observations} and~\ref{app:proof:mainLemma} for details).
For solving the task ``$\enumtask$'', we use Theorem~\ref{thm:BGS-enum} with input $\ciD$ and $\QOne$ to carry out a preprocessing phase in time $O(|\QOne|{\cdot}|\ciD|)$ and then enumerate the tuples in $\sem{\QOne}(\ciD)$ with delay $O(k)$.
Each time we receive a tuple $\tup{c}=(c_1,\ldots,c_k) \in \sem{\QOne}(\ciD)$,
we carry out the following algorithm:
\begin{quote}
  for all $v_1\in \VOne$ with $\col(v_1) = c_1$ do\ $\myEnum((v_1); \tup{c})$.
\end{quote}
\noindent
Here, for all $i \in [k]$, the procedure $\myEnum((v_1,\ldots,v_i);\tup{c})$ is as follows:
\begin{quote}
  if $i=k$ then \textbf{output} $(v_1,\ldots,v_k)$  \\
  else
  \\
  \mbox{ \ \ \ }let $x_j \deff p(x_{i+1})$ \\
  \mbox{ \ \ \ }for all $v_{i+1}\in \N{v_j}{c_{i+1}}$
  do \
  $\myEnum((v_1,\ldots,v_i,v_{i+1});\tup{c})$\\
  endelse.
\end{quote}

\noindent
Clearly, for each fixed $\tup{c} = (c_1, \ldots, c_k) \in \sem{\QOne}(\ciD)$, this outputs,
without repetition, tuples $(v_1,\ldots,v_k) \in \VOne^k$ such that $(\col(v_1), \ldots, \col(v_k)) = \tup{c}$.
From Lemma~\ref{lemma:mainLemma}\ref{item:b:lemma:mainLemma} we obtain that the output contains all tuples $(v_1,\ldots,v_k)$ with color $(\col(v_1),\ldots,\col(v_k))\allowbreak = \tup{c}$ that belong to $\sem{\QOne}(\DOne)$.
Using Lemma~\ref{lemma:mainLemma}\ref{item:a:lemma:mainLemma}, we then obtain that, in total, the algorithm enumerates exactly all the tuples $(v_1,\ldots,v_k)$ in $\sem{\QOne}(\DOne)$.
From Lemma~\ref{lemma:mainLemma}\ref{item:b:lemma:mainLemma} we know that every time we encounter a loop of the form \enquote{for all $v_{i+1} \in \N{v_j}{c_{i+1}}$}, the set $\N{v_j}{c_{i+1}}$ is non-empty.
Thus, by using the lookup tables built during the indexing phase, we obtain that the delay between outputting any two tuples of $\sem{\QOne}(\DOne)$ is $O(k)$.
In summary, this shows that we can enumerate $\sem{\QOne}(\DOne)$ (which is equal to $\sem{Q}(D)$) with delay $O(k)$ after a preprocessing phase that takes time $O(|Q|{\cdot}|\ciD|)$. %
\paragraph{Evaluation phase for counting (task \upshape{``$\counttask$''}).}
If $\QOne$ is a \emph{Boolean} query, the number of answers is  0 or 1, and the result for the task ``$\counttask$'' is obtained by solving the task ``$\booltask$'' as described above.
If $\QOne$ is a $k$-ary query for some $k \geq 1$, we make the same assumptions and use the same notation as for the task ``$\enumtask$''.
For solving the task ``$\counttask$'', we proceed as follows.
For every $c \in C$, let $v_c$ be the first vertex in the lookup table for the set $\set{v \in \VOne \mid \col(v) = c}$, and for every $x \in \vars(\QOne)$ and every $c \in C$ let $f_1(c,x) \isdef 1$ if $\vl(v_c) \supseteq \lambda_x$, and let $f_1(c,x) \isdef 0$ otherwise.
Note that the lookup table is part of the color-index $\DSD$ and that $f_1(c,x)$ indicates whether one (and thus, every) vertex of color $c$ satisfies all the unary atoms of the form $U(x)$ that occur in $\QOne$.
Recall that $\DSD$ contains a lookup table to access the set $\set{v \in \VOne \mid \col(v) = c}$ for every $c \in C$, and we have already computed a lookup table to access $\lambda_x$ for every $x \in \vars(\QOne)$ in this phase.
Thus we can compute, in time $O(|C| {\cdot} |\QOne|)$, a lookup table that gives us $O(1)$-time access to $f_1(c,x)$ for all $c \in C$ and all $x \in \vars(\QOne)$.

Recall that $T$ denotes the rooted tree obtained from the Gaifman graph $G(\QOne)$ by choosing $x_1$ to be its root.
For every node $x$ of $T$ we let $T_x$ be the subtree of $T$ rooted at $x$.
Via a bottom-up pass of $T$ we define the following values for every $c \in C$:
for every \emph{leaf} $x$ of $T$, let $f_{\downarrow}(c,x) \isdef f_1(c,x)$;
for every node $y \neq x_1$ of $T$, let $g(c,y) \isdef \sum_{c' \in C}  f_{\downarrow}(c', y) \cdot \numN{c}{c'}$;
and for every inner node $x$ of $T$, let $f_{\downarrow}(c,x) \isdef f_1(c,x) \cdot \prod_{y \in \Children(x)} g(c,y)$.
It is easy to see that lookup tables providing $O(1)$-time access to $f_{\downarrow}$ and $g$ can be computed in total time $O(|\QOne| {\cdot} |\ciD|)$
(to achieve this, do a bottom-up pass over the edges $\smallset{x,y}$ of $T$ and  note that $g(c,y) = \sum_{c'\,:\,(c,c') \in E^{\ciD}}  f_{\downarrow}(c', y) \cdot \numN{c}{c'}$).

We obtain the following lemma by induction (bottom-up along $T$), see Appendix~\ref{app:proof:counting} for a proof.
\begin{restatable}{lemma}{lemmaCounting}\label{lemma:counting}
  For all $(c,v)\in C\times \VOne$  with $c=\col(v)$, the following is true for all $x \in V(T)$:
  \begin{enumerate}[(a)]
    \item\label{item:a:lemma:counting}
    $f_{\downarrow}(c,x)$ is the number of mappings $\val\colon V(T_x) \to \VOne$ satisfying $\val(x) = v$ and
    \begin{enumerate}[(1), topsep=0pt]
      \item
      for every $x' \in V(T_x)$ we have $\vl(\val(x')) \supseteq \lambda_{x'}$,\; and
      \item
      for every edge $\set{x',y'}[]$ in $T_x$ we have $\set{\val(x'),\val(y')}[] \in \EOne$;
    \end{enumerate}
    \item\label{item:b:lemma:counting}
    for all $y \in \Children(x)$, the value $g(c, y)$ is the number of mappings $\val\colon \set{x}[] \cup V(T_y) \to \VOne$ with $\val(x) = v$ and
    \begin{enumerate}[(1), topsep=0pt]
      \item
      for every $x' \in V(T_y)$ we have $\vl(\val(x')) \supseteq \lambda_{x'}$,\; and
      \item
      for every edge $\set{x',y'}[]$ in $T_x$ with $x',y' \in \set{x}[] \cup V(T_y)$ we have $\set{ \val(x'),\val(y') }[] \in \EOne$.
    \end{enumerate}%
  \end{enumerate}%
\end{restatable}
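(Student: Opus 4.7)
The plan is to prove parts (a) and (b) simultaneously by bottom-up induction on the subtree $T_x$. Throughout the argument, the key auxiliary fact is that the coloring $\col$ refines $\vl$, hence $\vl(v) = \vl(v_c)$ whenever $c = \col(v)$; consequently $f_1(c,x) = 1$ if $\vl(v) \supseteq \lambda_x$ and $f_1(c,x) = 0$ otherwise. I would state this observation first and use it silently throughout.

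For the base case of (a), suppose $x$ is a leaf of $T$, so $V(T_x) = \set{x}[]$. The only candidate mapping is $\val(x) \defis v$, condition~(2) is vacuous, and condition~(1) reduces to $\vl(v) \supseteq \lambda_x$. Hence the number of valid mappings is exactly $f_1(c,x) = f_{\downarrow}(c,x)$.

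For the inductive step, I would first derive (b) at $x$ from (a) at each child $y \in \Children(x)$. A mapping $\val\colon \set{x}[] \cup V(T_y) \to \VOne$ with $\val(x) = v$ meeting the two conditions is completely determined by its value $w \defis \val(y)$ together with a mapping $\val'\colon V(T_y) \to \VOne$ with $\val'(y) = w$ satisfying the two conditions of (a) on $T_y$, subject to the single extra edge constraint $\set{v, w}[] \in \EOne$. By the induction hypothesis (a) for $y$, the number of such $\val'$ for a fixed $w$ equals $f_{\downarrow}(\col(w), y)$. Grouping neighbors $w$ of $v$ in $\GOne$ by color and using stability of $\col$ (so $\numN{v}{c'} = \numN{c}{c'}$ since $c = \col(v)$) yields
\[
	\sum_{c' \in C} f_{\downarrow}(c', y) \cdot \numN{c}{c'} \;=\; g(c,y),
\]
as required. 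Now (a) at $x$ (when $x$ is an inner node) follows because $V(T_x) = \set{x}[] \cup \dot\bigcup_{y \in \Children(x)} V(T_y)$ and the only edges of $T_x$ touching $x$ go to the children. Thus a valid mapping on $V(T_x)$ with $\val(x) = v$ corresponds to the choice $f_1(c,x)$ for condition~(1) at $x$, times an independent choice, for each child $y$, of an extension counted by $g(c,y)$ via (b). Multiplicativity gives $f_1(c,x) \cdot \prod_{y \in \Children(x)} g(c,y) = f_{\downarrow}(c,x)$.

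I do not expect any real obstacle here: the argument is a routine bottom-up induction, and the only subtlety is invoking the color-refinement properties at two places, namely to equate $\vl(v)$ with $\vl(v_c)$ (so that $f_1(c,x)$ faithfully captures the unary-atom conditions at $v$) and to equate $\numN{v}{c'}$ with $\numN{c}{c'}$ (so that the sum defining $g(c,y)$ matches the actual number of color-$c'$ neighbors of $v$). Both rely only on $\col$ being stable and refining $\vl$, which has been recorded in Section~\ref{sec:mainresult}.
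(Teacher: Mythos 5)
Your proof is correct and follows essentially the same route as the paper's: a bottom-up induction whose inductive step first establishes (b) at $x$ from (a) at each child $y$ (by grouping the choices of $\val(y)$ among the neighbors of $v$ by color and invoking stability, $\numN{v}{c'}=\numN{c}{c'}$), and then obtains (a) at $x$ by multiplying $f_1(c,x)$ with the independent counts $g(c,y)$ over the pairwise node-disjoint subtrees $T_y$. The two places where color refinement is invoked ($\vl(v)=\vl(v_c)$ and the stability of $\col$) are exactly the ones the paper uses.
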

\noindent
Combining Lemmas~\ref{lemma:counting}(a) and~\ref{lemma:homomorphisms} yields that $\sum_{c \in C}\ n_c \cdot f_{\downarrow}(c,x_1)$ is the number of homomorphisms from $\QOne$ to $\DOne$, where $n_c$ is the number of nodes $v \in \VOne$ with $\col(v) = c$.
Since we have $O(1)$ access to $n_c$ in $\DSD$, the number $\sum_{c \in C}\ n_c \cdot f_{\downarrow}(c,x_1)$ can be computed in time $O(|C|)$.
In the particular case where $\free(Q) = \vars(Q)$, the number of homomorphisms from $\QOne$ to $\DOne$ is precisely the number $|\sem{\QOne}(\DOne)|$.
Hence, this solves the task ``$\counttask$'' in case that
$\free(\QOne) = \vars(\QOne)$ in time $O(|\QOne| {\cdot} |\ciD|)$.

In case that $\free(\QOne) \neq \vars(\QOne)$, note that $\free(\QOne)$ induces a subtree $T'$ of $T$ by Proposition~\ref{prop:binaryfcacqs}.
Via a bottom-up pass of $T'$ we can define the following values for every $c \in C$:
for each leaf $x$ of $T'$ we let $f'_{\downarrow}(c,x) \isdef 1$ if $f_\downarrow(c,x) \geq 1$, and $f'_{\downarrow}(c,x) \isdef 0$ otherwise;
for every node $y \neq x_1$ of $T'$ let $g'(c,y) \isdef \sum_{c' \in C} \ f'_{\downarrow}(c',y)  \cdot \ \numN{c}{c'}$;
and for every inner node $x$ of $T'$, let $f'_{\downarrow}(c,x) \isdef f_1(c,x) \cdot \prod_{y \in \Children(x)} g'(c,y)$.
By the same reasoning as before, we can obtain lookup tables for $f'_{\downarrow}$ and $g'$ in time $O(|\QOne| {\cdot} |\ciD|)$.
By induction (bottom-up along $T'$) one obtains: For all $c \in C$ and all $v \in \VOne$ with $\col(v) = c$, the value $f'_{\downarrow}(c,x_1)$ is the number of tuples $(v_1,\ldots,v_k) \in \sem{\QOne}(\DOne)$ with $v_1=v$.
Thus,
$
	|\sem{\QOne}(\DOne)| \,=\, \sum_{c\in C} \, n_c \cdot f'_{\downarrow}(c,x_1).
$
This number can be computed in time $O(|\QOne|{\cdot}|\ciD|)$ by the same reasoning as before.
In summary, we can compute the number $|\sem{\QOne}(\DOne)|$ in time $O(|\QOne|{\cdot}|\ciD|)$.
  
In summary, this completes the proof of Theorem~\ref{thm:maintheorem} for the special case where $\sigma$ is a schema for node-labeled graphs and where the given $\sigma$-db $D$ is a node-labeled graph (i.e., $E^D$ is symmetric).

\section{Wrapping Up: Proof of Main Theorem, Size of \texorpdfstring{\textit{D}\textsubscript{col}}{D\_col}, and Open Questions}%
\label{sec:SizeOfIndex}\label{sec:final}
In Section~\ref{sec:OneBinaryRelation} we have proved Theorem~\ref{thm:maintheorem} for the special case where $\sigma$ is a schema for node-labeled graphs and $D$ is a node-labeled graph (i.e., $E^D$ is symmetric).
Using this and applying Theorem~\ref{thm:ReductionBinarySchemaToBinarySchemaWithOnlyOneBinaryRelation} yields a proof of Theorem~\ref{thm:maintheorem} for the case where $\sigma$ is an arbitrary binary schema and $D$ is an arbitrary $\sigma$-db.
And using that and applying Theorem~\ref{thm:ReductionArbSchemaToBinarySchema} yields a proof of Theorem~\ref{thm:maintheorem} for the general case where $\sigma$ is an arbitrary relational schema and $D$ is an arbitrary $\sigma$-db:
during the indexing phase, we first translate $\sigma$ and $D$ into $\sigmaBinary$ and $\dbBinary$ according to Theorem~\ref{thm:ReductionArbSchemaToBinarySchema},
then translate these into $\sigmaSimple$ and $\dbSimple$ according to Theorem~\ref{thm:ReductionBinarySchemaToBinarySchemaWithOnlyOneBinaryRelation},
and then carry out the indexing phase for the latter as described in Section~\ref{sec:OneBinaryRelation}.
Among other things, this yields the auxiliary database $\ciD$.

It turns out that the size of $\ciD$ is tightly related to the number of colors that \emph{relational color refinement} (RCR) assigns to the original $\sigma$-db $D$.
RCR was introduced by Scheidt and Schweikardt in \cite{ScheidtSchweikardt_MFCS25}.
It is a generalization of classical color refinement (CR) that works on arbitrary relational structures,
and that is equivalent to CR in the special case that the relational structure is a graph.
The key aspect of RCR for this paper is that the coloring it produces has the equivalent property with respect to acyclic $\sigma$-structures as the coloring produced by CR on a graph with respect to trees.
CR assigns two nodes $u$, $v$ of a graph $G$ different colors if, and only if, there is a tree $T$ with root $r$ for which the number of homomorphisms from $T$ to $G$ mapping $r$ to $u$ differs from the number of homomorphisms from $T$ to $G$ mapping $r$ to $v$ \cite{Dvorak2010}.
Similarly, RCR assigns two tuples $\at$, $\bt$ of a $\sigma$-db $D$ different colors if, and only if, there is an acyclic $\sigma$-db $\C$ with a tuple $\ct$ for which the number of homomorphisms from $\C$ to $D$ mapping $\ct$ to $\at$ differs from the number of homomorphisms from $\C$ to $D$ mapping $\ct$ to $\bt$~\cite{ScheidtSchweikardt_MFCS25}.
The following theorem shows, for fixed arbitrary $\sigma$, that the size of $\adom{\ciD}$ is linear in the number of colors that RCR produces on $D$
(consult Appendix~\ref{app:size-of-index} for details).
\begin{theorem}\label{thm:size-of-index}
	Let $\sigma$ be an arbitrary schema.
	For every $\sigma$-db $D$ we have	$|\Adom(\ciD)| = O(|\rcrcols|)$,
	where $\rcrcols$ is the set of colors produced on $D$ by the relational color refinement of~\cite{ScheidtSchweikardt_MFCS25}.
\end{theorem}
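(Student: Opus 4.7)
The plan is to define a coloring $\kappa$ of the vertex set of $\dbSimple$ that refines the label coloring $\vl$, is stable, and uses only $O(|\rcrcols|)$ colors. Since by Theorem~\ref{thm:ColorRefinement} the coarsest stable refinement $\col$ is unique up to renaming and any stable refinement of $\vl$ refines $\col$, this will give $|C|=|\img(\col)|\leq|\img(\kappa)|=O(|\rcrcols|)$, as required.

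Recall that $\dbSimple$ is built from $D$ in two stages: first Section~\ref{sec:ArbitrarySchemas} produces $\dbBinary$, whose vertices are the tuple nodes $w_{\at}$ for $\at\in\tD$ together with the projection nodes $v_{\projection}$ for $\projection\in\Projections(\tD)$; then Section~\ref{sec:ReductionToOneBinaryRelation} keeps these vertices (inside $V^{\dbSimple}$) and, for each $E_{i,j}$- or $F_{i,j}$-edge of $\dbBinary$, introduces a pair of labeled intermediate vertices in $W^{\dbSimple}$. I would define $\kappa$ by setting $\kappa(w_{\at})\isdef(R,\rcr(\at))$ for a tuple node $w_{\at}$ with $\at\in R^D$, setting $\kappa(v_{\projection})\isdef(R,(j_1,\ldots,j_m),\rcr(\at))$ for a projection node $v_{\projection}$ arising from $\projection=\projgen_{(j_1,\ldots,j_m)}(\at)$ with $\at\in R^D$, and defining the color of each intermediate node $w_{uv}\in W^{\dbSimple}$ as the triple consisting of the edge type ($E_{i,j}$ or $F_{i,j}$) together with the $\kappa$-colors of its two endpoints $u$ and $v$. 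Well-definedness of the projection case requires a preliminary observation, extracted from the definition of RCR, that the indices $(j_1,\ldots,j_m)$ together with $\rcr(\at)$ suffice to pin down the RCR equivalence class of the projection, independently of the specific witnessing tuple $\at$.

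The number of possible values $\kappa$ can take is bounded by (the number of relation symbols in $\sigma$) $\times$ (the number of projection patterns, which depends only on $k=\ar(\sigma)$) $\times$ $(|\rcrcols|+1)$, plus a bounded number of extra values for the intermediate-node types; hence $|\img(\kappa)|=O(|\rcrcols|)$ for fixed $\sigma$. Moreover, $\kappa$ refines $\vl$ because the first component of every color (the relation symbol, the projection arity, or the edge type) already determines the unary labels placed on the node by $\dbSimple$.

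The main obstacle is verifying stability of $\kappa$: two vertices with the same $\kappa$-color must see the same multiset of neighbor $\kappa$-colors. For intermediate vertices $w_{uv}$ this is immediate, since their color records exactly the colors of their two neighbors. For tuple and projection vertices one must invoke the defining stability property of RCR from~\cite{ScheidtSchweikardt_MFCS25}: for any two tuples $\at,\bt$ with $\rcr(\at)=\rcr(\bt)$, the multisets of incident $E_{i,j}$- and $F_{i,j}$-edges in $\dbBinary$, grouped by the RCR color of the tuple or projection at the other endpoint, must coincide. Translating this equality through the intermediate $w_{uv}$-gadgets introduced by the graph reduction then yields the required equality of neighbor color multisets in $\dbSimple$. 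This is the step where the specific structural design of $\dbBinary$ in Section~\ref{sec:ArbitrarySchemas} — in particular the explicit $F_{i,j}$-atoms connecting overlapping projections — is indispensable: without them, the neighborhoods of the projection nodes would not carry enough information to mirror the RCR equivalence on $D$, and the stability argument would break down.
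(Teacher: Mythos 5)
Your overall strategy --- exhibit a stable coloring of the transformed database that refines the initial labeling and uses $O(|\rcrcols|)$ colors, then invoke the coarsest-stable-coloring property of CR --- is exactly the paper's strategy. But there are two genuine gaps. First, your color $\kappa(v_{\projection})\isdef(R,(j_1,\ldots,j_m),\rcr(\at))$ is not well-defined: a single projection node $v_{\projection}$ typically arises as $\projgen_{(j_1,\ldots,j_m)}(\at)$ for many different tuples $\at$ (possibly lying in different relations and having different RCR colors) and via different index tuples, and your triple changes with the choice of witness. The ``preliminary observation'' you invoke cannot repair this, because RCR assigns colors only to tuples and to the slice nodes of $\mathcal{H}_D$, not to arbitrary projections (in particular not to projections with repeated entries, nor to the empty tuple). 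This is precisely why the paper colors a non-slice projection node by the pair $\bigl(\stp(\projection),\,\mset{(\stp(\ct,\projection),h(w_{\ct}))\mid \ct\in M_{\projection}}\bigr)$, aggregating over \emph{all} containing tuples rather than picking one; and with that aggregation your counting step is no longer a product of three finite factors but needs the nontrivial fact that same-colored tuples have identically colored projection sets (Proposition~\ref{app:prop:e-neigbors-of-tuples}).

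Second, the stability verification, which you compress into ``invoke the defining stability property of RCR,'' is the actual content of the proof. RCR's stability is a statement about $\mathcal{H}_D$, whose projection layer consists only of slices, whereas $\dbBinary$ contains all projections, the arity labels $\ArS{i}$, and $F_{i,j}$-edges restricted by the containment condition on $\tset(\projection)$ and $\tset(\projectionAlt)$. Transferring stability from $\mathcal{H}_D$ to $\dbBinary$ requires, among other things, showing that any distinguishing information visible at a non-slice projection is already visible at a slice with the same variable set (Lemma~\ref{app:lemma:projections-dont-matter}), that $\stp(\at)=\stp(\bt)$ induces a color-compatible bijection between $\Projections(\at)$ and $\Projections(\bt)$ (Lemmas~\ref{app:lemma:bijection} and~\ref{app:lemma:bijection-props}), and that the $F_{i,j}$-neighborhoods of same-colored projection nodes match (Propositions~\ref{app:prop:f-neighbors-slices} and~\ref{app:prop:f-neighbors-projections}); this occupies most of Appendix~\ref{app:size-of-index} and does not follow from RCR's stability by a routine translation. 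The remaining pieces of your argument --- coloring the intermediate $W$-nodes by edge type plus endpoint colors, and concluding via the uniqueness of the coarsest stable coloring --- are fine and match the paper.
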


For undirected self-loop-free and unlabeled graphs $G$ it is known that the coarsest stable coloring assigned to $G$ by CR has $\leq k$ colors if, and only if, $G$ has an \emph{equitable partition} of size $\leq k$ (cf., e.g.,~\cite{ScheinermanBook}),
i.e., $V(G)$ can be partitioned into $k$ disjoint sets $V_1, \ldots, V_k$ such that for each $i \in [k]$ the induced subgraph $G[V_i]$ is \emph{regular}
(i.e., all vertices have the same degree) and for all $i,j \in [k]$ with $i \neq j$,
all vertices in $V_i$ have exactly the same number of neighbors in $V_j$.
Thus, for any function $f(n) \in o(n)$ there is a large class of databases whose active domain has size $n$ (for arbitrarily large~$n$) and whose color database has a reduced active domain of size of order only $f(n)$:
let $\mathcal{C}_f$ be the class of all graphs on $n$ nodes (for any $n \in \NN$) that have an equitable partition of size $\leq f(n)$.
The active domain of the color database of each graph $G$ in $\mathcal{C}_f$ on $n$ nodes then has size $\leq f(n)$.
In particular, for $f(n) = 1$, the class $\mathcal{C}_f$ consists of all regular graphs (this includes, for example, all cycles);
their color databases consist of a single color, i.e.\ have size $O(1)$.
The class $\mathcal{C}_{\log(n)}$ contains, among others, all rooted trees of height $\log(n)$ where for each height $i$ all nodes of height $i$ have the same degree;
their color databases have size $O(\log(n))$.

It is known~\cite{babai1980random} that on random graphs, CR assigns with high probability a new color to each node, and hence its color database is not smaller than the graph itself.
But actual databases are usually designed to store structured information and are expected to look substantially different from a random graph.
Empirical results by Kersting et al.~\cite{Kersting2014} verify this expectation:
They compute the number of colors for various datasets, including (among others) a web graph from Google where the ratio between \enquote{number of colors} and \enquote{number of nodes} is $0.4$;
for most of their further results it is between $0.3$ and $0.6$.
\smallskip

\noindent
We close the paper with two questions for future research.
\begin{description}
	\item[Question 1]
	\emph{Can our approach be lifted from fc-ACQs to queries of free-connex generalized hypertree width $\leq k$, for any fixed $k \geq 1$?}
	When restricting attention to binary schemas, we believe that this might be achieved by using a variant of the $k$-dimensional Weisfeiler-Leman algorithm (cf.~\cite{CFI-paper,grohe_color_2021}).
	But lifting it to arbitrary schemas might be difficult, since currently no generalization of RCR from acyclic relational structures to structures of generalized hypertree width $k$ is known.

	\item[Question 2]
	\emph{Can our approach be lifted to a dynamic scenario?}
	The \emph{Dynamic Yannakakis} approach of Idris, Ugarte, and Vansummeren~\cite{DynamicYannakakis2017} lifts Theorem~\ref{thm:BGS-enum} to the scenario,
	where a fixed query $Q$ shall be evaluated against a database $D$ that is frequently updated.
	Can our index structure $\DSD$ be updated in time proportional to the actual difference between the two versions of $\DSD$ before and after the update?
\end{description}

\bibliography{literature}

\clearpage

\appendix

\onecolumn

\section*{APPENDIX}
\section{Details Omitted in Section~\ref{sec:ACQs}}\label{appendix:fcACQ}

\subsection{Proof of Proposition~\ref{prop:binaryfcacqs}}\label{app:proof:prop:binaryfcacqs}
\propBinaryFcAcqs*
Let $\sigma$ be a binary schema, i.e., every $R\in\sigma$ has arity $\ar(R)\leq 2$.
In the following, if $G$ is a (hyper)graph, then we denote by $G - e$ the resulting (hyper)graph
after removing the (hyper)edge $e \in E(G)$ from $G$.
If a vertex becomes isolated due to this procedure, we remove it as well.
Further, we denote by $G + e$ the (hyper)graph resulting from adding the (hyper)edge $e$ to $G$.
First we show that we can assume that $Q$ contains no loops
and only uses predicates of arity exactly 2.
\begin{claim}
	$Q$ is free-connex acyclic iff $Q'$ is free-connex acyclic,
	where \[ \Atoms(Q') \isdef \set{ R(x,y) \mid R(x,y) \in \Atoms(Q),\ x {\neq} y }\]
	and $\free(Q') \isdef \free(Q) \cap \Vars(Q')$.
\end{claim}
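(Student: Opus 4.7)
Passing from $Q$ to $Q'$ amounts to deleting every atom whose variable set is a singleton---i.e., every unary atom $U(x)$ and every self-loop $R(x,x)$. Each such deletion changes the hypergraph $H(Q)$ in only two ways: it removes one copy of the singleton hyperedge $\{x\}$, and possibly (if $x$ occurs in no other atom of $Q$) also the vertex $x$ itself. My plan is to show that both operations preserve $\alpha$-acyclicity as well as free-connex $\alpha$-acyclicity, and to conclude by iterating atom by atom.

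It therefore suffices to treat the removal of a single atom $\alpha$ with $\Vars(\alpha)=\{x\}$; write $Q_1$ for the resulting query and split on whether $x$ still occurs in $Q_1$. For the forward direction ($Q_1$ acyclic $\Rightarrow Q$ acyclic), I would take any join tree of $H(Q_1)$ and attach a fresh leaf labeled $\{x\}$ next to some node whose hyperedge contains $x$ if such a node exists, and next to an arbitrary node otherwise. The path condition for $x$ is then trivially preserved, and all other path conditions are inherited.

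The backward direction is the step that needs the most care. Starting from a join tree $T$ of $H(Q)$, the node $\{x\}$ may have arbitrary degree in $T$. I would delete $\{x\}$ and reroute its other neighbours through a single chosen one, call it $v_1$. If $x$ still occurs in $Q_1$, the path condition for $x$ in $T$ forces $\{x\}$ to be adjacent in $T$ to at least one node whose hyperedge also contains $x$, and I would pick $v_1$ from among those. Under this choice: (i) every path condition for a variable $y\neq x$ is preserved, because any hyperedge containing such a $y$ avoids $\{x\}$ and hence its induced subtree in $T$ already lies entirely inside one component of $T\setminus\{\{x\}\}$; and (ii) the path condition for $x$ in $H(Q_1)$ is preserved because the $x$-containing subtree of $T$, after removing $\{x\}$, gets re-glued through $v_1$.

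For the free-connex version, I would run exactly the same argument on the augmented hypergraphs $H(Q)+\free(Q)$ and $H(Q_1)+\free(Q_1)$. The only additional bookkeeping is that if $x\in\free(Q)$, then $\free(Q_1)=\free(Q)\setminus\{x\}$, and by the path condition for $x$ the nodes $\{x\}$ and $\free(Q)$ are adjacent in~$T$; relabeling the head hyperedge and deleting the leaf $\{x\}$ then produces the desired join tree. I expect the rerouting step in the backward direction to be the main obstacle---once it is in place, both the acyclic and the free-connex acyclic versions of the claim follow uniformly.
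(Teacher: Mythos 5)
Your plan is correct and matches the paper's own proof in all essentials: both directions work by surgery on join trees of the (augmented) hypergraphs, deleting the singleton node $\{x\}$ and rerouting its remaining neighbours through a chosen neighbour whose hyperedge contains $x$ (or an arbitrary one if $x$ disappears), respectively re-attaching $\{x\}$ as a leaf, and relabelling the head hyperedge $\free(Q)\leftrightarrow\free(Q')$. The only difference is that you iterate atom by atom while the paper handles all singleton hyperedges in one pass, which is purely cosmetic.
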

\begin{proof}
	\noindent\enquote{$\Longrightarrow$}: \
	Let $Q$ be free-connex acyclic.
	Then $H(Q) + \free(Q)$ has a join-tree $T$.
	We translate $T$ into a join-tree $T'$ for $H(Q')+\free(Q')$ as follows.
	First, we replace the node $\free(Q)$ in $T$ with $\free(Q')$.
	Then, for every hyperedge $e = \set{ x }$,
	we look for a neighbor $f$ of $e$ in $T$ such that $x \in f$.
	If there is none, we let $f$ be an arbitrary neighbor.
	Then we connect all other neighbors of $e$ in $T$ with $f$ and remove $e$ from $T$.
	The resulting graph is still a tree and the set of nodes containing $x$ still
	induces a connected subtree, or it is empty --- but in that case also $x \not\in \Vars(Q')$.
	I.e., now the set of nodes in $T'$ is precisely the set of hyperedges of $H(Q')+\free(Q')$,
	and for every $x \in \Vars(Q')$,
	the set $\set{ t \in V(T') \mid x \in t }$ still induces a connected subtree.
	Hence, $T'$ is a join-tree for $H(Q') + \free(Q')$.

	Since $Q$ is free-connex acyclic, there also exists a join-tree $T$ for $H(Q)$.
	With the same reasoning as above, $T$ can be transformed into a join-tree for $H(Q')$.
	In summary, we obtain that $Q'$ is free-connex acyclic.
	\medskip

	\noindent\enquote{$\Longleftarrow$}: \
	Let $Q'$ be free-connex acyclic.
	Then $H(Q') + \free(Q')$ has a join-tree $T'$.
	We can extend $T'$ to a join-tree $T$ for $H(Q) + \free(Q)$ as follows.
	First, we replace the node $\free(Q')$ in $T'$ with $\free(Q)$.
	For every hyperedge $e$ in $H(Q)$ of the form $e = \set{ x }$
	we add $e$ as a new node to $T'$ and insert the edge $\set{ e, f }$ into $T'$ as follows:
	If $x \not\in \Vars(Q')$ we let $f$ be an arbitrary node in $V(T')$.
	Otherwise, let $f \in V(T')$ be a node such that $x \in f$.
	It is easy to verify that the resulting tree $T$ is a join-tree for $H(Q) + \free(Q)$.
	Since $Q'$ is free-connex acyclic, there also exists a join-tree $T$ for $H(Q')$.
	With the same reasoning as above, $T'$ can be transformed into a join-tree for $H(Q)$.
	In summary, we obtain that $Q$ is free-connex acyclic.
\end{proof}

Due to the above claim, we can assume w.l.o.g.\
that $Q$ contains no self-loops and only uses predicates of arity 2.
Notice that this means that $H(Q) = G(Q)$.
It is well-known that for undirected graphs the notion of
$\alpha$-acyclicity and the standard notion of acyclicity for graphs coincide.
I.e., a graph is $\alpha$-acyclic iff it is acyclic (see~\cite{BraultBaron16} for an overview).
Hence, in our setting, $H(Q)=G(Q)$ is $\alpha$-acyclic,
if and only if it is acyclic, i.e., a forest.
This immediately yields the first statement of Proposition~\ref{prop:binaryfcacqs}.
The second statement of Proposition~\ref{prop:binaryfcacqs} is obtained by the following claim.

\begin{claim}\label{appendix:claim:1forprop}
	$Q$ is free-connex acyclic iff $G(Q)$ is acyclic and the following statement is true:
	(**) For every connected component $C$ of $G(Q)$,
	the subgraph of $C$ induced by the set $\free(Q) \cap V(C)$ is connected or empty.
\end{claim}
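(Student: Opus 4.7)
Since the preceding claim lets me assume $Q$ uses only binary atoms and no self-loops, so that $H(Q) = G(Q)$, the $\alpha$-acyclicity of $H(Q)$ coincides with $G(Q)$ being a forest (standard, see~\cite{BraultBaron16}). Hence the non-trivial content of the equivalence reduces to:\ assuming $G(Q)$ is a forest, $H(Q) + \free(Q)$ is $\alpha$-acyclic iff (**) holds.

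For the forward direction I plan to invoke the classical characterization that a hypergraph $H$ is $\alpha$-acyclic iff its primal graph is chordal and $H$ is conformal (every maximal clique of the primal graph is a hyperedge of $H$). The primal graph of $H(Q) + \free(Q)$ is $G(Q) \cup K_F$, where $F \isdef \free(Q)$ and $K_F$ is the complete graph on $F$. If (**) fails, I would pick $x, y$ in different components of some $C[F \cap V(C)]$ at \emph{minimum} $C$-distance $k$; by this minimality, the unique $x$-$y$ path $v_0, v_1, \ldots, v_k$ in $C$ has no intermediate vertex in $F$ (a free intermediate would pair with $v_0$ or $v_k$ at strictly smaller cross-component distance, which a short tree-path argument rules out). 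The closed walk $v_0, v_1, \ldots, v_k, v_0$ in the primal graph (using the path edges of $G(Q)$ and the $K_F$-edge $\{v_0, v_k\}$) is then a chordless cycle of length $k+1$, since any chord could only come from $G(Q)$ (ruled out because $G(Q)$ is a tree and the $v_i$'s form a simple path) or from $K_F$ (ruled out because no intermediate $v_i$ is free). For $k \geq 3$ this gives a chordless cycle of length $\geq 4$, contradicting chordality; for $k = 2$ the triangle $\{v_0, v_1, v_2\}$ is a size-$3$ clique in the primal graph, and any maximal clique extending it must contain the non-free vertex $v_1$ and therefore cannot equal either $\free(Q)$ or a binary hyperedge of $H(Q)$, contradicting conformality.

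For the backward direction I plan to construct a join-tree $T$ of $H(Q) + \free(Q)$ directly from (**). For each component $C_i$ with $F_i \isdef \free(Q) \cap V(C_i) \neq \emptyset$, (**) lets me root $C_i$ at some $r_i \in F_i$ so that $F_i$ is closed under the parent operation (every ancestor of a free vertex is free). The nodes of $T$ will be $\free(Q)$ together with one node $e_v \isdef \{v, p(v)\}$ for each non-root vertex $v$ of some $C_i$; the edges will be chosen so that $e_v$ is adjacent to $\free(Q)$ whenever $v \in F_i$ (including the case $p(v) = r_i$), and adjacent to $e_{p(v)}$ otherwise. Components with empty $F_i$ are handled by the standard edge-tree (top edges chained through $r_i$) attached to $\free(Q)$ via a single arbitrary node. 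A basic edge count shows that $T$ is a tree, and the path condition is then verified per variable:\ for a non-free $v$ the edges containing $v$ form a star around $e_v$ (non-free child edges attach to $e_v$), and for a free $v$ every edge containing $v$ is either directly adjacent to $\free(Q)$ or adjacent to $e_v$ which itself neighbors $\free(Q)$.

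The main obstacle lies in the forward direction, where one must carefully use the minimality of $k$ to rule out chords in the induced primal-graph cycle and then split into the cases $k \geq 3$ (which breaks chordality) and $k = 2$ (which breaks conformality). The backward direction, by contrast, reduces to a routine construction once the parent-closure of $F_i$ -- the structural content of (**) -- is recognized as the key to hanging the free edges directly off $\free(Q)$ without stranding any non-free subtree.
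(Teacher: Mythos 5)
Your proposal is correct in substance but takes a genuinely different route from the paper in both directions. For the forward implication, the paper argues by induction on the size of a join-tree of $H(Q)+\free(Q)$, peeling off a leaf $\{x,y\}$ and doing a case analysis on whether one of $x,y$ is pendant or both are free; you instead invoke the classical characterization of $\alpha$-acyclicity as ``primal graph chordal plus conformal'' and extract, from a minimum-distance cross-component pair $x,y$, either a chordless cycle of length $\geq 4$ (when $k\geq 3$) or a triangle through a non-free vertex that violates conformality (when $k=2$). Your minimality argument is sound (an intermediate free $v_i$ must lie outside the component of $x$ or of $y$, giving a closer cross-component pair), and the case split is exactly right; the trade-off is that you import a nontrivial classical theorem where the paper stays self-contained, in exchange for a shorter and more conceptual contradiction. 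For the backward implication, the paper again inducts (on $|V(G(Q))|$, removing a non-free leaf), whereas you build the join-tree explicitly in one pass using the parent-closure of $\free(Q)\cap V(C_i)$ under the rooting at $r_i$ --- this is the cleaner argument and the verification of the path condition per variable goes through as you describe.

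One small imprecision in the backward construction: as literally stated, your attachment rule sends $e_v$ to $e_{p(v)}$ whenever $v\notin F_i$, but when $v$ is a \emph{non-free} child of a \emph{free} root $r_i$ the node $e_{r_i}$ does not exist. You need to route this case to $\free(Q)$ as well (or to any fixed node containing $r_i$); after that patch the node containing $r_i$ still induce a star around $\free(Q)$ and the nodes containing $v$ still induce a star around $e_v$, so the path condition survives. This is a local fix, not a flaw in the approach. (Both you and the paper silently ignore the degenerate cases where $\free(Q)$ coincides with an existing binary hyperedge or has size $\leq 1$, so I would not count that against you.)
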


\begin{proof}
\noindent\enquote{$\Longrightarrow$}: \
By assumption, $Q$ is free-connex acyclic.
I.e., $H(Q)$ is $\alpha$-acyclic and $H(Q)+\free(Q)$ is $\alpha$-acyclic.
Since $H(Q) = G(Q)$, and since on graphs $\alpha$-acyclicity coincides with acyclicity,
this means that $G(Q)$ is acyclic, i.e., it is a forest.

Since $H(Q)+\free(Q)$ is $\alpha$-acyclic, there exists a join-tree $T$ for $H(Q)+\free(Q)$.
To prove (**) we proceed by induction on the number of nodes in the join-tree of $H(Q) + \free(Q)$.

If $T$ has at most two nodes, $H(Q)$ consists of a single hyperedge,
and this is of the form $\set{ x,y }$ with $x \neq y$.
Therefore, $G(Q)$ consists of a single edge $\set{ x,y }$, and in this case (**) trivially holds.

In the inductive case, let $T$ be a join tree of $H(Q) + \free(Q)$.
Consider some leaf $e$ of $T$ (i.e, $e$ only has one neighbor in $T$)
of the form $e = \set{ x,y }$ and its parent $p$.
Then, $T - e$ is a join-tree for $H(Q') + \free(Q)$ where
$\Atoms(Q') \isdef \set{ R(z_1, z_2) \in \Atoms(Q) \mid \set{ z_1, z_2 } \neq \set{ x, y } }$
and $\free(Q')$ is the set of all variables in $\free(Q)$ that occur in an atom of $Q'$.
It is easy to see that this implies the existence of a join-tree $T'$ for $H(Q') + \free(Q')$.
Since, furthermore, $G(Q')$ is acyclic, the query $Q'$ is free-connex acyclic,
and by induction hypothesis, (**) is true for $G(Q')$.
Note that $G(Q')=G(Q)-e$.
Obviously, $x$ and $y$ are in the same connected component $C$ in $G(Q)$.
We have to consider the relationship between $x$, $y$ and the rest of $C$.

Since $e$ is a leaf in $T$, $x,y \in p$ would imply that $p = \free(Q)$.
Using that we know that
\begin{enumerate}[(i)]
	\item either $y$ only has $x$ as a neighbor in $G(Q)$ or vice-versa, or
	\item $x$ and $y$ are both free variables and both have other neighbors
	than $y$ and $x$, respectively.
\end{enumerate}

\emph{Case (i)}:\; Assume w.l.o.g.\ that $x$ is the only vertex adjacent to $y$ in $G(Q)$.
Then $C' \isdef C - \set{ x,y }$ is a connected component in $G(Q')$, $y \not\in \Vars(Q')$,
i.e., in particular also $y \not\in \free(Q')$.
By induction hypothesis, $\free(Q') \cap V(C')$ induces a connected subgraph on $C'$.
If $y \not\in \free(Q)$ it follows trivially,
that $\free(Q) \cap V(C)$ induces a connected subgraph on $C$.
If $y \in \free(Q)$ and $\free(Q) \cap V(C) = \set{ y }$, this is also trivial.
Otherwise, there exists a $z \in \free(Q) \cap V(C)$ different from $y$.
Since $x,y,z$ are all in the same connected component $C$,
but $x$ is the only vertex adjacent to $y$,
there must be another vertex $w$ that is adjacent to $x$.
Therefore, $x$ is part of another node $\set{ x, w }$ somewhere in $T$,
which by definition means that $x$ must be in $p$.
Since $y, z \in \free(Q)$, $y$ must also be in $p$.
Therefore, $p = \free(Q)$ and that means $x \in \free(Q)$.
Hence, $\free(Q) \cap V(C)$ forms a connected component on $C$ in this case as well.

\emph{Case (ii)}:\; Assume that both variables $x$ and $y$ are free
and both have another vertex adjacent to them in $G(Q)$.
Then $x,y \in \free(Q')$ (i.e., $\free(Q) = \free(Q')$) and removing the edge $\set{ x,y }$ splits
$C$ into two connected components $C_x$, $C_y$.
By induction hypothesis, $\free(Q') \cap V(C_x)$ and $\free(Q') \cap V(C_y)$ induce connected
subgraphs on $C_x$ and $C_y$, respectively, and $x$ (and $y$, resp.) are part of them.
Thus, adding the edge $\set{ x,y }$ establishes a connection between them,
so $\free(Q) \cap V(C)$ also induces a connected subtree on $C$.

Every other connected component $D \neq C$ in $G(Q)$ is also one in $G(Q')$.
Thus, (**) is true for $G(Q)$.
\medskip

\noindent\enquote{$\Longleftarrow$}: \
By assumption, $G(Q)$ is acyclic and (**) is satisfied.
Since $G(Q)=H(Q)$ and $\alpha$-acyclicity coincides with acyclicity on graphs, $Q$ is acyclic.
It remains to show that $H(Q)+\free(Q)$ has a join-tree.

We proceed by induction over the number of vertices in $G(Q)$.
Recall that $Q$ only uses binary relation symbols and that it has no self-loops.
Thus, in the base case, $G(Q)$ consists of two vertices connected by an edge.
Since $H(Q) = G(Q)$, it is easy to see that $H(Q) + \free(Q)$ has a join-tree.

In the inductive case, let $C$ be a connected component of $G(Q)$
that contains a quantified variable, i.e., $\free(Q) \cap V(C) \neq V(C)$.
If no such component exists, $H(Q) + \free(Q)$ has a trivial join-tree by letting
all hyperedges of $H(Q)$ be children of $\free(Q)$ in $T$.

Because of (**) there must be a leaf $x$ in $C$ such that $x \not\in \free(Q)$.
Let $y$ be the parent of $x$, i.e., let $\set{ x,y }$ be an (or rather, the only)
edge in $G(Q)$ that contains $x$.
Let $H' \isdef H(Q) - \set{ x,y }$.
Then $H' = H(Q')$ for the query $Q'$ where
$\Atoms(Q') = \set{ R(z_1, z_2)\in\Atoms(Q) \mid x \not\in \set{ z_1, z_2 } }$
and $\free(Q') = \free(Q) \setminus \set{ x }$.

Let $T'$ be the join-tree that exists by induction hypothesis for
$H(Q') + \free(Q')$, i.e., for $H' + \free(Q) \setminus \set{ x }$.

If $x \not\in \free(Q)$, we can extend $T'$ to a join-tree $T$ for $H(Q)$ by
adding the edge $\set{ x,y }$ as a child node to some node that contains $y$ in $T'$.
We can similarly handle the case $\free(Q) = \set{ x }$
by also inserting $\set{ x }$ as a child of $\set{ x,y }$.

If $x \in \free(Q)$ but $\free(Q) \varsupsetneqq \set{ x }$, it holds that $y \in \free(Q)$,
by the assumption that $\free(Q) \cap V(C)$ induces a connected subgraph.
In this case we can add $\set{ x,y }$ as a child of $\free(Q)$.
\end{proof} %
\subsection{Proof of Theorem~\ref{thm:BGS-enum}}%
\label{app:proof:thm:BGS-enum}
\thmBGSenum*
We provide further definitions that are taken from~\cite{BGS-tutorial}.
We use these definitions for the proof of Theorem~\ref{thm:BGS-enum} below.

A \emph{tree decomposition} (TD, for short) of a CQ $Q$ and its hypergraph $H(Q)$
is a tuple $\TD=(\Tree,\Bag)$, such that:
\begin{enumerate}[(1)]
	\item
	$\Tree=(\Nodes(\Tree),\Edges(\Tree))$ is a finite undirected tree, and
	\item
	$\Bag$ is a mapping that associates with every node $\treenode\in\Nodes(\Tree)$
	a set $\Bag(\treenode)\subseteq \Vars(\query)$ such that
	\begin{enumerate}
		\item
		for each atom $\qatom\in\Atoms(\query)$ there exists $\treenode\in\Nodes(\Tree)$
		such that $\Vars(\qatom)\subseteq\Bag(\treenode)$,
		\item\label{item:pathcondition:treedecomp}
		for each variable $v\in\Vars(\query)$ the set
		$\Bag^{-1}(v)\deff\setc{\treenode\in\Nodes(\Tree)}{v\in\Bag(\treenode)}$
		induces a connected subtree of $\Tree$.
	\end{enumerate}
\end{enumerate}
The \emph{width} of $\TD=(\Tree,\Bag)$ is defined as
$\Width(\TD)=\max_{\treenode\in\Nodes(\Tree)}|\Bag(t)| -1$.

A \emph{generalized hypertree decomposition} (GHD, for short) of a CQ $\query$
and its hypergraph $H(Q)$ is a tuple $\HD=(\Tree,\Bag,\Cover)$ that consists of
a tree decomposition $(\Tree,\Bag)$ of $\query$ and a mapping $\Cover$ that associates with
every node $\treenode\in\Nodes(\Tree)$ a set $\Cover(\treenode)\subseteq\Atoms(\query)$ such that
$\Bag(\treenode)\subseteq\bigcup_{\qatom\in\Cover(\treenode)}\Vars(\qatom)$.
The sets $\Bag(\treenode)$ and $\Cover(\treenode)$ are called the \emph{bag} and the \emph{cover}
associated with node $\treenode\in\Nodes(\Tree)$.
The \emph{width} of a GHD $\HD$ is defined as the maximum number of atoms in a $\Cover$-label
of a node of $\Tree$, i.e., $\Width(\HD)=\max_{\treenode\in\Nodes(\Tree)}|\Cover(t)|$.
The \emph{generalized hypertree width} of a CQ $\query$, denoted $\GHW(\query)$,
is defined as the minimum width over all its generalized hypertree decompositions.

A tree decomposition $\TD=(\Tree,\Bag)$ of a CQ $\query$ is \emph{free-connex} if there is a set
$\freetreenodes \subseteq \Nodes(\Tree)$ that induces a connected subtree of $\Tree$
and that satisfies the condition
$\free(\query) = \bigcup_{\treenode\in\freetreenodes} \Bag(\treenode)$.
Such a set $\freetreenodes$ is called a \emph{witness} for the free-connexness of $\TD$.
A GHD is free-connex if its tree decomposition is free-connex.
The \emph{free-connex generalized hypertree width} of a CQ $\query$, denoted $\fcGHW(\query)$,
is defined as the minimum width over all its free-connex generalized hypertree decompositions.

It is known (\cite{DBLP:journals/jcss/GottlobLS02,Bagan.2007,BraultBaron_PhD};
see also~\cite{BGS-tutorial} for an overview as well as proof details)
that the following is true for every schema $\sigma$ and every CQ $Q$ of schema $\sigma$:
\begin{enumerate}[(I)]
	\item\label{item:acqchara}
	$Q$ is acyclic iff $\GHW(Q)=1$.
	\item\label{item:fcacqchara}
	$Q$ is free-connex acyclic iff $\fcGHW(Q)=1$.
\end{enumerate}

A GHD $\HD=(\Tree,\Bag,\Cover)$ of a CQ $\query$ is called \emph{complete} if,
for each atom $\qatom\in\Atoms(\query)$ there exists a node $\treenode\in\Nodes(\Tree)$
such that $\Vars(\qatom)\subseteq\Bag(\treenode)$ and $\qatom\in\Cover(\treenode)$.

The following has been shown in~\cite{BGS-tutorial} (see Theorem~4.2 in~\cite{BGS-tutorial}):
\begin{theorem}[\cite{BGS-tutorial}]\label{thm:BGS-refined}
	For every schema $\sigma$ there is an algorithm which receives as input a query $Q\in\fcACQ$,
	a complete free-connex width~1 GHD $\HD=(\Tree,\Bag,\Cover)$ of $\query$
	along with a witness $\freetreenodes \subseteq\Nodes(\Tree)$,
	and a $\sigma$-db $D$ and computes within preprocessing time $O(|\Nodes(\Tree)|{\cdot}|D|)$
	a data structure that allows to enumerate $\sem{\query}(D)$ with delay $O(|\freetreenodes|)$.
\end{theorem}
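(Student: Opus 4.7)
My plan is to derive Theorem~\ref{thm:BGS-enum} from the more refined Theorem~\ref{thm:BGS-refined} stated in Appendix~\ref{app:proof:thm:BGS-enum}, which takes a complete free-connex width-1 GHD as \emph{input} along with a witness. The missing piece is an algorithm that, given only $Q\in\fcACQ$, efficiently produces such a GHD with bag tree of size $O(|Q|)$ and witness of size $O(|\free(Q)|)$. Then the running-time bound of Theorem~\ref{thm:BGS-refined} specializes to $O(|Q|{\cdot}|D|)$ preprocessing and $O(|\free(Q)|)$ delay, as required.

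First, I would invoke Proposition~\ref{prop:ComputeGHD} on $Q$ to compute, in time $O(\size{Q})$, a complete fc-1-GHD $H=(T,\Bag,\Cover,\Witness)$ of $Q$ with $|\Witness|<2{\cdot}|\free(Q)|$. Next, I would verify that $H$, viewed in the terminology of the appendix, is indeed a complete free-connex generalized hypertree decomposition of width~$1$ together with a witness $U\isdef\Witness$:\ the conditions on $\Bag$ coincide literally; the cover map $\Cover$ assigns a single atom to each node of $T$ (so the width equals $1$); the defining condition of an fc-1-GHD that $\Bag(t)\subseteq \Vars(\Cover(t))$ implies $\Bag(t)\subseteq\bigcup_{\qatom\in\{\Cover(t)\}}\Vars(\qatom)$; and $\Witness$ induces a connected subtree of $T$ with $\free(Q)=\bigcup_{t\in\Witness}\Bag(t)$. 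Completeness is inherited directly.

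Second, I would bound $|V(T)|=O(|Q|)$. For the tree produced by Bagan's construction (used in the proof of Proposition~\ref{prop:ComputeGHD}) this is standard: the underlying join-tree has one node per atom of $Q$, so $O(|Q|)$ nodes. The subsequent edge-subdivision step introduces at most one new node per edge of the original tree, and a tree on $O(|Q|)$ nodes has $O(|Q|)$ edges, so the inflation factor is constant. Hence $|V(T)|=O(|Q|)$.

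Finally, I would apply Theorem~\ref{thm:BGS-refined} to $(Q,H,\Witness,D)$, obtaining preprocessing time $O(|V(T)|{\cdot}|D|)=O(|Q|{\cdot}|D|)$ and enumeration delay $O(|\Witness|)=O(|\free(Q)|)$; the $O(\size{Q})$ cost of computing $H$ is absorbed into the preprocessing bound. The main obstacle is keeping both $|V(T)|$ linear in $|Q|$ and $|\Witness|$ linear in $|\free(Q)|$ simultaneously after the subdivision step of Proposition~\ref{prop:ComputeGHD}. The subdivision might insert a node $n_{\{t,p\}}$ into $\Witness$ for every edge $\{t,p\}\subseteq\Witness$, so the witness could a priori blow up beyond $|\free(Q)|$; however, Proposition~\ref{prop:ComputeGHD} already guarantees $|\Witness|<2{\cdot}|\free(Q)|$ post-subdivision, which suffices. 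Everything else is bookkeeping and reduces to invoking the two cited results in sequence.
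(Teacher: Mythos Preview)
Your proposal does not prove Theorem~\ref{thm:BGS-refined}; it \emph{uses} it as a black box to derive Theorem~\ref{thm:BGS-enum}. In the paper, Theorem~\ref{thm:BGS-refined} is simply quoted from~\cite{BGS-tutorial} (their Theorem~4.2) without proof, so there is no ``paper's own proof'' of that statement to compare against. What the appendix actually proves is Theorem~\ref{thm:BGS-enum}, and your write-up is effectively an argument for that theorem.

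With that reinterpretation, your derivation of Theorem~\ref{thm:BGS-enum} is correct and close in spirit to the paper's, but it takes an unnecessary detour. The paper builds the required complete free-connex width-1 GHD directly by invoking Bagan's algorithm and then the constructions of \cite[Lemma~3.3 and Remark~3.1]{BGS-tutorial}, obtaining $|U|\leq|\free(Q)|$ and $|V(T)|\in O(|Q|)$, and then applies Theorem~\ref{thm:BGS-refined}. You instead go through Proposition~\ref{prop:ComputeGHD}, which starts from exactly that same GHD and then performs an additional edge-subdivision pass to force bag inclusions along every tree edge. That extra property is irrelevant for Theorem~\ref{thm:BGS-enum} (it is only needed later in Section~\ref{sec:ArbitrarySchemas}), and it is what inflates your witness bound from $|\free(Q)|$ to $2|\free(Q)|$. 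Asymptotically this costs nothing, so your argument goes through; it is just a longer path to the same place.

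One minor inaccuracy: your claim that ``the underlying join-tree has one node per atom of $Q$'' is not how the intermediate GHD arises---the constructions from \cite{BGS-tutorial} introduce auxiliary nodes beyond one-per-atom. The bound $|V(T)|\in O(|Q|)$ is nonetheless correct (the paper states it explicitly), and your subdivision-doubling argument then preserves it.
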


It is known that for every schema $\sigma$ there is an algorithm that receives as input
a query $Q\in\fcACQ$ and computes in time $O(|Q|)$ a free-connex width 1 GHD of $Q$
(\cite{Bagan_PhD}; see also Section~5.1 in~\cite{BGS-tutorial}).
When applying to this GHD the construction used in the proof of~\cite[Lemma~3.3]{BGS-tutorial},
and afterwards performing the completion construction from~\cite[Remark~3.1]{BGS-tutorial},
one can compute in time $O(|Q|)$ a GHD $\HD=(\Tree,\Bag,\Cover)$ of $\query$ along with a witness
$\freetreenodes \subseteq\Nodes(\Tree)$ that satisfy the assumptions of
Theorem~\ref{thm:BGS-refined} and for which, additionally,
the following is true: $|U|\leq |\free(Q)|$ and $|\Nodes(\Tree)|\in O(|Q|)$.
In summary, this provides a proof of Theorem~\ref{thm:BGS-enum}.

\section{Details Omitted in Section~\ref{sec:Reductions}}\label{appendix:Reductions}

\subsection{Details Omitted in Section~\ref{sec:ArbitrarySchemas}}\label{appendix:FromArbitraryToBinary}

\subsubsection{An example concerning a non-binary schema}\label{appendix:BinaryToArbitraryIsNontrivial}

At a first glance one may be tempted to believe that
Theorem~\ref{thm:ReductionArbSchemaToBinarySchema} can be proved in a
straightforward way.
However, the notion of fc-ACQs is quite subtle, and it is not so
obvious how to translate an fc-ACQ of an arbitrary schema $\sigma$ 
into an \emph{fc-ACQ} of a suitably chosen binary schema $\sigma''$.
Here is a concrete example.

Let us consider a schema $\sigma$ consisting of a single, ternary relation symbol $R$.
A straightforward way to represent a database $D$ of this schema by
a database $D''$ of a binary schema is as follows.
$D''$ has 3 edge-relations called $E_1, E_2, E_3$.
Every element in the active domain of $D$ is a node of $D''$.
Furthermore, every tuple $t = (a_1,a_2,a_3)$ in the $R$-relation of $D$ serves as a node of $D''$,
and we insert into $D''$ an $E_1$-edge $(t, a_1)$, an $E_2$-edge $(t,a_2)$,
and an $E_3$-edge $(t,a_3)$.
Now, a CQ $Q$ posed against $D$ translates into a CQ $Q''$ posed against $D''$ as follows:
$Q''$ has the same head as $Q$.
For each atom $R(x,y,z)$ in the body of $Q$ we introduce a new variable $u$ and insert
into the body of $Q''$ the atoms $E_1(u,x), E_2(u,y), E_3(u,z)$.

The problem is that a \emph{free-connex acyclic} query $Q$ against $D$ does \emph{not} necessarily translate
into a \emph{free-connex acyclic} query $Q''$ against $D''$ --- and therefore,
proving Theorem~\ref{thm:ReductionArbSchemaToBinarySchema} is not so easy.
Here is a specific example of such a query $Q$:
\[
	\Ans(x,y,z) \ \leftarrow \ R(x,y,z), \ R(x,x,y), \ R(y,y,z), \ R(z,z,x).
\]
This query is a free-connex acyclic (as a witness, take the join-tree whose root is labeled with $R(x,y,z)$
and has 3 children labeled with the remaining atoms in the body of the query).
But the associated query $Q''$ is
\[
\Ans(x,y,z) \ \leftarrow
	\begin{array}[t]{l}
		E_1(u_1,x), E_2(u_1,y), E_3(u_1,z), \\
		E_1(u_2,x), E_2(u_2,x), E_3(u_2,y), \\
		E_1(u_3,y), E_2(u_3,y), E_3(u_3,z), \\
		E_1(u_4,z), E_2(u_4,z), E_3(u_4,x).
	\end{array}
\]
Note that the Gaifman-graph of $Q''$ is not acyclic (it contains the cycle $x-u_2-y-u_3-z-u_4-x$).
Therefore, by Proposition~\ref{prop:binaryfcacqs}, $Q''$ is \emph{not}
free-connex acyclic.
This simple example illustrates that the straightforward encoding of the database $D$
as a database over a binary schema won't help to easily prove Theorem~\ref{thm:ReductionArbSchemaToBinarySchema}.

\subsubsection{Proof Details Omitted in Section~\ref{sec:ArbitrarySchemas}}\label{appendix:ReductionFromArbitraryToBinary}
\constructDbBinary*
\begin{proof}
  By definition, we have $|\sigmaBinary|= |\sigma| + 2{\cdot}k^2+k+1$.
  For every tuple $\at\in\tD$ with $r\deff \ar(\at)$ we have
\[
  |\Projections(\at)|
  \ \leq \
  \sum_{m = 0}^{r} {r \choose m } \cdot m!
  \ \leq \ 
  r \cdot r!
  \ \leq \
  k \cdot k!\,.
\]
Thus, $|\Projections(\tD)|\leq k\cdot k! \cdot |\tD|\leq k\cdot
k!\cdot \dbsize{D}=2^{O(k\cdot \log k)}\cdot \dbsize{D}$.

Clearly, by a single pass over all $R\in\sigma$ and all tuples $\dt\in
R^D$, we can construct the following sets: 
\begin{itemize}
\item
  $(U_R)^{\dbBinary}$,
  for all $R\in\sigma$,
\item
  $\tD$ and
  $\setc{w_{\dt}}{\dt\in\tD}$,
\item
  $\Projections(\dt)$, for all $\dt\in\tD$,
\item
  $\Projections(\tD)$ and
  $\setc{v_{\projection}}{\projection\in\Projections(\tD)}$,
\item  
  $(\ArS{i})^{\dbBinary}$, for all $i\in [0,k]$,
\item
  $(E_{i,j})^{\dbBinary}$, for all $i,j\in [k]$.
\end{itemize}

\noindent
All this can be achieved in time $\poly(k)\cdot k!\cdot \dbsize{D} = 2^{O(k\cdot \log k)}\cdot \dbsize{D}$;
and afterwards we also have available data structures 
that allow us to
enumerate with output-linear delay the elements of
the respective set,
to test in time $O(k)$ whether a given item belongs to one of these
sets, 
to switch in time $O(k)$ between a tuple $\dt\in\tD$ and the
associated node $w_{\dt}$, and to switch in time $O(k)$ between a projection
$\projection\in\Projections(\tD)$ and the associated node $v_{\projection}$.

A brute-force way to construct the sets $(F_{i,j})^{\dbBinary}$
for $i,j\in[k]$ is as follows. 
Initialize them as the empty set $\emptyset$, and then
loop over all $\projection\in\Projections(\tD)$, and let
$X\deff\tset(\projection)$. Loop over all tuples
$\projectionAlt=(q_1,\ldots,q_m)$ with $1\leq m\leq k$ and $\tset(\projectionAlt)\subseteq
X$, check if $\projectionAlt\in\Projections(\tD)$, and if so,
insert the tuple $(v_\projection, v_\projectionAlt)$ into
$(F_{i,j})^{\dbBinary}$ and insert 
$(v_\projectionAlt, v_\projection)$  into $(F_{j,i})^{\dbBinary}$
for all those $i,j\in[k]$ where $1\leq i\leq \ar(\projection)$,
$1\leq j\leq \ar(\projectionAlt)$ and
$\proj_i(\projection)=\proj_j(\projectionAlt)$.

Note that after having performed this algorithm, for all $i,j\in[k]$ the set
$(F_{i,j})^{\dbBinary}$ consists of exactly the intended tuples.
The runnning time is in
$|\Projections(\tD)|\cdot \poly(k)\cdot k! =2^{O(k\cdot \log k)}\cdot \dbsize{D}$.
This completes the proof of Claim~\ref{claim:Construct-dbBinary}.
\end{proof}  

\propComputeGHD*
\begin{proof}
From Bagan \cite{Bagan_PhD}
we obtain an algorithm that upon input of a CQ $\query$ decides in time
$\bigOh(\size{\query})$ whether or not $\query$ is free-connex acyclic, and if
so, outputs an fc-1-GHD for $\query$ (cf.\ also
\cite[Remark~5.3]{BGS-tutorial}).

When applying to this fc-1-GHD the construction used in the proof of~\cite[Lemma~3.3]{BGS-tutorial},
and afterwards performing the completion construction from~\cite[Remark~3.1]{BGS-tutorial},
one can compute in time $O(\size{\query})$ a complete fc-1-GHD $H'=(\Tree,\Bag,\Cover,\Witness)$ of $\query$ with
$|\Witness|\leq |\free(\query)|$.

Finally, we modify $H'$ as follows by considering 
all edges $\smallset{t,p}\in E(T)$. In case that
$\Bag(t)\not\subseteq\Bag(p)$ and $\Bag(t)\not\supseteq\Bag(p)$,
subdivide the edge $\smallset{t,p}$ by introducing a new node
$n_{\smallset{t,p}}$,  replace the edge $\smallset{t,p}$ by two new edges
$\set{t,n_{\smallset{t,p}}}$ and $\set{n_{\smallset{t,p}},p}$, and let
$\Bag(n_{\smallset{t,p}})\deff\Bag(t)\cap\Bag(p)$ and
$\Cover(n_{\smallset{t,p}})\deff\Cover(t)$.
If $\smallset{t,p}\subseteq\Witness$, then insert $n_{\smallset{t,p}}$ into
$\Witness$.

It is straightforward to 
verify that this results in a complete
fc-1-GHD of $\query$ with the desired properties.
This completes the proof of Proposition~\ref{prop:ComputeGHD}.
\end{proof}

\claimConstructQBinary*
\begin{proof}
The above definition obviously yields an algorithm for constructing
$\QBinary$ in time $\bigOh(\size{H})$.
Clearly, $\QBinary$ is a conjunctive query of (the binary) schema
$\sigmaBinary$.
In order to prove that  $\QBinary\in\fcACQBinary$, by
Proposition~\ref{prop:binaryfcacqs} it suffices to show that the Gaifman graph
$G(\QBinary)$ is acyclic and for every connected component $C$ of
$G(\QBinary)$, the subgraph of $C$ induced by the set
$\free(\QBinary)\cap V(C)$ is connected or empty. 

Let $\widetilde{T}$ be the graph obtained from $T$ as follows: we
rename every node $t$ into $\singvarv{t}$, and for every $t\in \myAtoms(T)$
we add to $\singvarv{t}$ a new leaf node called $\singvarw{t}$.
Since $T$ is a tree,  $\widetilde{T}$ is a tree as well.
It can easily be verified that
the Gaifman graph $G(\QBinary)$ is the subgraph of $\widetilde{T}$ 
obtained from $\widetilde{T}$ by deleting those edges
$\smallset{\singvarv{t},\singvarv{t'}}$ where
$\Bag(t)\cap\Bag(t')=\emptyset$.
In particular, $G(\QBinary)$ is acyclic.

According to our definition of $\QBinary$, we have
$\free(\QBinary)=\setc{\singvarv{t}}{t\in \Witness}$.
Since $\Witness$ is a witness for the free-connexness of $H$, the set
$\Witness$ induces a connected subtree of $T$.
Thus, the set
$\setc{\singvarv{t}}{t\in\Witness}$ also induces a connected subtree of $\widetilde{T}$.
This implies that for every connected component $C$ of $G(\QBinary)$,
the subgraph of $C$ induced 
by the set $\setc{\singvarv{t}}{t\in\Witness}\cap V(C)$ 
is connected or empty.
From Proposition~\ref{prop:binaryfcacqs} we obtain that $\QBinary$ is free-connex
acyclic.
This completes the proof of Claim~\ref{claim:ConstructQBinary}.
\end{proof}  

\claimPropertiesOfBeta*
\begin{proof}[Proof of Claim~\ref{claim:PropertiesOfBeta}\ref{item:BetahIsConsistent}] \ \\
Consider an arbitrary $h\in\Hom(\QBinary,\dbBinary)$.
Our proof proceeds in three steps. 
\medskip

\noindent
\emph{Step~1:} \
For all $t\in V(T)$ there exists a 
$\projectionAlt_t\in\Projections(\tD)$ such that
$h(\singvarv{t})=v_{\projectionAlt_t}$ and $\ar(\projectionAlt_t)=\ar(\vtup{t})$. Furthermore, for all
$y\in\Vars(\query)$, all $t,t'\in V(T)$ with
$y\in\Bag(t)\cap\Bag(t')$, and for those
$i\leq \ar(\vtup{t})$ and $i'\leq\ar(\vtup{t'})$ with
$y=\proj_{i}(\vtup{t})=\proj_{i'}(\vtup{t'})$ we have $\proj_i(\projectionAlt_t)=\proj_{i'}(\projectionAlt_{t'})$.
\smallskip

\noindent
\emph{Proof of Step~1:} \
Concerning the first statement, consider an arbitrary $t\in V(T)$ and
note that $\Atoms(\QBinary)$ contains the atom $\ArS{|\Bag(t)|}(\singvarv{t})$.
Since $h\in\Hom(\QBinary,\dbBinary)$, we have $h(\singvarv{t})\in(\ArS{|\Bag(t)|})^{\dbBinary}$.
From the definition of $\dbBinary$ we obtain that
$h(\singvarv{t})=v_{\projectionAlt_t}$ for some
$\projectionAlt_t\in\Projections(\tD)$ with
$\ar(\projectionAlt_t)=|\Bag(t)|=\ar(\vtup{t})$.
\smallskip

Concerning the second statement, consider an arbitrary variable
$y\in\Vars(Q)$ and arbitrary nodes $t,t'\in
V(T)$ with $y\in\Bag(t)\cap\Bag(t')$. Let $i\leq\ar(\vtup{t})$ and
$i'\leq\ar(\vtup{t'})$ such that $y=\proj_i(\vtup{t})=\proj_{i'}(\vtup{t'})$.
We have to show that
$\proj_i(\projectionAlt_t)=\proj_{i'}(\projectionAlt_{t'})$.

Recall that $H$ is an fc-1-GHD and thus, in particular, fulfills the
\emph{path condition}. Thus, $y\in\Bag(t)\cap\Bag(t')$ implies that
$y\in \Bag(t'')$ for every node $t''$ that lies on the path from $t$
to $t'$ in $T$.
Therefore, it suffices to prove the statement for the special case
where $t$ and $t'$ are neighbors in $T$, i.e., $\smallset{t,t'}\in
E(T)$.
From $\smallset{t,t'}\in E(T)$ we obtain
that either $t'$ is the parent of $t$ in $\rootedTree$
or $t$ is the parent of $t'$ in $\rootedTree$.
Therefore, according to our definition of the query $\QBinary$, the
set $\Atoms(\QBinary)$ contains the atom
$F_{i,i'}(\singvarv{t},\singvarv{t'})$ or the atom
$F_{i',i}(\singvarv{t'},\singvarv{t})$.
Since $h\in\Hom(\QBinary,\dbBinary)$, we therefore have:
$(h(\singvarv{t}),h(\singvarv{t'})) \in (F_{i,i'})^{\dbBinary}$ or 
$(h(\singvarv{t'}),h(\singvarv{t}))\in (F_{i',i})^{\dbBinary}$.
Recalling from the first statement of Step~1 that
$h(\singvarv{t})=v_{\projectionAlt_{t}}$ and
$h(\singvarv{t'})=v_{\projectionAlt_{t'}}$, this yields:
$(v_{\projectionAlt_{t}},v_{\projectionAlt_{t'}}) \in (F_{i,i'})^{\dbBinary}$ or 
$(v_{\projectionAlt_{t'}},v_{\projectionAlt_{t}})\in (F_{i',i})^{\dbBinary}$.
From the definition of $\dbBinary$ we obtain: $\proj_{i}(\projectionAlt_{t})=\proj_{i'}(\projectionAlt_{t'})$.
This completes the proof of Step~1.
\medskip

\noindent
Recall from the formulation of Claim~\ref{claim:PropertiesOfBeta}\ref{item:BetahIsConsistent} that $h'\deff\beta(h)$.
\medskip

\noindent
\emph{Step~2:} \ $h(\singvarv{t}) = v_{h'(\vtup{t})}$, for every $t\in V(T)$.
\smallskip

\noindent
\emph{Proof of Step~2:} \
Consider an arbitrary $t\in V(T)$. According to Step~1, there exists a
$\projectionAlt_t\in\Projections(D)$ such that
$\ar(\projectionAlt_t)=\ar(\vtup{t})$ and $h(\singvarv{t})=v_{\projectionAlt_t}$.
We have to show that $\projectionAlt_t=h'(\vtup{t})$.

Let $m\deff\ar(\vtup{t})$, consider an arbitrary $i\in[m]$, and let
$y\deff\proj_i(\vtup{t})$ (i.e., $y$ is the variable occurring at
position $i$ in $\vtup{t}$). We have to show that $\proj_i(\projectionAlt_t)=h'(y)$.

According to our definition of $h'=\beta(h)$ we know that
$h'(y)=\proj_{j_y}(\projection_{h,y})$, where
$\projection_{h,y}\in\Projections(\tD)$ such that
$h(\singvarv{t_y})=v_{\projection_{h,y}}$ and
$\ar(\projection_{h,y})=|\Bag(t_y)|=\ar(\vtup{t_y})$.
Furthermore, by our choice of $j_y$ we know that $y=\proj_{j_y}(\vtup{t_y})$.
Using Step~1 for $t'\deff t_y$  and $i'\deff j_y$, and noting that
$\projectionAlt_{t'}=\projection_{h,y}$, we obtain:
$\proj_i(\projectionAlt_t)=\proj_{i'}(\projectionAlt_{t'})=\proj_{j_y}(\projection_{h,y})=h'(y)$.
This completes the proof of Step~2.
\medskip

\noindent
\emph{Step~3:} \
For every $t\in \myAtoms(T)$ and for $R(\tup{z})\deff \Cover(t)$ we have
$h(\singvarw{t})=w_{h'(\tup{z})}$ and $h'(\tup{z})\in R^D$.\smallskip

\noindent
\emph{Proof of Step~3:} \ Let $t\in \myAtoms(T)$ and $R(\tup{z})\deff\Cover(t)$. 
Let $r\deff\ar(R)$ and $(z_1,\ldots,z_r)\deff \tup{z}$.
By definition of $\myAtoms(T)$ we have $\smallset{z_1,\ldots,z_r}=\Bag(t)$.
Thus,
there is a surjective mapping $f\colon [r]\to [m]$ for $m\deff |\Bag(t)|$, such
that for $(x_1,\ldots,x_m)\deff\vtup{t}$ we have: $(z_1,\ldots,z_r) = (x_{f(1)},\ldots,x_{f(r)})$.

By the definition of $\QBinary$, the set $\Atoms(\QBinary)$ contains
the atoms $(U_R)(\singvarw{t})$ and 
$E_{\nu,f(\nu)}(\singvarw{t},\singvarv{t})$ for all $\nu\in[r]$.
Since $h\in\Hom(\QBinary,\dbBinary)$, we have
$h(\singvarw{t})\in (U_R)^{\dbBinary}$ and
$(h(\singvarw{t}),h(\singvarv{t}))\in (E_{\nu,f(\nu)})^{\dbBinary}$
for all $\nu\in[r]$.

By the definition of $\dbBinary$, there is a tuple
$\at=(a_1,\ldots,a_r)\in R^D$ such that $h(\singvarw{t})=w_{\at}$.
In order to complete the proof, it therefore suffices to show that $\at=h'(\tup{z})$.

From \emph{Step~2} we already know that $h(\singvarv{t}) =
v_{h'(\vtup{t})}$.
Using the definition of $\dbBinary$ and the fact that $h(\singvarw{t})=w_{\at}$
and 
$(h(\singvarw{t}),h(\singvarv{t}))\in (E_{\nu,f(\nu)})^{\dbBinary}$, we obtain that
$\proj_{\nu}(\at)=\proj_{f(\nu)}(h'(\vtup{t}))$,
i.e., $a_\nu = h'(x_{f(\nu)})$, 
for all $\nu\in[r]$.

Using that $(z_1,\ldots,z_r) = (x_{f(1)},\ldots,x_{f(r)})$, we obtain:
$h'(\tup{z})=(h'(x_{f(1)}),\ldots,h'(x_{f(r)}))=(a_1,\ldots,a_r)=\at$.
This completes the proof of Step~3.
In summary, the proof of Claim~\ref{claim:PropertiesOfBeta}\ref{item:BetahIsConsistent} is complete.
\end{proof}

\begin{proof}[Proof of
  Claim~\ref{claim:PropertiesOfBeta}\ref{item:BetahIsAHomomorphism},
  \ref{item:betaIsSufficientlyInjective}, and \ref{item:betaIsSurjective}] \
  
\ref{item:BetahIsAHomomorphism}: \  
Let $h\in\Hom(\QBinary,\dbBinary)$ and let $h'\deff\beta(h)$.
Consider an arbitrary atom $R(\tup{z})\in\Atoms(\query)$. We have to
show that $h'(\tup{z})\in R^{D}$.

Let $\myatom\deff R(\tup{z})$ and consider the particular node $t\deff
t_\myatom\in \myAtoms(T)$. 
From Claim~\ref{claim:PropertiesOfBeta}\ref{item:BetahIsConsistent} we obtain that
$h'(\tup{z})\in R^D$.
This completes the proof of Claim~\ref{claim:PropertiesOfBeta}\ref{item:BetahIsAHomomorphism}.
\medskip

\ref{item:betaIsSufficientlyInjective}: \
For each $i\in\smallset{1,2}$ let $h'_i\deff\beta(h_i)$.
  
  First, consider a $t\in V(T)$ such that $h_1(\singvarv{t})\neq h_2(\singvarv{t})$.
  From Claim~\ref{claim:PropertiesOfBeta}\ref{item:BetahIsConsistent} we obtain that
  $h_i(\singvarv{t})=v_{h'_i(\vtup{t})}$, for each $i\in\smallset{1,2}$.
  Thus, we have
  $v_{h'_1(\vtup{t})}=h_1(\singvarv{t})\neq
  h_2(\singvarv{t})=v_{h'_2(\vtup{t})}$.
  This implies that 
  $h'_1(\vtup{t})\neq h'_2(\vtup{t})$.
  I.e., for $(x_1,\ldots,x_m)\deff \vtup{t}$ we have:
  $(h'_1(x_1),\ldots,h'_1(x_m))\neq (h'_2(x_1),\ldots,h'_2(x_m))$.
  Hence, for some $\nu\in[m]$ we have $h'_1(x_\nu)\neq h'_2(x_\nu)$.
  Choosing $y\deff x_\nu$ 
  completes the proof of the first statement.

Now, consider a $t\in \myAtoms(T)$ such that $h_1(\singvarw{t})\neq
h_2(\singvarw{t})$.
Let $R(\tup{z})\deff \Cover(t)$. From Claim~\ref{claim:PropertiesOfBeta}\ref{item:BetahIsConsistent} we obtain that
  $h_i(\singvarw{t})=w_{h'_i(\tup{z})}$, for each $i\in\smallset{1,2}$.
  Thus, we have
  $w_{h'_1(\tup{z})}=h_1(\singvarw{t})\neq
  h_2(\singvarw{t})=w_{h'_2(\tup{z})}$.
  This implies that 
  $h'_1(\tup{z})\neq h'_2(\tup{z})$.
  I.e., for $(z_1,\ldots,z_r)\deff \tup{z}$ we have:
  $(h'_1(z_1),\ldots,h'_1(z_r))\neq (h'_2(z_1),\ldots,h'_2(z_r))$.
  Hence, for some $\nu\in[r]$ we have $h'_1(z_\nu)\neq h'_2(z_\nu)$.
  Choosing $y\deff z_\nu$ and noting that $y\in\Bag(t)$
  (because from $t\in \myAtoms(T)$ we know that $\Bag(t)=\Vars(\Cover(t))=\smallset{z_1,\ldots,z_r}$)
  completes the proof of the second statement and 
 the proof of Claim~\ref{claim:PropertiesOfBeta}\ref{item:betaIsSufficientlyInjective}.
\medskip

\ref{item:betaIsSurjective}: \
 From Claim~\ref{claim:PropertiesOfBeta}\ref{item:betaIsSufficientlyInjective} and the fact that
$\Vars(\QBinary)=\setc{\singvarv{t}}{t\in
  V(T)}\cup\setc{\singvarw{t}}{t\in \myAtoms(T)}$, we immediately obtain
that the mapping $\beta$ is injective.
To prove that it is surjective, we proceed as follows.

Let $h''\in\Hom(\query,D)$. Our aim is to find a
$h\in\Hom(\QBinary,\dbBinary)$ such that $h''=\beta(h)$.
Based on $h''$, we define a mapping $h\colon\Vars(\QBinary)\to\Adom(\dbBinary)$ as follows.
Recall that $\Vars(\QBinary)=\setc{\singvarv{t}}{t\in
  V(T)}\cup\setc{\singvarw{t}}{t\in \myAtoms(T)}$ and $\Adom(\dbBinary)=\setc{w_{\at}}{\at\in\tD}\cup
\setc{v_{\projection}}{\projection\in\Projections(\tD)}$.
For every $t\in V(T)$ we let $R_t(\tup{z}_t)\deff\Cover(t)$.
Since $h''\in\Hom(\query,D)$ and $R_t(\tup{z}_t)\in\Atoms(\query)$, we have $\at_t\deff h''(\tup{z}_t)\in
(R_t)^D\subseteq\tD$. From $\tset(\tup{z}_t)\supseteq\Bag(t)$ we
obtain that $h''(\vtup{t})\in\Projections(\at_t)\subseteq\Projections(\tD)$.
We let
\[
  h(\singvarv{t})\deff v_{h''(\vtup{t})},
  \quad\text{for every $t\in V(T)$,}
  \qquad\text{and}\qquad
  h(\singvarw{t})\deff w_{h''(\tup{z}_t)},
  \quad\text{for every $t\in \myAtoms(T)$.}
\]
Clearly, $h$ is a mapping $h\colon\Vars(\QBinary)\to\Adom(\dbBinary)$.
\medskip

\noindent
\textit{Step~1:} \ $h\in\Hom(\QBinary,\dbBinary)$.
\smallskip

\noindent
\textit{Proof of Step~1:} \
We systematically consider all the atoms in
$\Atoms(\QBinary)$. Consider an arbitrary $t\in
V(T)$. Recall that $R_t(\tup{z}_t)=\Cover(t)$ and $h(\singvarv{t})\deff
v_{h''(\vtup{t})}$; and in case that $t\in \myAtoms(T)$ we also have
$h(\singvarw{t})\deff w_{h''(\tup{z}_t)}$.

We first consider the atom $\ArS{|\Bag(t)|}(\singvarv{t})$.
Since $\ar(\vtup{t})=|\Bag(t)|$, from our definition of $\dbBinary$ we
obtain that $h(\singvarv{t})=v_{h''(\vtup{t})} \in (\ArS{|\Bag(t)|})^{\dbBinary}$. 

Next, in case that $t\in \myAtoms(T)$, we consider the atom
$(U_{R_t})(\singvarw{t})$ of $\QBinary$.
We already know that $ h''(\tup{z}_t)\in
(R_t)^D$.
According to our definition of $\dbBinary$, we 
hence obtain that $h(\singvarw{t}) = w_{h''(\tup{z}_t)} \in (U_{R_t})^{\dbBinary}$.

Furthermore, in case that $t\in \myAtoms(T)$, we consider the atoms $E_{i,j}(\singvarw{t},\singvarv{t})$ for
$i\leq\ar(\tup{z}_t)$ and $j\leq\ar(\vtup{t})$ 
where $\proj_i(\tup{z}_t)=\proj_j(\vtup{t})$, i.e., the $i$-th entry of
$\tup{z}_t$ contains the same variable as the $j$-th entry of
$\vtup{t}$.
Thus, also the $i$-th entry of $h''(\tup{z}_t)$ contains the same
value as the $j$-th entry of $h''(\vtup{t})$.
Recall that we already know
that $\at_t\deff h''(\tup{z}_t)\in \tD$ and
$\projection_t\deff h''(\vtup{t})\in\Projections(\at_t)$.
Thus, by our definition of $\dbBinary$ we have:
$(h(\singvarw{t}),h(\singvarv{t}))=(w_{\at_t},v_{\projection_t})\in (E_{i,j})^{\dbBinary}$.

Finally, for arbitrary $t\in V(T)$, in case that $t$ is not the root of $\rootedTree$, we also
have to consider the parent $p$ of $t$ in $\rootedTree$ and the atoms
$F_{i,j}(\singvarv{t},\singvarv{p})$ for $i\leq\ar(\vtup{t})$ and
$j\leq \ar(\vtup{p})$ where $\proj_i(\vtup{t})=\proj_j(\vtup{p})$,
i.e., the $i$-th entry of $\vtup{t}$ contains the same variable as the
$j$-th entry of $\vtup{p}$. Thus, also the $i$-th entry of
$h''(\vtup{t})$ contains the same value as the $j$-th entry of
$h''(\vtup{p})$.

Recall that we already know that 
$h''(\vtup{t})\in\Projections(\tD)$ and $h''(\vtup{p})\in\Projections(\tD)$.
Furthermore, by our choice of $H$ according to Proposition~\ref{prop:ComputeGHD}
we know that $\Bag(t)\subseteq\Bag(p)$ or $\Bag(t)\supseteq\Bag(p)$,
i.e., $\tset(\vtup{t})\subseteq\tset(\vtup{p})$ or
$\tset(\vtup{t})\supseteq\tset(\vtup{p})$.
This implies that $\tset(h''(\vtup{t}))\subseteq\tset(h''(\vtup{p}))$
or $\tset(h''(\vtup{t}))\supseteq\tset(h''(\vtup{p}))$.
By our definition of $\dbBinary$, we therefore have:
$(h''(\vtup{t}),h''(\vtup{p}))\in (F_{i,j})^{\dbBinary}$.
This completes the proof of Step~1.
\medskip

\noindent
\textit{Step~2:} \ $h''=\beta(h)$.
\smallskip

\noindent
\textit{Proof of Step~2:} \
By definition of $h$, for every $t\in V(T)$ we have
$h(\singvarv{t})=v_{h''(\vtup{t})}$.
On the other hand, since $h\in\Hom(\QBinary,\dbBinary)$, we obtain
from Claim~\ref{claim:PropertiesOfBeta}\ref{item:BetahIsConsistent} for $h'\deff \beta(h)$ and for
every $t\in V(T)$ that $h(\singvarv{t})=v_{h'(\vtup{t})}$.
Hence, for every $t\in V(T)$ we have
$v_{h''(\vtup{t})}=v_{h'(\vtup{t})}$, and hence we also have
$h''(\vtup{t})=h'(\vtup{t})$.
Since for every variable $y\in\Vars(\query)$ there is a $t\in V(T)$
such that $y$ occurs as an entry in the tuple $\vtup{t}$, we have
$h''(y)=h'(y)$, for every $y\in\Vars(\query)$.
This proves that $h''=h'$, i.e., $h''=\beta(h)$, and completes the proof of Step~2 as
well as the proof of
Claim~\ref{claim:PropertiesOfBeta}\ref{item:betaIsSurjective}.
\smallskip

In summary, the proof of Claim~\ref{claim:PropertiesOfBeta} is complete.
\end{proof}

In the remainder of
Appendix~\ref{appendix:ReductionFromArbitraryToBinary}, we explain how
$\beta$ yields an easy-to-compute bijection from 
$\smallsem{\QBinary}(\dbBinary)$ to $\smallsem{\query}(D)$.

\begin{claim}\label{claim:BetaYieldsBijectionBetweenQueryResults}
There is a bijection $f\colon \smallsem{\QBinary}(\dbBinary)\to\smallsem{\query}(D)$.
Furthermore, when given a tuple
$\at\in\smallsem{\QBinary}(\dbBinary)$, the tuple
$f(\at)\in\smallsem{\query}(D)$ can be computed in
time $\bigOh(|\free(\query)| \cdot k))$. 
\end{claim}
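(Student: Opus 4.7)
The plan is to build $f$ from the bijection $\beta\colon\Hom(\QBinary,\dbBinary)\to\Hom(\query,D)$ supplied by Claim~\ref{claim:PropertiesOfBeta}\ref{item:betaIsSurjective}. Concretely, for a tuple $\at=(a_1,\ldots,a_{|\Witness|})\in\smallsem{\QBinary}(\dbBinary)$ there exists an $h\in\Hom(\QBinary,\dbBinary)$ with $h(\singvarv{t_i})=a_i$ for every $i\in[|\Witness|]$; writing $(z_1,\ldots,z_\ell)$ for the head of $\query$, I would define $f(\at)\deff(\beta(h)(z_1),\ldots,\beta(h)(z_\ell))$.

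The first step is to verify that $f$ is well-defined, i.e., depends only on $\at$ and not on the particular $h$. For each $z_j\in\free(\query)$, the node $t_{z_j}$ chosen in the definition of $\beta$ lies in $\Witness$, so $t_{z_j}=t_i$ for some $i\in[|\Witness|]$. Claim~\ref{claim:PropertiesOfBeta}\ref{item:BetahIsConsistent} yields $a_i=h(\singvarv{t_i})=v_{\beta(h)(\vtup{t_i})}$, and since $\projection\mapsto v_\projection$ is injective, the tuple $\beta(h)(\vtup{t_i})$ is determined entirely by $a_i$. Therefore $\beta(h)(z_j)=\proj_{j_{z_j}}(\beta(h)(\vtup{t_i}))$ depends only on $\at$. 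Together with Claim~\ref{claim:PropertiesOfBeta}\ref{item:BetahIsAHomomorphism}, which gives $\beta(h)\in\Hom(\query,D)$, this shows $f$ is a well-defined map $\smallsem{\QBinary}(\dbBinary)\to\smallsem{\query}(D)$.

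For injectivity, suppose $\at_1\neq\at_2$ with associated homomorphisms $h_1,h_2$. Some component differs, say $h_1(\singvarv{t_i})\neq h_2(\singvarv{t_i})$, and Claim~\ref{claim:PropertiesOfBeta}\ref{item:betaIsSufficientlyInjective} supplies a variable $y\in\Bag(t_i)$ with $\beta(h_1)(y)\neq\beta(h_2)(y)$. The key observation is that $t_i\in\Witness$ together with $\free(\query)=\bigcup_{t\in\Witness}\Bag(t)$ forces $\Bag(t_i)\subseteq\free(\query)$, so $y=z_j$ for some $j$ and $f(\at_1)$ and $f(\at_2)$ differ in the $j$-th coordinate. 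Surjectivity is immediate from Claim~\ref{claim:PropertiesOfBeta}\ref{item:betaIsSurjective}: given $\bt\in\smallsem{\query}(D)$, pick $h'\in\Hom(\query,D)$ realizing $\bt$, let $h\in\Hom(\QBinary,\dbBinary)$ with $\beta(h)=h'$, and observe that the tuple $(h(\singvarv{t_1}),\ldots,h(\singvarv{t_{|\Witness|}}))\in\smallsem{\QBinary}(\dbBinary)$ is mapped to $\bt$ by $f$.

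For the runtime bound, during the construction of $\QBinary$ I would precompute, for each $z_j\in\free(\query)$, the index $i(j)\in[|\Witness|]$ with $t_{i(j)}=t_{z_j}$ and the offset $j_{z_j}$; and I would equip $\dbBinary$ with a lookup table mapping each node $v_\projection$ to the tuple $\projection$ itself. Given $\at$, for each $j\in[\ell]$ we retrieve $i(j)$ in $O(1)$, fetch $\projection_{i(j)}$ from $a_{i(j)}$ in $O(k)$ (since $\ar(\projection_{i(j)})\leq k$), and return its $j_{z_j}$-th entry in $O(1)$, yielding $f(\at)$ in total time $O(|\free(\query)|\cdot k)$. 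The only conceptually delicate point is the well-definedness/injectivity argument, which hinges on the inclusion $\Bag(t_i)\subseteq\free(\query)$ for $t_i\in\Witness$; this is precisely what ensures that the head of $\QBinary$ carries enough information to recover every free variable of $\query$ and no more.
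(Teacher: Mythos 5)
Your proposal is correct and follows essentially the same route as the paper: define $f$ through the bijection $\beta$ of Claim~\ref{claim:PropertiesOfBeta}\ref{item:betaIsSurjective}, use part~\ref{item:betaIsSufficientlyInjective} together with $\free(\query)=\bigcup_{t\in\Witness}\Bag(t)$ for injectivity, and extract $f(\at)$ coordinate-wise via the precomputed $t_y$, $j_y$ and the tuple $\projection$ stored at the node $a_{i}=v_{\projection}$. The only (welcome) difference is that you make the well-definedness of $f$ explicit via Claim~\ref{claim:PropertiesOfBeta}\ref{item:BetahIsConsistent} and the injectivity of $\projection\mapsto v_{\projection}$, whereas the paper establishes this implicitly through the correctness argument for the extraction algorithm.
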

\begin{proof}
Let $\ell\deff |\free(\query)|$ and let $\Ans(z_1,\ldots,z_\ell)$ be the
head of $\query$ (i.e., $\free(\query)=\smallset{z_1,\ldots,z_\ell}$).
Recall that the head of $\QBinary$ is
$\Ans(\singvarv{t_1},\ldots,\singvarv{t_{|\Witness|}})$, where
$\set{t_1,\ldots,t_{|\Witness|}}=\Witness$
and $|W|<2 {\cdot} |\free(Q)|$.
By definition of the semantics of CQs, we know that
\begin{eqnarray*}
  \smallsem{\QBinary}(\dbBinary)
& =
&
      \setc{(h(\singvarv{t_1}),\ldots,h(\singvarv{t_{|\Witness|}}))}{h\in\Hom(\QBinary,\dbBinary)}\
      , \quad\text{and}    
\\
  \smallsem{\query}(D)
& =
& \setc{(h'(z_1),\ldots,h'(z_\ell))}{h'\in\Hom(\query,D)}\ .
\end{eqnarray*}
From Claim~\ref{claim:PropertiesOfBeta}\ref{item:betaIsSurjective} we know that $\beta$ is a bijection
from $\Hom(\QBinary,\dbBinary)$ to $\Hom(\query,D)$. Hence,
\begin{eqnarray*}
  \smallsem{\query}(D)
& =
& \setc{(\beta(h)(z_1),\ldots,\beta(h)(z_\ell))}{h\in\Hom(\QBinary,\dbBinary)}\ .
\end{eqnarray*}
Furthermore,
since $\free(\query)=\bigcup_{t\in\Witness}\Bag(t)$, we obtain
from Claim~\ref{claim:PropertiesOfBeta}\ref{item:betaIsSufficientlyInjective} 
that for any two $h_1,h_2\in\Hom(\QBinary,\dbBinary)$ with
\[
  \big( h_1(\singvarv{t_1}),\ldots,h_1(\singvarv{t_{|\Witness|}})\big)
  \quad\neq\quad
  \big(
  h_2(\singvarv{t_1}),\ldots,h_2(\singvarv{t_{|\Witness|}})\big)\ ,
\]  
we have
\[
  \big(\beta(h_1)(z_1),\ldots,\beta(h_1)(z_\ell)\big)
  \quad\neq\quad
  \big(\beta(h_2)(z_1),\ldots,\beta(h_2)(z_\ell)\big)\ .
\]  
This shows that there exists a bijection $f\colon\smallsem{\QBinary}(\dbBinary)\to  \smallsem{\query}(D)$.
\medskip

In the following, we provide an algorithm that, when given a tuple
$\at=(a_1,\ldots,a_{|\Witness|})\in\smallsem{\QBinary}(\dbBinary)$
computes the tuple $f(\at)\in \smallsem{\query}(D)$.

First, we read the tuple $\at$ and build in time $O(\ar(\at))=O(|\free(Q)|)$ a data structure that
provides $O(1)$-access to $a_i$ upon input of an $i\in [\ar(\at)]$.

Afterwards, we proceed as follows for every $y\in \free(\query)$.
Let $i\in\set{1,\ldots,|\Witness|}$ be such that $t_i=t_y$. Recall
that we fixed $t_y$ to be a node in $\Witness$ such that
$y\in\Bag(t_y)$, and we fixed $j_y\in[\ar(t_y)]$ such that
$y=\pi_{j_y}(\vtup{t_y})$.
Consider the $i$-th entry $a_i$ of the tuple $\at$. From
Claim~\ref{claim:PropertiesOfBeta}\ref{item:BetahIsConsistent} we know that $a_i=v_{\projection}$ for
some tuple $\projection\in\Projections(\tD)$ of arity $\ar(\projection)=\ar(\vtup{t_y})$.
We let $b_y\deff \proj_{j_y}(\projection)$.

Finally, we output the tuple $\bt\deff (b_{z_1},\ldots,b_{z_\ell})$.
Note that upon input of $\at$, the tuple $\bt$ is constructed within time
$\bigOh(|\ar(\at)|+|\free(\query)|\cdot k)=O(|\free(\query)|\cdot k)$.

All that remains to be done to complete the proof is to show that if
$\at=\big(h(\singvarv{t_1}),\ldots,h(\singvarv{t_{|\Witness|}})\big)$ for
some $h\in\Hom(\QBinary,\dbBinary)$, then $\bt=\big(\beta(h)(z_1),\ldots,\beta(h)(z_\ell)\big)$.
Note that in order to show the latter, it suffices to show that
$b_y=\beta(h)(y)$ for all $y\in\free(\query)$.
Consider an arbitrary $y\in\free(\query)$. As above, we let
$i\in\set{1,\ldots,|\Witness|}$ be such that $t_i=t_y$. According to
our choice of $b_y$ we have: $b_y = \proj_{j_y}(\projection)$, where
$\projection$ is such that $a_i=v_{\projection}$. From
$a_i=h(\singvarv{t_y})$ we obtain that
$h(\singvarv{t_y})=v_{\projection}$. Thus, according to our definition
of $\beta(h)$ we have: $\beta(h)(y)=\proj_{j_y}(\projection)$, i.e.,
$\beta(h)(y)=b_y$.
This completes the proof of Claim~\ref{claim:BetaYieldsBijectionBetweenQueryResults}.
\end{proof}  

\noindent
Finally, the proof of
Theorem~\ref{thm:ReductionArbSchemaToBinarySchema} is complete:
\begin{itemize}
\item
statement~\ref{item:one:ReductionArbToBinary} is obtained from Claim~\ref{claim:Construct-dbBinary},
\item
statement~\ref{item:two:ReductionArbToBinary} is obtained from
Proposition~\ref{prop:ComputeGHD}, Claim~\ref{claim:ConstructQBinary} and the fact
that $|\free(\QBinary)|=|\Witness|<2\cdot|\free(Q)|$,
\item
statement~\ref{item:three:ReductionArbToBinary} is obtained from Claim~\ref{claim:BetaYieldsBijectionBetweenQueryResults}.
\end{itemize}

\subsection{Details Omitted in Section~\ref{sec:ReductionToOneBinaryRelation}}\label{appendix:ReductionToOneBinaryRelation}

\begin{claim}\label{claim:QSimple_arityAndFCACQ}
	$|\free(\QSimple)| < 3 \cdot |\free(Q)|$ and $\QSimple \in \fcACQSimple$.
\end{claim}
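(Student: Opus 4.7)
The plan is to separately verify the arity bound and the free-connex acyclicity of $\QSimple$, relying crucially on Proposition~\ref{prop:binaryfcacqs} since $\sigmaSimple$ is a binary schema.

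For the arity bound, I would first observe that $\free(\QSimple)$ consists of the original variables in $\free(Q)$ together with the new variables $z_{xy},z_{yx}$ ranging over those directed edges $(x,y)$ of $\vec{G}(Q)$ whose two endpoints both lie in $\free(Q)$. Since $G(Q)$ is a forest (Proposition~\ref{prop:binaryfcacqs}) and the free variables within each connected component of $G(Q)$ induce a connected subtree, the number of such edges is exactly $|\free(Q)|-c$, where $c\geq 1$ is the number of connected components of $G(Q)$ containing at least one free variable (assuming $\free(Q)\neq\emptyset$; the Boolean case is trivial). Each such edge contributes two new free variables, so $|\free(\QSimple)|=|\free(Q)|+2(|\free(Q)|-c)\leq 3|\free(Q)|-2<3|\free(Q)|$.

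For showing $\QSimple\in\fcACQSimple$, the first step is to inspect how $G(\QSimple)$ is obtained from $G(Q)$: the binary atoms of $\QSimple$ (ignoring unary atoms, which are irrelevant for the Gaifman graph) attach for each $x\in S$ a pendant vertex $z_{xx}$ to $x$, and for each edge $\{x,y\}$ of $G(Q)$ subdivide it into the length-three path $x$--$z_{xy}$--$z_{yx}$--$y$. Both operations preserve the property of being a forest, so $G(\QSimple)$ is acyclic, hence $\QSimple$ is an acyclic CQ by Proposition~\ref{prop:binaryfcacqs}.

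It then remains to verify the free-connex condition of Proposition~\ref{prop:binaryfcacqs} for every connected component of $G(\QSimple)$. The connected components of $G(\QSimple)$ are in one-to-one correspondence with those of $G(Q)$. For a component $C$ of $G(Q)$ with $\free(Q)\cap V(C)=\emptyset$, the construction adds no free variables of $\QSimple$ into the corresponding component, so the induced subgraph is empty. For a component $C$ with $F\defis\free(Q)\cap V(C)\neq\emptyset$, the free-connex acyclicity of $Q$ ensures that $C[F]$ is a connected subtree of $C$; applying the path-subdivision of length~3 to each edge of this subtree yields the subgraph of the corresponding component of $G(\QSimple)$ induced by $F\cup\{z_{xy},z_{yx}\mid x,y\in F,(x,y)\in\vec{G}(Q)\}$, which is precisely $\free(\QSimple)\cap V(C')$. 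Subdividing edges of a connected subtree yields a connected subtree, so the free-connex condition holds. The main (but very mild) obstacle is simply keeping track of which new variables land in $\free(\QSimple)$ and how the orientation of $\vec{G}(Q)$ interacts with the symmetry of the gadget, but once the translation from edges of $G(Q)$ to length-three paths in $G(\QSimple)$ is made precise, both statements follow by a straightforward combinatorial check.
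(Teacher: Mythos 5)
Your proposal is correct and follows essentially the same route as the paper's proof: both bound $|\free(\QSimple)|$ by counting the edges of the forest induced by $\free(Q)$ in $G(Q)$ (your count $|\free(Q)|-c$ is just a sharper version of the paper's "fewer than $|\free(Q)|$ edges"), and both establish free-connex acyclicity by observing that $G(\QSimple)$ arises from $G(Q)$ via edge subdivision and pendant attachment and then invoking Proposition~\ref{prop:binaryfcacqs}. No substantive differences.
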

\begin{proof}
According to the definition of the head of $\QSimple$ we have
$|\free(\QSimple)|=|\free(Q)|+ 2{\cdot} \ell$, where $\ell$ is the
number of edges of the subgraph of $G(Q)$ induced by the set $\free(Q)$.
Since this subgraph is a forest on $|\free(Q)|$ nodes, its number of edges
is  $<|\free(Q)|$. Hence, $|\free(\QSimple)|<3\cdot |\free(Q)|$.

Note that the Gaifman graph $G(\QSimple)$ of $\QSimple$ is obtained
from $\vec{G}(Q)$ by subdividing each edge $(x,y)$ of $\vec{G}(Q)$
into three edges $(x,z_{xy})$, $(z_{xy},z_{yx})$ and $(z_{yx},y)$, by
attaching a new leaf $z_{xx}$ to every $x\in S$, and
by then forgetting about the orientation of the edges. Clearly,
$G(\QSimple)$ is a forest, and for every connected component
$\CSimple$ of $G(\QSimple)$ the subgraph of $\CSimple$ induced by the
set $\free(\QSimple)\cap V(\CSimple)$ is connected or empty. 
Thus, by
Proposition~\ref{prop:binaryfcacqs}, the query $\QSimple$ is
free-connex acyclic, i.e., $\QSimple\in\fcACQSimple$.
Note that upon input of a query $Q\in\fcACQ$, the query $\QSimple$ can be
constructed in time $O(\size{Q})$.
This completes the proof of Claim~\ref{claim:QSimple_arityAndFCACQ}.
\end{proof}

\claimReductionBinaryToGraph*
\begin{proof}
\ref{claim:reduction:three}: \ Let $\nu$ be a homomorphism from $Q$
to $D$, and  let $\nuSimple\colon\vars(\QSimple)\to\Dom$ be the
mapping defined as follows:
for all $x\in\vars(Q)$ we let $\nuSimple(x)\deff\nu(x)$;
for all $x\in S$ we let $\nuSimple(z_{xx})\deff w_{aa}$ for $a\deff\nu(x)$;
and for all edges $(x,y)$ of $\vec{G}(Q)$ we let $\nuSimple(z_{xy})\deff w_{ab}$ and
$\nuSimple(z_{yx})\deff w_{ba}$ for $a\deff\nu(x)$ and $b\deff\nu(y)$.
We have to  show that $\nuSimple$ is a homomorphism from $\QSimple$ to $\dbSimple$.

First, consider a unary atom of $\QSimple$ that is of the form $V(x)$, or $X(x)$ with $X(x) \in\atoms(Q)$.
In both cases we have $x\in\vars(Q)$, and hence $\nuSimple(x)=a$ for $a\deff\nu(x)$. In particular, $a\in\adom{D}=V^{\dbSimple}$. Furthermore, if $X(x)$ is an atom of $\QSimple$ and of $Q$, then (since $\nu$ is a homomorphism from $Q$ to $D$) we have $a\in X^D = X^{\dbSimple}$.

Next, consider $x\in S$ and the atoms $W(z_{xx})$,
$E(x,z_{xx})$, $E(z_{xx},z_{xx})$ of $\QSimple$.
Since $x\in S$, there exists an $F\in\sigma_{|2}$ such that $F(x,x)\in\Atoms(Q)$.
Since $\nu\in\Hom(Q,D)$, we know that $(a,a)\in F^D$ for
$a\deff\nu(x)$. By definition of $\nuSimple$ we have
$\nuSimple(z_{xx})=w_{aa}$, and by definition of $\dbSimple$ we have
$w_{aa}\in W^{\dbSimple}$ and $(a,w_{aa})\in \RSimple^{\dbSimple}$ and
$(w_{aa},w_{aa})\in \RSimple^{\dbSimple}$.
Furthermore, for every $F\in \sigma_{|2}$ such that $U_F(z_{xx})$ is
an atom of $\QSimple$, we know that $F(x,x)$ is an atom of $Q$, and
hence $(a,a)\in F^D$. By definition of $\dbSimple$, this implies that
$w_{aa}\in (U_F)^{\dbSimple}$. Hence, all the atoms of $\QSimple$ that
involve $z_{xx}$ are handled correctly by $\nuSimple$.

Now, consider an arbitrary edge $(x,y)$ of $\vec{G}(Q)$ and the atoms
$W(z_{xy})$, $W(z_{yx})$, $\RSimple(x,z_{xy})$,
$\RSimple(z_{xy},z_{yx})$, and $\RSimple(z_{yx},y)$.
We know that $x,y\in \vars(Q)$. Let $a\deff\nu(x)$ and $b\deff\nu(y)$. 
Since $(x,y)$ is an edge of $\vec{G}(Q)$, there exists an $F\in
\sigma_{|2}$ such that $\atoms(Q)$ contains at least one of the atoms
$F(x,y)$ and $F(y,x)$.
Since $\nu$ is a homomorphism from $Q$ to $D$, we have $a,b\in\adom{D}$ and, furthermore, we have $(a,b)\in F^D$ or $(b,a)\in F^D$ (note that we either have $a=b$ or $a\neq b$).
By the definition of $\nuSimple$ we have $\nuSimple(z_{xy})=w_{ab}$ and $\nuSimple(z_{yx})=w_{ba}$.
By the definition of $\dbSimple$, the relation $\RSimple^{\dbSimple}$ contains each of the following tuples:
$(a,w_{ab})$, $(w_{ab},w_{ba})$, $(w_{ba},b)$ (independently of
whether or not $a{=}b$).
Furthermore, the relation $W^{\dbSimple}$ contains $w_{ab}$ and
$w_{ba}$.
Finally, for every $F\in\sigma_{|2}$ such that $U_F(z_{xy})$ ($U_F(z_{yx})$,
resp.) is an atom of $\QSimple$, we know that $F(x,y)$ ($F(y,x)$,
resp.) is an atom of $Q$, and hence $(a,b)\in F^D$ ($(b,a)\in F^D$,
resp.). By definition of $\dbSimple$, this implies that $w_{ab}\in
(U_F)^{\dbSimple}$ ($w_{ba}\in (U_F)^{\dbSimple}$, resp.).
Hence, all the atoms of $\QSimple$ that
involve $z_{xy}$ or $z_{yx}$ are handled correctly by $\nuSimple$.

In summary, we have verified that $\nuSimple$ is a homomorphism from
$\QSimple$ to $\dbSimple$. 
Hence, the proof of part~\ref{claim:reduction:three} of
Claim~\ref{claim:ReductionBinaryToGraph} is complete.
\medskip

\noindent\ref{claim:reduction:QSimpleToQ}: \
Let $\nuSimple$ be a homomorphism from $\QSimple$ to $\dbSimple$.
	\smallskip

	\ref{claim:reduction:one}: \
	Consider an arbitrary edge $(x,y)$ of $\vec{G}(Q)$, and let $a\deff
	\nuSimple(x)$ and $b\deff\nuSimple(y)$ (note that either $a=b$ or
	$a\neq b$).
	Since $x,y\in\vars(Q)$, the query $\QSimple$ contains the atoms
	$V(x)$ and $V(y)$. Thus, according to the definition of $\dbSimple$
	we have $a,b\in\adom{D}$. Since $(x,y)$ is an edge of $\vec{G}(Q)$,
	we know that $x\neq y$ and, by the construction of $\QSimple$, the
	query $\QSimple$ contains the atoms  $W(z_{xy})$, $W(z_{yx})$, $\RSimple(x,z_{xy})$,
	$\RSimple(z_{xy},z_{yx})$, $\RSimple(z_{yx},y)$.
	Since $\nuSimple$ is a homomorphism from $\QSimple$ to $\dbSimple$,
	for $c\deff\nuSimple(z_{xy})$ and $d\deff\nuSimple(z_{yx})$ the
	relation $W^{\dbSimple}$ contains $c$ and $d$, and  
	the relation
	$\RSimple^{\dbSimple}$ contains each of the tuples $(a,c),
	(c,d), (d,b)$ (note that
	either $c=d$ or $c\neq d$). By the
	definition of $\dbSimple$, every element in $W^{\dbSimple}$ has
	exactly one neighbor that belongs to  $W^{\dbSimple}$ and exactly
	one neighbor that belongs to $\adom{D}$, and 
	$(a,c), (c,d), (d,b) \in \RSimple^{\dbSimple}$ necessarily implies
	that $c=w_{ab}$ and $d=w_{ba}$ (cf.\
	Fig.~\ref{fig:BinToGraph_gadgets}).
	Hence, we have:  
	$\nuSimple(z_{xy})=w_{ab}$ and $\nuSimple(z_{yx})=w_{ba}$.
	
	Now, consider an arbitrary
	$x\in S$ and let $a\deff\nuSimple(x)$.
	Since $x\in\vars(Q)$, the query $\QSimple$ contains the atom
	$V(x)$. Thus, according to the definition of $\dbSimple$ we have
	$a\in\adom{D}$. Since $x\in S$, by the construction of $\QSimple$,
	the query $\QSimple$ contains the atoms $W(z_{xx}), E(x,z_{xx}), E(z_{xx},z_{xx})$.
	Since $\nuSimple$ is a homomorphism from $\QSimple$ to $\dbSimple$,
	for $c\deff\nuSimple(z_{xx})$ we have 
	$c\in W^{\dbSimple}$, and
	the relation
	$\RSimple^{\dbSimple}$ contains the tuples $(a,c)$ and $(c,c)$.
	By our construction of $\dbSimple$, only nodes of the form $w_{bb}$
	(for some $b$ with $(b,b)\in \myTuplesSym$) form a self-loop $(w_{bb},w_{bb})$ in
	$\RSimple^{\dbSimple}$; and the particular node $w_{aa}$ is the only
	such node that is a neighbor of $a$ in the sense that $(a,w_{aa})\in
	\RSimple^{\dbSimple}$.
	Thus, $c=w_{aa}$. Hence, we have: $\nuSimple(z_{xx})=w_{aa}$.

	This completes the proof of item~\ref{claim:reduction:one} of part~\ref{claim:reduction:QSimpleToQ} of Claim~\ref{claim:ReductionBinaryToGraph}.
\medskip

\ref{claim:reduction:two}: \ 
First, consider an arbitrary unary atom $X(x)$ of $Q$. By construction, $X(x)$ is an atom of $\QSimple$.
Thus, since $\nuSimple$ is a homomorphism from $\QSimple$ to $\dbSimple$, we have for $a\deff\nuSimple(x)$ that $a$ belongs to $X^{\dbSimple}$. By construction of $\dbSimple$ we have $X^{\dbSimple}=X^D$. Thus, we have: $\nu(x)=a\in X^D$.

Next, consider an arbitrary atom of $Q$ of the form $F(x,x)$. Then,
$x\in S$, and from \ref{claim:reduction:one} we obtain that 
$a\deff\nuSimple(x)\in\Adom(D)$ and
$\nuSimple(z_{xx})=w_{aa}$.
Furthermore, by construction, $\QSimple$ contains the atom $U_F(z_{xx})$.
Thus, since $\nuSimple$ is a homomorphism from $\QSimple$ to
$\dbSimple$, we have: $w_{aa}=\nuSimple(z_{xx})\in (U_F)^{\dbSimple}$.
 By construction of $\dbSimple$, this implies that $(a,a)\in F^D$. Thus, we have:
$(\nu(x),\nu(x))=(a,a)\in F^D$.

Finally, consider an arbitrary atom of $Q$ of the form $F(u,v)$ with $u,v\in \vars(Q)$ and $u\neq v$.
Then, the Gaifman graph $G(Q)$ of $Q$ contains the edge $\smallset{u,v}$, and its oriented version $\vec{G}(Q)$ contains either the edge $(u,v)$ or the edge $(v,u)$.

Let us first consider the case that $\vec{G}(Q)$ contains the edge $(u,v)$. Let $a\deff \nuSimple(u)$ and $b\deff\nuSimple(v)$.
From item~\ref{claim:reduction:one} we obtain that $a,b\in\adom{D}$ and $\nuSimple(z_{uv})=w_{ab}$.
By construction, $\QSimple$ contains the atom $U_F(z_{uv})$.
Thus, since $\nuSimple$ is a homomorphism from $\QSimple$ to $\dbSimple$, we have $w_{ab}=\nuSimple(z_{uv})\in (U_F)^{\dbSimple}$. By construction of $\dbSimple$ we have $(a,b)\in F^D$. Thus, we have:
$(\nu(u),\nu(v))=(a,b)\in F^D$.

Let us now consider the case that $\vec{G}(Q)$ contains the edge $(v,u)$. Let $a\deff\nuSimple(v)$ and $b\deff\nuSimple(u)$.
From item~\ref{claim:reduction:one} we obtain that $a,b\in\adom{D}$
and $\nuSimple(z_{vu})=w_{ab}$
and $\nuSimple(z_{uv})=w_{ba}$.
By construction, $\QSimple$ contains the atom $U_F(z_{uv})$ (because
$Q$ contains the atom $F(u,v)$).
Thus, since $\nuSimple$ is a homomorphism from $\QSimple$ to $\dbSimple$, we have $w_{ba}=\nuSimple(z_{uv})\in (U_F)^{\dbSimple}$. By construction of $\dbSimple$ we have $(b,a)\in F^D$. Thus, we have:
$(\nu(u),\nu(v))\in F^D$.

In summary, we have shown that $\nu$ is a homomorphism from $Q$ to $D$.
	This completes the proof of item~\ref{claim:reduction:two} of
	part~\ref{claim:reduction:QSimpleToQ} of
	Claim~\ref{claim:ReductionBinaryToGraph}.
Finally, the proof of  Claim~\ref{claim:ReductionBinaryToGraph} is complete.
\end{proof}

\section{Details Omitted in Section~\ref{sec:eval}}\label{sec:undirected-evalApp}

\subsection{Proof of Lemma~\ref{lemma:homomorphisms}}\label{app:proof:homomorphisms}
\lemmaHomomorphisms*
\begin{proof}%
\noindent\enquote{$\Longrightarrow$}: \
Let $\val\colon \vars(\QOne) \to \Adom(\DOne)$ be a homomorphism from $\QOne$ to $\DOne$.
We have to show that
\begin{enumerate}[(1)]
	\item\label{lem:homomorphisms:forward:1}
	$\vl(\val(x))\supseteq \lambda_x$, for every $x\in\vars(\QOne)$, \ and
	\item\label{lem:homomorphisms:forward:2}
	$\set{ \val(x),\val(y) }[] \in \EOne$ for every edge $\set{x,y}[]$ of $G(\QOne)$.
\end{enumerate}

\noindent
Consider a variable $x \in \vars(\QOne)$.
By definition, $\lambda_x = \set{ U \in \sigmaOne \mid U(x) \in \atoms(\QOne) }$.
Thus, consider $U \in \lambda_x$; we have to show that $U \in \vl(\val(x))$.
Since $U(x) \in \atoms(\QOne)$ and $\val$ is a homomorphism, it must hold that $(\val(x)) \in U^{\DOne}$.
By definition of $\GOne$, $(\val(x)) \in U^{\DOne}$ implies that $U \in \vl(\val(x))$.
This proves that~\ref{lem:homomorphisms:forward:1} holds.
Consider an edge $\set{ x,y }[]$ of $G(\QOne)$.
Then, $E(x,y) \in \atoms(\QOne)$ or $E(y,x) \in \atoms(\QOne)$ must hold (or both).
Since $\val$ is a homomorphism, this means that at least one of $(\val(x), \val(y)) \in E^{\DOne}$, $(\val(y), \val(x)) \in E^{\DOne}$ must be true.
Hence, by definition of $\GOne$, we have that $\set{ \val(x), \val(y) }[] \in \EOne$.
This proves that~\ref{lem:homomorphisms:forward:2} holds.
\medskip

\noindent\enquote{$\Longleftarrow$}: \
Let $\val\colon \vars(\QOne) \to \Adom(\DOne)$ be an assignment such that the following holds.
\begin{enumerate}[(1)]
	\item\label{lem:homomorphisms:backward:1}
	$\vl(\val(x))\supseteq \lambda_x$, for every $x\in\vars(\QOne)$, \ and
	\item\label{lem:homomorphisms:backward:2}
	$\set{ \val(x),\val(y) }[] \in \EOne$ for every edge $\set{x,y}[]$ of $G(\QOne)$.
\end{enumerate}
We must show that $\val$ is a homomorphism from $\QOne$ to $\DOne$, i.e., that we have $(\val(x_1), \dots, \val(x_r)) \in R^{\DOne}$ for all atoms $R(x_1, \dots, x_r) \in \atoms(\QOne)$.
Recall that by definition, $\sigmaOne$ consists of a single binary symbol $E$, the unary symbol $L$ and possibly further unary symbols.
Therefore, we only have to distinguish two forms of atoms that may appear in $\QOne$ --- unary and binary.
Consider a binary atom $E(x, y)$ in $\atoms(\QOne)$. Then, $x, y \in
\vars(\QOne)$, and
by definition of $\QOne$ we have $x \neq y$.
Then $\set{ x,y }[]$ must be an edge of $G(\QOne)$ and due to~\ref{lem:homomorphisms:backward:2} we know that $\set{ \val(x), \val(y) }[] \in \EOne$. By definition, this is the case if $(\val(x), \val(y)) \in E^{\DOne}$ or if $(\val(y), \val(x)) \in E^{\DOne}$. Since $E^{\DOne}$ is symmetric, this means $(\val(x), \val(y)) \in E^{\DOne}$ either way.

Consider a unary atom $U(x)$ in $\atoms(\QOne)$. Then, $x \in \vars(\QOne)$.
Since $x$ is a node of $G(\QOne)$, the set $\lambda_x$ contains the symbol $U$ by definition, and~\ref{lem:homomorphisms:backward:1} yields that $U \in \vl(\val(x))$ holds as well.
Plugging in the definition of $\vl$ for $\GOne$ yields that $(\val(x)) \in U^{\DOne}$.
This completes the proof of Lemma~\ref{lemma:homomorphisms}.
\end{proof}

\subsection{Fundamental Observations for the Results of Section~\ref{sec:eval}}%
\label{app:eval:observations}
\begin{observation}\label{obs:appendix:main-lemma:unary_predicates}
	For every color $c \in C$ and for every $u \in \Adom(\DOne)$ with $\col(u) = c$ we have: \ \
	$\set{ U \mid (c) \in U^{\ciD} } = \vl(u)$.
\end{observation}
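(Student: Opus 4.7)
The plan is to prove both inclusions directly from the definitions, using the fact that $\col$ refines $\vl$ as the single structural ingredient.

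For the inclusion $\supseteq$, I would fix an arbitrary $U \in \vl(u)$. By definition of $\vl$ for the graph $\GOne$ associated with $\DOne$, this means $U \in \sigmaOne$ is unary and $(u) \in U^{\DOne}$. Since we also assume $\col(u) = c$, the definition of $U^{\ciD}$ (item~\ref{item:ciD:five} of the color-index) immediately gives $(c) \in U^{\ciD}$, which is exactly what we need.

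For the inclusion $\subseteq$, I would fix an arbitrary unary $U \in \sigmaOne$ with $(c) \in U^{\ciD}$. Unfolding the definition of $U^{\ciD}$, there exists some $v \in \Adom(\DOne)$ such that $(v) \in U^{\DOne}$ and $\col(v) = c$. The first condition, together with the definition of $\vl$, gives $U \in \vl(v)$. Now the key step: since $\col(u) = c = \col(v)$ and $\col$ is a coloring that \emph{refines} $\vl$ (as noted just after Theorem~\ref{thm:ColorRefinement}), we obtain $\vl(u) = \vl(v)$, and therefore $U \in \vl(u)$.

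Neither step is really an obstacle; the observation is essentially a bookkeeping lemma about how $\ciD$ encodes the $\vl$-labels of color classes. The only place where something non-trivial is used is the refinement property of $\col$, which is precisely what guarantees that all nodes in one color class share the same unary labels, so that lifting unary relations from $\DOne$ to $\ciD$ is a well-defined and lossless operation. No case analysis on the structure of $\sigmaOne$ beyond distinguishing unary symbols is needed.
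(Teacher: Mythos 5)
Your proof is correct and follows essentially the same route as the paper's: the paper simply lists the three ingredients (the definition of $U^{\ciD}$, the equivalence $(v)\in U^{\DOne}\iff U\in\vl(v)$, and the fact that $\col$ refines $\vl$) and declares the observation to follow trivially, while you assemble exactly these ingredients into the two explicit inclusions. No gap; nothing further to add.
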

\begin{proof}
	This trivially follows from the following facts:
	By definition, $(c) \in U^{\ciD}$ iff there exists a $v \in \Adom(\DOne)$ with
	$\col(v) = c$ and $(v)\in U^{\DOne}$.
	Furthermore, $(v)\in U^{\DOne}$ iff $U \in \vl(v)$.
	Finally, the coloring $\col$ refines $\vl$, i.e.,
	for all $u, v \in \Adom(\DOne)$ with $\col(u) = \col(v)$ we have $\vl(u) = \vl(v)$.
\end{proof}

\begin{corollary}\label{cor:homomorphisms}
	A mapping $\nu: \vars(\QOne) \to \Adom(\DOne)$ is a homomorphism from $\QOne$ to $\DOne$ if, and only if:
	\begin{enumerate}[(1)]
		\item for all $x \in \Vars(\QOne)$ we have:\; $\lambda_x \subseteq \vl(\nu(x))$,\; and
		\item for all $x \in \Vars(\QOne)$ with $x \neq x_1$ we have:\ \,
		$\nu(x) \in \N{\nu(y)}{d}$, where $y = \Parent(x)$ and $d = \col(\nu(x))$.
	\end{enumerate}
\end{corollary}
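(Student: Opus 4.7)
The plan is to derive Corollary~\ref{cor:homomorphisms} as an almost immediate consequence of Lemma~\ref{lemma:homomorphisms}, by using the specific structure of $T$ (the rooted tree obtained from $G(\QOne)$ with root $x_1$) to rewrite the edge condition in a tree-oriented manner.

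First, I would recall that $\QOne$ was assumed connected and belongs to $\fcACQOne$, so by Proposition~\ref{prop:binaryfcacqs} its Gaifman graph $G(\QOne)$ is a tree. Choosing $x_1$ as root turns $G(\QOne)$ into the rooted tree $T$, and in particular the set of undirected edges of $G(\QOne)$ coincides with $\{\,\{x,\Parent(x)\} \mid x \in \vars(\QOne),\ x \neq x_1\,\}$. Hence the edge condition in Lemma~\ref{lemma:homomorphisms} is equivalent to requiring $\{\nu(x),\nu(\Parent(x))\} \in \EOne$ for every non-root variable $x$.

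Second, I would unpack condition~(2) of the corollary. For $y = \Parent(x)$ and $d = \col(\nu(x))$, recall that $\N{\nu(y)}{d}$ is by definition the set of all neighbors of $\nu(y)$ in $\GOne$ having color $d$. Since $d$ is chosen to be exactly $\col(\nu(x))$, the color constraint is trivially met by $\nu(x)$, and so the membership $\nu(x) \in \N{\nu(y)}{d}$ is equivalent to $\{\nu(x),\nu(y)\} \in \EOne$. Therefore condition~(2) of the corollary is just a restatement of the tree-form of the edge condition from Lemma~\ref{lemma:homomorphisms}, while condition~(1) is exactly the label condition of that lemma. Combining both directions of Lemma~\ref{lemma:homomorphisms} with these equivalences yields the corollary in both directions. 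There is no significant obstacle here; the argument is essentially bookkeeping between the two equivalent formulations of the edge condition.
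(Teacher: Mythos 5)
Your argument is correct and matches the paper's own proof of this corollary: both reduce it to Lemma~\ref{lemma:homomorphisms} by noting that the edges of the tree $G(\QOne)$ are exactly the pairs $\set{x,\Parent(x)}[]$ for non-root $x$, and that $\nu(x)\in\N{\nu(y)}{\col(\nu(x))}$ is equivalent to $\set{\nu(x),\nu(y)}[]\in\EOne$ since the color constraint is automatically satisfied. No gaps.
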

\begin{proof}
	Condition (1) is the same as in Lemma~\ref{lemma:homomorphisms}.
	It remains to show that condition~(2) is equivalent to the
        following condition~(2) of
	Lemma~\ref{lemma:homomorphisms}: $\set{ \val(x),\, \val(y) }[] \in \EOne$ for every edge $\set{ x, y }[]$ of $G(\QOne)$.

	Note that $\set{ x,y }[]$ is an edge in $G(\QOne)$ iff either
        $x = \Parent(y)$ or $y = \Parent(x)$. Thus,
        condition~(2) of Lemma~\ref{lemma:homomorphisms} holds iff $\set{ \val(x),\val(y) }[] \in \EOne$ for every $x \in \vars(\QOne)$ with $x \neq x_1$ and $y = \Parent(x)$.

	By definition, $\set{ \nu(x), \nu(y) }[] \in \EOne$ holds iff
        $\nu(x) \in \N{\nu(y)}{\col(\nu(x))}$. Thus, condition~(2) of
        Lemma~\ref{lemma:homomorphisms} holds iff $\nu(x) \in
        \N{\nu(y)}{\col(\nu(x))}$ for every $x \in \vars(\QOne)$ with
        $x \neq x_1$ and $y = \Parent(x)$, which is equivalent to
        item~(2) of Corollary~\ref{cor:homomorphisms}. This completes
        the proof of Corollary~\ref{cor:homomorphisms}.
\end{proof}

\begin{lemma}\label{claim:appendix:main-lemma:neighbors}
	Let $\mu\colon \Vars(\QOne) \to C$ be a homomorphism from $\QOne$ to $\ciD$.
	For every $x \in \Vars(\QOne)$ that is not a leaf of $T$,
	for every $v \in \VOne$ with $\col(v) = \mu(x)$, and for every $z \in \Children(x)$ we have: \
	$\N{v}{d} \neq \emptyset$ for $d \isdef \mu(z)$.
\end{lemma}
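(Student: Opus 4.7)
The plan is to derive the conclusion directly from the homomorphism property of $\mu$ combined with the definition of $E^{\ciD}$ and the fact that $\col$ is a stable coloring.

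First I would observe that since $x$ is not a leaf of $T$ and $z \in \Children(x)$, the set $\{x,z\}$ is an edge of $G(\QOne)$. By the construction of $G(\QOne)$, this means that $\atoms(\QOne)$ contains $E(x,z)$ or $E(z,x)$. Since $\mu$ is a homomorphism from $\QOne$ to $\ciD$, this yields $(\mu(x),\mu(z)) \in E^{\ciD}$ or $(\mu(z),\mu(x)) \in E^{\ciD}$. Recalling that $E^{\ciD}$ is symmetric, in either case we obtain $(c,d) \in E^{\ciD}$, where $c \deff \mu(x) = \col(v)$ and $d \deff \mu(z)$.

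Next I would unfold the definition of $E^{\ciD}$ from item~\ref{item:ciD:five} of the indexing phase: $(c,d) \in E^{\ciD}$ holds iff $\numN{c}{d} > 0$. By the notational convention introduced just after Theorem~\ref{thm:ColorRefinement}, $\numN{c}{d} = \numN{u}{d} = |\N{u}{d}|$ for \emph{every} node $u \in \VOne$ with $\col(u) = c$; this well-definedness is exactly the stability property of the coarsest stable coloring $\col$ produced by color refinement. Applying this to the particular node $v$ (which satisfies $\col(v) = c$) yields $|\N{v}{d}| = \numN{c}{d} > 0$, so $\N{v}{d} \neq \emptyset$, as required.

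There is no serious obstacle here: the lemma is essentially a direct consequence of (i)~the homomorphism condition translating edges of $G(\QOne)$ into pairs of $E^{\ciD}$, (ii)~the definition of $E^{\ciD}$ through $\numN{c}{d}$, and (iii)~the stability of $\col$ guaranteeing that the count $\numN{\cdot}{d}$ is the same at every vertex of a given color. The only small subtlety is to check that, since $\QOne$ is self-loop-free, $x \neq z$ and hence $\{x,z\}$ is genuinely an edge of $G(\QOne)$ giving rise to a binary atom $E(\cdot,\cdot)$ rather than a unary $L$-atom; this is immediate from the construction of $\QOne$ in Section~\ref{sec:eval}.
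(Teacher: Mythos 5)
Your proof is correct and follows essentially the same route as the paper's: translate the tree edge $\set{x,z}[]$ into an atom $E(x,z)$ or $E(z,x)$, use the homomorphism property and the symmetry of $E^{\ciD}$ to get $\numN{c}{d}>0$, and invoke the stability of $\col$ to transfer this to the specific vertex $v$. The additional remark that $x\neq z$ because $\QOne$ is self-loop-free is a correct and welcome detail that the paper leaves implicit.
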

\begin{proof}
  Let $x \in \Vars(\QOne)$ with $\Children(x) \neq \emptyset$.
  Let $z \in \Children(x)$. Let $c\isdef \mu(x)$ and $d \isdef
  \mu(z)$.
  Let $v \in \VOne$ such that $\col(v) = \mu(x) = c$.

  From $z\in\Children(x)$ we obtain that  $\Atoms(\QOne)$ contains at
  least one of the atoms $E(x, z)$ and $E(z,x)$.
  Since $\mu$ is a homomorphism from $\QOne$ to $\ciD$, this implies that
  $(\mu(x), \mu(z)) \in E^{\ciD}$ or $(\mu(z), \mu(x)) \in E^{\ciD}$. I.e., $(c, d) \in E^{\ciD}$ or $(d, c) \in E^{\ciD}$ holds.
  Since $E^{\ciD}$ is symmetric, this means that $\numN{c}{d} > 0$ is
  true in any case. This implies that for every $u \in \VOne$ with
  $\col(u) = c$ we have $\N{u}{d} \neq \emptyset$. This, in
  particular, yields that $\N{v}{d} \neq \emptyset$.
  This completes the proof of Lemma~\ref{claim:appendix:main-lemma:neighbors}.
\end{proof}

\begin{lemma}\label{prop:appendix:main-lemma:hom_correspondence}
	Let $\mu\colon \Vars(\QOne) \to C$ and $\nu\colon \Vars(\QOne) \to \VOne$ be mappings such that for all $x \in \Vars(\QOne)$ we have
	\begin{enumerate}[(a)]
		\item\label{prop:appendix:main-lemma:hom_correspondence:a}
		$\col(\nu(x)) = \mu(x)$, and
		\item\label{prop:appendix:main-lemma:hom_correspondence:b}
		$x = x_1$ or $\nu(x) \in \N{\nu(y)}{d}$
		where $y = \Parent(x)$ and $d = \mu(x)$.
	\end{enumerate}
	Then, $\mu$ is a homomorphism from $\QOne$ to $\ciD$ iff $\nu$ is a homomorphism from $\QOne$ to $\DOne$.
\end{lemma}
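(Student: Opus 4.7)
The plan is to combine Lemma~\ref{lemma:homomorphisms}, applied once with target $\DOne$ (for $\nu$) and once with target $\ciD$ (for $\mu$), with Observation~\ref{obs:appendix:main-lemma:unary_predicates}. Recall that $G(\QOne)$ coincides with the rooted tree $T$ and that $\QOne$ is self-loop-free, so Lemma~\ref{lemma:homomorphisms} states that $\nu$ is a homomorphism from $\QOne$ to $\DOne$ iff (i) $\lambda_x\subseteq\vl(\nu(x))$ for every $x\in\Vars(\QOne)$, and (ii) $\{\nu(x),\nu(y)\}\in\EOne$ for every edge $\{x,y\}$ of $T$. An entirely analogous characterization of ``$\mu$ is a homomorphism from $\QOne$ to $\ciD$'' follows from the definition of $\ciD$ (where the ``labels'' of a color $c$ are the symbols $U$ with $(c)\in U^{\ciD}$, and the edge relation is $E^{\ciD}$).

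For the unary conditions, hypothesis~(a) gives $\col(\nu(x))=\mu(x)$, so Observation~\ref{obs:appendix:main-lemma:unary_predicates} applied to $u=\nu(x)$ and $c=\mu(x)$ yields $\{U\in\sigmaOne : (\mu(x))\in U^{\ciD}\}=\vl(\nu(x))$. Hence, for every unary relation symbol $U$, we have $(\mu(x))\in U^{\ciD}$ iff $U\in\vl(\nu(x))$ iff $(\nu(x))\in U^{\DOne}$. In particular, the unary conditions ``$\lambda_x$ is satisfied at $\mu(x)$ in $\ciD$'' and ``$\lambda_x\subseteq\vl(\nu(x))$'' are equivalent for every $x$.

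For the edge conditions, I will use that every edge $\{x,y\}$ of $T=G(\QOne)$ is a parent-child pair, so hypothesis~(b) applies. For $\nu$, (b) immediately yields $\nu(x)\in\N{\nu(y)}{\mu(x)}$ (or vice versa), and hence $\{\nu(x),\nu(y)\}\in\EOne$; this is exactly the edge condition of Lemma~\ref{lemma:homomorphisms} for $\nu$, using symmetry of $E^{\DOne}$. For $\mu$, I need $(\mu(x),\mu(y))\in E^{\ciD}$ for every edge $E(x,y)\in\atoms(\QOne)$; by the definition of $\ciD$ this amounts to $\numN{\mu(x)}{\mu(y)}>0$, which follows from the just-established $\{\nu(x),\nu(y)\}\in\EOne$ together with $\col(\nu(x))=\mu(x)$ and $\col(\nu(y))=\mu(y)$ provided by~(a) (any vertex of color $\mu(x)$ has the same number of neighbors of color $\mu(y)$ as $\nu(x)$ does, by stability of $\col$).

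Combining these two observations, conditions (i) and (ii) for $\mu$ at $\ciD$ are logically equivalent to the corresponding conditions for $\nu$ at $\DOne$, so Lemma~\ref{lemma:homomorphisms} gives the claimed equivalence. I do not anticipate a serious obstacle here: the proof amounts to unwinding the definitions of $\ciD$ and $\vl$ and exploiting the fact that the tree $T$ couples the parent-child relation (on which~(b) is phrased) with the edges of $G(\QOne)$ (on which the homomorphism edge condition is phrased).
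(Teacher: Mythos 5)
Your proposal is correct and follows essentially the same route as the paper: Observation~\ref{obs:appendix:main-lemma:unary_predicates} together with hypothesis~(a) handles the unary atoms, and the definition of $E^{\ciD}$ (i.e., $\numN{c}{c'}>0$) together with hypotheses (a) and (b) and the fact that every edge of $G(\QOne)=T$ is a parent--child pair handles the binary atoms. The only cosmetic difference is that you derive the edge conditions for both $\mu$ and $\nu$ directly from (a) and (b), whereas the paper's backward direction extracts the edge $(\nu(x),\nu(y))\in E^{\DOne}$ from the homomorphism assumption on $\nu$; both are valid.
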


\begin{proof}
	\noindent\enquote{$\Longrightarrow$}: \
	We use Corollary~\ref{cor:homomorphisms} to verify that $\nu$ is a homomorphism from $\QOne$ to $\DOne$:

	\noindent (1):\; Let $x \in \Vars(\QOne)$ and let $\nu(x) = u$ and $\mu(x) = c$.
	We must show that $\vl(u) \supseteq \lambda_x$.
	Since $\mu$ is a homomorphism from $\QOne$ to $\ciD$, we have
        $\lambda_x \subseteq A$ for $A\isdef \set{ U \mid (c) \in U^{\ciD} }$.
	Since $\col(u) = c$
        by~\ref{prop:appendix:main-lemma:hom_correspondence:a}, we
        obtain from
        Observation~\ref{obs:appendix:main-lemma:unary_predicates} that
	$A = \vl(u)$.
	Thus, $\lambda_x \subseteq \vl(u)$.

	\noindent (2):\;
	This follows directly
        from~\ref{prop:appendix:main-lemma:hom_correspondence:b} and \ref{prop:appendix:main-lemma:hom_correspondence:a}. 
 	\medskip
        
	\noindent\enquote{$\Longleftarrow$}: \
	First, consider an arbitrary unary atom $U(x)$ in $\atoms(\QOne)$. We must show that $(\mu(x)) \in U^{\ciD}$.
	Since $\nu$ is a homomorphism from $\QOne$ to $\DOne$, we have
        $(\nu(x)) \in U^{\DOne}$, i.e., $U \in \vl(\nu(x))$. 
	Using (a) and
        Observation~\ref{obs:appendix:main-lemma:unary_predicates}
        yields that $(\mu(x)) \in U^{\ciD}$. 

	Now, consider an arbitrary binary atom $E(x,y)$ in
        $\atoms(\QOne)$. We must show that $(\mu(x), \mu(y)) \in
        E^{\ciD}$. Since $\nu$ is a homomorphism from $\QOne$ to $\DOne$, we have $(\nu(x), \nu(y)) \in E^{\DOne}$.
	Let $c\isdef \col(\nu(x))$ and $d\isdef \col(\nu(y)) $.
	Then, by definition of $\ciD$ it holds that $(c, d) \in E^{\ciD}$.
	Because of (a), we know that $\mu(x) = c$ and $\mu(y) = d$,
        which shows that $(\mu(x), \mu(y)) \in E^{\ciD}$.

        This completes the proof of Lemma~\ref{prop:appendix:main-lemma:hom_correspondence}.
\end{proof}

\subsection{Proof of Lemma~\ref{lemma:bool}}\label{app:proof:bool}
\lemmaBool*
\begin{proof}
  By the definition of $\DOne$ and $\QOne$ we already know that
  $\sem{Q}(D) = \sem{\QOne}(\DOne)$. In the following, we show that
  $\sem{\QOne}(\DOne) = \sem{\QOne}(\ciD)$.

  First, consider the case that $\sem{\QOne}(\DOne) =\Yes$. Then, 
  there is a homomorphism $\nu\colon \vars(\QOne) \to \VOne$ from
  $\QOne$ to $\DOne$.  Let $\mu\colon \vars(\QOne) \to C$ defined via $\mu(x) \isdef \col(\nu(x))$ for all $x \in \vars(\QOne)$.
Then, according to Corollary~\ref{cor:homomorphisms}, the mappings
$\mu$ and $\nu$ match the requirements (a) and (b) of
Lemma~\ref{prop:appendix:main-lemma:hom_correspondence}. Hence,
Lemma~\ref{prop:appendix:main-lemma:hom_correspondence} yields that
$\mu$ is a homomorphism from $\QOne$ to $\ciD$. I.e., we have  $\sem{\QOne}(\ciD) =\Yes$.

Now, consider the case that $\sem{\QOne}(\ciD) =\Yes$.
Then, there is a homomorphism $\mu\colon \vars(\QOne) \to C$ from
$\QOne$ to $\ciD$.
We can now combine
Lemma~\ref{prop:appendix:main-lemma:hom_correspondence} with
Lemma~\ref{claim:appendix:main-lemma:neighbors} to obtain a
homomorphism $\nu\colon \Vars(\QOne) \to \VOne$ from $\QOne$ to
$\DOne$ as follows.
Do a top-down pass on $T$. For the root node $r$ pick an arbitrary node $v$
  in $\VOne$ of color $\col(v)=\mu(r)$ and let $\nu(r)\deff v$. In the
  induction step, we consider an edge $(y,x)$ of $T$; by the induction
hypothesis we have already chosen a node $v=\nu(y)\in \VOne$ of color
$\mu(y)$. We let $d\deff\mu(x)$. From Lemma~\ref{claim:appendix:main-lemma:neighbors}
we know that there exists a node $v'\in N(v,d)$. We pick an arbitrary
such node $v'$ and let $\nu(x)\deff v'$. It can now be verified that
this mapping $\nu$ satisfies the conditions (a) and (b) of
Lemma~\ref{prop:appendix:main-lemma:hom_correspondence}. Thus, by
Lemma~\ref{prop:appendix:main-lemma:hom_correspondence} we
obtain that $\nu$ is a homomorphism from $\QOne$ to $\DOne$.
Hence, $\sem{\QOne}(\DOne) =\Yes$.
This completes the proof of Lemma~\ref{lemma:bool}.
\end{proof}

\subsection{Proof of Lemma~\ref{lemma:mainLemma}}\label{app:proof:mainLemma}
For the next two lemmas, consider a fixed $\bar{c} \isdef (c_1, \dots,
c_k) \in \sem{\QOne}(\ciD)$, and a fixed homomorphism $\mu\colon
\Vars(\QOne) \to C$ from $\QOne$ to $\ciD$ witnessing that $\bar{c} \in \sem{\QOne}(\ciD)$, i.e., $\mu(x_i) = c_i$ for all $i \in [k]$.

For $\ell \in [k]$, we call an $\ell$-tuple $(v_1, \dots, v_{\ell})\in\Adom(\DOne)^\ell$ \emph{consistent with} $\bar{c} = (c_1, \dots, c_k)$, if 
\begin{enumerate}
	\item 
	for all $i \in [\ell]$ $\col(v_i) = c_i$,\; and
	\item
	for all $j < i$ where $\set{ x_i , x_j }[]$ is an edge in $G(\QOne)$ we have $v_i \in \N{v_j}{c_i}$.
\end{enumerate}
Note that $(v_1)$ is consistent with $\bar{c}$ for every $v_1 \in \Adom(\DOne)$ with $\col(v_1) = c_1$.

\begin{lemma}\label{app:lemma:neighbors}
	Let $\ell \in [k{-}1]$, let $(v_1, \dots, v_{\ell})$ be consistent with $\bar{c}$ and let $x_j = \Parent(x_{\ell+1})$.
	Then, $(v_1, \dots, v_{\ell}, v_{\ell+1})$ is consistent with $\bar{c}$ for all $v_{\ell+1} \in \N{v_j}{c_{\ell+1}}$.
\end{lemma}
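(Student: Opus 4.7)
The plan is to verify both defining conditions of consistency for the extended tuple $(v_1,\ldots,v_\ell,v_{\ell+1})$. Condition~(1), namely $\col(v_{\ell+1})=c_{\ell+1}$, is immediate: by definition of $\N{v_j}{c_{\ell+1}}$, every element of this set is a neighbor of $v_j$ in $\GOne$ whose color is exactly $c_{\ell+1}$, and the hypothesis $v_{\ell+1}\in\N{v_j}{c_{\ell+1}}$ gives precisely this.

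The bulk of the argument lies in condition~(2). I need to show that whenever $i'<\ell{+}1$ and $\set{x_{\ell+1},x_{i'}}[]$ is an edge of $G(\QOne)$, we have $v_{\ell+1}\in\N{v_{i'}}{c_{\ell+1}}$. The key observation is that since we assumed $\QOne$ to be connected and since $T$ was defined as the rooted tree obtained from $G(\QOne)$ by selecting $x_1$ as root, the undirected graph underlying $T$ coincides with $G(\QOne)$. Hence any edge of $G(\QOne)$ incident with $x_{\ell+1}$ connects $x_{\ell+1}$ to either its parent or one of its children in $T$.

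Now I invoke the crucial property of the order~$<$ fixed for $\QOne$: $<$ is compatible with the descendant relation of $T$, meaning every descendant of $x_{\ell+1}$ has index strictly greater than $\ell{+}1$. Therefore the condition $i'<\ell{+}1$ rules out the case that $x_{i'}$ is a child of $x_{\ell+1}$, and the only remaining possibility is that $x_{i'}$ is the parent of $x_{\ell+1}$ in $T$, i.e., $x_{i'}=x_j$ and $i'=j$. For this unique case, the required containment $v_{\ell+1}\in\N{v_j}{c_{\ell+1}}$ is exactly the hypothesis of the lemma.

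The main (and only) obstacle is to articulate the tree-structural step cleanly; once it is noted that $G(\QOne)=T$ and that $<$ is a linearization compatible with descent from the root, everything reduces to the fact that the unique neighbor of $x_{\ell+1}$ with a smaller index is its parent $x_j$. No further case analysis or additional machinery is needed, and the homomorphism $\mu$ is not even explicitly used in this step.
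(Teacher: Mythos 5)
Your proposal is correct and follows essentially the same route as the paper's proof: the only non-trivial consistency condition for the extended tuple is the one pairing $x_{\ell+1}$ with a smaller-indexed neighbor, and since $G(\QOne)$ is the tree $T$ and the order $<$ places all descendants of $x_{\ell+1}$ after it, that neighbor must be $\Parent(x_{\ell+1})=x_j$, so the condition is exactly the hypothesis $v_{\ell+1}\in\N{v_j}{c_{\ell+1}}$. If anything, your explicit appeal to the compatibility of $<$ with the descendant relation makes the key step slightly more careful than the paper's terse ``since $G(\QOne)$ is acyclic.''
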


\begin{proof}
	Let $v_{\ell+1} \in \N{v_j}{c_{\ell+1}}$.
	We have to show that $(v_1, \dots, v_{\ell}, v_{\ell+1})$ is
        consistent with $\bar{c}$.

	Let $i \in [\ell+1]$.
	Clearly, $\col(v_i) = c_{i}$ holds.
	Let $j < i$ such that $\set{ x_i, x_j }[]$ is an edge in $G(\QOne)$.
	If $i \leq \ell$, then $v_i \in \N{v_j}{c_i}$ since, by assumption, $(v_1, \dots, v_{\ell})$ is consistent with $\bar{c}$.
	If $i = \ell+1$, then $x_j$ is the unique parent of $x_i$ since $G(\QOne)$ is acyclic, i.e., $x_j = \Parent(x_i)$.
	By choice of $v_i = v_{\ell+1}$ we have: $v_{i} \in \N{v_j}{c_i}$.
	Thus, in summary, $(v_1, \dots, v_{\ell}, v_{\ell+1})$ is consistent with $\bar{c}$.
\end{proof}

\begin{lemma}\label{claim:appendix:main-lemma:consistency}
	For every $\ell \in [k]$ and every $\ell$-tuple $\bar{v} \in \VOne^\ell$, the following is true:
	\begin{quote}
		$\bar{v}$ is consistent with $\bar{c}$ \;$\iff$\; $\bar{v}$ is a partial output of $\QOne$ over $\DOne$	of color $\bar{c}$.
	\end{quote}
\end{lemma}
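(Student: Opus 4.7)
The plan is to prove the two directions using the characterization of homomorphisms provided by Corollary~\ref{cor:homomorphisms} and Lemma~\ref{prop:appendix:main-lemma:hom_correspondence}, and to handle the nontrivial ``consistent $\Rightarrow$ partial output'' direction by a greedy one-coordinate-at-a-time extension.

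For the ($\Leftarrow$) direction, unfold the definition of partial output to get a $k$-tuple $(v_1,\ldots,v_k) \in \sem{\QOne}(\DOne)$ with $\col(v_i)=c_i$ for every $i \in [k]$, witnessed by a homomorphism $\nu\colon \Vars(\QOne) \to \VOne$ where $\nu(x_i)=v_i$. Condition~(1) of consistency is immediate from $\col(v_i)=c_i$. For condition~(2), take any $j < i \leq \ell$ such that $\{x_i,x_j\}$ is an edge of $G(\QOne)$; then Lemma~\ref{lemma:homomorphisms} applied to $\nu$ gives $\{v_i,v_j\} \in \EOne$, i.e., $v_i \in \N{v_j}{c_i}$. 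Since prefixes inherit these properties, $(v_1,\ldots,v_\ell)$ is consistent.

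For the ($\Rightarrow$) direction, I will greedily extend $(v_1,\ldots,v_\ell)$ to a consistent $k$-tuple. For each $i = \ell+1, \ldots, k$, let $x_j \isdef \Parent(x_i)$ in the rooted tree $T$; since the order $<$ is compatible with the descendant relation, $j < i$, so $v_j$ has already been chosen and satisfies $\col(v_j)=c_j=\mu(x_j)$. Applying Lemma~\ref{claim:appendix:main-lemma:neighbors} to the fixed homomorphism $\mu$ with witness $v_j$ and child $x_i$ yields $\N{v_j}{c_i} \neq \emptyset$, so we can pick any $v_i \in \N{v_j}{c_i}$; Lemma~\ref{app:lemma:neighbors} then guarantees that $(v_1, \ldots, v_i)$ is again consistent with $\bar{c}$. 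After $k-\ell$ steps we obtain a consistent $k$-tuple $(v_1,\ldots,v_k)$. Defining $\nu\colon x_i \mapsto v_i$, the two conditions of Lemma~\ref{prop:appendix:main-lemma:hom_correspondence} hold---(a) from consistency condition~(1), and (b) because the parent edge $\{x_i,\Parent(x_i)\}$ is an edge of $G(\QOne)$ with the parent's index smaller than $i$, so consistency condition~(2) applies---and therefore $\nu$ is a homomorphism from $\QOne$ to $\DOne$. Hence $(v_1,\ldots,v_k) \in \sem{\QOne}(\DOne)$ with colors $\bar{c}$, exhibiting $(v_1,\ldots,v_\ell)$ as a partial output.

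The main obstacle is the extension step: one has to ensure that the locally-defined consistency requirements globally assemble into a genuine homomorphism, not merely into a plausible tuple. What makes the greedy argument go through cleanly is the specific choice of variable ordering $<$ compatible with the descendant relation in the rooted tree $T$, together with the fact that $G(\QOne)$ is a tree. Thanks to these, each newly added variable $x_i$ has exactly one already-placed neighbor (its parent $x_j$), so the non-emptiness guarantee provided by Lemma~\ref{claim:appendix:main-lemma:neighbors}---which in turn hinges on $\mu$ being a homomorphism to $\ciD$ and on $\col$ being a stable coloring---is exactly enough to carry the induction from $\ell$ up to $k$.
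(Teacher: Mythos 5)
Your ($\Leftarrow$) direction is fine, and your ($\Rightarrow$) direction starts out the same way as the paper's, but it stops too early and this leaves a real gap. You extend $(v_1,\ldots,v_\ell)$ only up to the $k$-tuple $(v_1,\ldots,v_k)$, i.e., only over the \emph{free} variables $x_1,\ldots,x_k$ of $\QOne$, and then define $\nu\colon x_i\mapsto v_i$ and invoke Lemma~\ref{prop:appendix:main-lemma:hom_correspondence} to conclude that $\nu$ is a homomorphism. But that lemma (and the very notion of $(v_1,\ldots,v_k)\in\sem{\QOne}(\DOne)$) requires a \emph{total} map $\nu\colon\Vars(\QOne)\to\VOne$, and $\QOne$ in general has quantified variables $\quant(\QOne)=\vars(\QOne)\setminus\{x_1,\ldots,x_k\}$ on which your $\nu$ is undefined. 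So as written, the final step ``therefore $\nu$ is a homomorphism'' does not go through whenever $\quant(\QOne)\neq\emptyset$, and you have not exhibited a witness that the $k$-tuple lies in $\sem{\QOne}(\DOne)$.

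The repair is exactly the continuation of your own greedy argument, and it is what the paper does: after placing the free variables, keep walking down $T$ in the order $<$ (which lists all of $\vars(\QOne)$, free variables first) and, for each remaining variable $z$ with already-placed parent $y$, use Lemma~\ref{claim:appendix:main-lemma:neighbors} to get $\N{\nu(y)}{\mu(z)}\neq\emptyset$ and pick any element as $\nu(z)$. The invariant to maintain is not ``consistency'' (which is only defined for prefixes of the head tuple) but $\col(\nu(z))=\mu(z)$ for every variable $z$ considered so far; with that invariant, conditions (a) and (b) of Lemma~\ref{prop:appendix:main-lemma:hom_correspondence} hold for the resulting total map, yielding the homomorphism and hence the partial-output claim. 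Please add this extension step; everything else in your argument is sound.
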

\begin{proof}
	\enquote{$\Longrightarrow$}:\;
	Let $\bar{v}=(v_1, \dots, v_{\ell})\in\VOne^\ell$ be consistent with $\bar{c}$.
	We show the stronger statement that there exists a homomorphism $\nu\colon \Vars(\QOne) \to \VOne$ from $\QOne$ to $\DOne$ such that $\nu(x_i) = v_i$ for all $i \in [\ell]$ and $\col(\nu(z)) = \mu(z)$ for all $z \in \Vars(\QOne)$.
	From this, it directly follows that $(v_1, \dots, v_{\ell})$ is a partial output of $\QOne$ over $\DOne$ of color $\bar{c}$.
	
	We define $\nu$ along the order $<$ on $\Vars(\QOne)$ (which
        corresponds to a top-down pass over the edges of $T$).
	For all $i \in [\ell]$, let $\nu(x_i) \isdef v_i$.
	Now consider $z \in \Vars(\QOne)$ with $z \neq x_i$ for every
        $i \in [\ell]$ but where we have already considered $y =
        \Parent(z)$, i.e., where we have already picked $\nu(y)$ with $\col(\nu(y)) = \mu(y)$.
	Let $u \isdef \nu(y)$ and $d \isdef \mu(z)$.
	Because $\mu$ is a homomorphism from $\QOne$ to $\ciD$, we
        obtain from Lemma~\ref{claim:appendix:main-lemma:neighbors} that $\N{u}{d}$ is not empty.
	We choose an arbitrary $w \in \N{u}{d}$ and let $\nu(z) \isdef w$.
	Clearly, $\col(\nu(w)) = d = \mu(z)$.

	Once we have considered all variables this way, $\nu$
        satisfies the conditions (a) and (b) of
        Lemma~\ref{prop:appendix:main-lemma:hom_correspondence}. Hence,
        from Lemma~\ref{prop:appendix:main-lemma:hom_correspondence}
        we obtain that $\nu$ is a homomorphism from $\QOne$ to $\DOne$.
	\medskip
        
	\enquote{$\Longleftarrow$}:\;
	Let $(v_1, \dots, v_{\ell})$ be a partial output of $\QOne$ over $\DOne$ of color $\bar{c}$.
	We have to show that it is consistent with $\bar{c}$.
	Clearly, $\col(v_i) = c_i$ for all $i \in [\ell]$.
	Let $j < i$ such that $\set{ x_j, x_i }[]$ is an edge in $G(\QOne)$.
	Since $(v_1, \dots, v_{\ell})$ is a partial output of $\QOne$
        over $\DOne$ of color $\bar{c}$, there exists a homomorphism $\nu$ from $\QOne$ to $\DOne$ such that $\nu(x_i) = v_i$ and $\nu(x_j) = v_j$.
	Notice that $x_j = \Parent(x_i)$ and $\col(v_i) = c_i$.
	Since $\nu$ is a homomorphism,
        Corollary~\ref{cor:homomorphisms}(2) yields that $v_i \in
        \N{v_j}{c_i}$ holds.
        In summary, this shows that $\bar{v}$ is consistent with
        $\bar{c}$. This completes the proof of Lemma~\ref{claim:appendix:main-lemma:consistency}.
\end{proof}

\lemmaMainLemma*
\begin{proof}~
	\begin{enumerate}[(a)]
		\item
		Since $(v_1, \dots, v_k) \in \sem{\QOne}(\DOne)$, there exists a homomorphism $\nu\colon \Vars(\QOne) \to \VOne$ from $\QOne$ to $\DOne$ such that $\nu(x_i) = v_i$ for all $i \in [k]$.
		Let $\mu\colon \Vars(\QOne) \to C$ with $\mu(x) = \col(\nu(x))$ for all $x \in \Vars(\QOne)$.
		Using Corollary~\ref{cor:homomorphisms}, we can apply Lemma~\ref{prop:appendix:main-lemma:hom_correspondence} to $\mu$ and $\nu$.
		Thus, $\mu$ is a homomorphism witnessing that $(\col(v_1), \dots, \col(v_k)) \in \sem{\QOne}(\ciD)$.
		\item
		For every $v_1 \in \Adom(\DOne)$ with $\col(v_1) = c_1$, the tuple $(v_1)$ is consistent with $\bar{c}$ and hence, by Lemma~\ref{claim:appendix:main-lemma:consistency}, it is a partial output of $\QOne$ over $\DOne$ of color $\bar{c}$.

		If $(v_1,\ldots,v_i)$ is a partial output of $\QOne$ over $\DOne$ of color $\bar{c}$ then, by Lemma~\ref{claim:appendix:main-lemma:consistency}, the tuple $(v_1,\ldots,v_i)$ is consistent with $\bar{c}$.
		Let $x_j = \Parent(x_{i{+}1})$.
		Since $(v_1,\ldots,v_i)$ is a partial output of $\QOne$ over $\DOne$ of color $\bar{c}$, we obtain from Lemma~\ref{claim:appendix:main-lemma:neighbors} that $\N{v_j}{c_{i{+}1}} \neq \emptyset$.
		From Lemma~\ref{app:lemma:neighbors} we obtain for every $v_{i{+}1} \in \N{v_j}{c_{i{+}1}}$ that the tuple $(v_1, \dots, v_i, v_{i{+}1})$ is consistent with $\bar{c}$; and applying Lemma~\ref{claim:appendix:main-lemma:consistency}, we obtain that $(v_1, \dots, v_i, v_{i{+}1})$ also is a partial output of $\QOne$ over $\DOne$ of color $\bar{c}$.
		\qedhere
	\end{enumerate}
\end{proof}

\subsection{Proof of Lemma~\ref{lemma:counting}}\label{app:proof:counting}
\lemmaCounting*
\begin{proof}
The proof proceeds by induction, bottom-up over the tree $T$.
\medskip

\noindent\emph{Base case}:\;
Let $v \in \VOne$, let $c=\col(v)$, and let $x$ be a leaf of $T$.
Then \ref{item:b:lemma:counting} trivially holds because $\Children(x) = \emptyset$.
To prove \ref{item:a:lemma:counting}, notice that $V(T_x) = \set{ x }[]$ and $E(T_x) = \emptyset$.
Thus, there is only one mapping $\nu\colon V(T_x) \to \VOne$ with
$\nu(x) = v$.
This mapping $\nu$ satisfies the condition formulated in (a) if and only if
$\vl(\nu(x)) \supseteq \lambda_x$, i.e., if and only if
$\vl(v)\supseteq\lambda_x$.
Since $\col(v)=c=\col(v_c)$, we obtain
from Observation~\ref{obs:appendix:main-lemma:unary_predicates} that
$\vl(v)=\vl(v_c)$.
Thus, the mapping $\nu$ satisfies the condition formulated in \ref{item:a:lemma:counting} iff
$\vl(v_c)\supseteq\lambda_x$ which, by definition, is the case iff
$f_1(c,x) = 1$ (and otherwise, $f_1(c,x)=0$).
Since $x$ is a leaf of $T$, we have $\fd(c, x) = f_1(c, x)$.
In summary, this proves that $\fd(c,x)$ is exactly the number of
mappings $\nu$ that satisfy the condition formulated in \ref{item:a:lemma:counting}.

\bigskip
\noindent\emph{Inductive step}:\;
Let $x \in V(T)$ be an inner node of $T$, let $c \in C$ and let $v \in \VOne$ with $\col(v) = c$.

\medskip
\noindent\emph{Induction hypothesis}:\;
For every $y \in \Children(x)$, every $d \in C$ and every $w \in \VOne$ with $\col(w) = d$,
the value $\fd(d, y)$ is the number of mappings $\nu\colon V(T_y) \to \VOne$ satisfying $\nu(y) = w$ and
\begin{enumerate}[(1)]
	\item
	for every $x' \in V(T_y)$ we have $\vl(\val(x')) \supseteq \lambda_{x'}$, and
	\item
	for every edge $\set{x',y'}[]$ in $T_y$ we have $\set{ \val(x'),\val(y') }[] \in \EOne$.
\end{enumerate}
\medskip

\noindent
We first show the following Claim~\ref{app:claim:lemma:counting:item:b}, which corresponds to the lemma's statement \ref{item:b:lemma:counting}.

\begin{claim}\label{app:claim:lemma:counting:item:b} 
	For every $y \in \Children(x)$, the value $g(c, y)$ is the number of mappings $\nu\colon \set{ x }[] \cup V(T_y) \to \VOne$ satisfying $\nu(x) = v$ and
	\begin{enumerate}[(1)]
		\item
		for every $x' \in V(T_y)$ we have $\vl(\val(x')) \supseteq \lambda_{x'}$, and
		\item
		for every edge $\set{x',y'}[]$ in $T_y$ we have $\set{ \val(x'),\val(y') }[] \in \EOne$, and
		\item
		we have $\set{ \val(x),\val(y) }[] \in \EOne$.\endClaim
	\end{enumerate}
\end{claim}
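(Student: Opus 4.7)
The plan is to prove Claim~\ref{app:claim:lemma:counting:item:b} by partitioning the set of mappings to be counted according to the color assigned to $y$. Fix $y \in \Children(x)$, and let $\mathcal{M}$ denote the set of all mappings $\nu\colon \set{x} \cup V(T_y) \to \VOne$ with $\nu(x) = v$ that satisfy conditions (1), (2), and (3). For every $d \in C$, let $\mathcal{M}_d \subseteq \mathcal{M}$ be the subset of those mappings with $\col(\nu(y)) = d$. Since every $\nu \in \mathcal{M}$ assigns some unique color to $y$, the sets $\mathcal{M}_d$ partition $\mathcal{M}$, and it suffices to show $|\mathcal{M}_d| = f_\downarrow(d,y) \cdot \numN{c}{d}$, so that summation over $d$ yields $|\mathcal{M}| = g(c,y)$ as required.

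To count $|\mathcal{M}_d|$, I would first observe that, given $\nu(x) = v$, condition~(3) demands $\set{v, \nu(y)} \in \EOne$, i.e., $\nu(y) \in \N{v}{d}$. The number of such vertices $w$ is $\numN{v}{d}$, and since $\col(v) = c$ and $\col$ is a stable coloring of $\GOne$ we have $\numN{v}{d} = \numN{c}{d}$. For each such fixed choice $\nu(y) = w \in \N{v}{d}$ (with $\col(w) = d$), the remaining task is to count the extensions $\nu \restriction V(T_y)$ that satisfy condition~(1) on $V(T_y)$ and condition~(2) on $E(T_y)$. By the induction hypothesis applied to the child $y$ and to the vertex $w$ of color $d$, this number equals exactly $f_\downarrow(d, y)$. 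Multiplying the two independent counts gives $|\mathcal{M}_d| = f_\downarrow(d,y) \cdot \numN{c}{d}$.

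Summing over all $d \in C$ and using the definition $g(c,y) = \sum_{c' \in C} f_\downarrow(c',y) \cdot \numN{c}{c'}$ then yields $|\mathcal{M}| = g(c,y)$, which completes the proof of the claim. The one subtle point to check carefully is that condition~(1) in the claim is imposed only on $V(T_y)$ and does not restrict $\nu(x)$, and similarly condition~(2) only constrains edges of $T_y$, so that conditions~(1) and~(2) are in bijective correspondence with the hypotheses of the induction hypothesis for the subtree rooted at $y$. The only coupling between $\nu(x)$ and the extension is through condition~(3), which is what reduces the counting to a sum weighted by $\numN{c}{d}$; this decoupling is what justifies the multiplicative decomposition used above and is the only place where the stability of $\col$ is needed.
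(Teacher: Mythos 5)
Your proposal is correct and follows essentially the same route as the paper's proof: condition~(3) forces $\nu(y)$ to be a neighbor of $v$, the choices are grouped by the color $d=\col(\nu(y))$ (giving $\numN{c}{d}$ options by stability of $\col$), the induction hypothesis counts the $f_{\downarrow}(d,y)$ extensions to $V(T_y)$, and summing over $d$ recovers the definition of $g(c,y)$. Your explicit partition into the sets $\mathcal{M}_d$ just makes the paper's informal counting argument slightly more formal; there is no substantive difference.
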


\begin{claimproof}
	Since every considered mapping $\nu$ has to map $x$ to $v$,
        every child $y \in \Children(x)$ has to be mapped to a $w$
        such that $\set{ v, w }[] \in \EOne$ according to (3), i.e.,
        we know that $y$ has to be mapped to a neighbor $w$ of $v$ in $\DOne$.
	For this, any $w \in \N{v}{c'}$ for any $c' \in C$ is a valid choice, i.e.,
	we have $|\N{v}{c'}| = \numN{c}{c'}$ choices for $w$ for all $c' \in C$.
	Once we have chosen a $w$ in this way and let $\nu(y) = w$, we can use the induction hypothesis to get the number of choices available to map
	the remaining variables such that $\nu(y) = w$ and (1) and (2) hold, which is $\fd(\col(w), y)$.
	Thus, we get that the total number of considered mappings $\nu$ is $\sum_{c' \in C}
        \fd(c', y) \cdot \numN{c}{c'}$, which is $g(c, y)$ by
        definition. This completes the proof of Claim~\ref{app:claim:lemma:counting:item:b}.
\end{claimproof}
\medskip

\noindent
Next, we use Claim~\ref{app:claim:lemma:counting:item:b} to prove the following Claim~\ref{app:claim:lemma:counting:item:a}, which corresponds to the lemma's statement~\ref{item:a:lemma:counting}.

\begin{claim}\label{app:claim:lemma:counting:item:a}
	$f_{\downarrow}(c,x)$ is the number of mappings $\val\colon V(T_x) \to \VOne$ satisfying $\val(x) = v$ and
	\begin{enumerate}[(1)]
		\item
		for every $x' \in V(T_x)$ we have $\vl(\val(x')) \supseteq \lambda_{x'}$, and
		\item
		for every edge $\set{x',y'}[]$ in $T_x$ we have $\set{ \val(x'),\val(y') }[] \in \EOne$.\endClaim
	\end{enumerate}
\end{claim}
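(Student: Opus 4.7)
The plan is to derive Claim~\ref{app:claim:lemma:counting:item:a} by decomposing the tree $T_x$ into its root $x$ together with the subtrees $T_y$ for $y\in\Children(x)$ and exploiting that these subtrees share no vertices and no edges. Concretely, $V(T_x) = \set{x}[] \cup \bigcup_{y\in\Children(x)} V(T_y)$ as a disjoint union, and $E(T_x) = \bigcup_{y\in\Children(x)}\bigl( E(T_y) \cup \set{\set{x,y}[]}[] \bigr)$ as a disjoint union. Hence any mapping $\nu\colon V(T_x)\to \VOne$ with $\nu(x)=v$ is uniquely determined by the collection of its restrictions $\nu_y \isdef \nu|_{\set{x}[]\cup V(T_y)}$ for $y\in\Children(x)$, and these restrictions can be chosen independently.

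First I would argue that conditions (1) and (2) of Claim~\ref{app:claim:lemma:counting:item:a} split along this decomposition. Condition (1) for $x'=x$ asks that $\vl(v)=\vl(\nu(x))\supseteq\lambda_x$; by Observation~\ref{obs:appendix:main-lemma:unary_predicates} together with $\col(v)=c=\col(v_c)$, this holds iff $\vl(v_c)\supseteq\lambda_x$, i.e., iff $f_1(c,x)=1$. If $f_1(c,x)=0$, then no valid $\nu$ exists and $f_\downarrow(c,x)=0$, matching the count. If $f_1(c,x)=1$, then the remaining requirements on $\nu$ split into, for each $y\in\Children(x)$, the conditions that $\vl(\nu(x'))\supseteq\lambda_{x'}$ for every $x'\in V(T_y)$, that $\set{\nu(x'),\nu(y')}[]\in\EOne$ for every edge $\set{x',y'}[]$ of $T_y$, and that $\set{\nu(x),\nu(y)}[]\in\EOne$ --- which are exactly the conditions (1), (2), (3) of Claim~\ref{app:claim:lemma:counting:item:b} for the restriction $\nu_y$.

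By Claim~\ref{app:claim:lemma:counting:item:b}, the number of admissible restrictions $\nu_y$ with $\nu_y(x)=v$ equals $g(c,y)$ for each $y\in\Children(x)$. Since the children's subtrees are vertex-disjoint away from $x$, the choices for different $y$ are independent, and multiplying over all $y\in\Children(x)$ and including the factor $f_1(c,x)$ for the root-node condition yields that the total number of valid $\nu$ is
\begin{equation*}
   f_1(c,x)\cdot \prod_{y\in\Children(x)} g(c,y) \;=\; f_\downarrow(c,x),
\end{equation*}
which is exactly the claimed equality.

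The main (very mild) obstacle is a careful bookkeeping step: making sure that the root-node condition $\vl(v)\supseteq\lambda_x$ is accounted for precisely once (by the factor $f_1(c,x)$) and that the edge conditions $\set{x,y}[]$ are accounted for exactly once each (by inclusion inside the respective $g(c,y)$, via condition (3) of Claim~\ref{app:claim:lemma:counting:item:b}), so that no edge or vertex condition is either double-counted or omitted when we multiply. Once this bookkeeping is in place, the combinatorial product principle and Claim~\ref{app:claim:lemma:counting:item:b} deliver the result, completing the inductive step and therefore the proof of Lemma~\ref{lemma:counting}.
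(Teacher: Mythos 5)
Your proof is correct and follows essentially the same route as the paper's: both arguments restrict a candidate mapping to $\set{x}[]\cup V(T_y)$ for each child $y$, observe that these restrictions satisfy exactly conditions (1)--(3) of Claim~\ref{app:claim:lemma:counting:item:b} and can be chosen independently since the subtrees $T_y$ meet only in $x$, and then multiply the counts $g(c,y)$ together with the factor $f_1(c,x)$ for the root condition. Your bookkeeping of where each vertex and edge condition is counted is accurate and matches the paper's argument.
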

\begin{claimproof}
	In the same way as in the base case, we can see that the
        number of considered mappings $\nu$ is 0 unless $f_1(c, x)$ is 1.
	If $f_1(c, x)$ is 1, then, according to~(2), a valid mapping must put every $y \in \Children(x)$ on a neighbor $w$ of $v$.

	Thus, every considered mapping must fulfill the requirements~(1)--(3) of Claim~\ref{app:claim:lemma:counting:item:b} for every $y \in \Children(x)$ if we restrict it to $\set{ x }[] \cup V(T_{y})$.
	On the other hand, the trees $T_y$, for $y\in\Children(x)$,
        only intersect in node $x$.
	Thus, we can combine every map $\nu'\colon \set{ x }[] \cup
        V(T_{y'}) \to \VOne$ of a child $y'$ that adheres to these
        requirements with every other map $\nu''\colon \set{ x }[]
        \cup V(T_{y''})$ of every other child $y''$ that adheres to
        these requirements, and in total we will get a map $\nu$ that
        satisfies the conditions (1) and (2) of Claim~\ref{app:claim:lemma:counting:item:a}.
	Using Claim~\ref{app:claim:lemma:counting:item:b}, this implies that we get a total of $\prod_{y \in \Children(x)} g(c,y)$ choices for $\nu$ if $f_1(c,x)$ is 1,
	and 0 choices if $f_1(c,x)$ is 0, which is precisely $\fd(c, x)$.
	This completes the proof of Claim~\ref{app:claim:lemma:counting:item:a}.
\end{claimproof}
\medskip

\noindent
In summary, this completes the proof of the inductive step.
Hence, the proof of Lemma~\ref{lemma:counting} is complete.
\end{proof}  
\section{Details Omitted in Section~\ref{sec:SizeOfIndex}}\label{app:size-of-index}

In this section we will use the notation from~\cite{ScheidtSchweikardt_MFCS25} and assume that the reader is familiar with it.
For all $\projection \in \Projections(\tD)$ let 
\begin{align*}
	N_{\projection} &\isdef \set[\big]{ \projectionAlt \in \Projections(\tD) \mid 
	\tset(\projectionAlt) \subseteq \tset(\projection)
         \text{ or }
	\tset(\projectionAlt) \supseteq \tset(\projection)
        }
        \quad\text{and}\\
	M_{\projection} &\isdef \set[\big]{ \tup{c} \in \tD \mid \projection \in \Projections(\tup{c}) }.
\end{align*}

To prove Theorem~\ref{thm:size-of-index} we show that the following coloring $g$ of $\dbBinary$ based on CR's coloring of $\mathcal{H}_{D}$ is stable.
Let $h$ be the stable coloring produced by CR on $\mathcal{H}_D$ and let $n_h$ be the number of colors $h$ uses.
From \cite[Theorem~3.6 and Section~3.2]{ScheidtSchweikardt_MFCS25} we know that $h$ restricted to $\set{ w_{\at} \mid \at \in \tD }$ is a stable coloring on $\mathcal{G}_D$.
We define $g$ as follows:
\begin{enumerate}
	\item For all $\at \in \tD$ let $g(w_{\at}) \isdef h(w_{\at})$.
	\item For all $\slice \in \slices(\tD)$ let $g(v_{\slice}) \isdef h(v_{\slice})$.
	\item For all $\projection \in \Projections(\tD) \setminus \slices(\tD)$ let
	\begin{align*}\SwapAboveDisplaySkip
		g(v_{\projection}) \;\;\isdef\;\; \Bigl(
			\stp(\projection),
			\mset[\big]{ 
				\bigl(
					\stp(\tup{c}, \projection),
					h(w_{\tup{c}})
				\bigr)
				\mid
				\tup{c} \in M_{\projection}
			 }
		\Bigr)\;.
	\end{align*}
\end{enumerate}
Let $n_g$ be the number of colors that $g$ uses.
The main goal of this section will be to show that $g$ is a stable
coloring of $\dbBinary$, i.e. to show the following:
\begin{lemma}\label{app:lemma:g-is-stable}
	~
	\begin{enumerate}
		\item 
		For all $\at, \bt \in \tD$ with $g(w_{\at}) = g(w_{\bt})$ it holds that
		\[
			\mset[\big]{ (\stp(\at, \projection), g(v_{\projection})) \mid \projection \in \Projections(\at) }
			=
			\mset[\big]{ (\stp(\bt, \projection), g(v_{\projection})) \mid \projection \in \Projections(\bt) }.
		\]

		\item
		For all $\slice, \sliceAlt \in \slices(\tD)$ with $g(v_{\slice}) = g(v_{\sliceAlt})$ 
		it holds that
		\begin{enumerate}
			\item
			$
				\mset[\big]{ (\stp(\tup{c}, \slice), g(w_{\tup{c}})) \mid \tup{c} \in M_{\slice} }
				=
				\mset[\big]{ (\stp(\tup{c}, \sliceAlt), g(w_{\tup{c}})) \mid \tup{c} \in M_{\sliceAlt} },
			$
			\quad and
			\item
			$
				\mset[\big]{ (\stp(\slice, \projection'), g(v_{\projection'})) \mid
					\projection' \in N_{\slice} }
				=
				\mset[\big]{ (\stp(\sliceAlt, \projection'), g(v_{\projection'})) \mid
					\projection' \in N_{\sliceAlt} }.
			$
		\end{enumerate}

		\item
		For all $\projection, \projectionAlt \in \Projections(\tD) \setminus \slices(\tD)$ with $g(v_{\projection}) = g(v_{\projectionAlt})$ 
		it holds that
		\begin{enumerate}
			\item 
			$
				\mset[\big]{ (\stp(\tup{c}, \projection), g(w_{\tup{c}})) \mid \tup{c} \in M_{\projection} }
				=
				\mset[\big]{ (\stp(\tup{c}, \projectionAlt), g(w_{\tup{c}})) \mid \tup{c} \in M_{\projectionAlt} },
			$ \quad and
			\item
			$
				\mset[\big]{ (\stp(\projection, \projection'), g(v_{\projection'})) \mid
					\projection' \in N_{\projection} }
				=
				\mset[\big]{ (\stp(\projectionAlt, \projection'), g(v_{\projection'})) \mid
					\projection' \in N_{\projectionAlt} }.
			$
		\end{enumerate}
	\end{enumerate}
\end{lemma}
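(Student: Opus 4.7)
The plan is to bootstrap from the stability of $h$ on $\mathcal{H}_D$ (available from \cite{ScheidtSchweikardt_MFCS25}) by observing that $g$ agrees with $h$ on the tuple-nodes $w_{\at}$ and on the slice-nodes $v_{\slice}$, while the new coloring on non-slice projections is tailored so as to encode exactly the information needed to translate stability statements from $\mathcal{H}_D$ to $\dbBinary$. I will treat the three statements separately, using a common reduction: every $F_{i,j}$-edge and every $E_{i,j}$-edge in $\dbBinary$ can be witnessed by the position-matching data encoded in the relevant $\stp(\cdot,\cdot)$ values, so the relevant multisets are controlled by the images under $\stp$ together with the $h$-colors of the $w$-nodes and slice-$v$-nodes that appear.

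For statement~1, fix $\at,\bt\in\tD$ with $g(w_{\at})=g(w_{\bt})$, hence $h(w_{\at})=h(w_{\bt})$. Stability of $h$ on $\mathcal{H}_D$ yields the multiset equality $\mset{(\stp(\at,\slice),h(v_{\slice}))\mid \slice\in\slices(\at)}=\mset{(\stp(\bt,\slice),h(v_{\slice}))\mid\slice\in\slices(\bt)}$. The key observation is that every $\projection\in\Projections(\at)$ is uniquely determined by $\at$ and $\stp(\at,\projection)$, and conversely $\stp(\at,\projection)$ already records the full combinatorial pattern of coincidences of entries, so from the slice-multiset one can reconstruct the multiset over all of $\Projections(\at)$. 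Combined with the explicit formula defining $g$ on non-slice projections (whose value depends only on $\stp(\projection)$ and on the multiset of $(\stp(\tup{c},\projection),h(w_{\tup{c}}))$), this yields the desired multiset equality. Statement~2(a) is immediate from stability of $h$ on slices. Statement~2(b) splits: for $\projection'\in N_{\slice}$ that is itself a slice we use stability of $h$ directly, and for non-slice $\projection'$ we use that $g(v_{\projection'})$ is uniquely determined by $\stp(\projection')$ and the multiset of $h$-colors of its $M_{\projection'}$-tuples, which in turn is reconstructible from the neighborhood data of $\slice$ under $h$.

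For statement~3, fix non-slice $\projection,\projectionAlt$ with $g(v_{\projection})=g(v_{\projectionAlt})$. Part~(a) is essentially the definition of $g$ on non-slice projections: the equality of the multisets $\mset{(\stp(\tup{c},\projection),h(w_{\tup{c}}))\mid\tup{c}\in M_{\projection}}$ is built into $g(v_{\projection})$, and $g(w_{\tup{c}})=h(w_{\tup{c}})$ on tuples. Part~(b) is the main obstacle: we must show that for every $\projection'\in N_{\projection}$ there is a uniquely-matched $\projectionAlt'\in N_{\projectionAlt}$ with the same $g$-color and step. My plan is to parameterize $N_{\projection}$ by pairs $(\tup{c},\sigma)$ where $\tup{c}\in M_{\projection}$ witnesses $\projection'$ as a projection of the same underlying tuple (which is always possible because the containment condition $\tset(\projection')\subseteq\tset(\projection)$ or $\tset(\projection)\subseteq\tset(\projection')$ forces the pair to be co-realized inside some $\tup{c}\in\tD$), and then use the multiset equality from part~(a) together with stability of $h$ on slices to obtain a color-and-step preserving bijection between the parameter sets for $\projection$ and $\projectionAlt$. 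The hard part will be ensuring that this bijection descends correctly to the projection nodes themselves — i.e.\ that two parameter pairs producing the same $\projection'$ on the $\projection$-side produce the same $\projectionAlt'$ on the $\projectionAlt$-side — which reduces to a careful argument that $\stp(\projection,\projection')$ together with $\stp(\tup{c},\projection)$ and $\stp(\tup{c},\projection')$ identify $\projection'$ uniquely among all projections of~$\tup{c}$.

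Once Lemma~\ref{app:lemma:g-is-stable} is established, stability of $g$ implies $n_g\geq n_\col$ on the binary graph $\dbBinary$, whence $|\Adom(\ciD)|=n_\col=O(n_g)=O(|\rcrcols|)$, yielding Theorem~\ref{thm:size-of-index}; the constant hidden in the $O(\cdot)$ accounts for the three classes of nodes (tuples, slices, non-slice projections) and depends only on $\sigma$.
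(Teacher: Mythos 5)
Your proposal correctly identifies the easy ingredients (parts 2(a) and 3(a) follow from stability of $h$ on slices and from the definition of $g$ on non-slice projections, respectively, and $\stp(\at,\projection)$ does identify $\projection$ uniquely among $\Projections(\at)$), but the two genuinely hard steps are waved away. First, for part~1: exhibiting a step-preserving bijection $\pproj\colon\Projections(\at)\to\Projections(\bt)$ is not enough; you must also show $g(v_{\projection})=g(v_{\pproj(\projection)})$ for \emph{non-slice} $\projection$, i.e.\ that the multisets $\mset{(\stp(\ct,\projection),h(w_{\ct}))\mid \ct\in M_{\projection}}$ and $\mset{(\stp(\ct,\pproj(\projection)),h(w_{\ct}))\mid \ct\in M_{\pproj(\projection)}}$ coincide. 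These multisets range over \emph{all} tuples of $\tD$ realizing $\projection$, not just over $\at$, so this does not follow from "the formula defining $g$ depends only on $\stp(\projection)$ and that multiset" --- that is just restating the definition. The paper closes this gap with a dedicated reduction (its Lemma~\ref{app:lemma:projections-dont-matter}): if the tuple-multisets differed for $\projection$ and $\pproj(\projection)$, they would already differ for the slices $\slice,\pproj(\slice)$ with $\tset(\slice)=\tset(\projection)$, contradicting stability of $h$. Second, and more seriously, part~2(b) is the crux of the whole lemma and your one-sentence justification ("reconstructible from the neighborhood data of $\slice$ under $h$") does not work as stated: the neighborhood of $v_{\slice}$ in $\mathcal{H}_D$ contains only tuple-nodes and slice-nodes, whereas $N_{\slice}$ contains non-slice projections, including projections $\projection'$ with $\tset(\projection')\supset\tset(\slice)$. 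Counting these for a fixed step $\lambda$ and color $c$ requires an indirect double-counting argument through the tuples realizing them (the paper shows that the sets $A_{\lambda',d}(\projection')$ of realizing tuples are pairwise disjoint across distinct $\projection'$ with the same step and color, all have a common cardinality $\ell_{\lambda',d}$, and their union is determined by the $h$-neighborhood of $\slice$, so the count is recovered by division), plus a separate uniqueness argument when $\tset(\projection')\subseteq\tset(\slice)$. Nothing in your sketch supplies this mechanism.

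For part~3(b) your proposed parameterization of $N_{\projection}$ by pairs $(\ct,\sigma)$ with $\ct\in M_{\projection}$ co-realizing $\projection'$ is not always available: if $\projection=(a,b)$ is realized only in $\ct=(a,b,c)$ and $\projection'=(a,a)$ is realized only in some other tuple $(a,a)$, then $\projection'\in N_{\projection}$ (since $\tset(\projection')\subseteq\tset(\projection)$) but no element of $M_{\projection}$ has $\projection'$ as a projection, so the claimed "co-realization" fails. This route is also unnecessarily hard: since $N_{\projection}$ depends only on $\tset(\projection)$, one has $N_{\projection}=N_{\slice}$ for the slice $\slice$ with $\tset(\slice)=\tset(\projection)$, and after checking $\stp(\projection,\slice)=\stp(\projectionAlt,\sliceAlt)$ part~3(b) reduces directly to part~2(b), which is how the paper proceeds. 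In summary, the overall architecture (bootstrap from stability of $h$, treat slices via $h$, treat non-slice projections via their definition) matches the paper, but the proposal is missing the two key lemmas that make the argument go through, so as it stands it is not a proof.
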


Since $h$ and $g$ restricted to $\set{ w_{\at} \mid \at \in \tD }$ are
equivalent to the coloring that RCR produces on $D$, it is easy to
see that $n_h \in \bigOh(|\rcrcols|)$, and since $g$ is stable on $\dbBinary$, $n_g
\in \bigOh(|\rcrcols|)$ ---
to see that this is true, note that for any two tuples $\at$, $\bt$
with $g(\wat) = g(\wbt)$ it must hold that $\Projections(\at)$ and
$\Projections(\bt)$ are colored in the same way. Since
$|\Projections(\at)| \leq 2^{\bigOh(k \log k)}$ for all $\at \in \tD$,
this yields $n_g \leq |\rcrcols| + |\rcrcols| \cdot 2^{\bigOh(k \log
  k)}$. Since $k=\ar(\sigma)$ and $\sigma$ is fixed, the factor
$2^{\bigOh(k \log k)}$ is assumed to be constant and we obtain that
$n_g = \bigOh(|\rcrcols|)$.
Finally, by using standard methods (check
e.g.~\cite{BBG-ColorRefinement}), $g$ can be transformed into a stable
coloring of $\dbSimple$ with $O(n_g)$ colors. Thus, since running CR on $\dbSimple$
produces a coarsest stable coloring of $\dbSimple$, this proves
Theorem~\ref{thm:size-of-index}. 
Therefore, all that remains to be done in order to prove
Theorem~\ref{thm:size-of-index} is to prove Lemma~\ref{app:lemma:g-is-stable}.
The remainder of this section is devoted to the proof of Lemma~\ref{app:lemma:g-is-stable}.

\medskip\noindent
We know the following since $h$ is the coloring produced by CR on $\mathcal{H}_D$:
\begin{fact}\label{app:fact:h-encodes-stp}
	For all $\at, \bt \in \tD$ with $h(w_{\at}) = h(w_{\bt})$ it holds that $\stp(\at) = \stp(\bt)$; and for all $\slice, \sliceAlt \in \slices(\tD)$ with $h(v_{\slice}) = h(v_{\sliceAlt})$ it holds that $\stp(\slice) = \stp(\sliceAlt)$.
\end{fact}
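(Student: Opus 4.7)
The plan is to derive the Fact from the standard observation that color refinement always produces a coloring that refines its initial coloring, together with the fact that the initial coloring of $\mathcal{H}_D$ already encodes enough information about each tuple-node and each slice-node to determine its structural type. Thus no ``refinement steps'' of CR are actually needed to separate tuples or slices of different structural type; the separation happens already at initialization.

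More concretely, I would recall from~\cite{ScheidtSchweikardt_MFCS25} the definition of $\mathcal{H}_D$: its vertex set consists of the tuple-nodes $w_{\at}$ (for $\at \in \tD$) and the slice-nodes $v_{\slice}$ (for $\slice \in \slices(\tD)$), and the initial vertex-labeling $\ell_0$ assigns to each $w_{\at}$ a label that records at least the relation symbol $R$ with $\at \in R^D$ together with the equality pattern among the components of $\at$ (i.e., which positions carry the same value), and assigns to each $v_{\slice}$ a label that records $\ar(\slice)$ together with its (trivial) equality pattern. By construction, two tuples $\at,\bt \in \tD$ satisfy $\ell_0(w_{\at}) = \ell_0(w_{\bt})$ if and only if $\stp(\at) = \stp(\bt)$, and analogously two slices $\slice,\sliceAlt$ satisfy $\ell_0(v_{\slice}) = \ell_0(v_{\sliceAlt})$ if and only if $\stp(\slice) = \stp(\sliceAlt)$.

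Now, CR produces a coarsest stable coloring that refines $\ell_0$. In particular, $h$ refines $\ell_0$, so $h(w_{\at}) = h(w_{\bt})$ implies $\ell_0(w_{\at}) = \ell_0(w_{\bt})$, which by the above yields $\stp(\at) = \stp(\bt)$. The same argument applied to slice-nodes yields the corresponding statement for $\slice, \sliceAlt \in \slices(\tD)$. This completes the proof.

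The only real obstacle is a bookkeeping one: confirming from the precise definition of $\mathcal{H}_D$ and of $\stp$ used in \cite{ScheidtSchweikardt_MFCS25} that the initial labels already contain all the information that $\stp$ refers to (relation symbol, arity, and equality pattern for tuples; arity and equality pattern for slices). Once that is verified, the rest is the one-line refinement argument above.
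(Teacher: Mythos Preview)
Your proposal is correct and matches the paper's approach: the paper does not give a detailed proof but simply asserts the Fact with the one-line justification ``since $h$ is the coloring produced by CR on $\mathcal{H}_D$'', which is exactly the refinement-of-initial-labeling argument you spell out. Your version is just a more explicit unpacking of that same reasoning, with the honest caveat that one must check against the precise definitions of $\mathcal{H}_D$ and $\stp$ in~\cite{ScheidtSchweikardt_MFCS25} that the initial labels already determine $\stp$.
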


\smallskip\noindent
We know the following since $h$ is stable on $\mathcal{G}_D$ and $\mathcal{H}_D$:
\begin{fact}\label{app:fact:beta-ab}
	For all $\at, \bt \in \tD$ with $h(w_{\at}) = h(w_{\bt})$ the following is true:
	\begin{enumerate}
		\item\label{item:GD:app:fact:beta-ab} 
		there exists a bijection $\beta_{\at, \bt}\colon \set{ \tup{c} \in \tD \mid \stp(\at, \tup{c}) \neq \emptyset } \to \set{ \tup{d} \in \tD \mid \stp(\bt, \tup{d}) \neq \emptyset }$ such that for all $\tup{c} \in \tD$ with $\stp(\at, \tup{c}) \neq \emptyset$ and $\tup{d} \isdef \beta_{\at, \bt}(\tup{c})$ it holds that
		\begin{enumerate}
			\item $h(w_{\tup{c}}) = h(w_{\tup{d}})$ and
			\item $\stp(\at, \tup{c}) = \stp(\bt, \tup{d})$;
		\end{enumerate}

		\item\label{item:HD:app:fact:beta-ab}
		and there exists a bijection $\tilde{\beta}_{\at, \bt}\colon \slices(\at) \to \slices(\bt)$ such that for all $\slice \in \slices(\at)$ with $\sliceAlt \isdef \tilde{\beta}_{\at, \bt}(\slice)$ it holds that
		\begin{enumerate}
			\item $h(v_{\slice}) = h(v_{\sliceAlt})$ and
			\item $\stp(\at, \slice) = \stp(\bt, \sliceAlt)$.
		\end{enumerate}
	\end{enumerate}
\end{fact}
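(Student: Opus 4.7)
The plan is to derive both parts from the definition of $\mathcal{G}_D$ and $\mathcal{H}_D$ in~\cite{ScheidtSchweikardt_MFCS25} together with the fact that $h$ is a stable coloring of these edge-labeled structures. Recall that in $\mathcal{G}_D$ the $w$-nodes of $\tD$ are connected pairwise by edges labeled with structural-position types $\stp(\at,\tup{c})$ (present exactly when $\stp(\at,\tup{c})\neq\emptyset$), and that in $\mathcal{H}_D$ each $w_\at$ is linked to $v_\slice$ for all $\slice\in\slices(\at)$ via an edge labeled $\stp(\at,\slice)$. Stability of $h$ on such an edge-labeled structure means that for any two nodes $u,u'$ with $h(u)=h(u')$ and any pair $(\sigma,c')$ consisting of an edge label $\sigma$ and a color $c'$, the number of neighbors of $u$ reached via a $\sigma$-edge and having color $c'$ equals the corresponding count at $u'$.

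For Part~\ref{item:GD:app:fact:beta-ab}, I would apply this stability property in $\mathcal{G}_D$ to $w_\at$ and $w_\bt$. Letting
\[
  A \isdef \set{\tup{c}\in\tD\mid \stp(\at,\tup{c})\neq\emptyset},
  \qquad
  B \isdef \set{\tup{d}\in\tD\mid \stp(\bt,\tup{d})\neq\emptyset},
\]
stability yields that the multisets
\(
  \mset{(\stp(\at,\tup{c}),\,h(w_{\tup{c}}))\mid \tup{c}\in A}
\)
and
\(
  \mset{(\stp(\bt,\tup{d}),\,h(w_{\tup{d}}))\mid \tup{d}\in B}
\)
coincide. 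From an identity of multisets one obtains in the standard way a bijection $\beta_{\at,\bt}\colon A\to B$ mapping each $\tup{c}$ to some $\tup{d}$ with $\stp(\at,\tup{c})=\stp(\bt,\tup{d})$ and $h(w_{\tup{c}})=h(w_{\tup{d}})$: group the elements of $A$ by their label-color pair, do the same for $B$, note that the groups for the same pair have equal cardinality, and fix an arbitrary bijection within each group.

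For Part~\ref{item:HD:app:fact:beta-ab}, I would run the same argument in $\mathcal{H}_D$: stability of $h$ at $w_\at$ and $w_\bt$ gives that
\(
  \mset{(\stp(\at,\slice),\,h(v_{\slice}))\mid \slice\in\slices(\at)}
  =
  \mset{(\stp(\bt,\sliceAlt),\,h(v_{\sliceAlt}))\mid \sliceAlt\in\slices(\bt)},
\)
and the same grouping argument produces $\tilde{\beta}_{\at,\bt}$.

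The only subtlety, and what I consider the main thing to get right, is to cite or verify the precise form of stability used in~\cite{ScheidtSchweikardt_MFCS25}: namely that $h$ refines not only node labels but also \emph{edge-label multisets}, so that the split of neighborhoods by the label $\stp(\at,\cdot)$ is respected. Once that is in place, both bijections follow by a single pigeonhole-style pairing; no induction on iterations of CR is needed, since we are working with the already-stable coloring $h$. The rest of the proof is purely bookkeeping and can be done in a few lines after the stability statement is quoted precisely.
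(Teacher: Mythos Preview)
Your proposal is correct and matches the paper's own justification: the paper simply states this as a ``Fact'' prefaced by ``We know the following since $h$ is stable on $\mathcal{G}_D$ and $\mathcal{H}_D$,'' without spelling out any further details. Your argument---equality of the multisets of (edge-label, neighbor-color) pairs from stability, followed by a group-wise pairing to obtain the bijection---is exactly the intended reasoning, and your identification of the one point requiring care (that stability in \cite{ScheidtSchweikardt_MFCS25} is with respect to edge-labeled neighborhoods) is apt.
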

\begin{fact}\label{app:fact:beta-st}
	For all $\slice, \sliceAlt \in \slices(\tD)$ with $h(v_{\slice}) = h(v_{\sliceAlt})$ there is a bijection $\tilde{\beta}_{\slice, \sliceAlt}\colon M_{\slice} \to M_{\sliceAlt}$ such that for all $\tup{c} \in M_{\slice}$ with $\tup{d} \isdef \tilde{\beta}_{\slice, \sliceAlt}(\tup{c})$ it holds that
	\begin{enumerate}
		\item $h(w_{\tup{c}}) = h(w_{\tup{d}})$ and
		\item $\stp(\tup{c}, \slice) = \stp(\tup{d}, \sliceAlt)$.
	\end{enumerate}
\end{fact}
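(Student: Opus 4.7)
\textbf{Proof plan for Fact~\ref{app:fact:beta-st}.}
The plan is to derive the bijection directly from the stability of $h$ on the binary structure $\mathcal{H}_D$ of \cite{ScheidtSchweikardt_MFCS25}, by identifying the set $M_{\slice}$ with the multiset of tuple-nodes in a color-and-type-refined neighborhood of $v_{\slice}$. First I would recall from \cite[Definition~3.4]{ScheidtSchweikardt_MFCS25} how $\mathcal{H}_D$ is defined: its nodes are the tuples $w_{\tup{c}}$ for $\tup{c}\in\tD$ together with the slice-nodes $v_{\slice}$ for $\slice\in\slices(\tD)$, and it carries edges (equipped with labels that encode the index pattern $\stp(\tup{c},\slice)$) between $w_{\tup{c}}$ and $v_{\slice}$ whenever $\slice$ is a projection of $\tup{c}$. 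In particular, under this definition the set of tuple-neighbors of $v_{\slice}$ in $\mathcal{H}_D$ is exactly $\{\,w_{\tup{c}} : \tup{c}\in M_{\slice}\,\}$, and the label on the edge between $v_{\slice}$ and $w_{\tup{c}}$ records precisely the information $\stp(\tup{c},\slice)$.

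Second, I would invoke the stability of $h$ on $\mathcal{H}_D$: since $h(v_{\slice}) = h(v_{\sliceAlt})$, for every pair $(\tau,\kappa)$ consisting of an edge label $\tau$ and a color $\kappa$ we have
\[
   \bigl|\{\tup{c}\in M_{\slice} \mid \stp(\tup{c},\slice)=\tau,\ h(w_{\tup{c}})=\kappa\}\bigr|
   \;=\;
   \bigl|\{\tup{d}\in M_{\sliceAlt} \mid \stp(\tup{d},\sliceAlt)=\tau,\ h(w_{\tup{d}})=\kappa\}\bigr|.
\]
In other words, the multisets
\(\mset{(\stp(\tup{c},\slice),h(w_{\tup{c}})) \mid \tup{c}\in M_{\slice}}\)
and
\(\mset{(\stp(\tup{d},\sliceAlt),h(w_{\tup{d}})) \mid \tup{d}\in M_{\sliceAlt}}\)
coincide. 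From equality of multisets, we obtain a bijection $\tilde{\beta}_{\slice,\sliceAlt}\colon M_{\slice}\to M_{\sliceAlt}$ by pairing elements with the same type-color pair in any fixed way (e.g.\ lexicographically within each class); by construction, $\tup{d}\deff\tilde{\beta}_{\slice,\sliceAlt}(\tup{c})$ then satisfies $h(w_{\tup{c}})=h(w_{\tup{d}})$ and $\stp(\tup{c},\slice)=\stp(\tup{d},\sliceAlt)$, which is exactly what the fact requires.

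The main obstacle I expect is purely bookkeeping: making sure that the edge-label we read off an edge $\{v_{\slice},w_{\tup{c}}\}$ in $\mathcal{H}_D$ is indeed $\stp(\tup{c},\slice)$ (and not some related, reindexed object), so that stability of the CR-coloring $h$ on $\mathcal{H}_D$ really does refine with respect to the same step-type function used throughout this appendix. Once that identification is nailed down by citing the precise edge-labeling convention of \cite[Definition~3.4]{ScheidtSchweikardt_MFCS25}, the rest is an immediate consequence of CR-stability and the standard fact that equality of multisets yields a type-preserving bijection. No induction or iteration is needed, since $h$ is already stable on $\mathcal{H}_D$ by Theorem~\ref{thm:ColorRefinement}; the fact is essentially the definition of stability, read in the right direction.
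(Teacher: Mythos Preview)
Your proposal is correct and matches the paper's approach exactly: the paper states this as a \emph{Fact} introduced by the sentence ``We know the following since $h$ is stable on $\mathcal{G}_D$ and $\mathcal{H}_D$,'' and gives no further argument. Your proposal simply unpacks what that sentence means---identifying $M_{\slice}$ with the tuple-neighbors of $v_{\slice}$ in $\mathcal{H}_D$, reading $\stp(\tup{c},\slice)$ off the edge label, and extracting the bijection from the multiset equality that CR-stability guarantees---which is precisely the intended justification.
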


\smallskip\noindent
With these facts we are already able to prove parts (2a) and (3a) of Lemma~\ref{app:lemma:g-is-stable}: %
\begin{proposition}\label{app:prop:e-neighbors-of-projections}
	\begin{enumerate}[(a)]
		\item
		For all $\slice, \sliceAlt \in \slices(\tD)$ with $g(v_{\slice}) = g(v_{\sliceAlt})$ 
		it holds that
		\begin{equation*}
				\mset[\big]{ (\stp(\tup{c}, \slice), g(w_{\tup{c}})) \mid \tup{c} \in M_{\slice} }
				=
				\mset[\big]{ (\stp(\tup{c}, \sliceAlt), g(w_{\tup{c}})) \mid \tup{c} \in M_{\sliceAlt} }.
		\end{equation*}

		\item
		For all $\projection, \projectionAlt \in \Projections(\tD) \setminus \slices(\tD)$ with $g(v_{\projection}) = g(v_{\projectionAlt})$ 
		it holds that
		\begin{equation*}
				\mset[\big]{ (\stp(\tup{c}, \projection), g(w_{\tup{c}})) \mid \tup{c} \in M_{\projection} }
				=
				\mset[\big]{ (\stp(\tup{c}, \projectionAlt), g(w_{\tup{c}})) \mid \tup{c} \in M_{\projectionAlt} }.\endProposition
		\end{equation*}
	\end{enumerate}
\end{proposition}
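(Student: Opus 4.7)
The plan is to deduce both parts directly from the definition of $g$, relying only on the first clause of that definition (which gives $g(w_{\tup{c}}) = h(w_{\tup{c}})$ for every $\tup{c} \in \tD$) together with, for part (a), the stability of $h$ on $\mathcal{H}_D$ that is already encoded in Fact~\ref{app:fact:beta-st}. Throughout, I will freely substitute $g(w_{\tup{c}})$ for $h(w_{\tup{c}})$ in multisets indexed by tuple nodes, since the two coincide by construction.

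For part (b) I would simply unpack the third clause of the definition of $g$: the second component of $g(v_{\projection})$ is \emph{by construction} the multiset $\mset{(\stp(\tup{c}, \projection), h(w_{\tup{c}})) \mid \tup{c} \in M_{\projection}}$, and analogously for $\projectionAlt$. Hence the hypothesis $g(v_{\projection}) = g(v_{\projectionAlt})$ forces equality of these two multisets. Replacing each $h(w_{\tup{c}})$ by $g(w_{\tup{c}})$ yields the claim. So part (b) is essentially a tautology: the definition of $g$ on non-slice projection nodes was precisely engineered to store the required information.

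For part (a) the work is slightly less direct, since the second clause of the definition of $g$ only decrees that $g$ agrees with $h$ on slice nodes. From $g(v_{\slice}) = g(v_{\sliceAlt})$ we therefore obtain $h(v_{\slice}) = h(v_{\sliceAlt})$. At this point I would invoke Fact~\ref{app:fact:beta-st} to obtain a bijection $\tilde{\beta}_{\slice, \sliceAlt}\colon M_{\slice} \to M_{\sliceAlt}$ under which every $\tup{c} \in M_{\slice}$ is sent to some $\tup{d} \in M_{\sliceAlt}$ with $h(w_{\tup{c}}) = h(w_{\tup{d}})$ and $\stp(\tup{c}, \slice) = \stp(\tup{d}, \sliceAlt)$. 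Again substituting $g(w_{\tup{c}}) = h(w_{\tup{c}})$, this bijection immediately witnesses the required equality of multisets.

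I expect no genuine obstacle: part (b) is a restatement of the design of $g$ on non-slice projection nodes, and part (a) is a one-line reduction to Fact~\ref{app:fact:beta-st}. The only thing that requires a moment of care is the bookkeeping between $h$-colors and $g$-colors on the tuple nodes $w_{\tup{c}}$, but since they coincide by the first clause of the definition of $g$, this is a purely mechanical substitution.
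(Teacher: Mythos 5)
Your proposal is correct and matches the paper's proof: part (a) is reduced, via $g=h$ on slice and tuple nodes, to the bijection of Fact~\ref{app:fact:beta-st}, and part (b) is read off directly from the third clause of the definition of $g$. The paper states both steps in one line each; your version merely spells out the same substitutions explicitly.
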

\begin{proof}
	\begin{enumerate}[(a)]
		\item This follows from Fact~\ref{app:fact:beta-st}
                  since, by definition, $g(v_{\bar{x}})=h(v_{\bar{x}})$
                  for all $\bar{x}\in \tD\cup\slices(\tD)$.
		\item This follows directly from the definition of $g$
                  on $\Projections(\tD)\setminus\slices(\tD)$.\qedhere
	\end{enumerate}
\end{proof}

\medskip\noindent
To prove the rest of Lemma~\ref{app:lemma:g-is-stable}, we need a bunch of technical lemmas:
\begin{lemma}\label{app:lemma:stp-identifies-proj}
	For all $\at \in \tD$ and all $\projection, \projection' \in \Projections(\at)$ it holds that:\; $\stp(\at, \projection) = \stp(\at, \projection') \iff \projection = \projection'$.
\end{lemma}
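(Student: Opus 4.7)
My plan is to reduce the lemma to the definitional content of $\stp$. The direction $\Leftarrow$ is immediate, so all the substance lies in $\Rightarrow$, and I would approach it by showing that $\stp(\at, \projection)$, together with knowledge of $\at$ and the fact that $\projection \in \Projections(\at)$, already determines $\projection$ entry by entry.

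First I would unfold what $\stp(\at, \projection)$ encodes, following the convention of \cite{ScheidtSchweikardt_MFCS25}: among other things, the stamp records the arity of $\projection$ and, for every pair of positions $(i, \ell) \in [\ar(\at)] \times [\ar(\projection)]$, whether $\proj_i(\at) = \proj_\ell(\projection)$. Hence an equality of stamps $\stp(\at, \projection) = \stp(\at, \projection')$ immediately forces $\projection$ and $\projection'$ to have the same arity $m$ and to exhibit exactly the same cross-equality pattern with~$\at$.

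The key step is then to exploit $\projection \in \Projections(\at)$: by definition, every entry of $\projection$ coincides with some entry of $\at$, so for each $\ell \in [m]$ there is an $i \in [\ar(\at)]$ with $\proj_i(\at) = \proj_\ell(\projection)$. The equality of stamps transports this marker to $\projection'$, yielding $\proj_i(\at) = \proj_\ell(\projection')$, and hence $\proj_\ell(\projection) = \proj_\ell(\projection')$. Ranging over $\ell$ gives $\projection = \projection'$. I do not anticipate a real obstacle here; the only delicate point is confirming that the features of $\stp$ I invoke (the arity and the full cross-equality pattern between $\at$ and the projection) are genuinely captured by the definition in \cite{ScheidtSchweikardt_MFCS25}, which makes this essentially a definition-unfolding argument.
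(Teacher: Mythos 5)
Your proposal is correct and follows essentially the same route as the paper: for each position $\ell$ of $\projection$ pick a witness $i$ with $\proj_i(\at)=\proj_\ell(\projection)$, so $(i,\ell)$ lies in the common stamp and forces $\proj_\ell(\projection')=\proj_i(\at)=\proj_\ell(\projection)$. The only point you assume rather than prove is that equal stamps force equal arities; since $\stp(\at,\projection)$ is just the set of matching position pairs, the paper derives this by applying your witness argument in both directions (every $j\leq\ar(\projection)$ has a witness pair, hence $\ar(\projection')\geq\ar(\projection)$, and symmetrically), which is exactly the patch you anticipate in your final remark.
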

\begin{proof}
	\enquote{$\Longleftarrow$} is obvious.
	For \enquote{$\Longrightarrow$} let $(a_1, \dots, a_k) = \at$, $(p_1, \dots, p_{\ell}) = \projection$, $(p'_1, \ldots, p'_{\ell'}) = \projection'$ and assume that $\stp(\at, \projection) = \stp(\at, \projection')$.
	For every $j \in [\ell]$ there exists an $i \in [k]$ such that $(i, j) \in \stp(\at,\projection) = \stp(\at,\projection')$, and hence $p_j = a_i = p'_j$.
	Thus, $\ell' \geq \ell$ and $p'_j = p_j$ for all $j \leq \ell$.
	Furthermore, in particular for $j = \ell'$ we know that there is an $i \in [k]$ such that $(i, \ell') \in \stp(\at, \projection')$, i.e., $(i, \ell') \in \stp(\at, \projection)$ must hold as well.
	This implies that $\ell \geq \ell'$, and hence $\ell = \ell'$ and $\projection = \projection'$.
\end{proof}

\begin{lemma}\label{app:lemma:bijection}
	For all $\at, \bt \in \tD$ we have: \ $\stp(\at) = \stp(\bt)$ iff there is a bijection $\pproj\colon \Projections(\at) \to \Projections(\bt)$ such that for all $\projection \in \Projections(\at)$ we have $\stp(\at, \projection) = \stp(\bt, \pproj(\projection))$.
\end{lemma}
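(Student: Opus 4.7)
The plan is to prove the two implications separately, relying on the standard interpretation of $\stp(\at)$ as encoding the \emph{equality pattern} of $\at$, so that $\stp(\at) = \stp(\bt)$ is equivalent to $\ar(\at) = \ar(\bt) =: r$ together with $a_i = a_{i'} \iff b_i = b_{i'}$ for all $i, i' \in [r]$; this is also equivalent to $\stp(\at, \at) = \stp(\bt, \bt)$, which is the form I will actually manipulate.

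For the direction \enquote{$\Longrightarrow$}, my plan is to define $\pproj$ by transporting index selections: given $\projection \in \Projections(\at)$, pick any tuple $\vec{j} = (j_1, \ldots, j_m)$ of pairwise distinct elements of $[r]$ with $\projection = \projgen_{\vec{j}}(\at)$, and set $\pproj(\projection) \deff \projgen_{\vec{j}}(\bt)$. Well-definedness is immediate from the common equality pattern: any two such index tuples $\vec{j}, \vec{j'}$ yielding the same projection of $\at$ satisfy $a_{j_k} = a_{j'_k}$ for every $k$, hence $b_{j_k} = b_{j'_k}$ for every $k$, hence $\projgen_{\vec{j}}(\bt) = \projgen_{\vec{j'}}(\bt)$. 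The symmetric construction from $\bt$ to $\at$ yields the inverse map, so $\pproj$ is a bijection. Finally, I would verify $\stp(\at, \projection) = \stp(\bt, \pproj(\projection))$ by a direct computation: for $\projection = \projgen_{\vec{j}}(\at)$, a pair $(i, k)$ lies in $\stp(\at, \projection)$ iff $a_i = a_{j_k}$, iff (by the common equality pattern) $b_i = b_{j_k}$, iff $(i, k) \in \stp(\bt, \pproj(\projection))$.

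For the direction \enquote{$\Longleftarrow$}, the plan is to evaluate $\pproj$ at the \enquote{identity} projection $\at \in \Projections(\at)$ and set $\projectionAlt \deff \pproj(\at)$, so that $\stp(\at, \at) = \stp(\bt, \projectionAlt)$. From the presence of all diagonal pairs $(i, i)$ with $i \in [r]$ on the left, inspecting the sets of first and second coordinates that appear on the right-hand side forces $\ar(\projectionAlt) = r$ and $\projectionAlt = (b_1, \ldots, b_r)$, which in particular gives $r \leq \ar(\bt)$. Applying the same argument to $\pproj^{-1}(\bt) \in \Projections(\at)$ yields $\ar(\bt) \leq r$, so $\ar(\at) = \ar(\bt)$ and $\projectionAlt = \bt$; hence $\stp(\at) = \stp(\at, \at) = \stp(\bt, \bt) = \stp(\bt)$.

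The only technical work is the well-definedness and bijectivity of $\pproj$ in the forward direction, but since both reduce cleanly to preservation of the equality pattern, no genuine obstacle is anticipated.
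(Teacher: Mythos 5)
Your proposal is correct. The forward direction is essentially the paper's argument: the paper invokes \cite[Lemma~3.7(b)]{ScheidtSchweikardt_MFCS25} to get that the map $a_i \mapsto b_i$ is a well-defined bijection $\tset(\at)\to\tset(\bt)$ and then applies it componentwise to each projection, which is the same map as your index-transport $\projgen_{\vec{j}}(\at)\mapsto\projgen_{\vec{j}}(\bt)$; you merely prove the well-definedness inline from the shared equality pattern instead of citing it. The backward direction is where you genuinely diverge. The paper tests the bijection on full-support \emph{slices}: it pulls back a slice $\sliceAlt'$ of $\bt$ with $\tset(\sliceAlt')=\tset(\bt)$ to get $\ar(\at)\geq\ar(\bt)$, pushes forward a slice $\slice$ of $\at$ with $\tset(\slice)=\tset(\at)$ to get the reverse inequality, and then runs a separate chain of equivalences through the indices $j_i$ of that slice to transfer the equality pattern. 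You instead test $\pproj$ on the identity projection $\at\in\Projections(\at)$: the diagonal pairs in $\stp(\at,\at)$ force $\pproj(\at)=(b_1,\ldots,b_r)$ and $r\leq\ar(\bt)$, the symmetric argument on $\pproj^{-1}(\bt)$ gives $\ar(\bt)\leq r$, and then the conclusion collapses to $\stp(\at)=\stp(\at,\at)=\stp(\bt,\bt)=\stp(\bt)$ with no further work, since $\stp(\at,\at)$ literally \emph{is} the equality pattern $\stp(\at)$. This buys a noticeably shorter argument that avoids any translation between a slice's shared type and the tuple's equality pattern; the one place to be a little more explicit when writing it up is the claim that $\ar(\pproj(\at))\leq r$, which follows because every column index of a projection of $\bt$ must occur as a second coordinate in $\stp(\bt,\pproj(\at))=\stp(\at,\at)$, whose second coordinates are exactly $[r]$.
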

\begin{proof}
	Let $\at = (a_1, \dots, a_k)$ and $\bt = (b_1, \dots, b_\ell)$ for $k,\ell \in \NNpos$.
	
	\enquote{$\Longleftarrow$}:\;
	By assumption there is a bijection $\pproj\colon \Projections(\at) \to \Projections(\bt)$ such that $\stp(\at, \projection) = \stp(\bt, \pproj(\projection))$ holds for all $\projection \in \Projections(\at)$.

	Consider an arbitrary slice $\sliceAlt'$ with $\sliceAlt' \in \slices(\bt)$ and $\tset(\sliceAlt') = \tset(\bt)$ (obviously, such a slice exists).
	In particular, there exists a $j \in [\ar(\sliceAlt')]$ such that $b_{\ell} = t'_{j}$, and hence $(\ell,j) \in \stp(\bt,\sliceAlt')$.
	Let $\slice' \isdef \pproj^{-1}(\sliceAlt')$ and note that, by the choice of $\pproj$ we have $\stp(\at, \slice') = \stp(\bt, \sliceAlt')$.
	Hence, from $(\ell, j) \in \stp(\bt, \sliceAlt')$ we obtain that $(\ell, j) \in \stp(\at,\slice')$, and hence $k \geq \ell$. 

	Now, let us fix an arbitrary slice $\slice \in \slices(\at)$ with $\tset(\slice) = \tset(\at)$.
	Let $\sliceAlt \isdef \pproj(\slice)$.
	By the choice of $\pproj$ we have $\stp(\at, \slice) = \stp(\bt, \sliceAlt)$.
	From $\tset(\slice) = \tset(\at)$ and $\slice \in \slices(\at)$, we obtain that for each $i \in [k]$ there exists exactly one $j_i \in [\ar(\slice)]$ with $(i,j_i) \in \stp(\at, \slice)$.
	From $\stp(\at, \slice) = \stp(\bt, \sliceAlt)$ we obtain that $(i, j_i) \in \stp(\bt, \sliceAlt)$ for every $i \in [k]$.
	For $i = k$ this in particular implies that $\ell \geq k$.
	In summary, we have shown that $k = \ell$.

	Finally, let us fix arbitrary $i, \hati \in [k]$.
	We have $(i,\hati) \in \stp(\at) \iff a_i = a_{\hati} \iff j_i=j_{\hati} \iff (i, j_i), (\hati, j_i) \in \stp(\at, \slice)$.
	From $\stp(\at, \slice) = \stp(\bt, \sliceAlt)$ we obtain that $(i, j_i), (\hati, j_i) \in \stp(\at, \slice) \iff (i, j_i), (\hati, j_i) \in \stp(\bt, \sliceAlt) \iff$ $b_i = t_{j_i} = b_{\hati} \iff b_i = b_{\hati} \iff (i,\hati) \in \stp(\bt)$.

	In summary, we obtain that $\stp(\at)=\stp(\bt)$. This completes the proof of \enquote{$\Longleftarrow$}.
	\medskip

	\enquote{$\Longrightarrow$}:\;
	By assumption we have $\stp(\at) = \stp(\bt)$.
	From \cite[Lemma~3.7(b)]{ScheidtSchweikardt_MFCS25} we obtain that $k = \ell$ and the function $\beta\colon \tset(\at) \to \tset(\bt)$ with $\beta(a_i) \isdef b_i$ for all $i \in [k]$ is well-defined and bijective.
	For any $\projection \in \Projections(\at)$ of the form $(p_1, \ldots, p_n) = \projection$ (in particular, $1 \leq n = \ar(\projection)$), we let $\pproj(\projection)\isdef (\beta(p_1), \ldots, \beta(p_n))$.

	We first show that $\pproj(\projection) \in \Projections(\bt)$:
	Let $\projectionAlt \isdef \pproj(\projection)$, i.e., $\projectionAlt = (q_1, \ldots, q_n)$ and $q_i = \beta(p_i)$ for all $i \in [n]$.
	Since $\projection \in \Projections(\at)$, there is a tuple $(i_1, \dots, i_{n})$ of pairwise distinct elements witnessing this, i.e., $\projection = \pi_{(i_1, \dots, i_{n})}(\at)$. That also means that $\projection = (a_{i_1}, \dots, a_{i_n})$, which means that $\projectionAlt = (\beta(a_{i_1}), \dots, \beta(a_{i_n})) = (b_{i_1}, \dots, b_{i_n}) = \pi_{(i_1, \dots, i_n)}(\bt)$. Thus, $\pi_{(i_1, \dots, i_n)}(\bt)$ witnesses that $\projectionAlt \in \Projections(\bt)$.

	Next, we show that $\stp(\at, \projection) = \stp(\bt, \projectionAlt)$:
	For arbitrary $i \in [k]$ and $j \in [n]$ we have $(i,j) \in \stp(\at, \projection) \iff a_i = p_j \iff \beta(a_i) = \beta(p_j) \iff b_i = q_j \iff (i,j) \in \stp(\bt, \projectionAlt)$.
	Hence, we have $\stp(\at, \projection) = \stp(\bt, \projectionAlt)$.

	In summary, we have shown that $\pproj$ is a mapping  $\pproj\colon \Projections(\at) \to \Projections(\bt)$ that satisfies $\stp(\at, \projection) = \stp(\bt, \pproj(\projection))$ for all $\projection \in \Projections(\at)$.

	Next, we show that $\pproj$ is \emph{injective}:
	Consider arbitrary $\projection, \projection' \in \Projections(\at)$ of the form $(p_1, \dots, p_n)$ and $(p'_1, \dots, p'_m)$ such that $\pproj(\projection) = \pproj(\projection')$.
	By definition of $\pproj$ we have $n = m$, and $\beta(p_i) = \beta(p'_i)$ for all $i \in [n]$.
	Since $\beta$ is injective, we obtain that $p_i = p'_i$ holds for all $i \in [n]$.
	Thus, $\projection = \projection'$.
	Hence, $\pproj$ is injective.

	Next, we show that $\pproj$ is \emph{surjective}:
	Consider an arbitrary $\projectionAlt \in \Projections(\bt)$ of the form $(q_1, \ldots, q_n)$.
	Let $(i_1, \dots, i_n)$ be pairwise distinct indices witnessing $\projectionAlt \in \Projections(\bt)$, i.e., $\projectionAlt = \pi_{(i_1, \dots, i_n)}(\bt)$.
	For $i \in [n]$ let $p_i \isdef \beta^{-1}(q_i)$, and let $\projection \isdef (p_1, \ldots, p_n)$.
	Since, $\projection = (\beta^{-1}(b_{i_1}), \dots, \beta^{-1}(b_{i_n})) = (a_{i_1}, \dots, a_{i_n})$, it holds that $\pi_{(i_1, \dots, i_n)}(\at) = \projection$.
	Therefore, $\projection \in \Projections(\at)$.
	We are done by noting that $\pproj(\projection) = \projectionAlt$.
	This completes the proof of \enquote{$\Longrightarrow$} and the proof of Lemma~\ref{app:lemma:bijection}.
\end{proof}

\begin{lemma}\label{app:lemma:bijection-props}
	Let $\at, \bt \in \tD$ with $\stp(\at) = \stp(\bt)$ and let $\pproj$ be a bijection according to Lemma~\ref{app:lemma:bijection}. The following is true:
	\begin{enumerate}
		\item $\pproj$ is unique,
		\item for all $\projection \in \Projections(\at)$ it holds that $\ar(\projection) = \ar(\pproj(\projection))$ and $\stp(\projection) = \stp(\pproj(\projection))$ and
		\item for all $\slice \in \slices(\at)$ it holds that
                  $\pproj(\slice) = \tilde{\beta}_{\at,
                    \bt}(\slice)$, where $\tilde{\beta}_{\at,
                    \bt}(\slice)$ is the bijection from Fact~\ref{app:fact:beta-ab}\,\eqref{item:HD:app:fact:beta-ab}. 
	\end{enumerate}
\end{lemma}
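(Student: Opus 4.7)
The plan is to derive all three statements from the defining property of $\pproj$, namely that $\stp(\at,\projection) = \stp(\bt,\pproj(\projection))$ for every $\projection\in\Projections(\at)$, by exploiting Lemma~\ref{app:lemma:stp-identifies-proj} and the fact that $\stp(\at)=\stp(\bt)$.

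For uniqueness (1), I would argue as follows. Suppose $\pproj_1,\pproj_2$ are two bijections $\Projections(\at)\to\Projections(\bt)$ satisfying the property of Lemma~\ref{app:lemma:bijection}. For any $\projection\in\Projections(\at)$, both $\pproj_1(\projection)$ and $\pproj_2(\projection)$ are projections of $\bt$ whose $\stp$ with respect to $\bt$ equals $\stp(\at,\projection)$. Since Lemma~\ref{app:lemma:stp-identifies-proj} says that a projection of $\bt$ is completely determined by its $\stp$ with respect to $\bt$, we conclude $\pproj_1(\projection)=\pproj_2(\projection)$.

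For (2), I would first observe that $\ar(\projection)$ equals the number of distinct second coordinates appearing in $\stp(\at,\projection)$: since $\projection\in\Projections(\at)$, every entry of $\projection$ coincides with some entry of $\at$, so for each $j\in[\ar(\projection)]$ there exists at least one $i$ with $(i,j)\in\stp(\at,\projection)$; conversely, second coordinates never exceed $\ar(\projection)$. Applying the same observation to $\pproj(\projection)\in\Projections(\bt)$ and using $\stp(\at,\projection)=\stp(\bt,\pproj(\projection))$ yields $\ar(\projection)=\ar(\pproj(\projection))$. For the second claim, I would use the identity
\[
\stp(\projection) \;=\; \bigl\{(j_1,j_2)\,:\,\exists i_1,i_2\ \text{with}\ (i_1,j_1),(i_2,j_2)\in\stp(\at,\projection),\ (i_1,i_2)\in\stp(\at)\bigr\},
\]
which follows immediately from the definitions of the three $\stp$-sets. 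An analogous identity holds for $\pproj(\projection)$ with $\stp(\bt,\pproj(\projection))$ and $\stp(\bt)$; the assumption $\stp(\at)=\stp(\bt)$ together with $\stp(\at,\projection)=\stp(\bt,\pproj(\projection))$ then gives $\stp(\projection)=\stp(\pproj(\projection))$.

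For (3), I would first use (2) to show that $\pproj$ maps $\slices(\at)$ into $\slices(\bt)$: a projection is a slice iff its $\stp$ is the diagonal $\{(j,j)\}$ (and it is nonempty), and this property is preserved by $\pproj$ according to (2). Now both $\pproj(\slice)$ and $\tilde\beta_{\at,\bt}(\slice)$ belong to $\slices(\bt)\subseteq\Projections(\bt)$ and both satisfy the equation $\stp(\bt,\cdot)=\stp(\at,\slice)$ (by the defining property of $\pproj$ and by Fact~\ref{app:fact:beta-ab}\eqref{item:HD:app:fact:beta-ab} respectively). Applying Lemma~\ref{app:lemma:stp-identifies-proj} one more time forces them to be equal, completing the proof. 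The only mildly subtle point, which I would handle carefully, is making sure that the arity characterization in (2) is argued using only the $\stp$-set and not the concrete tuples, so that it transfers cleanly across $\pproj$.
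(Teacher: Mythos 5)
Your proposal is correct, and for parts (1) and (2) it follows essentially the same route as the paper: uniqueness via Lemma~\ref{app:lemma:stp-identifies-proj}, and the arity/$\stp$ claims by reading them off from $\stp(\at,\projection)=\stp(\bt,\pproj(\projection))$ (the paper extracts the arity from the maximal second coordinate and witnesses $p_i=p_j$ by a common index $i'$ with $(i',i),(i',j)\in\stp(\at,\projection)$, which is the same information as your "distinct second coordinates" count and your identity expressing $\stp(\projection)$ through $\stp(\at,\projection)$ and $\stp(\at)$; both are valid, and yours additionally invokes $\stp(\at)=\stp(\bt)$, which is available by hypothesis).

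For part (3) you take a genuinely different, and more self-contained, route. The paper disposes of (3) by citing the uniqueness of the slice bijection $\pslices$ from the external reference on relational color refinement, whereas you observe that $\pproj(\slice)$ and $\tilde{\beta}_{\at,\bt}(\slice)$ both lie in $\Projections(\bt)$ and both have $\stp(\bt,\cdot)$ equal to $\stp(\at,\slice)$ (the former by the defining property of $\pproj$, the latter by Fact~\ref{app:fact:beta-ab}\,\eqref{item:HD:app:fact:beta-ab}), so Lemma~\ref{app:lemma:stp-identifies-proj} forces them to coincide. This avoids the external citation entirely; your intermediate step that $\pproj$ maps slices to slices is true (a nonempty projection is a slice iff its $\stp$ is the diagonal, which is preserved by part (2)) but is not actually needed for the conclusion. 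One cosmetic caveat: Fact~\ref{app:fact:beta-ab} is stated under the hypothesis $h(w_{\at})=h(w_{\bt})$ rather than merely $\stp(\at)=\stp(\bt)$; part (3) of the lemma only makes sense in that regime anyway, and both you and the paper implicitly work there, so this is not a gap in your argument.
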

\begin{proof}
	(1) follows from Lemma~\ref{app:lemma:stp-identifies-proj}.
	
	Next, we show (2).
	Let $\projection \in \Projections(\at)$ and $\projectionAlt = \pproj(\projection)$.
	For $i \isdef \ar(\projection)$ there must be a $j_i$ such that $p_i = a_{j_i}$.
	I.e., $(j_i,i) \in \stp(\at,\projection) = \stp(\bt,\projectionAlt)$.
	Hence, $b_{j_i} = q_i$ and, in particular, $i \leq \ar(\projectionAlt)$, i.e., $\ar(\projection) \leq \ar(\projectionAlt)$.
	By a similar reasoning we obtain that $\ar(\projectionAlt) \leq \ar(\projection)$:
	By assumption, $\projectionAlt \in \Projections(\bt)$.
	Hence, in particular for $i \isdef \ar(\projectionAlt)$ there exists a $j_i$ such that $q_i = b_{j_i}$.
	I.e., $(j_i,i) \in \stp(\bt, \projectionAlt) = \stp(\at, \projection)$.
	Hence, $a_{j_i} = p_i$ and, in particular, $i \leq \ar(\projection)$, i.e., $\ar(\projectionAlt) \leq \ar(\projection)$.
	This proves that $\ar(\projection) = \ar(\pproj(\projection))$.
	
	Let $(i,j) \in \stp(\projection)$, i.e., $p_i = p_j$. Then there is an $i'$ such that $(i', i), (i', j) \in \stp(\at, \projection) = \stp(\bt, \projectionAlt)$. Thus, $b_{i'} = q_i = q_j$, i.e., $(i,j) \in \stp(\projectionAlt)$. Thus, $\stp(\projection) \subseteq \stp(\projectionAlt)$.
	Let $(i,j) \in \stp(\projectionAlt)$, i.e., $q_i = q_j$. Then there is an $i'$ such that $(i', i), (i', j) \in \stp(\bt, \projectionAlt) = \stp(\at, \projection)$. Thus, $a_{i'} = p_i = p_j$, i.e., $(i,j) \in \stp(\projection)$. Thus, $\stp(\projectionAlt) \subseteq \stp(\projection)$.

	(3) follows from (1) and the uniqueness of $\pslices$ according to~\cite[page 88:8]{ScheidtSchweikardt_MFCS25}.
\end{proof}

\begin{lemma}\label{app:lemma:projections-dont-matter}
	Let $\at, \bt \in \tD$ with $\stp(\at) = \stp(\bt)$.
	If there is a $\projection \in \Projections(\at)$ with $\projectionAlt \isdef \pproj(\projection)$ such that
	\[
		\mset[\big]{ (\stp(\tup{c}, \projection), g(w_{\tup{c}})) \mid \tup{c} \in M_{\projection} }
		\neq
		\mset[\big]{ (\stp(\tup{c}, \projectionAlt), g(w_{\tup{c}})) \mid \tup{c} \in M_{\projectionAlt} }
	\]
	then for every slice $\slice \in \slices(\at)$ with $\tset(\slice) = \tset(\projection)$ and $\sliceAlt \isdef \pproj(\slice)$ it holds that
	\[
		\mset[\big]{ (\stp(\tup{c}, \slice), g(w_{\tup{c}})) \mid \tup{c} \in M_{\slice} }
		\neq
		\mset[\big]{ (\stp(\tup{c}, \sliceAlt), g(w_{\tup{c}})) \mid \tup{c} \in M_{\sliceAlt} }.
	\]
\end{lemma}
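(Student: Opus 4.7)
The plan is to prove the contrapositive. Fix any $\projection\in\Projections(\at)$ and any slice $\slice\in\slices(\at)$ with $\tset(\slice)=\tset(\projection)$, set $\projectionAlt\isdef\pproj(\projection)$ and $\sliceAlt\isdef\pproj(\slice)$, and assume that
$\mset{(\stp(\tup{c},\slice),g(w_{\tup{c}}))\mid \tup{c}\in M_\slice}=\mset{(\stp(\tup{c},\sliceAlt),g(w_{\tup{c}}))\mid \tup{c}\in M_\sliceAlt}$.
I then aim to derive the analogous equality with $\projection,\projectionAlt$ in place of $\slice,\sliceAlt$. The key observation is that the $\projection$-side multiset is obtained from the $\slice$-side multiset by a combinatorial transformation $F_\phi$ that depends only on a ``shape function'' $\phi$ describing how the entries of $\projection$ are drawn from those of $\slice$, and that the same $\phi$ and the same $F_\phi$ work on the $\bt$-side.

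First I define $\phi\colon [m]\to[n]$, where $m=\ar(\projection)$ and $n=\ar(\slice)$, by $p_\ell=s_{\phi(\ell)}$; this is well-defined since $\slice$ is a slice (hence has pairwise distinct entries) and $\tset(\projection)\subseteq\tset(\slice)$. Using Lemma~\ref{app:lemma:bijection-props}(2) together with the explicit description of $\pproj$ as entrywise application of the value-bijection $\beta$ from Fact~\ref{app:fact:beta-ab}, the very same $\phi$ relates $\projectionAlt$ to $\sliceAlt$, i.e.\ the $\ell$-th entry of $\projectionAlt$ equals the $\phi(\ell)$-th entry of $\sliceAlt$.

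The central technical step, obtained by unwinding the definitions of $\stp$, $\Projections$, and $M$, is that for every $\tup{c}\in M_\slice$ both the membership $\tup{c}\in M_\projection$ and the relation $\stp(\tup{c},\projection)$ are computable from $\stp(\tup{c},\slice)$ and $\phi$ alone. Concretely, $\tup{c}\in M_\projection$ iff for every $i_0\in[n]$ there are at least $|\phi^{-1}(i_0)|$ indices $i$ with $(i,i_0)\in\stp(\tup{c},\slice)$, and in that case $\stp(\tup{c},\projection)=\set{(i,\ell)\mid (i,\phi(\ell))\in\stp(\tup{c},\slice)}$. The forward direction of the membership equivalence picks out, for each $i_0$, the realizing positions within $\stp(\tup{c},\slice)$; the backward direction builds pairwise distinct witness positions by choosing $|\phi^{-1}(i_0)|$-many of them for each $i_0$ and observing that positions attached to different $i_0$'s are automatically distinct. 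By the previous paragraph, the identical formulas hold verbatim with $\sliceAlt,\projectionAlt$ in place of $\slice,\projection$.

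These formulas define a partial function $F_\phi$ on pairs $(\sigma,\gamma)$: if $\sigma$ satisfies the count condition above, set $F_\phi(\sigma,\gamma)\isdef(\sigma',\gamma)$ with $\sigma'\isdef\set{(i,\ell)\mid (i,\phi(\ell))\in\sigma}$; otherwise $F_\phi(\sigma,\gamma)$ is undefined. Applying $F_\phi$ elementwise to the $\slice$-side multiset and discarding undefined outputs yields the $\projection$-side multiset, and the same procedure applied to the $\sliceAlt$-side multiset yields the $\projectionAlt$-side multiset. Thus the assumed equality of the slice multisets forces equality of the projection multisets, which is the contrapositive of the claim. The main obstacle I anticipate is keeping careful track that $F_\phi$ genuinely depends only on $(\sigma,\gamma)$ and $\phi$, i.e.\ that two distinct tuples $\tup{c}_1,\tup{c}_2$ with the same $\stp(\tup{c}_i,\slice)$ are treated identically; once the explicit formulas above are verified, everything else is routine bookkeeping with multisets.
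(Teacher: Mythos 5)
Your proof is correct, and it hinges on the same combinatorial core as the paper's argument---the position map $\phi$ with $p_\ell=s_{\phi(\ell)}$ (the paper calls it $\beta$, with the same orientation)---but it packages that core differently. The paper argues directly: it fixes a specific $\lambda$ witnessing the inequality of the projection-side multisets, transports it to $\lambda'=\set{(i,\beta(j)) \mid (i,j)\in\lambda}$, and shows via two claims that the $\lambda$-fiber of $M_{\projection}$ coincides, as a set of tuples, with the $\lambda'$-fiber of $M_{\slice}$ (and likewise for $\projectionAlt,\sliceAlt$), so the inequality transfers to the slice side. You instead prove the contrapositive by exhibiting one partial map $F_\phi$, depending only on $\phi$, under which the whole projection-side multiset is the image of the whole slice-side multiset on both the $\at$- and the $\bt$-side simultaneously. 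That is slightly stronger than what the lemma needs, and it is cleaner in one respect: your counting criterion for $\tup{c}\in M_{\projection}$ (at least $|\phi^{-1}(i_0)|$ positions of $\tup{c}$ carrying the value $s_{i_0}$, for every $i_0$) is intrinsic to $\stp(\tup{c},\slice)$, whereas the paper must extract pairwise distinct witness positions for $\lambda$ from the non-emptiness of one of the fibers. Two bookkeeping points you should record explicitly: first, $M_{\projection}\subseteq M_{\slice}$ (so that $F_\phi$ really reaches every element of the projection-side multiset); second, that $\sliceAlt=\pproj(\slice)$ is again a slice with $\tset(\sliceAlt)=\tset(\projectionAlt)$---this follows from Lemma~\ref{app:lemma:bijection-props}(2)--(3) together with surjectivity of $\phi$, and the pairwise distinctness of the entries of $\slice$ and $\sliceAlt$ is exactly what makes the position classes $\set{i \mid c_i = s_{i_0}}$ disjoint and hence your witness assembly work. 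Finally, a small attribution slip: the value bijection with $a_i\mapsto b_i$ comes from the proof of Lemma~\ref{app:lemma:bijection}, not from Fact~\ref{app:fact:beta-ab}; the uniqueness statement in Lemma~\ref{app:lemma:bijection-props}(1) is what licenses your use of the explicit entrywise description of $\pproj$.
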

\begin{proof}
Let $\projection \in \Projections(\at)$ and $\projectionAlt \isdef \pproj(\projection)$ such that
\begin{equation*}
		\mset[\big]{ (\stp(\tup{c}, \projection), g(w_{\tup{c}})) \mid \tup{c} \in M_{\projection} }
		\neq
		\mset[\big]{ (\stp(\tup{c}, \projectionAlt), g(w_{\tup{c}})) \mid \tup{c} \in M_{\projectionAlt} }.
\end{equation*}
Then, $\projection\neq\emptytuple$ since $M_{\emptytuple}=\tD$ and $\pproj(\emptytuple)=\emptytuple$.
              
Let $\projection = (p_1, \dots, p_k)$ with $\ell\deff
|\tset(\projection)| \geq 1$. Let $\slice = (s_1, \dots, s_{\ell}) \in
\slices(\at)$ be a slice such that $\tset(\projection) =
\tset(\slice)$, and let $\beta\colon [k] \to [\ell]$ be the unique map such that for all $i \in [k]$ we have $p_i = s_{\beta(i)}$.

	According to Lemma~\ref{app:lemma:bijection-props}, $\ar(\projection) = \ar(\projectionAlt)$, thus $\projectionAlt = (q_1, \dots, q_k)$.
	Let $\sliceAlt = \pproj(\slice)$.

	\begin{claim*}
		$\tset(\sliceAlt) = \tset(\projectionAlt)$ and for all $j \in [k]$ we have $q_j = t_{\beta(j)}$.
	\end{claim*}
	\noindent\emph{Proof of Claim}:\;
	Let $j \in [k]$. Choose $i$ such that $(i,j) \in \stp(\at, \projection)$, which also means that $(i,j) \in \stp(\bt, \projectionAlt)$, i.e., $a_i = p_j$ and $b_i = q_j$.
	Since $p_j = s_{\beta(j)}$, we have $(i, \beta(j)) \in \stp(\at, \slice)= \stp(\bt, \sliceAlt)$. Thus, $b_i = t_{\beta(j)}$, i.e., $q_j = t_{\beta(j)}$.

	Since $\img(\beta) = [\ell]$, $\tset(\sliceAlt) = \tset(\projectionAlt)$ must hold.
	\hfill$\blacksquare$\smallskip

	We know that there is a $\lambda$ such that
	\begin{equation}\label{eq:Ungleich}
		\mset[\big]{ g(w_{\tup{c}}) \mid \tup{c} \in
                  M_{\projection}, \ \stp(\tup{c}, \projection) = \lambda } 
		\neq 
		\mset[\big]{ g(w_{\tup{c}}) \mid \tup{c} \in
                  M_{\projectionAlt}, \ \stp(\tup{c}, \projectionAlt) = \lambda }.
	\end{equation}
	Hence, it suffices to show that there is a $\lambda'$ such that
	\[
		\mset[\big]{ g(w_{\tup{c}}) \mid \tup{c} \in
                  M_{\slice}, \stp(\tup{c}, \ \slice) = \lambda' } 
		\neq 
		\mset[\big]{ g(w_{\tup{c}}) \mid \tup{c} \in
                  M_{\sliceAlt}, \stp(\tup{c}, \ \sliceAlt) = \lambda' }.
	\]
	Let $\lambda' \isdef \set{ (i,\beta(j)) \mid (i,j) \in \lambda }$.

	\begin{claim*}
		For all $\tup{c} \in M_{\projection}$ with $\stp(\tup{c}, \projection) = \lambda$, we have $\tup{c} \in M_{\slice}$ and $\stp(\tup{c}, \slice) = \lambda'$. Analogously, for all $\tup{c} \in M_{\projectionAlt}$ with $\stp(\tup{c}, \projectionAlt) = \lambda$ we have $\tup{c} \in M_{\sliceAlt}$ and $\stp(\tup{c}, \sliceAlt) = \lambda'$.
	\end{claim*}
	\noindent\emph{Proof of Claim}:\;
	Clearly, $\slice \in \slices(\tup{c})$ since $\slice$ is a slice and $\tset(\slice) = \tset(\projection)$.
	Thus, it remains to show that $\stp(\tup{c}, \slice) = \lambda'$.
	It is easy to see that $\lambda' \subseteq \stp(\tup{c}, \slice)$:
	let $(i,j) \in \lambda$, then $(i, \beta(j)) \in \lambda'$.
	By definition, $(i,j) \in \lambda$ implies that $c_i = p_j$. 
	By definition of $\beta$ we have $p_j = s_{\beta(j)}$, i.e., $c_i = s_{\beta(j)}$.
	Hence, $(i, \beta(j)) \in \stp(\tup{c}, \slice)$.

	To show that $\stp(\tup{c}, \slice) \subseteq \lambda'$, let
        $(i,j) \in \stp(\tup{c}, \slice)$. I.e., $c_i = s_j$. Since
        $\tset(\slice) = \tset(\projection)$, there must be a $j' \in
        [k]$ such that $p_{j'} = s_j$. Thus, $j=\beta(j')$ and $c_i =
        p_{j'}$, which by definition means that $(i, j') \in
        \lambda$. By definition of $\lambda'$, this yields $(i,
        \beta(j')) \in \lambda'$. Since $j=\beta(j')$, we obtain that $(i,j) \in \lambda'$.

	For $\projectionAlt$ and $\sliceAlt$ this works analogously.%
	\hfill$\blacksquare$\smallskip
	\begin{claim*}
		For all $\tup{c} \in M_{\slice}$ with $\stp(\tup{c}, \slice) = \lambda'$, we have $\tup{c} \in M_{\projection}$ and $\stp(\tup{c}, \projection) = \lambda$. Analogously, 
		for all $\tup{c} \in M_{\sliceAlt}$ with $\stp(\tup{c}, \sliceAlt) = \lambda'$, we have $\tup{c} \in M_{\projectionAlt}$ and $\stp(\tup{c}, \projectionAlt) = \lambda$.
	\end{claim*}
	\noindent\emph{Proof of Claim}:\;
	Clearly, $\tset(\projection) \subseteq \tset(\tup{c})$. We have to show that $\stp(\tup{c}, \projection) = \lambda$ and $\projection \in \Projections(\tup{c})$.

	Let $(i,j) \in \stp(\tup{c}, \projection)$, i.e., $c_i =
        p_j$. Since $p_j = s_{\beta(j)}$, we have $(i, \beta(j)) \in \stp(\tup{c}, \slice)= \lambda'$.
	Thus, there is a $j'$ such that $(i,j') \in \lambda$ and $\beta(j') = \beta(j)$.
	That means $p_j = s_{\beta(j)} = s_{\beta(j')} = p_{j'}$, i.e., $p_{j} = p_{j'}$.
	From $(i, j') \in \lambda$ and $p_j=p_{j'}$ we obtain that
        also $(i,j) \in \lambda$. This proves that $\stp(\tup{c}, \projection) \subseteq \lambda$.

        For proving ``$\supseteq$'', consider an arbitrary
	 $(i,j) \in \lambda$. Then, $(i, \beta(j)) \in
         \lambda'=\stp(\ct,\slice)$, i.e., $c_i = s_{\beta(j)}$. Since
         $p_j = s_{\beta(j)}$, this also means that $c_i = p_j$, i.e.,
         $(i,j) \in \stp(\tup{c}, \projection)$. Thus, $\lambda
         \subseteq \stp(\tup{c}, \projection)$. In total, we have
         shown that $\stp(\tup{c}, \projection) = \lambda$.

It remains to show that $\projection \in \Projections(\tup{c})$.
Note that by our choice of $\lambda$, there exist pairwise distinct
indices $m_1,\ldots,m_k$ such that $(m_j,j)\in\lambda$ for all
$j\in[k]$ (this holds because due to equation \eqref{eq:Ungleich}
there exists a $\ct'\in M_\projection$ (i.e., $\projection\in\Projections(\ct')$)
with $\stp(\ct',\projection)=\lambda$ or a $\ct'\in
M_{\projectionAlt}$ (i.e., $\projectionAlt\in\Projections(\ct')$) with $\stp(\ct',\projectionAlt)=\lambda$).
We have already shown that $\stp(\ct,\projection)=\lambda$, and hence
we obtain that $\projection=\pi_{(m_1,\ldots,m_k)}(\ct)$, i.e.,
$\projection\in\Projections(\ct)$.
This finishes the proof of the first statement of the claim.

The claim's second statement (i.e., the statement for
$\projectionAlt$ and $\sliceAlt$) can be shown analogously.
	\hfill$\blacksquare$\smallskip

	Combining the last two claims finishes the proof of Lemma~\ref{app:lemma:projections-dont-matter}.
\end{proof} 

\medskip\noindent
This now allows us to show part (1) of Lemma~\ref{app:lemma:g-is-stable}:
\begin{proposition}\label{app:prop:e-neigbors-of-tuples}
		For all $\at, \bt \in \tD$ with $g(w_{\at}) = g(w_{\bt})$ and the bijection $\pproj$ according to Lemma~\ref{app:lemma:bijection} and every $\projection \in \Projections(\at)$ with $\projectionAlt \isdef \pproj(\projection)$ it holds that $g(v_{\projection}) = g(v_{\projectionAlt})$.
		In particular, this shows that
		\[
			\mset[\big]{ (\stp(\at, \projection), g(v_{\projection})) \mid \projection \in \Projections(\at) }
			=
			\mset[\big]{ (\stp(\bt, \projection), g(v_{\projection})) \mid \projection \in \Projections(\bt) }.\endProposition
		\]
\end{proposition}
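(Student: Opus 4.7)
The plan is to exploit the bijection $\pproj\colon \Projections(\at) \to \Projections(\bt)$ from Lemma~\ref{app:lemma:bijection} as a witness for the desired multiset equality, and to verify pointwise that $g(v_{\projection}) = g(v_{\pproj(\projection)})$ for every $\projection \in \Projections(\at)$. To unlock $\pproj$, first observe that $g(w_{\at}) = g(w_{\bt})$ unfolds to $h(w_{\at}) = h(w_{\bt})$ by the definition of $g$ on tuples, so Fact~\ref{app:fact:h-encodes-stp} yields $\stp(\at) = \stp(\bt)$, which makes Lemma~\ref{app:lemma:bijection} applicable.

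Next, I fix an arbitrary $\projection \in \Projections(\at)$ and set $\projectionAlt \isdef \pproj(\projection)$. Lemma~\ref{app:lemma:bijection-props}(2) supplies $\ar(\projection) = \ar(\projectionAlt)$ and $\stp(\projection) = \stp(\projectionAlt)$, which together imply that $\projection$ is a slice iff $\projectionAlt$ is a slice, and that $|\tset(\projection)| = |\tset(\projectionAlt)|$. In the \emph{slice case} ($\projection \in \slices(\at)$), Lemma~\ref{app:lemma:bijection-props}(3) identifies $\projectionAlt$ with $\tilde{\beta}_{\at,\bt}(\projection)$, and Fact~\ref{app:fact:beta-ab}(2)(a) then gives $h(v_{\projection}) = h(v_{\projectionAlt})$; since $g$ agrees with $h$ on slices, this directly yields $g(v_{\projection}) = g(v_{\projectionAlt})$.

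In the \emph{non-slice case}, the empty-tuple subcase is immediate because $\pproj$ must map $\emptytuple$ to $\emptytuple$. Otherwise $\tset(\projection) \neq \emptyset$, so I can pick an arbitrary slice $\slice \in \slices(\at)$ with $\tset(\slice) = \tset(\projection)$. The slice case applied to $\slice$ gives $g(v_{\slice}) = g(v_{\pproj(\slice)})$, so Proposition~\ref{app:prop:e-neighbors-of-projections}(a) yields
\[
	\mset[\big]{ (\stp(\tup{c}, \slice), g(w_{\tup{c}})) \mid \tup{c} \in M_{\slice} }
	\;=\;
	\mset[\big]{ (\stp(\tup{c}, \pproj(\slice)), g(w_{\tup{c}})) \mid \tup{c} \in M_{\pproj(\slice)} }.
\]
Invoking the contrapositive of Lemma~\ref{app:lemma:projections-dont-matter} transfers this equality from $(\slice, \pproj(\slice))$ to $(\projection, \projectionAlt)$. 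Combining this multiset equality with $\stp(\projection) = \stp(\projectionAlt)$ and unfolding the definition of $g$ on non-slice projections (recalling $g(w_{\tup{c}}) = h(w_{\tup{c}})$) then gives $g(v_{\projection}) = g(v_{\projectionAlt})$.

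To finish, I assemble the multiset equality: for every $\projection \in \Projections(\at)$, Lemma~\ref{app:lemma:bijection} ensures $\stp(\at, \projection) = \stp(\bt, \pproj(\projection))$, while the pointwise argument above provides $g(v_{\projection}) = g(v_{\pproj(\projection)})$; since $\pproj$ is a bijection, the pairs in the two multisets match up under $\pproj$. The main technical hurdle is the non-slice case, which I expect to resolve by routing through an auxiliary slice of the same support and invoking the contrapositive of Lemma~\ref{app:lemma:projections-dont-matter}; once that bridge is in place, the rest is straightforward book-keeping.
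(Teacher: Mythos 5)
Your proof is correct and follows essentially the same route as the paper's: the slice case is settled via Lemma~\ref{app:lemma:bijection-props}(3) and Fact~\ref{app:fact:beta-ab}(2), and the non-slice case is reduced to a slice of the same support using Proposition~\ref{app:prop:e-neighbors-of-projections}(a) together with Lemma~\ref{app:lemma:projections-dont-matter}. The only difference is presentational — you argue directly (using the contrapositive of Lemma~\ref{app:lemma:projections-dont-matter}) where the paper argues by contradiction — which does not change the substance.
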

\begin{proof}
		Let $\at, \bt \in \tD$ with $g(w_{\at}) = g(w_{\bt})$.
From Fact~\ref{app:fact:h-encodes-stp} we know that $\stp(\at)=\stp(\bt)$.
               Let $\pproj$ be the bijection according to Lemma~\ref{app:lemma:bijection}.

		For contradiction, assume that there are $\projection \in \Projections(\at)$ and $\projectionAlt \isdef \pproj(\projection)$ such that $g(v_{\projection}) \neq g(v_{\projectionAlt})$.
		If $\projection$ is not a slice, i.e., $\projection\in
                \Projections(\tD)\setminus\slices(\tD)$, then
                $g(v_{\projection}) \neq g(v_{\projectionAlt})$ by
                definition means that $\stp(\projection) \neq \stp(\projectionAlt)$ or
		\[
			\mset[\big]{ 
				\bigl(
					\stp(\tup{c}, \projection),
					h(w_{\tup{c}})
				\bigr)
				\mid
				\tup{c} \in M_{\projection}
			}
			\neq
			\mset[\big]{ 
				\bigl(
					\stp(\tup{c}, \projectionAlt),
					h(w_{\tup{c}})
				\bigr)
				\mid
				\tup{c} \in M_{\projectionAlt}
			}.
		\]
		By Lemma~\ref{app:lemma:bijection-props}(2) we know that $\stp(\projection) = \stp(\projectionAlt)$ holds.
		Thus, by Lemma~\ref{app:lemma:projections-dont-matter} there is a slice $\slice \in \slices(\at)$ with $\sliceAlt \isdef \pproj(\slice)$ such that 
		\[
			\mset[\big]{ (\stp(\tup{c}, \slice), g(w_{\tup{c}})) \mid \tup{c} \in M_{\slice} }
			\neq
			\mset[\big]{ (\stp(\tup{c}, \sliceAlt), g(w_{\tup{c}})) \mid \tup{c} \in M_{\sliceAlt} }.
		\]
		Since Lemma~\ref{app:lemma:bijection-props}(3) yields that $\sliceAlt$ is a slice as well, $g(v_{\slice}) \neq g(v_{\sliceAlt})$ must hold according to Proposition~\ref{app:prop:e-neighbors-of-projections}(a).

		Thus, we can assume that $\projection$ and $\projectionAlt$ are slices.
		Since $\projectionAlt = \tilde{\beta}_{\at, \bt}(\projection)$ according to Lemma~\ref{app:lemma:bijection-props}(3) and $h(v_{\projection}) = g(v_{\projection}) \neq g(v_{\projectionAlt}) = h(v_{\projectionAlt})$, Fact~\ref{app:fact:beta-ab}(2) yields that $h(w_{\at}) \neq h(w_{\bt})$ must hold.
		By definition, this means that $g(w_{\at}) \neq g(w_{\bt})$ must hold as well.
		This is a contradiction to our assumption that
                $g(w_{\at}) = g(w_{\bt})$.
                This completes the proof of Proposition~\ref{app:prop:e-neigbors-of-tuples}.
\end{proof}

\medskip\noindent
We show part (2b) of Lemma~\ref{app:lemma:g-is-stable} next:
\begin{proposition}\label{app:prop:f-neighbors-slices}
	For all $\slice, \sliceAlt \in \slices(\tD)$ with $g(v_{\slice}) = g(v_{\sliceAlt})$ it holds that
	\begin{equation}\label{eq:GoalLemmaD9}
		\mset[\big]{ (\stp(\slice, \projection'), g(v_{\projection'})) \mid
			\projection' \in N_{\slice} }
		=
		\mset[\big]{ (\stp(\sliceAlt, \projection'), g(v_{\projection'})) \mid
			\projection' \in N_{\sliceAlt} }.\endProposition
	\end{equation}
\end{proposition}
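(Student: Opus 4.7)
The strategy is to construct an explicit bijection $\Phi: N_\slice \to N_\sliceAlt$ preserving the pair $(\stp(\slice, \projection'), g(v_{\projection'}))$, from which the multiset equality~\eqref{eq:GoalLemmaD9} follows. The hypothesis $g(v_\slice) = g(v_\sliceAlt)$ together with the definition of $g$ on slice vertices gives $h(v_\slice) = h(v_\sliceAlt)$; Fact~\ref{app:fact:h-encodes-stp} then yields $\stp(\slice) = \stp(\sliceAlt)$, forcing both slices to have the same arity $\ell$ with $\ell$ distinct entries, and I define the element-level bijection $\beta: \tset(\slice) \to \tset(\sliceAlt)$ by $\beta(s_i) = t_i$. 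Fact~\ref{app:fact:beta-st} supplies $\tilde{\beta}_{\slice, \sliceAlt}: M_\slice \to M_\sliceAlt$ preserving $h$-colors and stp's with $\slice$, and for each pair $(\tup{c}, \tup{d}) = (\tup{c}, \tilde{\beta}_{\slice, \sliceAlt}(\tup{c}))$, Lemma~\ref{app:lemma:bijection} combined with Lemma~\ref{app:lemma:bijection-props}(3) yields a unique $\pproj[\tup{c}, \tup{d}]: \Projections(\tup{c}) \to \Projections(\tup{d})$ that preserves stp and maps $\slice$ to $\sliceAlt$; a short check using $\stp(\tup{c}, \slice) = \stp(\tup{d}, \sliceAlt)$ shows that the induced element-level bijection restricts to $\beta$ on $\tset(\slice)$.

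Next I define $\Phi$ by cases. For $\projection' \in N_\slice$ with $\tset(\projection') \supseteq \tset(\slice)$, pick any $\tup{c}' \in \tD$ witnessing $\projection' \in \Projections(\tup{c}')$; since $\slice$ is a slice and $\tset(\slice) \subseteq \tset(\projection') \subseteq \tset(\tup{c}')$, the tuple $\tup{c}'$ lies in $M_\slice$ automatically, and I set $\Phi(\projection') \deff \pproj[\tup{c}', \tilde{\beta}_{\slice, \sliceAlt}(\tup{c}')](\projection')$. For $\projection' \in N_\slice$ with $\tset(\projection') \subsetneq \tset(\slice)$, I apply $\beta$ entry-wise: $\Phi(\projection') \deff (\beta(p_1), \ldots, \beta(p_n))$. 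Well-definedness in the first case requires independence from the choice of $\tup{c}'$, which follows from Lemma~\ref{app:lemma:bijection-props}(1) together with the observation that any two hosting tuples yield element-level bijections agreeing on $\tset(\projection')$. The inverse $\Phi^{-1}$ is built symmetrically from $\tilde{\beta}_{\slice, \sliceAlt}^{-1}$, giving bijectivity.

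Preservation of the two invariants proceeds in standard fashion. For $\stp(\slice, \cdot)$: in the superset case I use the factorization $(i,j) \in \stp(\slice, \projection') \iff \exists k: (k,i) \in \stp(\tup{c}', \slice) \wedge (k,j) \in \stp(\tup{c}', \projection')$ combined with the stp-preservations of $\pproj[\tup{c}', \tup{d}']$ and $\tilde{\beta}_{\slice, \sliceAlt}$; in the subset case, $\stp(\slice, \projection') = \stp(\sliceAlt, \Phi(\projection'))$ is immediate from $\beta(s_i) = t_i$. For $g(v_{\projection'})$: if $\projection'$ is a slice, Fact~\ref{app:fact:beta-ab}(2) gives $h(v_{\projection'}) = h(v_{\Phi(\projection')})$; if $\projection'$ is not a slice, I unfold the definition of $g$ and verify both $\stp(\projection') = \stp(\Phi(\projection'))$ (via Lemma~\ref{app:lemma:bijection-props}(2)) and the multiset equality
\[
	\mset[\big]{(\stp(\tup{c}, \projection'), h(w_\tup{c})) \mid \tup{c} \in M_{\projection'}} = \mset[\big]{(\stp(\tup{c}, \Phi(\projection')), h(w_\tup{c})) \mid \tup{c} \in M_{\Phi(\projection')}}.
\]

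The main obstacle is twofold. First, in the subset sub-case when $\projection'$ has repeated entries, such a $\projection'$ may fail to lie in $\Projections(\tup{c})$ for any $\tup{c} \in M_\slice$ (the required multiplicity may only be realized in tuples of $\tD$ whose supports miss parts of $\tset(\slice)$), so no $\pproj[\tup{c}, \tup{d}]$ is directly available, and verifying both $\Phi(\projection') \in \Projections(\tD)$ and the $g$-value match must proceed without a host in $M_\slice$. Second, for non-slice $\projection'$ the $g$-value depends on a multiset over $M_{\projection'}$ whose members need not lie in $M_\slice$, so matching against $M_{\Phi(\projection')}$ cannot be read off directly from $\tilde{\beta}_{\slice, \sliceAlt}$. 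Both issues will be resolved by exploiting the stability of $h$ on $\mathcal{H}_D$ more deeply, translating the homomorphism-count information encoded in $h(v_\slice) = h(v_\sliceAlt)$ into the existence of correspondingly-structured witnessing tuples on the $\sliceAlt$-side via the $\beta$-correspondence.
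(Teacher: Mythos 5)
Your central construction breaks down where you least discuss it: in the case $\tset(\projection')\supseteq\tset(\slice)$ the map $\Phi$ is not well-defined, and the ``observation that any two hosting tuples yield element-level bijections agreeing on $\tset(\projection')$'' is precisely the point that fails. The bijections $\pproj[\tup{c}'_1, \tup{d}'_1]$ and $\pproj[\tup{c}'_2, \tup{d}'_2]$ obtained from two hosts $\tup{c}'_1,\tup{c}'_2\in M_{\slice}$ of $\projection'$ are forced to agree only on $\tset(\slice)$, because $\tilde{\beta}_{\slice,\sliceAlt}$ preserves $\stp(\cdot,\slice)$; on $\tset(\projection')\setminus\tset(\slice)$ nothing constrains them. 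Color refinement provides no global element correspondence: if, say, $(s,a,b)$ and $(s,a,e)$ both lie in $\tD$ and are sent by $\tilde{\beta}_{\slice,\sliceAlt}$ to $(t,a',b')$ and $(t,a'',e'')$ respectively, then the two candidate images of $\projection'=(s,a)$ are $(t,a')$ and $(t,a'')$, which may well be distinct nodes $v_{(t,a')}\neq v_{(t,a'')}$ of $\dbBinary$ even though $a'$ and $a''$ share a color. So $\Phi$ is in general only a multivalued correspondence, and injectivity and surjectivity of a map that is not single-valued cannot be ``built symmetrically''. This is exactly why the paper's proof does not construct a bijection on $N_{\slice}$ in this case but double-counts instead: for a fixed signature $(\lambda,c)$ it counts pairs $(\projection,\at)$ where $\projection\in N_{\slice}$ has signature $(\lambda,c)$ and $\at$ is a host of $\projection$ with a fixed signature $(\lambda',d)$, shows these fibers are pairwise disjoint and all of the same nonzero size, identifies their union with the set of tuples in $M_{\slice}$ of a derived signature $(\lambda'',d)$ whose cardinality is matched on the $\sliceAlt$-side by the stability of $h$ (Fact~\ref{app:fact:beta-st}), and then divides by the common fiber size.

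The two ``obstacles'' you defer to the end are likewise not cosmetic. In the subset case there is exactly one $\projection'$ per signature, so counting is trivial, but establishing that the entrywise $\beta$-image lies in $\Projections(\tD)$ at all (when no tuple of $M_{\slice}$ hosts $\projection'$) and that it carries the same $g$-color is the actual content of the paper's Case~2: one has to transport an arbitrary host of $\projection'$ along the $\mathcal{G}_D$-level bijection $\beta_{\at,\bt}$ of Fact~\ref{app:fact:beta-ab}\eqref{item:GD:app:fact:beta-ab} and then invoke Proposition~\ref{app:prop:e-neigbors-of-tuples} to match colors. As written, your proposal asserts the conclusions of these steps rather than proving them, and the promised resolution ``by exploiting the stability of $h$ more deeply'' is where the proof would actually have to live.
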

\begin{proof}
By assumption we are given $\slice,\sliceAlt\in\slices(\tD)$ with
$g(v_\slice)=g(v_\sliceAlt)$, i.e., $h(v_\slice)=h(v_\sliceAlt)$.
Consider any $\lambda$ and $c$ such that
$(\lambda,c) = \bigl(\stp(\slice,\projection),g(v_\projection)\bigr)$
for some $\projection\in N_\slice$ or $(\lambda,c) = \bigl(\stp(\sliceAlt,\projectionAlt), g(v_\projection)\bigr)$ for some $\projectionAlt\in N_\sliceAlt$.
Let
\begin{eqnarray*}
  P_{\lambda,c} &\isdef & \set[\big]{ \projection \in N_{\slice} \mid 
	\big(\stp(\slice, \projection),g(v_{\projection})\big) = (\lambda,c) }\,,\\
  Q_{\lambda,c} &\isdef & \set[\big]{ \projection \in N_{\sliceAlt} \mid 
	\big(\stp(\sliceAlt, \projection),g(v_{\projection})\big) = (\lambda, c)}\,.
\end{eqnarray*}
Note that in order to prove equation \eqref{eq:GoalLemmaD9}, it suffices to
prove that $|P_{\lambda,c}|=|Q_{\lambda,c}|$.

In the following, we consider the case where $(\lambda,c)=\big(\stp(\slice,\projection),g(v_\projection)\big)$
for some $\projection\in N_\slice$, i.e., the case where we know that $P_{\lambda,c}\neq
\emptyset$ (the case where we know that $Q_{\lambda,c}\neq \emptyset$
can be handled analogously).

By definition of $N_\slice$ we know that $\projection\in N_\slice$
implies that either $\tset(\projection)\subseteq\tset(\slice)$ or
$\tset(\projection)\supset\tset(\slice)$.
Thus, $\lambda$ either
enforces that all $\projection\in N_\slice$ with
$\lambda=\stp(\slice,\projection)$ satisfy
$\tset(\projection)\subseteq\tset(\slice)$, or it enforces that
all $\projection\in N_\slice$ with
$\lambda=\stp(\slice,\projection)$ satisfy
$\tset(\projection)\supset\tset(\slice)$.

\bigskip

\emph{Case 1:} \ $\lambda$ enforces that
$\tset(\projection)\supset\tset(\slice)$ for all $\projection$ with $\lambda=\stp(\slice,\projection)$.
\\
Note that for all $\projection, \projection' \in P_{\lambda,c}$ and
$\projectionAlt,\projectionAlt'\in Q_{\lambda,c}$ we
have $g(v_\projection)=g(v_{\projection'})=g(v_\projectionAlt)=g(v_{\projectionAlt'})$, and therefore
\begin{equation}\label{eq:LemmaD9_UseStability}
  \begin{array}{llll}
& \mset[\big]{ 
	\bigl(
		\stp(\at, \projection),
		g(\wat)
	\bigr)
	\mid
	\at \in M_{\projection}
     }
 & =
 &  \mset[\big]{ 
	\bigl(
		\stp(\at', \projection'),
		g(w_{\at'})
	\bigr)
	\mid
	\at' \in M_{\projection'}
    }
\\
 =
& \mset[\big]{ 
	\bigl(
		\stp(\bt, \projectionAlt),
		g(\wbt)
	\bigr)
	\mid
	\bt \in M_{\projectionAlt}
     }
 & =
 &  \mset[\big]{ 
	\bigl(
		\stp(\bt', \projectionAlt'),
		g(w_{\bt'})
	\bigr)
	\mid
	\bt' \in M_{\projectionAlt'} 
    }
\end{array}
\end{equation}
(in case that $\projection\in\Projections(\tD)\setminus\slices(\tD)$,
this follows by the definition of $g$; and in case that
$\projection\in\slices(\tD)$, this follows from Fact~\ref{app:fact:beta-st}).

Let us fix arbitrary $\lambda'$ and $d$ such that the tuple
$(\lambda',d)$ is included in the above multisets.
For $\projection\in P_{\lambda,c}$ and $\projectionAlt\in
Q_{\lambda,c}$ let
\begin{eqnarray*}
  A_{\lambda',d}(\projection) & \deff & \setc{\at\in M_\projection}{\big(\stp(\at,\projection),g(w_\at)\big)=(\lambda',d)}\,,
\\
  B_{\lambda',d}(\projectionAlt) & \deff & \setc{\bt\in M_\projectionAlt}{\big(\stp(\bt,\projectionAlt),g(w_\bt)\big)=(\lambda',d)}\,.
\end{eqnarray*}
From equation~\eqref{eq:LemmaD9_UseStability} we obtain for all
$\projection,\projection'\in P_{\lambda,c}$ and all
$\projectionAlt,\projectionAlt'\in Q_{\lambda,c}$ that
\begin{equation}\label{eq:LemmaD9_ell}
  |A_{\lambda',d}(\projection)|
  \ = \ 
  |A_{\lambda',d}(\projection')|
  \ = \ 
  |B_{\lambda',d}(\projectionAlt)|
  \ = \ 
  |B_{\lambda',d}(\projectionAlt')|
  \ =: \ \ell_{\lambda',d}\,,
\end{equation} 
and clearly, $\ell_{\lambda',d}\geq 1$.

\begin{claim*} For all $\projection,\projection'\in P_{\lambda,c}$ with
  $\projection\neq\projection'$ we
  have $A_{\lambda',d}(\projection)\cap
  A_{\lambda',d}(\projection')=\emptyset$. \\
  Analogously, for all $\projectionAlt,\projectionAlt'\in Q_{\lambda,c}$ with
  $\projectionAlt\neq\projectionAlt'$ we
  have $B_{\lambda',d}(\projectionAlt)\cap
  B_{\lambda',d}(\projectionAlt')=\emptyset$. 
\end{claim*}
\begin{claimproof}
  Let $\projection,\projection'\in P_{\lambda,c}$.
  Consider an arbitrary $\at\in A_{\lambda',d}(\projection)$. Since $\at\in M_\projection$, for every $i\in
\set{1,\ldots,\ar(\projection)}$ there exists a
$j_i\in\set{1,\ldots,\ar(\at)}$ such that $p_i=a_{j_i}$, and hence
$(j_i,i)\in\stp(\at,\projection)=\lambda'$.

If $\at$ also belongs to $A_{\lambda',d}(\projection')$, then
$\stp(\at,\projection')=\lambda'$ implies that $p'_i=a_{j_i}=p_i$ for
every $i\in\set{1,\ldots,\ar(\projection)}$.
Furthermore, we know that $g(v_{\projection'})=g(v_{\projection})$ and
hence, in particular, $\stp(\projection')=\stp(\projection)$ and thus
$\ar(\projection')=\ar(\projection)$. In summary, $\at\in
A_{\lambda',d}(\projection)\cap A_{\lambda',d}(\projection')$ implies
that $\projection'=\projection$.
This proves the first statement of the claim. The second statement can
be shown analogously.
\end{claimproof}

From this claim and from equation~\eqref{eq:LemmaD9_ell} we obtain
that
\begin{equation}\label{eq:LemmaD9_sizeUnion}
\Big|\, \bigcup_{\projection\in P_{\lambda,c}}
  A_{\lambda',d}(\projection) \ \Big|
  \ \ = \ \ |P_{\lambda,c}|\cdot \ell_{\lambda',d}
  \qquad\text{and}\qquad
\Big|\, \bigcup_{\projectionAlt\in Q_{\lambda,c}}
  B_{\lambda',d}(\projectionAlt) \ \Big|
  \ \ = \ \ |Q_{\lambda,c}|\cdot \ell_{\lambda',d}\,.
\end{equation}

\begin{claim*}
There is a $\lambda''$ such that for
all $\projection \in P_{\lambda,c}$,
all $\at\in A_{\lambda',d}(\projection)$,
all $\projectionAlt\in Q_{\lambda,c}$,
and
all $\bt\in B_{\lambda',d}(\projectionAlt)$
we have $\lambda''=\stp(\at, \slice) = \stp(\bt,\sliceAlt)$.
\end{claim*}
\begin{claimproof}
We fix an arbitrary $\projection\in P_{\lambda,c}$ and an arbitrary
$\at\in A_{\lambda',d}(\projection)$ and let
$\lambda''\deff\stp(\at,\slice)$.
Since we are in Case~1, we have
$\tset(\projection)\supset\tset(\slice)$.
\smallskip

Now, consider an arbitrary $\projection'\in P_{\lambda,c}$ and an
arbitrary $\at'\in A_{\lambda',d}(\projection')$. Our aim is to show
that $\stp(\at',\slice)=\lambda''$.

For ``$\supseteq$'' consider an arbitrary tuple
$(m,j)\in\lambda''$. Then, by our choice of $\lambda''$ we have
$a_m=s_j$. Since $\tset(\projection)\supset\tset(\slice)$, there is an
$i\in[\ar(\projection)]$ such that $s_j=p_i$. Thus,
$(j,i)\in\stp(\slice,\projection)=\lambda=\stp(\slice,\projection')$,
and hence $s_j=p'_i$. Furthermore, $a_m=s_j=p_i$, and hence $(m,i)\in
\stp(\at,\projection)=\lambda'=\stp(\at',\projection')$. Thus,
$a'_m=p'_i=s_j$, and therefore, $(m,j)\in\stp(\at', \slice)$.

For ``$\subseteq$'' consider an arbitrary tuple
$(m,j)\in\stp(\at',\slice)$. Then we have $a'_m=s_j$. Since
$\tset(\projection')\supset\tset(\slice)$, there is an
$i\in[\ar(\projection')]$ such that $s_j=p'_i$. Thus,
$(j,i)\in\stp(\slice,\projection')=\lambda=\stp(\slice,\projection)$,
and hence $s_j=p_i$. Furthermore, $a'_m=s_j=p'_i$, and hence $(m,i)\in
\stp(\at',\projection')=\lambda'=\stp(\at,\projection)$. Thus,
$a_m=p_i=s_j$, and therefore,
$(m,j)\in\stp(\at,\slice)=\lambda''$.
This proves that $\stp(\at',\slice)=\lambda''$ for all
$\projection'\in P_{\lambda,c}$ and all $\at'\in
P_{\lambda',d}(\projection')$.
\smallskip

Now, consider an arbitrary $\projectionAlt\in Q_{\lambda,c}$ and an
arbitrary $\bt\in B_{\lambda',d}(\projectionAlt)$. Our aim is to show
that $\stp(\bt,\sliceAlt)=\lambda''$.

For ``$\supseteq$'' consider an arbitrary tuple
$(m,j)\in\lambda''$. Then, by our choice of $\lambda''$ we have
$a_m=s_j$. Since $\tset(\projection)\supset\tset(\slice)$, there is an
$i\in[\ar(\projection)]$ such that $s_j=p_i$. Thus,
$(j,i)\in\stp(\slice,\projection)=\lambda=\stp(\sliceAlt,\projectionAlt)$,
and hence $t_j=q_i$. Furthermore, $a_m=s_j=p_i$, and hence $(m,i)\in
\stp(\at,\projection)=\lambda'=\stp(\bt,\projectionAlt)$. Thus,
$b_m=q_i=t_j$, and therefore, $(m,j)\in\stp(\bt,\sliceAlt)$.

For ``$\subseteq$'' consider an arbitrary tuple
$(m,j)\in\stp(\bt,\sliceAlt)$. Then we have $b_m=t_j$. Since
$\tset(\projectionAlt)\supset\tset(\sliceAlt)$, there is an
$i\in[\ar(\projectionAlt)]$ such that $t_j=q_i$. Thus,
$(j,i)\in\stp(\sliceAlt,\projectionAlt)=\lambda=\stp(\slice,\projection)$,
and hence $s_j=p_i$. Furthermore, $b_m=t_j=q_i$, and hence $(m,i)\in
\stp(\bt,\projectionAlt)=\lambda'=\stp(\at,\projection)$. Thus,
$a_m=p_i=s_j$, and therefore,
$(m,j)\in\stp(\at,\slice)=\lambda''$.
This proves that $\stp(\bt,\sliceAlt)=\lambda''$ for all
$\projectionAlt\in Q_{\lambda,c}$ and all $\bt\in
Q_{\lambda',d}(\projectionAlt)$.
\end{claimproof}

Choose $\lambda''$ according to the previous claim and consider the sets
\begin{eqnarray*}
P'_{\lambda'',d} &\isdef & \set[\big]{ \at \in M_{\slice} \mid 
			\big(\stp(\at, \slice),g(\wat) \big) = (\lambda'',d) }\,,\\
Q'_{\lambda'',d} &\isdef &\set[\big]{ \bt \in M_{\sliceAlt} \mid 
			\big(\stp(\bt, \sliceAlt),g(\wbt)\big) = (\lambda'', d)}\,.
\end{eqnarray*}
From Fact~\ref{app:fact:beta-st}
we obtain that
\begin{equation}\label{eq:LemmaD9_sizeOfPStrichQStrich}
  |P'_{\lambda'',d}| \ \ = \ \ |Q'_{\lambda'',d}|\,.
\end{equation}  
From the above claim we obtain that
\begin{equation}\label{eq:LemmaD9_UnionIsContained}
   \bigcup_{\projection\in P_{\lambda,c}} A_{\lambda',d}(\projection)
   \ \ \subseteq \ \ P'_{\lambda'',d}
   \qquad\text{and}\qquad
   \bigcup_{\projectionAlt\in Q_{\lambda,c}} B_{\lambda',d}(\projectionAlt)
   \ \ \subseteq \ \ Q'_{\lambda'',d}\,.
\end{equation}

\begin{claim*}
  For every $\ct\in P'_{\lambda'',d}$ there exists a
$\projection'\in P_{\lambda,c}$ such that $\ct\in A_{\lambda',d}(\projection')$.

Analogously,  for every $\dt\in Q'_{\lambda'',d}$ there exists a
$\projectionAlt'\in Q_{\lambda,c}$ such that $\dt\in B_{\lambda',d}(\projectionAlt')$.
\end{claim*}  
\begin{claimproof}
Fix an arbitrary $\projection\in P_{\lambda,c}$ and an $\at\in
A_{\lambda',d}(\projection)$.
Hence, $\projection\in\Projections(\at)$,
$g(v_{\projection})=c$, $g(w_{\at})=d$,
$\stp(\slice,\projection)=\lambda$, 
$\stp(\at,\projection)=\lambda'$,
and $\stp(\at,\slice)=\lambda''$.
\medskip

Consider an arbitrary $\ct\in P'_{\lambda'',d}$, i.e.,
$\stp(\ct,\slice)=\lambda''$ and $g(w_\ct)=d$.
Note that in order to prove the claim's first statement, it suffices to find a
$\projection'\in\Projections(\ct)$ such that
$\stp(\ct,\projection')=\lambda'$ and $g(v_{\projection'})=c$ and
$\stp(\slice,\projection')=\lambda$.

From Proposition~\ref{app:prop:e-neigbors-of-tuples}, and $g(w_\at)=g(w_\ct)$ we know that
\[
\mset{
(\stp(\at,\projectionAlt'),g(v_{\projectionAlt'}))
\mid
\projectionAlt'\in\Projections(\at)
}
\ \ = \ \
\mset{
(\stp(\ct,\projectionAlt'),g(v_{\projectionAlt'}))
\mid
\projectionAlt'\in\Projections(\ct)
}\,.
\]
Since $\projection\in\Projections(\at)$ and
$\big(\stp(\at,\projection), g(v_\projection)\big)=(\lambda',c)$, we
know that $(\lambda',c)$ belongs to both multisets. Thus, there exists
a $\projection'\in\Projections(\ct)$ such that
$\stp(\ct,\projection')=\lambda'$ and $g(v_{\projection'})=c$.
All that remains to be done is show that
$\stp(\slice,\projection')=\lambda$.

For ``$\subseteq$'' consider a tuple
$(i,j)\in\stp(\slice,\projection')$, i.e., $s_i=p'_j$.
Since $\projection'\in\Projections(\ct)$, there is a $k$ such that
$p'_j=c_k$. Hence,
$(k,j)\in\stp(\ct,\projection')=\lambda'=\stp(\at,\projection)$, and therefore $a_k=p_j$.
Furthermore, $s_i=p'_j=c_k$ implies that
$(k,i)\in\stp(\ct,\slice)=\lambda''=\stp(\at,\slice)$. Hence,
$a_k=s_i$. In summary, we have $s_i=a_k=p_j$, and thus
$(i,j)\in\stp(\slice,\projection)=\lambda$. This proves that
$\stp(\slice,\projection')\subseteq\lambda$.

For ``$\supseteq$'' consider a tuple
$(i,j)\in\lambda=\stp(\slice,\projection)$, i.e., $s_i=p_j$.
Since $\projection\in\Projections(\at)$, there is a $k$ such that
$p_j=a_k$. Hence,
$(k,j)\in\stp(\at,\projection)=\lambda'=\stp(\ct,\projection')$, and
therefore $c_k=p'_j$.
Furthermore, $s_i=p_j=a_k$ implies that
$(k,i)\in\stp(\at,\slice)=\lambda''=\stp(\ct,\slice)$. Hence,
$c_k=s_i$.
In summary, we have $s_i=c_k=p'_j$, and thus
$(i,j)\in\stp(\slice,\projection')$.
This proves that $\stp(\slice,\projection')=\lambda$.

In summary, we have proven the first statement of the claim.
\medskip

The second statement can be shown analogously:
Consider an arbitrary $\dt\in Q'_{\lambda'',d}$, i.e.,
$\stp(\dt,\sliceAlt)=\lambda''$ and $g(w_\dt)=d$.
Note that in order to prove the claim's second statement, it suffices to find a
$\projectionAlt'\in\Projections(\dt)$ such that
$\stp(\dt,\projectionAlt')=\lambda'$ and $g(v_{\projectionAlt'})=c$ and
$\stp(\sliceAlt,\projectionAlt')=\lambda$.

From Lemma~\ref{app:lemma:g-is-stable}(1), which we have already
proven, and $g(w_\at)=g(w_\dt)$ we know that
\[
\mset{
(\stp(\at,\projectionAlt'),g(v_{\projectionAlt'}))
\mid
\projectionAlt'\in\Projections(\at)
}
\ \ = \ \
\mset{
(\stp(\dt,\projectionAlt'),g(v_{\projectionAlt'}))
\mid
\projectionAlt'\in\Projections(\dt)
}\,.
\]
Since $\projection\in\Projections(\at)$ and
$\big(\stp(\at,\projection), g(v_\projection)\big)=(\lambda',c)$, we
know that $(\lambda',c)$ belongs to both multisets. Thus, there exists
a $\projectionAlt'\in\Projections(\dt)$ such that
$\stp(\dt,\projectionAlt')=\lambda'$ and $g(v_{\projectionAlt'})=c$.
All that remains to be done is show that
$\stp(\sliceAlt,\projectionAlt')=\lambda$.

For ``$\subseteq$'' consider a tuple
$(i,j)\in\stp(\sliceAlt,\projectionAlt')$, i.e., $t_i=q'_j$.
Since $\projectionAlt'\in\Projections(\dt)$, there is a $k$ such that
$q'_j=d_k$. Hence,
$(k,j)\in\stp(\dt,\projectionAlt')=\lambda'=\stp(\at,\projection)$, and therefore $a_k=p_j$.
Furthermore, $t_i=q'_j=d_k$ implies that
$(k,i)\in\stp(\dt,\sliceAlt)=\lambda''=\stp(\at,\slice)$. Hence,
$a_k=s_i$. In summary, we have $s_i=a_k=p_j$, and thus
$(i,j)\in\stp(\slice,\projection)=\lambda$. This proves that
$\stp(\sliceAlt,\projectionAlt')\subseteq\lambda$.

For ``$\supseteq$'' consider a tuple
$(i,j)\in\lambda=\stp(\slice,\projection)$, i.e., $s_i=p_j$.
Since $\projection\in\Projections(\at)$, there is a $k$ such that
$p_j=a_k$. Hence,
$(k,j)\in\stp(\at,\projection)=\lambda'=\stp(\dt,\projectionAlt')$, and
therefore $d_k=q'_j$.
Furthermore, $s_i=p_j=a_k$ implies that
$(k,i)\in\stp(\at,\slice)=\lambda''=\stp(\dt,\sliceAlt)$. Hence,
$d_k=t_i$.
In summary, we have $t_i=d_k=q'_j$, and thus
$(i,j)\in\stp(\sliceAlt,\projectionAlt')$.
This proves that $\stp(\sliceAlt,\projectionAlt')=\lambda$.
In summary, we have proved both statements of the claim.
\end{claimproof}  

From the above claim and
equation~\eqref{eq:LemmaD9_UnionIsContained}
we obtain that
\[
   \bigcup_{\projection\in P_{\lambda,c}} A_{\lambda',d}(\projection)
   \ \ = \ \
   P'_{\lambda'',d}
  \qquad\text{and}\qquad
   \bigcup_{\projectionAlt\in Q_{\lambda,c}} B_{\lambda',d}(\projectionAlt)
   \ \ = \ \ 
   Q'_{\lambda'',d}\,.
 \]
 Combining this with the equations~\eqref{eq:LemmaD9_sizeUnion}  and
 \eqref{eq:LemmaD9_sizeOfPStrichQStrich}, we obtain
\[
  \ell_{\lambda',d}\cdot |P_{\lambda,c}|
  \ \ = \ \
  |P'_{\lambda'',d}|
  \ \ = \ \
  |Q'_{\lambda'',d}|
  \ \ = \ \ 
  \ell_{\lambda',d}\cdot |Q_{\lambda,c}|\,.
\]   
Since $\ell_{\lambda',d}\neq 0$, this yields that
$|P_{\lambda,c}|=|Q_{\lambda,c}|$.
This completes the proof for Case~1.

\bigskip

\emph{Case 2:} \ $\lambda$ enforces that
$\tset(\projection)\subseteq\tset(\slice)$ for all $\projection$ with
$\lambda=\stp(\slice,\projection)$.
\\
In this case, $\lambda$ and $\slice$ completely determine
$\projection$, and $|P_{\lambda,c}|=1$ and $|Q_{\lambda,c}|\leq 1$.
Let $\projection$ be the
unique element in $P_{\lambda,c}$, let $n=\ar(\projection)$,
$r=\ar(\slice)=\ar(\sliceAlt)$, and let $i_1,\ldots,i_n\in[r]$ such
that $\projection=(p_1,\ldots,p_n)=(s_{i_1},\ldots,s_{i_n})$.
Let $\projectionAlt = (q_1,\ldots,q_n)\deff (t_{i_1},\ldots,t_{i_n})$.
Clearly, $\stp(\sliceAlt,\projectionAlt)=\stp(\slice,\projection)=\lambda$.
To complete the proof for Case~2, it suffices to show that
$\projectionAlt\in Q_{\lambda,c}$. I.e., we have to show that
$\projectionAlt\in\Projections(\tD)$ and $g(v_\projectionAlt)=c$.

Let us fix an arbitrary $\at\in M_\slice$, and let $\bt\deff \tilde{\beta}_{\slice,\sliceAlt}(\at)$, where $\tilde{\beta}_{\slice,\sliceAlt}$ is
the mapping from Fact~\ref{app:fact:beta-st}. From Fact~\ref{app:fact:beta-st} we know that
$\bt\in M_{\sliceAlt}$ and
 $h(w_\at)=h(w_\bt)$ and
 $\stp(\at,\slice)=\stp(\bt,\sliceAlt)$.

\begin{claim}\label{claim:Projections_LemmaD9}
  $\projectionAlt\in\Projections(\tD)$.
  Furthermore, there exist $\ct,\dt\in\tD$ with
  $h(w_\ct)=h(w_\dt)$ such that $\projection\in\Projections(\ct)$, $\projectionAlt\in\Projections(\dt)$, and $\stp(\ct,\projection)=\stp(\dt,\projectionAlt)$.
\end{claim}
\begin{claimproof}
  If $\projection=\emptytuple$, then $\projectionAlt=\emptytuple$ and we are done. If $\projection\neq\emptytuple$, then
  $\tset(\projection)\neq\emptyset$, and we proceed as follows.
  By assumption, $\projection\in\Projections(\tD)$, i.e., there is a $\ct\in\tD$ such that $\projection\in\Projections(\ct)$.
  Note that $\tset(\projection)\subseteq\tset(\at)\cap\tset(\ct)$, and thus $\stp(\at,\ct)\neq\emptyset$. We consider the mapping
  $\beta_{\at,\bt}$ from Fact~\ref{app:fact:beta-ab}\eqref{item:GD:app:fact:beta-ab} and let $\dt\deff\beta_{\at,\bt}(\ct)$.
  From Fact~\ref{app:fact:beta-ab}\eqref{item:GD:app:fact:beta-ab} we know that $\dt\in\tD$ and $h(w_{\ct})=h(w_{\dt})$ and
  $\stp(\at,\ct)=\stp(\bt,\dt)$.

  Since $\projection=(p_1,\ldots,p_n)\in\Projections(\ct)$, there exist pairwise distinct indices
  $k_1,\ldots,k_n\in \set{1,\ldots,\ar(\ct)}$ such that
  $\projection=(c_{k_1},\ldots,c_{k_n})$.
  Since $\tset(\projection)\subseteq\tset(\slice)\subseteq\tset(\at)$, there exist indices $\ell_1,\ldots,\ell_n\in\set{1,\ldots,\ar(\at)}$ such that
  $\projection=(a_{\ell_1},\ldots,a_{\ell_n})$.
  In summary, we have
  \[
    \projection
    \ = \
    (s_{i_1},\ldots,s_{i_n})
    \ = \
    (c_{k_1},\ldots,c_{k_n})
    \ = \
    (a_{\ell_1},\ldots,a_{\ell_n})
    \qquad \text{and} \qquad
    \projectionAlt
    \ = \
    (t_{i_1},\ldots,t_{i_n}).
  \]  
  For all $\nu\in[n]$ we have $c_{k_\nu}=a_{\ell_\nu}$, i.e., $(\ell_\nu,k_\nu)\in\stp(\at,\ct)=\stp(\bt,\dt)$, and hence
  $d_{k_\nu}=b_{\ell_\nu}$.
  Moreover, for all $\nu\in[n]$ we have $s_{i_\nu}=a_{\ell_\nu}$, i.e., $(\ell_\nu,i_\nu)\in\stp(\at,\slice)=\stp(\bt,\sliceAlt)$, and hence
  $t_{i_\nu}=b_{\ell_\nu}$.
  In summary, this yields that
  \[
    \projectionAlt
    \ = \
    (t_{i_1},\ldots,t_{i_n})
    \ = \
    (b_{\ell_1},\ldots,b_{\ell_n})
    \ = \
    (d_{k_1},\ldots,d_{k_n}).
  \]
  Since $\dt\in\tD$ and $k_1,\ldots,k_n$ are pairwise distinct
  elements in $\set{1,\ldots,\ar(\dt)}$, this proves that
  $\projectionAlt\in\Projections(\dt)\subseteq\Projections(\tD)$.

    To complete the proof of the claim, we have to show that $\stp(\ct,\projection)=\stp(\dt,\projectionAlt)$.
  For ``$\subseteq$'' consider an arbitrary tuple $(\mu,\nu)\in\stp(\ct,\projection)$. I.e., $c_\mu=p_\nu=s_{i_\nu}=c_{k_\nu}=a_{\ell_\nu}$. Hence, $(\ell_\nu,\mu)\in\stp(\at,\ct)=\stp(\bt,\dt)$. Thus,
  $d_\mu=b_{\ell_\nu}=t_{i_\nu}=q_\nu$, i.e., $(\mu,\nu)\in\stp(\dt,\projectionAlt)$.
  This proves that $\stp(\ct,\projection)\subseteq\stp(\dt,\projectionAlt)$. The inclusion ``$\supseteq$'' can be shown analogously.
\end{claimproof}

\begin{claim*}
$g(v_\projectionAlt)=c$.
\end{claim*}
\begin{claimproof}
Let $\ct,\dt$ be chosen according to Claim~\ref{claim:Projections_LemmaD9}.
  Since $h(w_{\ct}) = h(w_{\dt})$ it must hold that $\stp(\ct) = \stp(\dt)$.
  Let $\pproj[\ct, \dt]$ be the bijection according to Lemma~\ref{app:lemma:bijection}.
  Since $\pproj[\ct, \dt]$ is unique according to Lemma~\ref{app:lemma:bijection-props}(1), it must hold that $\projectionAlt = \pproj[\ct, \dt](\projection)$.
  From Proposition~\ref{app:prop:e-neigbors-of-tuples} we know that $g(v_{\projection'}) = g(v_{\pproj[\ct, \dt](\projection')})$ holds for all $\projection' \in \Projections(\ct)$.
  Thus, in particular, $g(v_{\projection}) = g(v_{\projectionAlt})$, i.e., $g(v_{\projectionAlt}) = c$.
\end{claimproof}

In summary, from the above two claims we obtain that
$\projectionAlt\in Q_{\lambda,c}$. Hence, $|Q_{\lambda,c}|=1=|P_{\lambda,c}|$, and the proof for Case~2 is completed.
This completes the proof of Proposition~\ref{app:prop:f-neighbors-slices}.
\end{proof} 
\medskip\noindent
It remains to show part (3b) of Lemma~\ref{app:lemma:g-is-stable}:
\begin{proposition}\label{app:prop:f-neighbors-projections}
	For all $\projection, \projectionAlt \in \Projections(\tD) \setminus \slices(\tD)$ with $g(v_{\projection}) = g(v_{\projectionAlt})$ it holds that
	\[
		\mset[\big]{ (\stp(\projection, \projection'), g(v_{\projection'})) \mid
			\projection' \in N_{\projection} }
		=
		\mset[\big]{ (\stp(\projectionAlt, \projection'), g(v_{\projection'})) \mid
			\projection' \in N_{\projectionAlt} }.\endProposition
	\]
\end{proposition}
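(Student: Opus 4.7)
The plan is to reduce to Proposition~\ref{app:prop:f-neighbors-slices} by associating to each non-slice projection $\projection$ a canonical slice $\slice$ --- namely, the distinct entries of $\projection$ listed in order of first occurrence --- and, analogously, a canonical slice $\sliceAlt$ to $\projectionAlt$, and then transferring the assumption $g(v_{\projection}) = g(v_{\projectionAlt})$ to $h(v_{\slice}) = h(v_{\sliceAlt})$.

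First I would set up the correspondence between $\projection$ and $\slice$. Writing $\projection = (p_1,\ldots,p_n)$ and $\slice = (s_1,\ldots,s_m)$, there is a unique surjection $\alpha_{\projection}\colon [n]\to[m]$ with $p_i = s_{\alpha_{\projection}(i)}$, and this surjection depends only on $\stp(\projection)$. Since $\tset(\projection) = \tset(\slice)$, one has $N_{\projection} = N_{\slice}$, and a routine computation gives, for every $\projection'$,
\[
	\stp(\projection, \projection') \;=\; \set[\big]{(i, j) \mid (\alpha_{\projection}(i), j) \in \stp(\slice, \projection')}\,.
\]
This yields a bijection $\Phi_{\projection}$ from the values taken by $\stp(\slice,\cdot)$ to those taken by $\stp(\projection,\cdot)$. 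Setting up the analogous slice $\sliceAlt$ and surjection $\alpha_{\projectionAlt}$ for $\projectionAlt$, the forced equality $\stp(\projection) = \stp(\projectionAlt)$ (implied by $g(v_{\projection}) = g(v_{\projectionAlt})$) yields $\alpha_{\projection} = \alpha_{\projectionAlt}$ and hence $\Phi_{\projection} = \Phi_{\projectionAlt}$.

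The hard part will be the key claim $g(v_{\projection}) = g(v_{\projectionAlt}) \implies h(v_{\slice}) = h(v_{\sliceAlt})$. To prove it, I would pick any $\at \in M_{\projection}$ (which exists since $\projection \in \Projections(\tD)$) and, using the multiset equality encoded in $g(v_{\projection}) = g(v_{\projectionAlt})$, extract a $\bt \in M_{\projectionAlt}$ with $h(w_{\at}) = h(w_{\bt})$ and $\stp(\at,\projection) = \stp(\bt,\projectionAlt)$. Applying Fact~\ref{app:fact:beta-ab}(2) to $h(w_{\at}) = h(w_{\bt})$ supplies a bijection $\tilde{\beta}_{\at,\bt}\colon \slices(\at) \to \slices(\bt)$ preserving both $h(v_{\cdot})$ and $\stp(\at,\cdot)$. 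Because $\tset(\slice) = \tset(\projection) \subseteq \tset(\at)$ and the entries of $\slice$ are pairwise distinct, $\slice \in \slices(\at)$, and similarly $\sliceAlt \in \slices(\bt)$. From the analogous identity $\stp(\at,\slice) = \set[\big]{(m,\alpha_{\projection}(i)) \mid (m,i) \in \stp(\at,\projection)}$, combined with $\stp(\at,\projection) = \stp(\bt,\projectionAlt)$ and $\alpha_{\projection} = \alpha_{\projectionAlt}$, it follows that $\stp(\at,\slice) = \stp(\bt,\sliceAlt)$. Then both $\tilde{\beta}_{\at,\bt}(\slice)$ and $\sliceAlt$ are slices of $\bt$ with the same $\stp$ relative to $\bt$, so Lemma~\ref{app:lemma:stp-identifies-proj} forces them to coincide, giving $h(v_{\slice}) = h(v_{\sliceAlt})$.

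With the key claim in hand, $g(v_{\slice}) = h(v_{\slice}) = h(v_{\sliceAlt}) = g(v_{\sliceAlt})$, and Proposition~\ref{app:prop:f-neighbors-slices} applied to the two slices would deliver
\[
	\mset[\big]{(\stp(\slice,\projection'), g(v_{\projection'})) \mid \projection' \in N_{\projection}}
	\;=\;
	\mset[\big]{(\stp(\sliceAlt,\projection'), g(v_{\projection'})) \mid \projection' \in N_{\projectionAlt}}\,.
\]
Applying the shared transformation $\Phi_{\projection} = \Phi_{\projectionAlt}$ to the first coordinate of every pair (the $g(v_{\projection'})$ component is unchanged) then produces the desired multiset equality. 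The only substantive step is the key claim; the rest is a formal transport of data between slice and non-slice worlds via the canonical surjection.
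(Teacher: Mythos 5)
Your proposal is correct and follows essentially the same route as the paper's proof: both reduce to Proposition~\ref{app:prop:f-neighbors-slices} by passing to slices $\slice,\sliceAlt$ with $\tset(\slice)=\tset(\projection)$ and $\tset(\sliceAlt)=\tset(\projectionAlt)$, establish $h(v_{\slice})=h(v_{\sliceAlt})$ via a matched pair $\at\in M_{\projection}$, $\bt\in M_{\projectionAlt}$ together with Fact~\ref{app:fact:beta-ab}(2) and Lemma~\ref{app:lemma:stp-identifies-proj}, and then transfer the multiset equality back using that the relation between $\projection$ and $\slice$ is determined by $\stp(\projection)=\stp(\projectionAlt)$ alone. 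The only point you skip is the degenerate case $\projection=\emptytuple$ (which has no associated slice); there $g(v_{\projection})=g(v_{\projectionAlt})$ forces $\projectionAlt=\emptytuple$ and the claim is immediate, as the paper notes at the outset.
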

\begin{proof}
	Let $\projection, \projectionAlt \in \Projections(\tD)
        \setminus \slices(\tD)$ with $g(v_{\projection}) =
        g(v_{\projectionAlt})$.
        If $\projection=\emptytuple$ then also
        $\projectionAlt=\emptytuple$ and we are done.
        If $\projection\neq\emptytuple$, we proceed as follows.

	By definition of $g$, $\stp(\projection) = \stp(\projectionAlt)$ holds and there is a bijection $\delta\colon M_{\projection} \to M_{\projectionAlt}$ such that for all $\tup{c} \in M_{\projection}$ it holds that $\stp(\tup{c}, \projection) = \stp(\delta(\tup{c}), \projectionAlt)$ and $h(w_{\tup{c}}) = g(w_{\tup{c}}) = g(w_{\delta(\tup{c})}) = h(w_{\delta(\tup{c})})$.
	Choose $\at \in M_{\projection}$ and let $\bt \isdef \delta(\at)$.
	Choose a slice $\slice$ with $\tset(\slice) = \tset(\projection)$ and note that $\slice \in \slices(\at)$.
	Let $\sliceAlt \isdef \pproj(\slice)$.
	Note that $\stp(\at, \slice) = \stp(\bt, \sliceAlt)$ and $h(v_{\slice}) = h(v_{\sliceAlt})$.
	It follows from Proposition~\ref{app:prop:f-neighbors-slices} that there is a bijection $\delta'\colon N_{\slice} \to N_{\sliceAlt}$ such that for all $\projection' \in N_{\slice}$ with $\projectionAlt' \isdef \delta'(\projection')$ it holds that:
	\begin{enumerate}
		\item $\stp(\slice, \projection') = \stp(\sliceAlt, \projectionAlt')$,\quad and
		\item $g(v_{\projection'}) = g(v_{\projectionAlt'})$.
	\end{enumerate}

	\begin{claim*}
		$\tset(\sliceAlt) = \tset(\projectionAlt)$ and $\stp(\projection, \slice) = \stp(\projectionAlt, \sliceAlt)$.
	\end{claim*}
	\noindent\emph{Proof of Claim}:\;
	$\tset(\slice)=\tset(\projection)$ implies that there exist
        $r$ and $\ell_1,\ldots,\ell_r$ such that
        $\slice=(p_{\ell_1},\ldots,p_{\ell_r})$ and
        $\tset(\projection)=\set{p_{\ell_1},\ldots,p_{\ell_r}}$.
        From $\stp(\projection)=\stp(\projectionAlt)$ we obtain that
        $\tset(\projectionAlt)=\set{q_{\ell_1},\ldots,q_{\ell_r}}$.
        
        On the other hand, $\slice\in\slices(\at)$ implies that there
        are $j_1,\ldots,j_r$ such that
        $\slice=(a_{j_1},\ldots,a_{j_r})$.
        Using $\stp(\at,\slice)=\stp(\bt,\sliceAlt)$ yields $\sliceAlt=(b_{j_1},\ldots,b_{j_r})$.
        And from
        $\slice=(p_{\ell_1},\ldots,p_{\ell_r})$ we obtain that
        $p_{\ell_1}=a_{j_1}$, \ldots, $p_{\ell_r}=a_{j_r}$.
        Using $\stp(\projection,\at)=\stp(\projectionAlt,\bt)$, we
        obtain that $q_{\ell_1}=b_{j_1}$, \ldots,
        $q_{\ell_r}=b_{j_r}$.
        Combining this with $\sliceAlt=(b_{j_1},\ldots,b_{j_r})$
        yields $\sliceAlt=(q_{\ell_1},\ldots,q_{\ell_r})$.
        In particular, we obtain that
        $\tset(\sliceAlt)=\set{q_{\ell_1},\ldots,q_{\ell_r}}=\tset(\projectionAlt)$.
        Using $\stp(\projection)=\stp(\projectionAlt)$ then yields
        $\stp(\projection,\slice)=\stp(\projectionAlt,\sliceAlt)$.
	\hfill$\blacksquare$\smallskip

	From $\tset(\slice) = \tset(\projection)$ and  $\tset(\sliceAlt) = \tset(\projectionAlt)$ we obtain that $N_{\projection} = N_{\slice}$ and $N_{\projectionAlt} = N_{\sliceAlt}$.
	Hence, $\delta'$ is also a bijection between $N_{\projection}$ and $N_{\projectionAlt}$.
	From $\stp(\projection, \slice) = \stp(\projectionAlt, \sliceAlt)$ we obtain that for all $\projection' \in N_{\projection}$ with $\projectionAlt' \isdef \delta'(\projection')$ it holds that $\stp(\projection, \projection') = \stp(\projectionAlt, \projectionAlt')$ and $g(v_{\projection'}) = g(v_{\projectionAlt'})$.
	This completes the proof of Proposition~\ref{app:prop:f-neighbors-projections}
\end{proof}

\medskip\noindent
To summarize, we prove Lemma~\ref{app:lemma:g-is-stable} as follows.
\begin{proof}[Proof of Lemma~\ref{app:lemma:g-is-stable}]~
	\begin{enumerate}
		\item 
		This follows directly from Proposition~\ref{app:prop:e-neigbors-of-tuples}.

		\item
		\begin{enumerate}
			\item
			This follows directly from Proposition~\ref{app:prop:e-neighbors-of-projections}(a).
			\item
			This follows directly from Proposition~\ref{app:prop:f-neighbors-slices}.
		\end{enumerate}

		\item
		\begin{enumerate}
			\item 
			This follows directly from Proposition~\ref{app:prop:e-neighbors-of-projections}(b).
			\item
			This follows directly from Proposition~\ref{app:prop:f-neighbors-projections}.\qedhere
		\end{enumerate}
	\end{enumerate}
\end{proof}  
\end{document}